\documentclass[10pt]{article}
\usepackage{amsmath}
\usepackage{physics}
\usepackage{amsfonts}
\usepackage{amssymb}
\usepackage{mathrsfs}
\usepackage{graphicx}
\usepackage{amstext}
\usepackage{latexsym}
\usepackage{color}
\usepackage{mathtools}
\usepackage{graphics}
\usepackage{cite}
\usepackage{slashed}
\usepackage[normalem]{ulem}
\usepackage{epsfig}
\usepackage{amsthm}
\usepackage{appendix}
\usepackage{hyperref}
\usepackage{multirow}
\usepackage{color}
\usepackage[normal]{subfigure}
\usepackage{rotating}
\usepackage{esdiff}
\usepackage[table]{xcolor}
\usepackage{enumitem}
\usepackage[utf8]{inputenc}
\usepackage{authblk}
\usepackage{colortbl}
\usepackage{pdflscape}
\usepackage{bm}
\usepackage{textcomp, gensymb}
\usepackage{slashed}
\usepackage{fancyhdr}
\usepackage[nottoc, notlot, notlof]{tocbibind}
\usepackage{float}
\usepackage{hyphenat}
\usepackage{enumitem}
\usepackage{ragged2e}
\usepackage{xurl}
\justifying

\theoremstyle{definition}
\newtheorem{definition}{Definition}[section] 
\newtheorem{theorem}{Theorem}[section]
\newtheorem{proposition}{Proposition}[section]
\newtheorem{remark}{Remark}[section]
\newtheorem{lemma}{Lemma}[section]
\newtheorem{corollary}{Corollary}[section]
\allowdisplaybreaks
\newcommand{\R}{\mathbb{R}}
\newcommand{\C}{\mathbb{C}}
\newcommand{\Aut}{\mathrm{Aut}}
\newcommand{\Dom}{\text{Dom}}
\newcommand{\hb}{\hbar_0}
\newcommand{\thb}{\vartheta_0}
\newcommand{\Bb}{B_0}
\newcommand{\tauhb}{\tau_{_{\hb,\thb,\Bb}}}
\newcommand{\starhb}{\star_{_{_{\hb,\thb,\Bb}}}}
\newcommand{\Ctw}{C^*(\R^4,\omega_{_{\hb,\thb,\Bb}})}
\newcommand{\Shb}{\mathcal{S}_{\hb,\thb,\Bb}}
\providecommand{\top}{\mathsf T}
\newcommand{\starv}{\star_{\vartheta_{\mathrm{eff}},\varrho}}
\newcommand{\A}{\mathcal{A}}
\newcommand{\B}{\mathcal{B}}
\newcommand{\Sch}{\mathcal{S}}
\newcommand{\Hc}{\mathcal{H}}
\DeclareMathOperator{\supp}{supp}

\catcode `\@=11 \@addtoreset{equation}{section}

\catcode `\@=12

\makeatletter
\newcommand*{\rom}[1]{\expandafter\@slowromancap\romannumeral #1@}
\makeatother

\begin{document}
\title{From Noncommutative Kinematics to \(U(1)_{\star}\) Gauge Theory: A Family of Spectral Triples with Localized Gauge Perturbations}

\author[1]{Tanmoy Kumar Sarkar\thanks{tanmoy.sarkar@g.bracu.ac.bd}}
\author[1,2]{Md. Rafsanjany Jim\thanks{mjim3@unm.edu}}
\author[1]{S. Hasibul Hassan Chowdhury\thanks{shhchowdhury@bracu.ac.bd}}

\affil[1]{Department of Mathematics and Physical Sciences, BRAC University, Kha 224 Bir Uttam Rafiqul Islam Avenue, Merul Badda, Dhaka, Bangladesh}
\affil[2]{Department of Mathematics and Statistics, University of New Mexico, Albuquerque, NM, USA}

\maketitle

\begin{abstract}
We construct locally compact non-unital spectral triples for a noncommutative planar system determined by a fixed nondegenerate irreducible unitary sector of the kinematical symmetry group \(G_{\mathrm{NC}}\). The sector is labelled by central parameters \((\hbar_0,\vartheta_0,B_0)\), with
\(\hbar_0,\vartheta_0,B_0\neq0\) and \(\hbar_0-\vartheta_0B_0\neq0\). For this sector, the triples \((\mathcal S_{\hbar_0,\vartheta_0,B_0},\mathcal H, D^{r,s}_{\hbar_0,\vartheta_0,B_0})\) form an even two-parameter family indexed by \((r,s)\), and different choices of \((r,s)\) give unitarily equivalent realizations. The unperturbed Dirac operators have Landau-type spectral levels of infinite multiplicity; hence local compactness, rather than compact resolvent, is
the relevant analytic condition. We then identify the represented algebra
\(\pi(\mathcal S_{\hbar_0,\vartheta_0,B_0})\) with the effective Moyal
Fréchet \(\ast\)-algebra with deformation parameter \(\vartheta_{\mathrm{eff}}
=\frac{\vartheta_0}{1-\vartheta_0B_0/\hbar_0}.\) For each star-product realization parameter \(\varrho\), this yields spectral triples over the involutive Moyal algebra \(\mathcal A_{\vartheta_{\mathrm{eff}},\varrho}\). External
\(U(1)_{\star_{\vartheta_{\mathrm{eff}},\varrho}}\)-gauge potentials are incorporated by localizing the affine gauge potentials with smooth cutoffs. The resulting bounded self-adjoint perturbations \(B_R^{(\varrho)}\) define Dirac operators \(D_R^{\varrho,r,s}=D^{\prime\,r,s}+B_R^{(\varrho)}.\) Finally, as \(R\to\infty\), these operators converge in the strong resolvent sense to a self-adjoint limiting operator \(D_\infty\), the closure of the formal minimally coupled operator. Thus the finite-cutoff triples rigorously approximate the limiting minimally coupled Dirac operator associated with the fixed nondegenerate \(G_{\mathrm{NC}}\)-sector.
\end{abstract}

\section{Introduction}\label{sec:intro}

The emergence of noncommutative geometry in physical models is often tied to
background gauge or two-form data. A prominent example is the Seiberg--Witten
description of open strings in a constant Neveu--Schwarz two-form background,
where an effective noncommutative gauge geometry arises on the brane
worldvolume \cite{SeibergWitten1999}. This physical picture has had a lasting
influence on noncommutative field theory, deformation quantization, and the
operator-algebraic study of Moyal-type spaces; see, for example,
\cite{Connes1985,ConnesDouglasSchwarz1998,DouglasNekrasov2001,Szabo2003}. A
different but closely related route arises in noncommutative quantum mechanics
(NCQM), where noncommutativity is built directly into the phase-space
commutation relations and is organized by the representation theory of the
underlying kinematical symmetry group.

The present paper develops this representation-theoretic route. We work with
the kinematical symmetry group \(G_{\mathrm{NC}}\) of two-dimensional
noncommutative quantum mechanics. This group is a seven-dimensional two-step
nilpotent Lie group, equivalently a triple central extension of the translation
group \(\mathbb R^4\) \cite{chowdhuryali2013symmetry,chowdhuryali2014triply}.
Its irreducible unitary sectors are labelled by central characters
\((\hbar,\vartheta,B_{\mathrm{in}})\), where \(\vartheta\) measures coordinate
noncommutativity and \(B_{\mathrm{in}}\) appears in the commutator of the
kinematic momenta
\cite{chowdhury2017representations,chowdhurychowdhuryduha2021gauge}. Throughout
the paper we fix a nondegenerate sector
\begin{equation}
\hbar_0\neq0,\qquad
\vartheta_0\neq0,\qquad
B_0\neq0,\qquad
\hbar_0-\vartheta_0B_0\neq0.
\end{equation}
This fixed central character is part of the representation-theoretic input of
the construction. It is not treated as a formal deformation parameter.

The fixed sector contains a Landau-type momentum structure, but it should not be
identified with the ordinary Landau problem. The kinematic momenta satisfy
\begin{equation}
[\Pi_x,\Pi_y]=i\hbar_0B_0 I,
\end{equation}
which is formally analogous to the commutator of kinematic momenta for a
charged particle in a constant magnetic field. At the same time, the coordinate
operators satisfy
\begin{equation}
[X,Y]=i\vartheta_0 I.
\end{equation}
Thus the sector is a noncommutative phase-space sector: it combines coordinate
noncommutativity with a Landau-type momentum commutator. The parameter \(B_0\)
is the central parameter governing the momentum commutator in this NCQM
representation; it is not introduced here as an additional externally applied
electromagnetic field.

For a fixed nondegenerate sector, the noncentral generators admit a
two-parameter family of concrete realizations, indexed by \((r,s)\)
\cite{chowdhurychowdhuryduha2021gauge}. These realizations are unitarily
equivalent representations of the same sector. Thus \((r,s)\) does not change
the underlying NCQM sector; it only changes the realization of the same
irreducible representation. One of the structural aims of the paper is to keep
these roles separate: the central character fixes the sector, while \((r,s)\)
labels equivalent realizations of that sector.

The first part of the paper constructs the twisted group \(C^*\)-algebra
associated with the fixed cocycle determined by
\((\hbar_0,\vartheta_0,B_0)\), together with its Schwartz subalgebra. We then
define Dirac-type operators from the normalized kinematic momenta in the
\((r,s)\)-realization. The Landau-type momentum commutator gives the
unperturbed Dirac operators Landau-type spectral levels with infinite
multiplicity. This is a feature of the represented NCQM sector and places the
construction naturally in the non-unital locally compact setting. Accordingly,
for algebra elements \(a\) one verifies the local compactness condition
\begin{equation}
\pi(a)(D-\lambda)^{-1}\in \mathcal K(\mathcal H),
\qquad
\lambda\notin\operatorname{spec}(D).
\end{equation}
We prove this local compactness property directly. After Darboux
normalization, the Stone--von Neumann theorem identifies the represented Weyl
system, up to unitary equivalence, with the standard Schr\"odinger--Weyl system.
Under this identification, elements of the represented Schwartz algebra are
Weyl operators with Schwartz symbols; equivalently, they have Schwartz integral
kernels and are Hilbert--Schmidt. Hence \(\pi(a)\) is compact, and composition
with the bounded resolvent \((D-\lambda)^{-1}\) gives the required local
compactness.

The second part of the paper passes to a Moyal-plane realization. The fixed
nondegenerate \(G_{\mathrm{NC}}\)-sector determines an effective Moyal parameter
\begin{equation}
\vartheta_{\mathrm{eff}}
=
\frac{\vartheta_0}{1-\vartheta_0B_0/\hbar_0}.
\end{equation}
This parameter is fixed by the chosen sector and is distinct from the
realization parameters \((r,s)\). We then introduce a separate star-product
realization parameter \(\varrho\), which labels equivalent product and
involution conventions in the Moyal algebra. This separates three pieces of
data which play different roles: the representation-theoretic sector
\((\hbar_0,\vartheta_0,B_0)\), the concrete realization \((r,s)\), and the
star-product realization parameter \(\varrho\).

This Moyal-plane construction connects the representation theory of NCQM with
the analytic theory of noncompact Moyal geometries. Non-unital spectral triples
over Moyal-type algebras are well established: the Schwartz Moyal algebra
represented by left Moyal multiplication gives a noncompact spectral triple
with the Euclidean Dirac operator \cite{gayral2004moyal}, and an
oscillator-modified variant was later constructed in
\cite{GayralWulkenhaar2013}. The point of departure in the present paper is
different. Here the algebra, the Dirac operators, and the effective Moyal
parameter are obtained from a fixed nondegenerate NCQM representation sector
whose kinematic momenta have a Landau-type commutator. Thus the paper
constructs a locally compact spectral geometry arising from NCQM kinematics,
rather than introducing a Moyal plane independently of the representation data.

The present construction is also related to earlier work on noncommutative
quantum mechanics on the plane. Supersymmetric quantum mechanics on the
noncommutative plane was studied from a deformation-quantization viewpoint in
\cite{jimchowdhury2024susy,jimchowdhury2025susy}, while the minimally coupled magnetic model in the
sector \((\hbar_0,\vartheta_0,0)\) was treated in
\cite{ChowdhuryChowdhury2023}. In contrast, the present paper fixes a
nondegenerate \(G_{\mathrm{NC}}\)-sector \((\hbar_0,\vartheta_0,B_0)\), with
\(B_0\neq0\), and constructs the corresponding locally compact spectral
geometry over the associated Moyal algebra. This is closer in spirit to strict
deformation quantization, where one works with concrete deformed products and
\(C^*\)-algebraic completions rather than only with formal power series
\cite{rieffel1993deformation}. The resulting Moyal parameter
\(\vartheta_{\mathrm{eff}}\) is therefore not a free formal variable but is
determined by the chosen nondegenerate \(G_{\mathrm{NC}}\)-sector.

Building on this representation-theoretic construction of the associated Moyal
algebra, we next introduce external
\(U(1)_{\star_{\vartheta_{\mathrm{eff}},\varrho}}\)-gauge potentials. The
potentials considered here are affine functions of the coordinate variables and
give constant \(U(1)_{\star_{\vartheta_{\mathrm{eff}},\varrho}}\)-curvature.
Since such affine functions do not belong to the Schwartz algebra, we introduce
smooth cutoffs and obtain bounded self-adjoint perturbations
\(B_R^{(\varrho)}\). For each cutoff radius \(R>0\), the perturbed operator
\begin{equation}
D_R^{\varrho,r,s}
=
D^{\prime\,r,s}+B_R^{(\varrho)}
\end{equation}
defines a locally compact non-unital spectral triple over the same Moyal
algebra. Self-adjointness follows from bounded perturbation theory, and local
compactness is preserved by a resolvent factorization argument.

Finally, we study the removal of the cutoff. The limiting minimally-coupled
operator contains affine gauge potentials without cutoffs and is treated at the
level of self-adjoint operators. We define the formal minimally-coupled operator
on the Schwartz core, identify its self-adjoint closure \(D_\infty\), and prove
that
\begin{equation}
D_R^{\varrho,r,s}\longrightarrow D_\infty
\end{equation}
in the strong resolvent sense as \(R\to\infty\). Thus the finite-cutoff spectral
triples provide a rigorous approximation scheme for the minimally-coupled
operator associated with the external
\(U(1)_{\star_{\vartheta_{\mathrm{eff}},\varrho}}\)-gauge field.

The contribution of the paper can be summarized as follows. First, we construct
locally compact non-unital spectral triples from a fixed nondegenerate NCQM
representation sector. Second, we show that the Landau-type kinematic momentum
structure places the unperturbed triples naturally in the locally compact
non-unital setting. Third, we identify the corresponding Moyal algebra
determined by the sector and separate the roles of the sector parameters, the
\((r,s)\)-realization, and the star-product realization parameter. Fourth, we
prove that localized \(U(1)_\star\)-gauge perturbations preserve the locally
compact spectral-triple structure and converge, after removal of the cutoff, to
a self-adjoint minimally-coupled limiting operator in the strong resolvent
sense.

The paper is organized as follows. Section~\ref{sec:cstar-gnc} constructs the
twisted group \(C^*\)-algebra associated with the fixed nondegenerate
\(G_{\mathrm{NC}}\)-sector and introduces its \((r,s)\)-dependent representation
on \(L^2(\mathbb R^2)\). Section~\ref{sec:dirac-spectral-triple}
defines the corresponding Dirac operators and proves the locally compact
non-unital spectral-triple properties. Section~\ref{sec:moyal-plane-realization}
passes to the Moyal-plane realization, clarifies the roles of \((r,s)\),
\(\vartheta_{\mathrm{eff}}\), and \(\varrho\), constructs bounded perturbations
induced by localized
\(U(1)_{\star_{\vartheta_{\mathrm{eff}},\varrho}}\)-gauge potentials, and proves
strong resolvent convergence to the limiting minimally-coupled operator.
Section~\ref{sec:conclusion} contains concluding remarks and future directions.

\section{\texorpdfstring{$C^*$}{C*}-algebras associated with the kinematical group \texorpdfstring{$G_{\mathrm{NC}}$}{GNC}} \label{sec:cstar-gnc}
Construction of a spectral triple requires the construction of a $C^*$-algebra, a Hilbert space on which the algebra is realized, and a self-adjoint unbounded operator (Dirac type operator). In this section, we construct the group $C^*$-algebra $C^*(\R^4,\omega_{\hbar_0,\vartheta_0,B_0})$ associated to the group $G_{\mathrm{NC}}$ for a fixed triple $(\hbar_0,\vartheta_0,B_0)$.

\subsection{\texorpdfstring{Construction of $C^*(\R^4,\omega_{\hbar_0,\vartheta_0,B_0})$}{Construction of C*(R4,omega)}}
\label{subsec:construction-cstar}

We start with the construction of the twisted group $C^*$-algebra \(C^*(\R^4,\omega_{\hbar_0,\vartheta_0,B_0})\) associated with a fixed triple \((\hbar_0,\vartheta_0,B_0)\) satisfying \(\hbar_0\neq 0, \vartheta_0\neq 0, B_0\neq 0, \hbar_0-\vartheta_0 B_0\neq 0.\)
Our construction follows the standard twisted-group-algebra framework; see, for example, \cite{chowdhury2017representations}.

We begin with the Weyl commutation relations
\begin{equation}\label{eq:weyl_commutation_relation}
U_k(q_k)U_j(q_j)
=
e^{2\pi i\tau_{jk}q_jq_k}\,U_j(q_j)U_k(q_k),
\qquad j,k=1,\dots,4,
\end{equation}
where \(\tau=[\tau_{jk}]\) is a real skew-symmetric \(4\times 4\) matrix. For each \(j=1,\dots,4\), the map
\(q_j\mapsto U_j(q_j)\) is a one-parameter family of unitary operators. In the present setting, these unitary families are obtained from the unitary irreducible representation of the kinematical group \(G_{\mathrm{NC}}\). More precisely, for fixed kinematical presentation parameters \(\displaystyle r\in \R\setminus\left\{\frac{\hbar_0}{\vartheta_0 B_0}\right\}\) and \(s\in\R,\) 
the corresponding one-parameter unitary families act on $L^2(\R^2)$ as
\begin{equation}\label{eq:unitary_actions}
\begin{aligned}
(U_1(q_1)f)(x,y)
&=
e^{-\frac{iB_0(1-r)}{r\vartheta_0 B_0-\hbar_0}q_1 y}
\,f\!\left(x+\frac{\vartheta_0 B_0(r+s-rs)-\hbar_0}{r\vartheta_0 B_0-\hbar_0}q_1,y\right),\\
(U_2(q_2)f)(x,y)
&=
e^{-\frac{irB_0}{\hbar_0}q_2 x}
\,f\!\left(x,y-\frac{r\vartheta_0 B_0(1-s)-\hbar_0}{\hbar_0}q_2\right),\\
(U_3(q_3)f)(x,y)
&=
e^{\frac{i}{\hbar_0}q_3 x}
\,f\!\left(x,y-s\frac{\vartheta_0}{\hbar_0}q_3\right),\\
(U_4(q_4)f)(x,y)
&=
e^{\frac{i}{\hbar_0}q_4 y}
\,f\!\left(x+(1-s)\frac{\vartheta_0}{\hbar_0}q_4,y\right).
\end{aligned}
\end{equation}
Varying the values of $r$ and $s$, one gets another element of the same class, i.e., these elements are unitarily equivalent to each other. The operators in \ref{eq:unitary_actions}, give the following relations
\begin{align}
U_1(q_1)U_3(q_3) &= e^{\frac{i}{\hbar_0}q_3q_1}U_3(q_3)U_1(q_1),&
\
U_2(q_2)U_4(q_4) &= e^{\frac{i}{\hbar_0}q_4q_2}U_4(q_4)U_2(q_2),\nonumber \\
U_1(q_1)U_2(q_2) &= e^{-\frac{iB_0}{\hbar_0}q_2q_1}U_2(q_2)U_1(q_1),&
\
U_3(q_3)U_4(q_4) &= e^{-\frac{i\vartheta_0}{\hbar_0^2}q_4q_3}U_4(q_4)U_3(q_3),\nonumber \\
U_1(q_1)U_4(q_4) &= U_4(q_4)U_1(q_1),&
\
U_2(q_2)U_3(q_3) &= U_3(q_3)U_2(q_2). \label{eq:all_weyl_commutation_relation}
\end{align}
A detailed derivation of \eqref{eq:all_weyl_commutation_relation} is given in Appendix~\ref{App:weyl}. Comparing \eqref{eq:all_weyl_commutation_relation} with \eqref{eq:weyl_commutation_relation}, one obtains the skew-symmetric matrix
\begin{equation}\label{eq:skewsymmatrix}
\tau_{\hbar_0,\vartheta_0,B_0}
=
\begin{bmatrix}
0 & \dfrac{B_0}{2\pi \hbar_0} & -\dfrac{1}{2\pi \hbar_0} & 0\\
-\dfrac{B_0}{2\pi \hbar_0} & 0 & 0 & -\dfrac{1}{2\pi \hbar_0}\\
\dfrac{1}{2\pi \hbar_0} & 0 & 0 & \dfrac{\vartheta_0}{2\pi \hbar_0^2}\\
0 & \dfrac{1}{2\pi \hbar_0} & -\dfrac{\vartheta_0}{2\pi \hbar_0^2} & 0
\end{bmatrix}.
\end{equation}

We now pass to the associated twisted convolution algebra. Let
\begin{equation}
\mathbf{q}=(q_1,q_2,q_3,q_4)\in\R^4,
\qquad
U(\mathbf{q}) := U_1(q_1)U_2(q_2)U_3(q_3)U_4(q_4).
\end{equation}
Define the Weyl map
\begin{equation}\label{eq:weyl_map}
\varpi\colon L^1(\R^4)\longrightarrow \mathcal{B}(L^2(\R^2))
\end{equation}
by
\begin{equation}\label{eq:weyl_map_defn}
(\varpi(f)\phi)(x,y)
:=
\left(
\int_{\R^4}
f(\mathbf{q})\,
e^{\pi i \sum_{n<m} q_n \tau_{nm} q_m}\,
U(\mathbf{q})\,d\mathbf{q}
\right)\phi(x,y),
\end{equation}
for $f\in L^1(\R^4)$,  $\phi\in L^2(\R^2)$, and $\mathcal{B}(L^2(\R^2))$ is the space of bounded operator on $L^2(\R^2)$.

The map $\varpi$ is not multiplicative with respect to pointwise multiplication on $L^1(\R^4)$. However, it becomes a homomorphism once one endows $L^1(\R^4)$ with the twisted product (see page 21,\cite{chowdhury2017representations})
\begin{equation}\label{eq:homomorphic_weyl_map}
f\star_{\hbar_0,\vartheta_0,B_0} g
:=
\varpi^{-1}\!\bigl(\varpi(f)\,\varpi(g)\bigr).
\end{equation}
Using the following equation
\begin{equation}\label{eq:U-product}
U(\mathbf{q})U(\mathbf{q}')
=
e^{2\pi i \sum_{n<m} q_n' \tau_{nm} q_m}\,
U(\mathbf{q}+\mathbf{q}'),
\end{equation}
whose derivation is given in Appendix~\ref{app:weyl_product}, one obtains the explicit formula
\begin{equation}\label{eq:star_product}
(f\star_{\hbar_0,\vartheta_0,B_0} g)(\mathbf{p})
=
\int_{\R^4}
f(\mathbf{q})\,g(\mathbf{p}-\mathbf{q})\,
e^{\pi i \sum_{n,m} p_n \tau_{nm} q_m}\,d\mathbf{q},
\qquad
\mathbf{p}\in\R^4.
\end{equation}
This is the standard twisted convolution product; compare, for example, with \cite[Eq.~(2.7)]{edwards1969twisted}. The involution is given by
\begin{equation}\label{eq:involution}
f^*(\mathbf{q}) := \overline{f(-\mathbf{q})}.
\end{equation}
A direct computation shows that
\begin{equation}\label{eq:star-involution}
(g^*\star_{\hbar_0,\vartheta_0,B_0} f^*)(\mathbf{p})
=
(f\star_{\hbar_0,\vartheta_0,B_0} g)^*(\mathbf{p}).
\end{equation}
Moreover, with the $L^1$-norm
\begin{equation}\label{eq:L1_norm}
\|f\|_{L^1}
:=
\int_{\R^4}|f(\mathbf{q})|\,d\mathbf{q},
\end{equation}
one has
\begin{equation}\label{eq:L1_submultiplicative}
\|f\star_{\hbar_0,\vartheta_0,B_0} g\|_{L^1}
\leq
\|f\|_{L^1}\,\|g\|_{L^1}.
\end{equation}
Hence we obtain the following:

\begin{remark}\label{rem:twisted-banach-star}
The space $L^1(\R^4)$, equipped with the twisted product
$\star_{\hbar_0,\vartheta_0,B_0}$, the involution and the norm, given by \eqref{eq:involution} and \eqref{eq:L1_norm}, respectively, is a Banach $*$-algebra. We denote it by \(L^1(\R^4,\omega_{\hbar_0,\vartheta_0,B_0}).\)
\end{remark}
To obtain the corresponding $C^*$-algebra, one passes from the Banach $*$-algebra\\
$L^1(\R^4,\omega_{\hbar_0,\vartheta_0,B_0})$ to its enveloping $C^*$-completion. For this purpose, we first recall the projective representation of $\R^4$ induced by the unitary irreducible representation of $G_{\mathrm{NC}}$ with central parameters fixed at $(\hbar_0,\vartheta_0,B_0)$:
\begin{align}
&\Big(U^{r,s}_{\hbar_0,\vartheta_0,B_0}(q_1,q_2,q_3,q_4)f\Big)(x,y) \nonumber \\
&=
\exp\!\left(
\frac{i}{\hbar_0}q_3x+\frac{i}{\hbar_0}q_4y
-\frac{iB_0(1-r)}{r\vartheta_0 B_0-\hbar_0}q_1y
-\frac{irB_0}{\hbar_0}q_2x
\right) \nonumber \\
& \times
\exp\!\left(
i\Big[\frac{1}{2\hbar_0}
+\frac{s\vartheta_0 B_0(1-r)}{\hbar_0(r\vartheta_0 B_0-\hbar_0)}\Big]q_1q_3
\right) \nonumber \\
& \times
\exp\!\left(
i\Big[\frac{1}{2\hbar_0}
-\frac{r\vartheta_0 B_0(1-s)}{\hbar_0^2}\Big]q_4q_2
-i\Big(s-\frac12\Big)\frac{\vartheta_0}{\hbar_0^2}q_3q_4
\right) \nonumber \\
&\times
\exp\!\left(
i\Big[
-\frac{B_0}{2\hbar_0}
+\frac{B_0(1-r)(r\vartheta_0 B_0-rs\vartheta_0 B_0-\hbar_0)}
{\hbar_0(r\vartheta_0 B_0-\hbar_0)}
\Big]q_1q_2
\right) \nonumber \\
&\times
f\!\left(
x+(1-s)\frac{\vartheta_0}{\hbar_0}q_4
+\frac{\vartheta_0 B_0(r+s-rs)-\hbar_0}{r\vartheta_0 B_0-\hbar_0}q_1,\right. \nonumber \\
&\qquad \left. y-s\frac{\vartheta_0}{\hbar_0}q_3
-\frac{r\vartheta_0 B_0(1-s)-\hbar_0}{\hbar_0}q_2
\right). \label{eq:projective_representation}
\end{align}
where $f\in L^2(\R^2)$.

The next proposition shows that this projective representation induces a natural $*$-\\representation of the twisted Banach $*$-algebra.

\begin{proposition}\label{prop:twisted_banach_algebra_representation}
For each fixed admissible triple $(\hbar_0,\vartheta_0,B_0)$ and kinematical presentation parameters $(r,s)$, the twisted Banach $*$-algebra $L^1(\R^4,\omega_{\hbar_0,\vartheta_0,B_0})$ admits a non-degenerate $*$-representation on $L^2(\R^2)$ given by
\begin{equation}\label{eq:twisted_banach_algerba_representation}
\big(\rho^{r,s}_{\hbar_0,\vartheta_0,B_0}(f)\phi\big)(x,y)
:=
\left(
\int_{\R^4}
f(-\mathbf{k})\,U^{r,s}_{\hbar_0,\vartheta_0,B_0}(\mathbf{k})\,d\mathbf{k}
\right)\phi(x,y),
\end{equation}
where $f\in L^1(\R^4,\omega_{\hbar_0,\vartheta_0,B_0})$, $\phi\in L^2(\R^2)$, and $\mathbf{k}=(k_1,k_2,k_3,k_4)\in\R^4$.
\end{proposition}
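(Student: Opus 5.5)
The plan is to verify, in order, that $\rho:=\rho^{r,s}_{\hbar_0,\vartheta_0,B_0}$ is well defined and bounded, linear and multiplicative for $\star_{\hbar_0,\vartheta_0,B_0}$, $*$-preserving, and non-degenerate.

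\textbf{Boundedness.} From \eqref{eq:projective_representation}, each $U^{r,s}_{\hbar_0,\vartheta_0,B_0}(\mathbf{k})$ is the composition of multiplication by a unimodular phase with a measure-preserving affine change of variables. It is therefore unitary on $L^2(\R^2)$. The map $\mathbf{k}\mapsto U^{r,s}_{\hbar_0,\vartheta_0,B_0}(\mathbf{k})\phi$ is strongly continuous: check this on $C_c(\R^2)$ by dominated convergence and extend by density. Hence the integral in \eqref{eq:twisted_banach_algerba_representation} is a Bochner (or weak) integral, and
\[
\|\rho(f)\|\le\|f\|_{L^1}.
\]
Linearity in $f$ is immediate.

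\textbf{Cocycle identity.} The key algebraic input is a multiplier relation
\[
U^{r,s}(\mathbf{k})\,U^{r,s}(\mathbf{k}')=\sigma(\mathbf{k},\mathbf{k}')\,U^{r,s}(\mathbf{k}+\mathbf{k}'),
\]
with a $2$-cocycle $\sigma$ cohomologous to the one in \eqref{eq:U-product}. Concretely, $U^{r,s}(\mathbf{k})$ should equal the normal-ordered product $U_1U_2U_3U_4$ from \eqref{eq:unitary_actions} times the symmetrizing phase $e^{\pi i\sum_{n<m}k_n\tau_{nm}k_m}$, where $\tau=\tau_{\hbar_0,\vartheta_0,B_0}$ is the matrix \eqref{eq:skewsymmatrix}. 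I would verify this by composing the four operators in \eqref{eq:unitary_actions} explicitly and collecting the quadratic phases: the $r,s$-dependent quadratic terms in \eqref{eq:projective_representation} should arise exactly this way. Then \eqref{eq:U-product} gives $\sigma(\mathbf{k},\mathbf{k}')=e^{\pi i\,\mathbf{k}^{\top}\tau\,\mathbf{k}'}$, the symmetric (Weyl-ordered) multiplier. This bookkeeping is the main obstacle. The $r,s$-dependent phases in \eqref{eq:projective_representation} are intricate, and one must confirm that they cancel so that $\sigma$ is independent of $(r,s)$, as it must be for unitarily equivalent realizations of one sector.

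\textbf{Multiplicativity and $*$-preservation.} Substitute $\mathbf{k}\mapsto-\mathbf{k}$, so that $\rho(f)=\int f(\mathbf{k})U(-\mathbf{k})\,d\mathbf{k}$. Then use Fubini (justified by $f,g\in L^1$ and uniform boundedness of $U$) and the substitution $\mathbf{p}=\mathbf{q}+\mathbf{q}'$:
\[
\rho(f)\rho(g)=\iint f(\mathbf{q})g(\mathbf{q}')\,\sigma(-\mathbf{q},-\mathbf{q}')\,U(-\mathbf{p})\,d\mathbf{q}\,d\mathbf{q}'.
\]
Since $\tau$ is skew, $\sigma(-\mathbf{q},-\mathbf{p}+\mathbf{q})=e^{\pi i\,\mathbf{q}^{\top}\tau(\mathbf{p}-\mathbf{q})}=e^{\pi i\,\mathbf{q}^{\top}\tau\,\mathbf{p}}$. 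Up to the sign convention in \eqref{eq:star_product}, which the sign flip in the definition of $\rho$ is designed to absorb, this is exactly the kernel of $f\star_{\hbar_0,\vartheta_0,B_0} g$, so $\rho(f\star g)=\rho(f)\rho(g)$. For the involution, use $U(\mathbf{k})^*=U(\mathbf{k})^{-1}=\overline{\sigma(\mathbf{k},-\mathbf{k})}\,U(-\mathbf{k})$. Here $\sigma(\mathbf{k},-\mathbf{k})=1$ by skew-symmetry, so
\[
\rho(f)^*=\int\overline{f(-\mathbf{k})}\,U(-\mathbf{k})\,d\mathbf{k}=\rho(f^*),
\]
with $f^*$ given by \eqref{eq:involution}.

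\textbf{Non-degeneracy.} Take an approximate identity $f_n\ge0$ with $\int f_n=1$ and support shrinking to $0$. By strong continuity of $U$ and $U(0)=I$, we get $\|\rho(f_n)\phi-\phi\|\le\sup_{|\mathbf{k}|\le1/n}\|U(-\mathbf{k})\phi-\phi\|\to0$. Hence the span of $\rho(L^1)L^2(\R^2)$ is dense, and $\rho$ is non-degenerate.
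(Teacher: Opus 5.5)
Your overall strategy is sound, and the factorization you flag as the main obstacle is correct. Composing the four operators in \eqref{eq:unitary_actions} gives $U^{r,s}_{\hbar_0,\vartheta_0,B_0}(\mathbf k)=e^{\pi i Q(\mathbf k)}U(\mathbf k)$ with $Q(\mathbf k)=\sum_{n<m}k_n\tau_{nm}k_m$. This uses exactly four of the identities in \eqref{eq:propII1-identities}.

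\textbf{The gap.} The gap is in the next step, where you read off the multiplier. Write $\beta(\mathbf a,\mathbf b)=\sum_{n<m}a_n\tau_{nm}b_m$. Then \eqref{eq:U-product} says $U(\mathbf k)U(\mathbf k')=e^{2\pi i\beta(\mathbf k',\mathbf k)}U(\mathbf k+\mathbf k')$. Since $Q(\mathbf k+\mathbf k')=Q(\mathbf k)+Q(\mathbf k')+\beta(\mathbf k,\mathbf k')+\beta(\mathbf k',\mathbf k)$, one gets
\[
\sigma(\mathbf k,\mathbf k')=e^{\pi i[\beta(\mathbf k',\mathbf k)-\beta(\mathbf k,\mathbf k')]}=e^{\pi i\,(\mathbf k')^{\top}\tau\mathbf k}=e^{-\pi i\,\mathbf k^{\top}\tau\mathbf k'} .
\]
This is the opposite sign to what you wrote. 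As a check, for $\mathbf k=(k_1,0,0,0)$ and $\mathbf k'=(0,0,k_3',0)$ one finds $U_1(k_1)U_3(k_3')=e^{\frac{i}{2\hbar_0}k_1k_3'}U^{r,s}_{\hbar_0,\vartheta_0,B_0}(\mathbf k+\mathbf k')$.

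The sign matters, and your remark that the reflection in the definition of $\rho$ ``absorbs'' it is not valid. Since $\sigma$ is the exponential of a bilinear form, $\sigma(-\mathbf q,-\mathbf q')=\sigma(\mathbf q,\mathbf q')$, so the substitution $\mathbf k\mapsto-\mathbf k$ does not affect the phase at all. With your $\sigma$, the kernel you obtain is $e^{-\pi i\,\mathbf p^{\top}\tau\mathbf q}$. The change of variables $\mathbf q\mapsto\mathbf p-\mathbf q$ shows that $\int f(\mathbf q)g(\mathbf p-\mathbf q)e^{-\pi i\,\mathbf p^{\top}\tau\mathbf q}\,d\mathbf q=(g\star_{\hbar_0,\vartheta_0,B_0}f)(\mathbf p)$. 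So your computation as written proves $\rho(f)\rho(g)=\rho(g\star f)$, which is an anti-representation.

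\textbf{The fix.} With the correct multiplier, $\sigma(-\mathbf q,-(\mathbf p-\mathbf q))=e^{\pi i(\mathbf p-\mathbf q)^{\top}\tau\mathbf q}=e^{\pi i\,\mathbf p^{\top}\tau\mathbf q}$. This is exactly the kernel of \eqref{eq:star_product}, with no convention to adjust, and it is precisely the paper's identity \eqref{eq:app-propII1-product-U}.

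\textbf{Comparison with the paper.} Once this is corrected, your route is a legitimate alternative to the paper's. The paper composes two copies of \eqref{eq:projective_representation} directly and cancels phases using all twelve identities in \eqref{eq:propII1-identities}. You instead recognize $U^{r,s}_{\hbar_0,\vartheta_0,B_0}$ as the symmetrized normal-ordered Weyl operator and reuse the product formula of Appendix~\ref{app:weyl_product}. That is more economical and makes the $(r,s)$-independence of the multiplier transparent. Your derivation of $U(\mathbf k)^*=U(-\mathbf k)$ from $\sigma(\mathbf k,-\mathbf k)=1$ is also more careful than the paper's appeal to unitarity alone, as are your explicit boundedness and strong-continuity checks. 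The non-degeneracy argument is the same as the paper's.
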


\begin{proof}
The proof is given in Appendix~\ref{App:Proof_of_proposition_II.1}.
\end{proof}

We now complete the construction of the twisted group $C^*$-algebra.

\begin{proposition}\label{prop:universal-cstar}
Let \(L^{1}(\R^{4},\omega_{\hbar_{0},\vartheta_{0},B_{0}})\) be the twisted Banach $*$-algebra of Remark~\ref{rem:twisted-banach-star}. Define
\begin{equation}\label{eq:star_norm}
\|f\|_*
:=
\sup_{\rho}\|\rho(f)\|,
\end{equation}
where the supremum is taken over all nondegenerate \(*\)-representations
\[
\rho:L^1(\mathbb R^4,\omega_{\hbar_0,\vartheta_0,B_0})\to\mathcal B(\mathcal H_\rho)
\]
on Hilbert spaces. Then \(\|\cdot\|_*\) is the universal \(C^*\)-norm on
\(L^{1}(\R^{4},\omega_{\hbar_{0},\vartheta_{0},B_{0}})\), and the completion of
\(\bigl(L^{1}(\R^{4},\omega_{\hbar_{0},\vartheta_{0},B_{0}}), \star_{\hbar_{0},\vartheta_{0},B_{0}},*,\|\cdot\|_*\bigr)\)
is the enveloping twisted group \(C^*\)-algebra \(C^*(\R^{4},\omega_{\hbar_{0},\vartheta_{0},B_{0}}).\)
\end{proposition}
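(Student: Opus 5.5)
The plan is to run the standard enveloping $C^*$-algebra argument for a Banach $*$-algebra, using Remark~\ref{rem:twisted-banach-star} and Proposition~\ref{prop:twisted_banach_algebra_representation}. There are three things to establish. First, $\|\cdot\|_*$ is finite, and it is a $C^*$-seminorm. Second, it is a genuine norm. Third, it dominates every other $C^*$-seminorm. Once these hold, the completion of $L^1(\R^4,\omega_{\hbar_0,\vartheta_0,B_0})$ is a $C^*$-algebra with the universal property, which is exactly what $C^*(\R^4,\omega_{\hbar_0,\vartheta_0,B_0})$ is defined to be.

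\emph{Finiteness and the $C^*$-property.} For finiteness I would use the automatic continuity of $*$-representations of Banach $*$-algebras. For any $*$-representation $\rho$ and any $f$, the element $\rho(f^*\star f)$ is a positive operator. Its norm is bounded by its spectral radius in $\mathcal B(\mathcal H_\rho)$, which in turn is at most the spectral radius of $f^*\star f$ in the Banach algebra. By the submultiplicativity \eqref{eq:L1_submultiplicative} and the isometry $\|f^*\|_{L^1}=\|f\|_{L^1}$, this gives
\[
\|\rho(f)\|^2=\|\rho(f^*\star f)\|\le \|f^*\star f\|_{L^1}\le\|f\|_{L^1}^2 .
\]
Hence $\|f\|_*\le\|f\|_{L^1}$. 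One caveat is that the algebra is non-unital, so the spectral-radius step has to be carried out in the unitization $L^1\oplus\C$. Alternatively, for the concrete representations $\rho^{r,s}$ one can bound $\|\rho(f)\|\le\|f\|_{L^1}$ directly, since each $U(\mathbf k)$ is unitary.

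With finiteness in hand, the remaining properties pass to the supremum. Subadditivity, homogeneity, submultiplicativity, $\|f^*\|_*=\|f\|_*$, and the $C^*$-identity $\|f^*\star f\|_*=\|f\|_*^2$ each hold for $\|\rho(\cdot)\|$ for every $\rho$, and so they hold for the supremum. The set-theoretic issue of taking a supremum over a proper class of representations is handled in the usual way. One restricts to cyclic representations on Hilbert spaces of bounded cardinality, or equivalently to GNS representations of states, and this does not change the supremum.

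\emph{Definiteness.} This is the step I expect to be the main obstacle. It amounts to showing that the universal representation is faithful on $L^1$, and it is here that the concrete representation of Proposition~\ref{prop:twisted_banach_algebra_representation} is needed. Since $\|f\|_*\ge\|\rho^{r,s}(f)\|$, it suffices to show that $\rho^{r,s}(f)=0$ forces $f=0$ in $L^1$. To see this, I would compute the matrix coefficients $\langle\rho^{r,s}(f)\phi,\psi\rangle$ for Gaussians or Schwartz functions $\phi,\psi$. Up to nonvanishing phases and a linear change of variables, these are Fourier transforms of $f$ against the Fourier–Wigner transform $\langle U(\mathbf k)\phi,\psi\rangle$. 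The change of variables is invertible because $\tau_{\hbar_0,\vartheta_0,B_0}$ is nondegenerate, as $\det\tau\propto(\hbar_0-\vartheta_0B_0)^2\neq0$. Via Stone–von Neumann, these coefficients are the standard Schrödinger ones, and they span a dense set in $C_0(\R^4)$. So vanishing of all matrix coefficients implies that the integrals of $f$ against a separating family vanish, and therefore $f=0$ almost everywhere. Alternatively, one can invoke the standard fact that the left regular twisted representation on $L^2(\R^4)$ is faithful on $L^1$, by an approximate-identity argument.

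\emph{Universality.} Let $\|\cdot\|'$ be any $C^*$-seminorm on $L^1(\R^4,\omega_{\hbar_0,\vartheta_0,B_0})$. Its completion, after quotienting by the null space, is a $C^*$-algebra. By Gelfand–Naimark that $C^*$-algebra embeds isometrically into some $\mathcal B(\mathcal H)$, which produces a $*$-representation $\rho'$ with $\|f\|'=\|\rho'(f)\|\le\|f\|_*$. Nondegeneracy can be arranged by restricting to the essential subspace. Finally, every nondegenerate $*$-representation of $L^1$ extends uniquely to the completion because of the bound $\|\rho(f)\|\le\|f\|_*$. This is precisely the universal property characterizing the enveloping twisted group $C^*$-algebra $C^*(\R^4,\omega_{\hbar_0,\vartheta_0,B_0})$.
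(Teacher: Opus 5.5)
Your proposal is correct and follows the same route as the paper. The paper's proof simply invokes the standard enveloping $C^*$-algebra construction for the Banach $*$-algebra of Remark~\ref{rem:twisted-banach-star}, with Proposition~\ref{prop:twisted_banach_algebra_representation} supplying nondegenerate representations. You additionally spell out the automatic-continuity bound $\|\rho(f)\|\le\|f\|_{L^1}$ and, more substantively, definiteness of $\|\cdot\|_*$ via injectivity of $\rho^{r,s}_{\hbar_0,\vartheta_0,B_0}$ (or of the twisted regular representation) on $L^1$, which the paper's one-line proof asserts without argument.
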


\begin{proof}
By Remark~\ref{rem:twisted-banach-star},
\(L^{1}(\R^{4},\omega_{\hbar_{0},\vartheta_{0},B_{0}})\)
is an involutive Banach algebra. Moreover, by Proposition~\ref{prop:twisted_banach_algebra_representation}, it admits nondegenerate $*$-representations on Hilbert spaces. Therefore, the standard construction of the enveloping $C^*$-algebra applies; see, for example, \cite[Sec.~7.1]{folland2016course}. In particular, the universal norm
\begin{equation}
\|f\|_*
=
\sup_{\rho}\|\rho(f)\|
\end{equation}
is well defined and satisfies the \(C^*\)-identity \(\|f^*\star_{\hbar_0,\vartheta_0,B_0}f\|_* = \|f\|_*^2.\)
Its completion is, by definition, the enveloping twisted group \(C^*\)-algebra
\(C^*(\R^{4},\omega_{\hbar_{0},\vartheta_{0},B_{0}}).\)
\end{proof}

\begin{corollary}\label{cor:rho-faithful-extension}
For each admissible pair \((r,s)\), the representation
\(\rho^{r,s}_{\hbar_0,\vartheta_0,B_0}\) extends to a faithful
non-degenerate \(*\)-representation
\[
\rho^{r,s}_{\hbar_0,\vartheta_0,B_0}:
C^*(\mathbb R^4,\omega_{\hbar_0,\vartheta_0,B_0})
\longrightarrow
\mathcal B(L^2(\mathbb R^2)).
\]
Consequently, its restriction to
\(\mathcal S_{\hbar_0,\vartheta_0,B_0}\) is faithful.
\end{corollary}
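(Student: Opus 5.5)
The plan has three parts: show that $\rho^{r,s}_{\hbar_0,\vartheta_0,B_0}$ extends to the enveloping algebra by the universal property, then prove faithfulness of the extension, and finally restrict to $\mathcal S_{\hbar_0,\vartheta_0,B_0}$.

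\emph{Extension.} By Proposition~\ref{prop:twisted_banach_algebra_representation}, $\rho^{r,s}_{\hbar_0,\vartheta_0,B_0}$ is a nondegenerate $*$-representation of $L^1(\R^4,\omega_{\hbar_0,\vartheta_0,B_0})$. It is therefore one of the representations in the supremum \eqref{eq:star_norm}, so
\[
\|\rho^{r,s}_{\hbar_0,\vartheta_0,B_0}(f)\|\le\|f\|_*
\]
for every $f$. Hence $\rho^{r,s}_{\hbar_0,\vartheta_0,B_0}$ extends by continuity to a $*$-homomorphism on the completion $C^*(\R^4,\omega_{\hbar_0,\vartheta_0,B_0})$ of Proposition~\ref{prop:universal-cstar}. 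The extension stays nondegenerate because its range contains the range on $L^1$, which is already nondegenerate.

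\emph{Faithfulness.} This is the main step. Since the cocycle is determined by $\tau_{\hbar_0,\vartheta_0,B_0}$ in \eqref{eq:skewsymmatrix}, I will first verify that this matrix is nondegenerate. A direct computation should give
\[
\operatorname{Pf}(\tau)\propto \frac{1}{\hbar_0^2}\Bigl(1-\frac{\vartheta_0B_0}{\hbar_0}\Bigr),
\]
which is nonzero because $\hbar_0-\vartheta_0B_0\neq0$. For a nondegenerate multiplier on $\R^{2n}$, the twisted group $C^*$-algebra is isomorphic to the compact operators $\mathcal K(L^2(\R^n))$. One way to see this: after a linear (Darboux) change of variables, $L^1(\R^4,\omega)$ becomes the Heisenberg twisted convolution algebra, and the Stone--von Neumann theorem says every nondegenerate representation is a multiple of the Schrödinger representation. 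The universal norm then equals the norm in the single irreducible representation $\rho^{r,s}_{\hbar_0,\vartheta_0,B_0}$, which is irreducible because it comes from a unitary irreducible representation of $G_{\mathrm{NC}}$. So $\|\rho^{r,s}_{\hbar_0,\vartheta_0,B_0}(a)\|=\|a\|_*$ on a dense subalgebra, and by continuity on all of $C^*(\R^4,\omega_{\hbar_0,\vartheta_0,B_0})$. An isometric $*$-homomorphism is injective, which gives faithfulness.

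An alternative route, if one prefers to avoid Stone--von Neumann, is simplicity. The algebra $C^*(\R^4,\omega)\cong\mathcal K$ is simple, so the kernel of $\rho^{r,s}_{\hbar_0,\vartheta_0,B_0}$ is a closed ideal that is either $0$ or everything. It cannot be everything, since the representation is nondegenerate and hence nonzero. Either route needs the identification with $\mathcal K$.

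The hardest part is making the Darboux normalization explicit: one must check that the projective representation \eqref{eq:projective_representation} is unitarily equivalent to the Schrödinger--Weyl system with the correct multiplier, including the phase factors. I would handle this by noting that any two projective representations with the same nondegenerate multiplier class are equivalent to multiples of the Schrödinger one. The phase factors only change the multiplier by a coboundary, which is absorbed into an isomorphism of the twisted algebras.

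\emph{Restriction.} The restriction to $\mathcal S_{\hbar_0,\vartheta_0,B_0}$ is immediate. This Schwartz subalgebra embeds into $L^1(\R^4,\omega_{\hbar_0,\vartheta_0,B_0})$, and the universal norm is a genuine norm there, since $\varpi$ is injective on $L^1$ by Fourier-type injectivity. So $\mathcal S_{\hbar_0,\vartheta_0,B_0}$ sits injectively inside $C^*(\R^4,\omega_{\hbar_0,\vartheta_0,B_0})$, and the restriction of an injective map to it is injective.
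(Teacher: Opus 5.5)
Your proposal is correct and follows essentially the paper's route. The paper checks that $\tau_{\hbar_0,\vartheta_0,B_0}$ is nondegenerate by computing $\det\tau=(\hbar_0-\vartheta_0B_0)^2/(16\pi^4\hbar_0^6)$, which is the square of your Pfaffian, then invokes Stone--von Neumann to get $C^*(\mathbb R^4,\omega_{\hbar_0,\vartheta_0,B_0})\cong\mathcal K(L^2(\mathbb R^2))$ and finishes by simplicity exactly as in your fallback (your isometry argument is an equivalent finish); your extra check that $L^1\to C^*$ is injective, needed to pass faithfulness down to $\mathcal S_{\hbar_0,\vartheta_0,B_0}$, is a point the paper leaves implicit.
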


\begin{proof}
The extension follows from the universal property of the enveloping
\(C^*\)-algebra. Faithfulness is proved in Appendix~\ref{App:Proof_of_proposition_II.1}.
\end{proof}

Among the $*$-subalgebras of $L^1(\R^4,\omega_{\hbar_0,\vartheta_0,B_0})$, the one of principal interest to us is the Schwartz subalgebra. We denote it by
\begin{equation}
\mathcal{S}_{\hbar_0,\vartheta_0,B_0}
:=
\mathcal{S}(\R^4) \cap L^1(\R^4,\omega_{\hbar_0,\vartheta_0,B_0}).
\end{equation}
Equipped with its Fr\'echet topology, \(\mathcal{S}_{\hbar_0,\vartheta_0,B_0}\) is a dense $*$-subalgebra of \(L^1(\R^4,\omega_{\hbar_0,\vartheta_0,B_0})\), and the algebraic operations are continuous with respect to the Fr\'echet topology; see, for example, \cite[p.~870]{gracia2013elements} or \cite[Prop.~8.11]{folland1999real}. Its density follows from the density of \(\mathcal{S}(\R^4)\) in \(L^1(\R^4)\); see \cite[Cor.~4.23]{brezis2011functional}. Hence, we record the following consequences.

\begin{remark}\label{rem:schwartz-dense}
The space \(\mathcal{S}_{\hbar_0,\vartheta_0,B_0}\), equipped with its Fr\'echet topology, is a dense $*$-subalgebra of \(L^1(\R^4,\omega_{\hbar_0,\vartheta_0,B_0})\).
\end{remark}

\begin{remark}\label{rem:image-cstar}
For each admissible pair \((r,s)\), the operator-norm closure
\begin{equation}
\mathcal{C}^*_{\mathcal{B}}
:=
\overline{\rho^{r,s}_{\hbar_0,\vartheta_0,B_0}
\bigl(C^*(\R^4,\omega_{\hbar_0,\vartheta_0,B_0})\bigr)}^{\|\cdot\|_{\mathrm{op}}}
\end{equation}
is a $C^*$-subalgebra of \(\mathcal{B}(L^2(\R^2))\).
\end{remark}

\begin{remark}\label{rem:image-schwartz-dense}
The image \(\mathscr{S}_{\mathcal{B}}:= \rho^{r,s}_{\hbar_0,\vartheta_0,B_0}(\mathcal{S}_{\hbar_0,\vartheta_0,B_0})\) is dense in \(\mathcal{C}^*_{\mathcal{B}}\).
\end{remark}

This completes the algebraic and representation-theoretic preparation for the spectral-triple construction.

\section{Dirac Operator and the Spectral Triple} \label{sec:dirac-spectral-triple}
In this section, we construct the Dirac operator associated with the fixed nondegenerate irreducible unitary sector of \(G_{NC}\) introduced in the previous section and use it to formulate the corresponding spectral triple. We first develop the differential structure induced by the represented \(C^*\)-algebra, then define a two-parameter family of Dirac operators \(D^{r,s}_{\hbar_0,\vartheta_0,B_0}\) on \(L^2(\mathbb R^2)\otimes \mathbb C^2\) using the operators \({\Pi}^{r,s}_{x,(\hbar_0,\vartheta_0,B_0)}\) and \({\Pi}^{r,s}_{y,(\hbar_0,\vartheta_0,B_0)}\) corresponding to a chosen member of the unitarily equivalent \((r,s)\)-family of realizations of the fixed \(G_{NC}\)-sector. Finally, we verify the analytic properties required for a spectral triple, including self-adjointness, bounded commutators with the represented algebra, and the local compactness condition.

\subsection{Differential Structure} \label{subsec:differential_structure}

We begin with the differential structure associated with the represented \(C^*\)-algebra\\ \(C^*(\mathbb R^4,\omega_{\hbar_0,\vartheta_0,B_0})\) on \(L^2(\mathbb R^2)\). More precisely, we define the derivations induced by a strongly continuous \(\mathbb R^2\)-action on the algebra and study their basic properties on the Schwartz subalgebra. This provides the differential framework needed later for the construction of the Dirac operator and for verifying the bounded-commutator condition in the spectral triple.
\medskip

We begin by introducing a strongly continuous $\R^2$-action by \emph{$*$-automorphisms} on\\ $\Ctw$. For $f\in \Ctw$, define
\begin{equation}\label{eq:automorphism}
  {\alpha}_{\mathbf{k}}(f)(\mathbf{q})
  \coloneqq e^{-i\mathbf{k}\cdot \mathbf{q}^{\sharp}}\, f(\mathbf{q})
  = \exp\!\Big(-i\big(k_1(B_0 q_2-q_3)+k_2(-B_0 q_1-q_4)\big)\Big)\,f(\mathbf{q}),
\end{equation}
where $\mathbf{k}=(k_1,k_2)\in \R^2$ and $\mathbf{q}=(q_1,q_2,q_3,q_4)\in \R^4$. Here
\begin{equation}
\mathbf{q}^{\sharp} \coloneqq (q^{\sharp}_1,q^{\sharp}_2),
\qquad
q^{\sharp}_1 \coloneqq B_0 q_2 - q_3,
\qquad
q^{\sharp}_2 \coloneqq -B_0 q_1 - q_4.
\end{equation}
Since $\mathbf{q}\mapsto \mathbf{q}^{\sharp}$ is linear, ${\alpha}$ defines an $\R^2$-action on $\Ctw$ by $*$-automorphisms. Moreover, the map $\R^2\ni \mathbf{k}\mapsto {\alpha}_{\mathbf{k}}(f)\in \Ctw$ is strongly continuous; in particular,
\begin{equation}
  \lim_{\mathbf{k}\to \mathbf{0}}\|{\alpha}_{\mathbf{k}}(f)-f\|=0,
  \qquad \forall f\in \Ctw,
\end{equation}
as follows from dominated convergence on the dense Schwartz subalgebra and continuity on the $C^*$-completion; see, for example, \cite[Sec.~2.8]{de2020noncommutative}.

Differentiating the action at the identity yields two densely defined derivations $\partial_1,\partial_2$ (see, e.g., \cite[Eq.~(3.12)]{iochum2012kappa}):
\begin{equation}\label{eq:derivatives}
(\partial_j f)(\mathbf{q}) = \left.\frac{d}{dk_j}{\alpha}_{\mathbf{k}}(f)(\mathbf{q})\right|_{\mathbf{k}=0} = -i q^{\sharp}_j\, f(\mathbf{q}),\qquad j=1,2.
\end{equation}
Thus $\partial_1$ and $\partial_2$ are the infinitesimal generators of the coordinate one-parameter subgroups of the action ${\alpha}$. In particular, $\Shb$ lies in the domain of every iterated derivation; see \eqref{eq:containment} below.

\begin{proposition}\label{prop:3.1.1}
Let $f,g$ lie in the domain of $\partial_j$, for $j=1,2$. Then the following are satisfied:
\begin{enumerate}[label=(\arabic*)]
  \item \textbf{Leibniz rule:} $\partial_j(f \starhb g) = (\partial_j f)\starhb g + f\starhb (\partial_j g)$; 
  \item \textbf{Compatibility with involution:} $\partial_j(f^*) = (\partial_j f)^*$, for $j=1,2$;
  \item \textbf{Commutativity of the derivations:} $\partial_1\partial_2(f) = \partial_2\partial_1(f)$.
\end{enumerate}
\end{proposition}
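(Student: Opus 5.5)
The plan is to exploit the fact that, by \eqref{eq:derivatives}, each derivation $\partial_j$ acts as multiplication by $-i q^{\sharp}_j$, where $\mathbf q\mapsto \mathbf q^{\sharp}$ is a \emph{linear} map $\R^4\to\R^2$. All three identities then follow from linearity of $q^{\sharp}_j$ and the explicit formulas \eqref{eq:star_product} and \eqref{eq:involution}. I will first prove everything for $f,g\in\Shb$, where all integrals converge absolutely and differentiation under the integral is justified by dominated convergence. I then pass to the general domain of $\partial_j$ using that $\partial_j$ is the closed generator of the strongly continuous action $\alpha$.

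\textbf{Leibniz rule.} Write
\[
(f\starhb g)(\mathbf p)=\int_{\R^4} f(\mathbf q)\,g(\mathbf p-\mathbf q)\,e^{\pi i\sum_{n,m}p_n\tau_{nm}q_m}\,d\mathbf q .
\]
Multiply by $-ip^{\sharp}_j$ and split $p^{\sharp}_j=q^{\sharp}_j+(\mathbf p-\mathbf q)^{\sharp}_j$ inside the integral. The first piece is $\bigl((\partial_j f)\starhb g\bigr)(\mathbf p)$ and the second is $\bigl(f\starhb \partial_j g\bigr)(\mathbf p)$. The twisting phase is untouched, since it does not interact with the linear factor.

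An equivalent and more conceptual route gives the same conclusion. Because $\mathbf q\mapsto e^{-i\mathbf k\cdot\mathbf q^{\sharp}}$ is a character of the additive group $\R^4$, it factorizes as $e^{-i\mathbf k\cdot\mathbf q^{\sharp}}e^{-i\mathbf k\cdot(\mathbf p-\mathbf q)^{\sharp}}$. Hence $\alpha_{\mathbf k}(f\starhb g)=\alpha_{\mathbf k}f\starhb\alpha_{\mathbf k}g$, and differentiating at $\mathbf k=0$ yields the Leibniz rule. This second formulation is the one that extends cleanly beyond $\Shb$.

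\textbf{Involution.} Compute
\[
(\partial_j f)^*(\mathbf q)=\overline{-i(-\mathbf q)^{\sharp}_j f(-\mathbf q)}=\overline{i q^{\sharp}_j f(-\mathbf q)}=-iq^{\sharp}_j\overline{f(-\mathbf q)}=(\partial_j f^*)(\mathbf q),
\]
using linearity $(-\mathbf q)^{\sharp}=-\mathbf q^{\sharp}$ and reality of $q^{\sharp}_j$. At the group level this is the statement $\alpha_{\mathbf k}(f^*)=(\alpha_{\mathbf k}f)^*$.

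\textbf{Commutativity.} Both $\partial_1\partial_2 f$ and $\partial_2\partial_1 f$ equal $-q^{\sharp}_1q^{\sharp}_2 f$, since multiplication operators by functions commute. Equivalently, this reflects $\alpha_{\mathbf k}\alpha_{\mathbf k'}=\alpha_{\mathbf k+\mathbf k'}$ for the abelian group $\R^2$.

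The main obstacle is the domain issue, not the algebra. The statement refers to the domain of $\partial_j$ inside $\Ctw$, where elements need not be functions. I would handle this as follows:
\begin{itemize}
\item[(a)] Establish the identities on $\Shb$, which is $\alpha$-invariant and contained in all iterated domains.
\item[(b)] Use the automorphism form. For $f,g\in\Dom(\partial_j)$, the difference quotient $t^{-1}\bigl(\alpha_{te_j}(f\starhb g)-f\starhb g\bigr)$ splits as $t^{-1}(\alpha_{te_j}f-f)\starhb\alpha_{te_j}g+f\starhb t^{-1}(\alpha_{te_j}g-g)$. This converges by strong continuity and isometry of $\alpha$. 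Hence $f\starhb g\in\Dom(\partial_j)$ and the Leibniz rule holds.
\item[(c)] Obtain the involution identity the same way, since $*$ is isometric.
\item[(d)] Interpret commutativity on $\Dom(\partial_1\partial_2)\cap\Dom(\partial_2\partial_1)$, or on the smooth vectors of $\alpha$. There it follows from the abelian action via the standard argument for commuting one-parameter groups.
\end{itemize}
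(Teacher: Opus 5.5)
Your proposal is correct. Its computational core is the same as the paper's proof. On function-type elements the paper also uses the splitting $p^{\sharp}_j=q^{\sharp}_j+(\mathbf p-\mathbf q)^{\sharp}_j$ inside the twisted convolution integral; it adds the two one-sided terms rather than splitting, which is the same computation. For the involution it uses $(-\mathbf q)^{\sharp}=-\mathbf q^{\sharp}$ together with the reality of $q^{\sharp}_j$. For (3) it notes that multiplication by $q^{\sharp}_1$ and by $q^{\sharp}_2$ commute.

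Where you go beyond the paper is in steps (b)--(d). The paper never leaves the level of functions; it only remarks that the integrands are of Schwartz type. So it really proves the proposition only for elements such as those of $\Shb$, even though the hypothesis refers to the domain of the generator $\partial_j$ in $\Ctw$, whose elements need not be functions.

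Your automorphism-level argument proves the statement as literally formulated. Its ingredients are:
\begin{itemize}
\item $\alpha_{\mathbf k}(f\starhb g)=\alpha_{\mathbf k}f\starhb\alpha_{\mathbf k}g$, because $\mathbf q\mapsto e^{-i\mathbf k\cdot\mathbf q^{\sharp}}$ is a character of $\mathbb R^4$;
\item the difference-quotient splitting, controlled by isometry and strong continuity of $\alpha$;
\item the standard commuting-groups argument for (3), which rests on closedness of the generators.
\end{itemize}
As a by-product it shows that $\Dom(\partial_j)$ is a $*$-subalgebra. You also correctly observe that (3) only makes sense on $\Dom(\partial_1\partial_2)\cap\Dom(\partial_2\partial_1)$, or on the smooth vectors of $\alpha$; the paper glosses over this point.

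The price of your route is reliance on $\alpha_{\mathbf k}$ extending to a strongly continuous action of $\mathbb R^2$ by $*$-automorphisms of $\Ctw$. The paper asserts this but does not prove it in detail; it does follow from the universal property of the enveloping algebra plus density. The paper's route is the quicker explicit check, while yours is the one that actually covers the general domain.

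A minor remark: at the Schwartz level neither argument needs differentiation under the integral sign. Since $\partial_j$ is a multiplication operator, the identities hold pointwise once the integrals converge absolutely.
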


\begin{proof}
We use the twisted convolution formula
\begin{equation}\label{eq:starconv}
  (f\starhb g)(\mathbf{p})
  = \int_{\R^4} f(\mathbf{q})\, g(\mathbf{p}-\mathbf{q})\, e^{\pi i\, \mathbf{p}^{T}\tauhb\,\mathbf{q}}\; d\mathbf{q}.
\end{equation}
We also write $\mathbf{p}^{\sharp}=(p^{\sharp}_1,p^{\sharp}_2)$ with
\begin{equation}
p^{\sharp}_1 \coloneqq B_0 p_2 - p_3,
\qquad
p^{\sharp}_2 \coloneqq -B_0 p_1 - p_4,
\end{equation}
so that $(\mathbf{p}-\mathbf{q})^{\sharp}=\mathbf{p}^{\sharp}-\mathbf{q}^{\sharp}$ by linearity. Since the integrands below remain of Schwartz type, differentiation under the integral sign is justified.

\smallskip
\noindent (1) By \eqref{eq:derivatives} and \eqref{eq:starconv},
\begin{equation}
(\partial_j f\starhb g)(\mathbf{p})
= \int_{\R^4} (-i)\, q^{\sharp}_j\, f(\mathbf{q})\, g(\mathbf{p}-\mathbf{q})\, e^{\pi i\, \mathbf{p}^{T}\tauhb\,\mathbf{q}}\; d\mathbf{q}.
\end{equation}
Also,
\begin{equation}
(\partial_j g)(\mathbf{p}-\mathbf{q})
= -i(\mathbf{p}-\mathbf{q})^{\sharp}_j\, g(\mathbf{p}-\mathbf{q})
= -i(p^{\sharp}_j-q^{\sharp}_j)\,g(\mathbf{p}-\mathbf{q}),
\end{equation}
hence
\begin{equation}
(f\starhb \partial_j g)(\mathbf{p})
= \int_{\R^4} -i(p^{\sharp}_j-q^{\sharp}_j)\, f(\mathbf{q})\, g(\mathbf{p}-\mathbf{q})\, e^{\pi i\, \mathbf{p}^{T}\tauhb\,\mathbf{q}}\; d\mathbf{q}.
\end{equation}
Adding these expressions yields
\begin{align}
    (\partial_j f\starhb g)(\mathbf{p}) + (f\starhb \partial_j g)(\mathbf{p})
&= \int_{\R^4} -i p^{\sharp}_j\, f(\mathbf{q})\, g(\mathbf{p}-\mathbf{q})\, e^{\pi i\, \mathbf{p}^{T}\tauhb\,\mathbf{q}}\; d\mathbf{q} \nonumber \\
&= -i p^{\sharp}_j\,(f\starhb g)(\mathbf{p}),
\end{align}
which is exactly $\partial_j(f\starhb g)(\mathbf{p})$.
\noindent
(2) Recall that
\begin{equation}
f^*(\mathbf{q})=\overline{f(-\mathbf{q})}.
\end{equation}
Since $(-\mathbf{q})^{\sharp}=-\mathbf{q}^{\sharp}$ by linearity,
\begin{equation}
\partial_j(f^*)(\mathbf{q})
= -i q^{\sharp}_j\, \overline{f(-\mathbf{q})}.
\end{equation}
On the other hand,
\begin{equation}
(\partial_j f)^*(\mathbf{q})
= \overline{(\partial_j f)(-\mathbf{q})}
= \overline{-i(-q^{\sharp}_j)\, f(-\mathbf{q})}
= -i q^{\sharp}_j\, \overline{f(-\mathbf{q})}.
\end{equation}
Hence $\partial_j(f^*)=(\partial_j f)^*$.

\smallskip
\noindent (3) By \eqref{eq:derivatives},
\begin{equation}
\partial_1\partial_2(f)(\mathbf{q})
= \partial_1\big(-i q^{\sharp}_2 f(\mathbf{q})\big)
= -i q^{\sharp}_1 \big(-i q^{\sharp}_2 f(\mathbf{q})\big)
= -\, q^{\sharp}_1 q^{\sharp}_2\, f(\mathbf{q})
= -\, q^{\sharp}_2 q^{\sharp}_1\, f(\mathbf{q})
= \partial_2\partial_1(f)(\mathbf{q}),
\end{equation}
since $q^{\sharp}_1$ and $q^{\sharp}_2$ are ordinary commuting coordinate functions on $\R^4$.
\end{proof}

Thus $\partial_1$ and $\partial_2$ are commuting $*$-derivations on $\Ctw$.\\
Let $\mathcal{D}^{(n,m)}\subset \Ctw$ denote the domain of the iterated derivation $\partial_1^n\partial_2^m$. Then
\begin{equation}\label{eq:containment}
  \Shb \subset \bigcap_{(n,m)\in \mathbb{N}_0^2} \mathcal{D}^{(n,m)}.
\end{equation}
Indeed, each $\partial_j$ acts on $\Shb$ by multiplication with the linear coordinate function $q_j^\sharp$, so every iterated derivation preserves the Schwartz class.

We now pass to the represented algebra. Recall that a $C^*$-dynamical system is a triple $(\mathscr{C},G,\alpha)$, where $\mathscr{C}$ is a $C^*$-algebra, $G$ is a locally compact group, and $\alpha$ is a continuous homomorphism from $G$ to $\Aut(\mathscr{C})$, the group of $*$-automorphisms of $\mathscr{C}$ endowed with the topology of pointwise norm convergence; see Definition~2.6 of \cite{williams2007crossed}. A covariant representation of $(\mathscr{C},G,\alpha)$ is a triple $(\mathcal{H},\pi,U)$, where $(\mathcal{H},\pi)$ is a representation of $\mathscr{C}$ and $(\mathcal{H},U)$ is a unitary representation of $G$ such that
\begin{equation}
    \pi(\alpha_x(A)) = U_x\,\pi(A)\,U^*_x,
\end{equation}
for every $A\in \mathscr{C}$ and $x\in G$.

In the present setting, the relevant $C^*$-dynamical system is $(C^*(\R^4,\omega_{_{\hbar_0,\vartheta_0,B_0}}), \R^2, {\alpha}_{\mathbf{k}})$. Let
\begin{equation}
    U_\mathbf{k} = e^{-i\mathbf{k}\cdot \overline{\Pi}},
\end{equation}
where $\mathbf{k}\in \R^2$ and $\overline{\Pi}=({\Pi}_x,{\Pi}_y)$ is the pair of kinematical momentum operators defined in \eqref{eq:noncentral-generators}. Then $(L^2(\R^2),\rho^{r,s}_{\hbar_0,\vartheta_0,B_0},U_{\mathbf{k}})$ is a covariant representation of $(C^*(\R^4,\omega_{_{\hbar_0,\vartheta_0,B_0}}), \R^2, {\alpha}_{\mathbf{k}})$; see Appendix~\ref{app:covariance}.

By Remark~\ref{rem:image-cstar}, the operator-norm closure
\(\mathcal{C}^*_{\mathcal{B}} := \overline{\rho^{r,s}_{\hbar_0,\vartheta_0,B_0}
\bigl(C^*(\R^4,\omega_{\hbar_0,\vartheta_0,B_0})\bigr)}^{\|\cdot\|_{\mathrm{op}}}\)
is a $C^*$-subalgebra of $\mathcal{B}(L^2(\R^2))$. The induced $\R^2$-action on $\mathcal{C}^*_{\mathcal{B}}$ is given by the inner automorphisms
\begin{equation} \label{eq:star_automorphism}
    \alpha_{\mathbf{k}}(A) \coloneqq U_{\mathbf{k}}AU^*_{\mathbf{k}} = e^{-i\mathbf{k}\cdot \overline{\Pi}}Ae^{i\mathbf{k}\cdot\overline{\Pi}}, \qquad A\in \mathcal{B}(L^2(\R^2)).
\end{equation}

\begin{proposition}[{\cite[Proposition~2.30]{de2020noncommutative}}] \label{prop:3.1.2}
    For every $f\in C^*(\R^4,\omega_{_{\hbar_0,\vartheta_0,B_0}})$, the following hold:
    \begin{enumerate}[label=(\arabic*)]
        \item $\rho^{r,s}_{\hbar_0,\vartheta_0,B_0}({\alpha}_{\mathbf{k}}(f)) = \alpha_{\mathbf{k}}(\rho^{r,s}_{\hbar_0,\vartheta_0,B_0}(f))$;
        \item the map $\R^2 \ni \mathbf{k}\mapsto \alpha_{\mathbf{k}} \in \Aut(\mathcal{C}^*_\mathcal{B})$ is a strongly continuous action by $*$-automorphisms of the $C^*$-algebra $\mathcal{C}^*_\mathcal{B}$.
    \end{enumerate}
\end{proposition}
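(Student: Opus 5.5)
The plan is to prove (1) on the dense Schwartz subalgebra \(\Shb\) first, then extend by continuity, and to deduce (2) from (1). The covariance relation itself (Appendix~\ref{app:covariance}) is taken as the basic input.

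\textbf{Step 1: covariance at the level of the projective representation.} Write \(\Pi_x,\Pi_y\) for the kinematical momenta, i.e.\ the generators in \eqref{eq:noncentral-generators}. The first task is to compute \(U_{\mathbf{k}}\,U^{r,s}_{\hbar_0,\vartheta_0,B_0}(\mathbf{q})\,U_{\mathbf{k}}^*\). The kinematic momenta are combinations of the generators of \(U_1,\dots,U_4\), so the Weyl relations \eqref{eq:all_weyl_commutation_relation} should give a scalar phase:
\begin{equation}
U_{\mathbf{k}}\,U^{r,s}_{\hbar_0,\vartheta_0,B_0}(\mathbf{q})\,U_{\mathbf{k}}^*=e^{i\mathbf{k}\cdot\mathbf{q}^{\sharp}}\,U^{r,s}_{\hbar_0,\vartheta_0,B_0}(\mathbf{q}).
\end{equation}
The linear form \(\mathbf{q}^{\sharp}=(B_0q_2-q_3,\,-B_0q_1-q_4)\) is read off from the commutators \([\Pi_x,\cdot]\) and \([\Pi_y,\cdot]\) with the generators \(P_j\) of \(U_j\). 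The sign conventions here must be matched against \eqref{eq:automorphism}, and the factor \(f(-\mathbf{k})\) in \eqref{eq:twisted_banach_algerba_representation} absorbs the sign. I expect this bookkeeping to be the main obstacle. Checking it via the Baker--Campbell--Hausdorff formula uses only that \([\mathbf{k}\cdot\overline{\Pi},\mathbf{q}\cdot P]\) is central. So the phase is exactly \(\exp\) of that commutator, which I will verify from the central parameters \((\hbar_0,\vartheta_0,B_0)\) alone. The result is independent of \((r,s)\).

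\textbf{Step 2: integrate.} For \(f\in\Shb\subset L^1\), I insert the identity from Step 1 under the Bochner integral in \eqref{eq:twisted_banach_algerba_representation}. This is legitimate because conjugation by a unitary is strongly continuous and the integral converges absolutely in the strong topology. The result is
\begin{equation}
U_{\mathbf{k}}\rho^{r,s}(f)U_{\mathbf{k}}^*=\int f(-\mathbf{q})e^{i\mathbf{k}\cdot\mathbf{q}^{\sharp}}U^{r,s}(\mathbf{q})\,d\mathbf{q}=\rho^{r,s}(\alpha_{\mathbf{k}}f),
\end{equation}
using \((-\mathbf{q})^{\sharp}=-\mathbf{q}^{\sharp}\). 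This actually holds for all \(f\in L^1\). Next, \(\alpha_{\mathbf{k}}\) preserves the \(L^1\)-norm and is a \(*\)-automorphism of \(L^1(\R^4,\omega_{\hbar_0,\vartheta_0,B_0})\). Indeed, the twisted convolution \eqref{eq:star_product} is compatible with multiplication by the character \(e^{-i\mathbf{k}\cdot\mathbf{q}^{\sharp}}\), since \(\mathbf{q}^{\sharp}+(\mathbf{p}-\mathbf{q})^{\sharp}=\mathbf{p}^{\sharp}\). Therefore \(\alpha_{\mathbf{k}}\) is isometric for the universal norm \(\|\cdot\|_*\) of Proposition~\ref{prop:universal-cstar}, because composing any representation with \(\alpha_{\mathbf{k}}\) gives another representation. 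It thus extends to \(C^*(\R^4,\omega_{\hbar_0,\vartheta_0,B_0})\). Both sides of (1) are norm-continuous in \(f\), so density of \(L^1\) gives (1) on the whole \(C^*\)-algebra.

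\textbf{Step 3: part (2).} Each \(\alpha_{\mathbf{k}}(A)=U_{\mathbf{k}}AU_{\mathbf{k}}^*\) is a \(*\)-automorphism of \(\mathcal B(L^2(\R^2))\), and \(\alpha_{\mathbf{k}+\mathbf{k}'}=\alpha_{\mathbf{k}}\alpha_{\mathbf{k}'}\) holds because \(\Pi_x,\Pi_y\) enter through \(\mathbf{k}\cdot\overline{\Pi}\). Here \(U_{\mathbf{k}}\) is a projective representation, and its cocycle phase cancels under conjugation. By (1), \(\alpha_{\mathbf{k}}\) maps \(\rho^{r,s}(C^*)\) into itself, and by norm continuity it maps \(\mathcal C^*_{\mathcal B}\) into itself; the inverse is \(\alpha_{-\mathbf{k}}\). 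For strong continuity, take \(A=\rho^{r,s}(f)\) with \(f\in L^1\). Then
\begin{equation}
\|\alpha_{\mathbf{k}}(A)-A\|\le\|\alpha_{\mathbf{k}}f-f\|_{L^1}=\int|e^{-i\mathbf{k}\cdot\mathbf{q}^{\sharp}}-1||f|\to0
\end{equation}
by dominated convergence. Since all \(\alpha_{\mathbf{k}}\) are isometries, an \(\varepsilon/3\) argument extends this to every element of \(\mathcal C^*_{\mathcal B}\), using the density in Remark~\ref{rem:image-schwartz-dense}.
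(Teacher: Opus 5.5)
Your proof is correct and follows essentially the paper's route. Part (1) comes from integrating the Weyl-level covariance $U_{\mathbf k}U(\mathbf q)U_{\mathbf k}^*=e^{i\mathbf k\cdot\mathbf q^\sharp}U(\mathbf q)$ of Appendix~\ref{app:covariance} against $f(-\mathbf q)$, and part (2) from $\|\alpha_{\mathbf k}(\rho^{r,s}_{\hbar_0,\vartheta_0,B_0}(f))-\rho^{r,s}_{\hbar_0,\vartheta_0,B_0}(f)\|\le\|\alpha_{\mathbf k}(f)-f\|$ plus density. Your extra checks fill in points the paper leaves implicit: that $\alpha_{\mathbf k}$ is an isometric $*$-automorphism for the universal norm, so it genuinely extends to the $C^*$-completion, and that the cocycle phase of the merely projective $U_{\mathbf k}$ cancels under conjugation, so $\alpha$ is a group action.
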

\begin{proof}
    (1) This is precisely the covariance relation for the covariant representation\\ $(L^2(\R^2),\rho^{r,s}_{\hbar_0,\vartheta_0,B_0},U_{\mathbf{k}})$.
    \medspace

    (2) Each $\alpha_{\mathbf{k}}$ is a $*$-automorphism of $\mathcal{B}(L^2(\R^2))$, and by part~(1) it leaves $\mathcal{C}^*_{\mathcal{B}}$ invariant. Moreover, for every $f\in C^*(\R^4,\omega_{_{\hbar_0,\vartheta_0,B_0}})$,
    \begin{equation}
        \|\alpha_{\mathbf{k}}(\rho^{r,s}_{\hbar_0,\vartheta_0,B_0}(f))- \rho^{r,s}_{\hbar_0,\vartheta_0,B_0}(f)\|
        = \|\rho^{r,s}_{\hbar_0,\vartheta_0,B_0}({\alpha}_{\mathbf{k}}(f)-f)\|
        \leq \|{\alpha}_{\mathbf{k}}(f)-f\|.
    \end{equation}
    Since $\mathbf{k}\mapsto {\alpha}_{\mathbf{k}}(f)$ is norm-continuous and $\rho^{r,s}_{\hbar_0,\vartheta_0,B_0}(C^*(\R^4,\omega_{_{\hbar_0,\vartheta_0,B_0}}))$ is dense in $\mathcal{C}^*_\mathcal{B}$, it follows that
    \begin{equation}
        \lim_{\mathbf{k}\to 0} \|\alpha_{\mathbf{k}}(A)-A\| = 0, \qquad \forall A\in \mathcal{C}^*_\mathcal{B}.
    \end{equation}
\end{proof}

Differentiating the action $\alpha_{\mathbf{k}}$ with respect to the $C^*$-norm topology of $\mathcal{C}^*_\mathcal{B}$ yields two densely defined commuting $*$-derivations $\delta_1$ and $\delta_2$. Let $\mathfrak{D}^{n,m}$ denote the domain of $\delta^n_1\delta^m_2$. Proposition~\ref{prop:3.1.2} implies that $\rho^{r,s}_{\hbar_0,\vartheta_0,B_0}(\Sch(\R^2)^{n,m}) \subset \mathfrak{D}^{n,m}$ and
\begin{equation}
    \rho^{r,s}_{\hbar_0,\vartheta_0,B_0}(\partial^n_1\partial^m_2(f)) = \delta^n_1\delta^m_2(\rho^{r,s}_{\hbar_0,\vartheta_0,B_0}(f)), \qquad \forall f\in \Sch(\R^2)^{n,m},
\end{equation}
also,
\begin{equation}
    \mathscr{S}_\mathcal{B}\subset \bigcap_{(n,m)\in \mathbb{N}_0^2} \mathfrak{D}^{n,m}.
\end{equation}
For $N\in \mathbb{N}_0$, let
\begin{equation}
    \mathscr{C}^N \coloneqq \bigcap_{n+m\leq N} \mathfrak{D}^{n,m}.
\end{equation}
Then $\mathscr{C}^N$ becomes a Banach space with respect to the norm
\begin{equation}
    \|A\|_N \coloneqq \sum_{n+m\leq N} \|\delta^n_1\delta^m_2(A)\|,
\end{equation}
and one has
\begin{equation}
    \mathscr{C}^N = \overline{\mathscr{S}_\mathcal{B}}^{\|\;\|_N}.
\end{equation}
Accordingly,
\begin{equation}
    \mathscr{C}^\infty \coloneqq \bigcap_{N\in\mathbb{N}_0}\mathscr{C}^N
\end{equation}
is the smooth subalgebra of $\mathcal{C}^*_\mathcal{B}$ associated with the strongly continuous $\R^2$-action. In particular, $\mathscr{C}^\infty$ is a non-unital pre-$C^*$-algebra of $\mathcal{C}^*_\mathcal{B}$ by Proposition~3.45 of \cite{gracia2013elements}.

We now identify the represented derivations with commutators against the momentum operators. Let $D_i$ be a self-adjoint densely defined operator on the Hilbert space. For a bounded operator $A$, the commutator $[D_i,A]$ is initially defined on $\mathscr{D}(D_i)$ whenever $A[\mathscr{D}(D_i)]\subseteq \mathscr{D}(D_i)$; if this operator extends boundedly to the whole Hilbert space, we denote the extension again by $[D_i,A]$.

\begin{proposition} \label{prop:3.3}
    Let $\mathscr{D}^1$ be defined by
    \begin{equation}
        \mathscr{D}^1\coloneqq \{ A\in \mathcal{C}^*_\mathcal{B} \colon A[\mathscr{D}(D_j)]\subseteq \mathscr{D}(D_j),[D_j,A] \in \mathcal{C}^*_\mathcal{B}, j = 1,2\}.
    \end{equation}
    Then $\mathscr{D}^1$ is a core for $\delta_1$ and $\delta_2$ and
    \begin{equation}
        \delta_j(A) = -i[D_j,A], \qquad j = 1,2,
    \end{equation}
    and
    \begin{equation} \label{eq:star_derivations_of_A}
        \delta_1(A) = -i[{\Pi}^{r,s}_{x,(\hbar_0,\vartheta_0,B_0)},A], \qquad \delta_2(A) = -i[{\Pi}^{r,s}_{y,(\hbar_0,\vartheta_0,B_0)},A],
    \end{equation}
    where ${\Pi}^{r,s}_{x,(\hbar_0,\vartheta_0,B_0)}$ and ${\Pi}^{r,s}_{y,(\hbar_0,\vartheta_0,B_0)}$ are the momentum operators.
\end{proposition}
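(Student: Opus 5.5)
The plan is to reduce the statement to the standard correspondence between the generator of a unitarily implemented automorphism group and commutators with its infinitesimal generator. Here $D_1=\Pi^{r,s}_{x,(\hbar_0,\vartheta_0,B_0)}$ and $D_2=\Pi^{r,s}_{y,(\hbar_0,\vartheta_0,B_0)}$. By Stone's theorem these are the self-adjoint generators of the one-parameter groups $t\mapsto U_{(t,0)}$ and $t\mapsto U_{(0,t)}$ coming from $U_{\mathbf{k}}=e^{-i\mathbf{k}\cdot\overline{\Pi}}$. By \eqref{eq:star_automorphism} the action is $\alpha_{\mathbf{k}}(A)=U_{\mathbf{k}}AU_{\mathbf{k}}^*$. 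Two remarks are needed first. Although $\Pi_x$ and $\Pi_y$ do not commute, the one-parameter subgroups of $\alpha$ still commute. The reason is that $U_{(k_1,0)}U_{(0,k_2)}$ and $U_{(0,k_2)}U_{(k_1,0)}$ differ only by the scalar phase $e^{\pm i\hbar_0B_0k_1k_2}$, which cancels under conjugation. Consequently, to compute $\delta_j$ it suffices to work with a single one-parameter group $\alpha^{(j)}_t(A)=e^{-itD_j}Ae^{itD_j}$.

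\textbf{Step 1 (inclusion $\mathscr D^1\subset\Dom\delta_j$ and the formula).} Take $A\in\mathscr D^1$ and $\xi\in\mathscr D(D_j)$. Because $A\xi\in\mathscr D(D_j)$, the function $t\mapsto e^{-itD_j}Ae^{itD_j}\xi$ is differentiable. Its derivative at $t$ is $-ie^{-itD_j}[D_j,A]e^{itD_j}\xi$, which gives
\[
\alpha^{(j)}_t(A)\xi-A\xi=-i\int_0^t\alpha^{(j)}_s\bigl([D_j,A]\bigr)\xi\,ds .
\]
By density of $\mathscr D(D_j)$ and boundedness of both sides, this identity holds on all of $L^2(\R^2)$. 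Now $[D_j,A]\in\mathcal C^*_{\mathcal B}$, and by Proposition~\ref{prop:3.1.2}(2) the map $s\mapsto\alpha_s^{(j)}([D_j,A])$ is norm continuous. Hence $t^{-1}(\alpha^{(j)}_t(A)-A)\to -i[D_j,A]$ in norm. This shows $A\in\Dom\delta_j$ and $\delta_j(A)=-i[D_j,A]$, which is the content of \eqref{eq:star_derivations_of_A}.

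\textbf{Step 2 (converse inclusion and the core property).} Take $A\in\Dom\delta_j$ and $\xi\in\mathscr D(D_j)$. Write $e^{-itD_j}A\xi=\alpha_t^{(j)}(A)e^{-itD_j}\xi$ and differentiate at $t=0$. The right-hand side is differentiable, with derivative $\delta_j(A)\xi-iAD_j\xi$. Therefore $A\xi\in\mathscr D(D_j)$ by Stone's theorem, and $-iD_jA\xi=\delta_j(A)\xi-iAD_j\xi$. So $A$ preserves $\mathscr D(D_j)$, $[D_j,A]=i\delta_j(A)$ extends boundedly, and this extension lies in $\mathcal C^*_{\mathcal B}$. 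Doing this for $j=1,2$ shows $\Dom\delta_1\cap\Dom\delta_2\subset\mathscr D^1$, and Step 1 gives the reverse inclusion, so $\mathscr D^1=\Dom\delta_1\cap\Dom\delta_2=\mathscr C^1$. The space $\mathscr C^1$ contains $\mathscr S_{\mathcal B}$. By the preceding discussion it is dense in the graph norm $\|\cdot\|_1$, and this density is exactly the statement that it is a core for each $\delta_j$. Alternatively, one can note that $\mathscr S_{\mathcal B}\subset\mathscr D^1$ is invariant under $\alpha$, since $\alpha$ acts on symbols by phase multiplication and so preserves $\Sch_{\hbar_0,\vartheta_0,B_0}$. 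A dense $\alpha$-invariant subspace of smooth vectors is a core by the Nelson-type core theorem for generators of $C_0$-groups.

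\textbf{Main obstacle.} The delicate point is the domain bookkeeping in Step 2: showing that $A$ preserves $\mathscr D(D_j)$ using only norm differentiability of $\alpha_t^{(j)}(A)$, which the product-differentiation argument above handles. A further subtlety is identifying $U_{\mathbf k}$ with the exponentials of the concrete $\Pi$'s in the $(r,s)$-realization. For this I would rely on the covariance relation of Appendix~\ref{app:covariance}, which guarantees that the abstract derivations $\partial_j$ correspond under $\rho^{r,s}_{\hbar_0,\vartheta_0,B_0}$ to commutators with exactly these operators.
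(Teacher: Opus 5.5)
Your proof is correct, but it takes a different route from the paper. The paper's proof is a citation. It invokes Theorem~7.3 of \cite{de2007commutator}, a general result identifying the derivation of a unitarily implemented $C^*$-action with the commutator with the generator, and then sets $D_1=\Pi^{r,s}_{x,(\hbar_0,\vartheta_0,B_0)}$ and $D_2=\Pi^{r,s}_{y,(\hbar_0,\vartheta_0,B_0)}$. You instead reprove that theorem directly in this setting via Stone's theorem.

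\begin{itemize}
\item Step~1 is sound. It combines the integral identity on $\mathscr D(D_j)$ with the norm continuity of $s\mapsto\alpha_s([D_j,A])$ from Proposition~\ref{prop:3.1.2}(2). For differentiability you need $A$ to map all of $\mathscr D(D_j)$ into itself, applied to $e^{itD_j}\xi$, not merely $A\xi\in\mathscr D(D_j)$; that is only a wording fix.
\item Step~2, differentiating $e^{-itD_j}A\xi=\alpha_t(A)e^{-itD_j}\xi$ at $t=0$, is also sound.
\item Together the two steps give the sharper identity $\mathscr D^1=\Dom\delta_1\cap\Dom\delta_2$, which the paper never states.
\item Your route makes explicit exactly which inputs are used: $\alpha_{te_j}=\mathrm{Ad}\,e^{-itD_j}$, norm continuity of the action on $\mathcal C^*_{\mathcal B}$, and, for the core claim, $\alpha$-invariance of $\mathscr S_{\mathcal B}$. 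The citation leaves the hypotheses of the cited theorem unchecked.
\end{itemize}

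Your first justification of the core property does not work and should be dropped. You say that density of $\mathscr S_{\mathcal B}$ in $\mathscr C^1$ for $\|\cdot\|_1$ ``is exactly the statement that it is a core for each $\delta_j$.'' A core for $\delta_1$ must be dense in $\Dom\delta_1$ for the graph norm $\|A\|+\|\delta_1(A)\|$. But $\Dom\delta_1$ may be strictly larger than $\mathscr C^1=\Dom\delta_1\cap\Dom\delta_2$, and density inside $\mathscr C^1$ says nothing about elements of $\Dom\delta_1\setminus\Dom\delta_2$. That argument also leans on the paper's unproved assertion $\mathscr C^N=\overline{\mathscr S_{\mathcal B}}^{\|\cdot\|_N}$.

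Your alternative argument is the correct one:
\begin{itemize}
\item $\mathscr S_{\mathcal B}$ is norm-dense in $\mathcal C^*_{\mathcal B}$.
\item $\mathscr S_{\mathcal B}\subseteq\Dom\delta_j$, since $\|\rho(g)\|\le\|g\|_{L^1}$ and $t^{-1}(e^{-itq_j^\sharp}f-f)\to -iq_j^\sharp f$ in $L^1$.
\item $\mathscr S_{\mathcal B}$ is invariant under each $\alpha_{te_j}$, by covariance and because multiplication by $e^{-i\mathbf k\cdot\mathbf q^\sharp}$ preserves $\mathcal S_{\hbar_0,\vartheta_0,B_0}$.
\end{itemize}
The core theorem for $C_0$-groups then makes $\mathscr S_{\mathcal B}$ a core for $\delta_j$. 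Hence the larger set $\mathscr D^1\supseteq\mathscr S_{\mathcal B}$ is also a core. No smooth-vector hypothesis is needed; containment in $\Dom\delta_j$ suffices.
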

\begin{proof}
    The assertion that $\mathscr{D}^1$ is a core for $\delta_1$ and $\delta_2$, together with the identity $\delta_j(A) = -i[D_j,A]$, follows from Theorem~7.3 of \cite{de2007commutator}. Taking
    \begin{equation}
    D_1 = {\Pi}^{r,s}_{x,(\hbar_0,\vartheta_0,B_0)},
    \qquad
    D_2 = {\Pi}^{r,s}_{y,(\hbar_0,\vartheta_0,B_0)},
    \end{equation}
    yields \eqref{eq:star_derivations_of_A}.
\end{proof}

\subsection{Dirac Operator} \label{subsec:dirac_operator}

We now introduce the Dirac operator associated with the two-parameter family of
noncommutative kinematical momentum operators obtained from the Lie algebra of
\(G_{NC}\). Since the represented configuration space is two-dimensional, we use the
standard irreducible representation of the complex Clifford algebra
\(Cl_2(\mathbb C)\) on \(\mathbb C^2\), given by the Pauli matrices
\begin{equation}\label{eq:4.26}
    \sigma^x =
    \begin{pmatrix}
        0 & 1\\
        1 & 0
    \end{pmatrix}, \qquad
    \sigma^y =
    \begin{pmatrix}
        0 & -i\\
        i & 0
    \end{pmatrix}.
\end{equation}
These matrices satisfy the Clifford relations
\begin{equation}\label{eq:anticommutation}
    \{\sigma^\mu,\sigma^\nu\}
    =
    2\delta^{\mu\nu}\mathbb I_{2\times2},
    \qquad \mu,\nu\in\{x,y\},
\end{equation}
that is,
\begin{equation}
    (\sigma^x)^2=(\sigma^y)^2=\mathbb I_{2\times2},
    \qquad
    \sigma^x\sigma^y+\sigma^y\sigma^x=0.
\end{equation}

The noncentral generators of \(G_{NC}\) for a fixed ordered triple \((\hbar_0,\vartheta_0,B_0)\) can be realized as self-adjoint differential operators on the space of smooth vectors of \(L^2(\R^2)\) as (for details, see \cite{chowdhurychowdhuryduha2021gauge})
\begin{equation}\label{eq:noncentral-generators}
\begin{aligned}
    X^s_{\hbar_0,\vartheta_0,B_0} &= {x} - s\frac{\vartheta_0}{\hbar_0}{p}_y, \\
    Y^s_{\hbar_0,\vartheta_0,B_0} &= {y}+(1-s)\frac{\vartheta_0}{\hbar_0}{p}_x, \\
    \Pi^{r,s}_{x,(\hbar_0,\vartheta_0,B_0)} &= \frac{(1-r)\hbar_0B_0}{\hbar_0-r\vartheta_0B_0}{y}+\frac{[(r+s-rs)\vartheta_0B_0-\hbar_0]}{r\vartheta_0B_0-\hbar_0}{p}_x, \\
    \Pi^{r,s}_{y,(\hbar_0,\vartheta_0,B_0)} &= -rB_0{x} + \Big[1+r(s-1)\frac{\vartheta_0B_0}{\hbar_0}\Big]{p}_y, 
\end{aligned}
\end{equation}
where $x,y,p_x$, and $p_y$ are the usual quantum mechanical positions and momentum operators, respectively. These noncentral generators obey the commutation relations
\begin{align}
    [{X}^s_{\hbar_0,\vartheta_0,B_0},{\Pi}^{r,s}_{x,(\hbar_0,\vartheta_0,B_0)}] &= [{Y}^s_{\hbar_0,\vartheta_0,B_0},{\Pi}^{r,s}_{y,(\hbar_0,\vartheta_0,B_0)}] = i\hbar_0\mathbb{I},\\
    [{X}^s_{\hbar_0,\vartheta_0,B_0},{Y}^s_{\hbar_0,\vartheta_0,B_0}] &= i\vartheta_0\mathbb{I},\\
    [{\Pi}^{r,s}_{x,(\hbar_0,\vartheta_0,B_0)}, {\Pi}^{r,s}_{y,(\hbar_0,\vartheta_0,B_0)}] &= i\hbar_0B_0\mathbb{I}.
\end{align}
Set
\begin{equation}
\kappa_0:=\hbar_0B_0,\qquad
c_0:=\sqrt{|\kappa_0|},\qquad
\varepsilon_0:=\operatorname{sgn}(\kappa_0).
\end{equation}
Since \(\hbar_0\neq0\) and \(B_0\neq0\), one has \(c_0>0\). We rescale the momentum operators by
\begin{equation} \label{eq:normalized-noncentral-generators}
    \widetilde{\Pi}\;^{r,s}_{x,(\hbar_0,\vartheta_0,B_0)} = \frac{1}{c_0} {\Pi}^{r,s}_{x,(\hbar_0,\vartheta_0,B_0)}, \quad
    \widetilde{\Pi}\;^{r,s}_{y,(\hbar_0,\vartheta_0,B_0)} = \frac{\varepsilon_0}{c_0} {\Pi}^{r,s}_{y,(\hbar_0,\vartheta_0,B_0)}.
\end{equation}
Then
\begin{equation} \label{eq:commutation}
    [{\widetilde{\Pi}}\;^{r,s}_{x,(\hbar_0,\vartheta_0,B_0)},{\widetilde{\Pi}}\;^{r,s}_{y,(\hbar_0,\vartheta_0,B_0)}] = i\mathbb{I}.
\end{equation}
The operators \({\widetilde{\Pi}}\;^{r,s}_{x,(\hbar_0,\vartheta_0,B_0)}\) and \({\widetilde{\Pi}}\;^{r,s}_{y,(\hbar_0,\vartheta_0,B_0)}\) are real scalar multiples of the self-adjoint momentum operators and hence are essentially self-adjoint on the Schwartz space \(\mathcal{S}(\R^2)\).

\begin{definition} \label{def:3.1}
    Using the momentum operators \({\widetilde{\Pi}}\;^{r,s}_{x,(\hbar_0,\vartheta_0,B_0)}\) and \({\widetilde{\Pi}}\;^{r,s}_{y,(\hbar_0,\vartheta_0,B_0)}\), we define a two-parameter family of Dirac operators on the dense core \(\mathcal{S}(\R^2)\otimes\C^2\) by
    \begin{equation}
        D^{r,s}_{\hbar_0,\vartheta_0,B_0} \coloneqq \frac{1}{\sqrt{2}} \Big({\widetilde{\Pi}}\;^{r,s}_{x,(\hbar_0,\vartheta_0,B_0)} \otimes \sigma^x + {\widetilde{\Pi}}\;^{r,s}_{y,(\hbar_0,\vartheta_0,B_0)} \otimes \sigma^y \Big).
    \end{equation}
    Since \(D^{r,s}_{\hbar_0,\vartheta_0,B_0}\) is a linear combination of symmetric operators on the common invariant core \(\mathcal{S}(\R^2)\otimes\C^2\), it is symmetric on the same core; (see \cite[p.~32]{de2020noncommutative}).
\end{definition}
We write the Dirac operator \(D^{r,s}_{\hbar_0,\vartheta_0,B_0}\) explicitly as
\begin{equation}
        D^{r,s}_{\hbar_0,\vartheta_0,B_0} =
         \frac{1}{\sqrt{2}}
        \begin{pmatrix}
            0 & {\widetilde{\Pi}}\;^{r,s}_{x,(\hbar_0,\vartheta_0,B_0)} - i {\widetilde{\Pi}}\;^{r,s}_{y,(\hbar_0,\vartheta_0,B_0)} \\
            {\widetilde{\Pi}}\;^{r,s}_{x,(\hbar_0,\vartheta_0,B_0)} + i {\widetilde{\Pi}}\;^{r,s}_{y,(\hbar_0,\vartheta_0,B_0)} & 0
        \end{pmatrix}
\end{equation}

\begin{lemma} \label{lem:nelson-linear-dirac}
Let \(T\) be a symmetric operator on
\(\mathcal S(\mathbb R^2)\otimes\mathbb C^2\) of the form
\begin{equation}
 T=M_1a_x+M_2a_x^*+M_3a_y+M_4a_y^*   
\end{equation}
where \(M_1,M_2,M_3,M_4\in M_2(\mathbb C)\) are constant matrices and
\begin{equation}
    a_x=\frac{1}{\sqrt2}(x+\partial_x),
\qquad
a_x^*=\frac{1}{\sqrt2}(x-\partial_x),
\end{equation}
\begin{equation}
    a_y=\frac{1}{\sqrt2}(y+\partial_y),
\qquad
a_y^*=\frac{1}{\sqrt2}(y-\partial_y).
\end{equation}
Then \(T\) is essentially self-adjoint on
\(\mathcal S(\mathbb R^2)\otimes\mathbb C^2\).
\end{lemma}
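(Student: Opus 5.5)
The lemma's title points to Nelson's commutator theorem, with the two-dimensional harmonic oscillator as comparison operator. Set
\[
N:=\bigl(a_x^*a_x+a_y^*a_y+1\bigr)\otimes\mathbb I_{2\times2},
\]
where \(H_{\mathrm{osc}}=a_x^*a_x+a_y^*a_y\). Write \(\{h_{n,m}\}\) for the Hermite basis of \(L^2(\mathbb R^2)\) and \(e_1,e_2\) for the standard basis of \(\mathbb C^2\). Then \(N\) is diagonal in the orthonormal basis \(\{h_{n,m}\otimes e_i\}\) with eigenvalues \(n+m+1\ge1\). Hence \(N\) is self-adjoint and positive, and \(\mathcal S(\mathbb R^2)\otimes\mathbb C^2\) is a core for it: finite Hermite expansions are dense in the graph norm, and Schwartz functions are exactly those whose Hermite coefficients decay rapidly. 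Nelson's theorem (Reed--Simon, Thm.~X.37) then says a symmetric \(T\) on a core \(\mathcal C\) of \(N\) is essentially self-adjoint there, provided that for some constant \(c\) and all \(\psi\in\mathcal C\)
\begin{equation*}
\text{(i)}\ \ \|T\psi\|\le c\,\|N\psi\|,\qquad
\text{(ii)}\ \ \bigl|\langle T\psi,N\psi\rangle-\langle N\psi,T\psi\rangle\bigr|\le c\,\bigl\|N^{1/2}\psi\bigr\|^2 .
\end{equation*}

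For (i) I will use the standard number-operator bounds. For \(\psi\in\mathcal S(\mathbb R^2)\),
\[
\|a_x\psi\|^2=\langle\psi,a_x^*a_x\psi\rangle\le\|N^{1/2}\psi\|^2,\qquad
\|a_x^*\psi\|^2=\|a_x\psi\|^2+\|\psi\|^2\le\|N^{1/2}\psi\|^2,
\]
and likewise for \(a_y,a_y^*\). The constant matrices \(M_k\) only contribute a factor \(\max_k\|M_k\|\). This gives \(\|T\psi\|\le 4\max_k\|M_k\|\,\|N^{1/2}\psi\|\le c\,\|N\psi\|\), since \(N\ge1\).

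For (ii) I will compute the commutator \([T,N]\) on the core. The CCR \([a_x,a_x^*]=1\) gives \([a_x,H_{\mathrm{osc}}]=a_x\) and \([a_x^*,H_{\mathrm{osc}}]=-a_x^*\), and similarly in \(y\). The matrices \(M_k\) commute with \(N\), which is scalar on \(\mathbb C^2\). Therefore
\[
[T,N]=M_1a_x-M_2a_x^*+M_3a_y-M_4a_y^*,
\]
again first order in the ladder operators. Write \(\langle T\psi,N\psi\rangle-\langle N\psi,T\psi\rangle=\langle\psi,[T,N]\psi\rangle\), using symmetry of \(T\) and \(N\) on the invariant core. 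Cauchy--Schwarz and the bounds above then give
\[
|\langle\psi,[T,N]\psi\rangle|\le\|\psi\|\,\|[T,N]\psi\|\le c\,\|N^{1/2}\psi\|^2 .
\]

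The main point needing care is the hypotheses of Nelson's theorem rather than the estimates. One must check that \(\mathcal S(\mathbb R^2)\otimes\mathbb C^2\) is invariant under \(a_x,a_x^*,a_y,a_y^*\), which is clear. One must also check that it is a core for \(N\), which follows from the Hermite characterization of Schwartz space. If one prefers to avoid that characterization, apply Nelson's theorem on the core of finite Hermite expansions. Essential self-adjointness there gives essential self-adjointness on the larger symmetric domain \(\mathcal S(\mathbb R^2)\otimes\mathbb C^2\), because the closure of a symmetric extension of an essentially self-adjoint operator is that same self-adjoint operator.
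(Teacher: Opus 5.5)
Your proof is correct, but it takes a different route from the paper. The paper uses Nelson's \emph{analytic-vector} theorem. It restricts $T$ to the span $\mathcal F$ of the vectors $h_{m,n}\otimes e_j$ and observes that $T$ maps $\mathcal F_\ell$ (total Hermite degree at most $\ell$) into $\mathcal F_{\ell+1}$ with $\|T\Phi\|\le K\sqrt{\ell+1}\,\|\Phi\|$, where $K=\sum_k\|M_k\|$. Iterating gives $\|T^k\Psi\|\le K^k\sqrt{(L+k)!/L!}\,\|\Psi\|$, so every vector in $\mathcal F$ is analytic. The passage from $\mathcal F$ to $\mathcal S(\mathbb R^2)\otimes\mathbb C^2$ is then exactly your closing symmetric-extension argument.

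You instead invoke Nelson's \emph{commutator} theorem with the number operator $N$ as comparison operator. Your checks are all sound: the bounds $\|a_x\psi\|,\|a_x^*\psi\|\le\|N^{1/2}\psi\|$ give (i), the identity $[T,N]=M_1a_x-M_2a_x^*+M_3a_y-M_4a_y^*$ gives (ii), and finite Hermite expansions give the core property for $N$.

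The paper's argument is more elementary: it exploits only the degree-raising ladder structure and needs no core statement for an auxiliary operator. Yours requires that extra bookkeeping but yields more. Your estimate $\|T\psi\|\le c\,\|N^{1/2}\psi\|$ on a core of $N^{1/2}$ gives $\operatorname{Dom}(N^{1/2})\subset\operatorname{Dom}(\overline{T})$. The theorem also makes $\overline{T}$ essentially self-adjoint on every core of $N$. Finally, the commutator method adapts more readily when the exact ladder structure is lost (e.g.\ first-order operators with variable coefficients), where explicit analytic vectors are usually unavailable.
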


\begin{proof}
Let \(\{h_{m,n}\}_{m,n\ge 0}\) be the standard orthonormal Hermite basis of \(L^2(\R^2)\). Then
\begin{equation}
a_x h_{m,n}=\sqrt m\, h_{m-1,n},
\qquad
a_x^* h_{m,n}=\sqrt{m+1}\, h_{m+1,n},
\end{equation}
and
\begin{equation}
a_y h_{m,n}=\sqrt n\, h_{m,n-1},
\qquad
a_y^* h_{m,n}=\sqrt{n+1}\, h_{m,n+1},
\end{equation}
with the convention that \(h_{-1,n}=h_{m,-1}=0\).

Fix the standard basis \(\{e_1,e_2\}\) of \(\C^2\), and define
\begin{equation}
\mathcal F
:=
\operatorname{span}\bigl\{h_{m,n}\otimes e_j : m,n\in\mathbb N_0,\ j=1,2\bigr\}.
\end{equation}
Since each \(h_{m,n}\) belongs to \(\mathcal S(\R^2)\), one has \(\mathcal F \subset \mathcal S(\R^2)\otimes\C^2.\) Moreover, because \(\{h_{m,n}\}_{m,n\ge 0}\) is an orthonormal basis of \(L^2(\R^2)\), the family \(\{h_{m,n}\otimes e_j : m,n\in\mathbb N_0,\ j=1,2\}\) is an orthonormal basis of \(L^2(\R^2)\otimes\C^2\). Hence \(\mathcal F\) is dense in \(L^2(\R^2)\otimes\C^2\).

For \(\ell\in\mathbb N_0\), let
\begin{equation}
\mathcal F_\ell
:=
\operatorname{span}\bigl\{h_{m,n}\otimes e_j : m+n\le \ell,\ j=1,2\bigr\}.
\end{equation}
Then \(\mathcal F_\ell\) is finite-dimensional, \(\mathcal F=\bigcup_{\ell\ge 0}\mathcal F_\ell\), and \(\displaystyle T(\mathcal F_\ell)\subset \mathcal F_{\ell+1}.\)
\\
Indeed, the ladder relations above show that each of \(a_x,a_x^*,a_y,a_y^*\) maps a basis vector
\(h_{m,n}\otimes e_j\) with \(m+n\le \ell\) to a scalar multiple of some \(h_{p,q}\otimes e_j\) with
\(p+q\le \ell+1\), while the matrices \(M_1,M_2,M_3,M_4\) act only on the \(\C^2\)-factor. In
particular, \(\displaystyle T(\mathcal F)\subset \mathcal F.\)

Set
\begin{equation}
K:=\|M_1\|+\|M_2\|+\|M_3\|+\|M_4\|.
\end{equation}
Let \(\Phi\in \mathcal F_\ell\). Using orthonormality of the Hermite basis and the formulas above,
we have
\begin{equation}
\|a_x\Phi\|\le \sqrt{\ell}\,\|\Phi\|,
\qquad
\|a_x^*\Phi\|\le \sqrt{\ell+1}\,\|\Phi\|,
\end{equation}
\begin{equation}
\|a_y\Phi\|\le \sqrt{\ell}\,\|\Phi\|,
\qquad
\|a_y^*\Phi\|\le \sqrt{\ell+1}\,\|\Phi\|.
\end{equation}
Therefore
\begin{equation} \label{eq:estimate}
    \|T \Phi\|
\le
K\sqrt{\ell+1}\,\|\Phi\|.
\end{equation}

Now let \(\Psi\in \mathcal F_L\). Since \((T)^k\Psi\in \mathcal F_{L+k}\) for all \(k\ge 0\), repeated application of \eqref{eq:estimate} gives

\begin{equation}\label{eq:Dk-estimate}
\|(T)^k\Psi\|
\le
K^k\sqrt{(L+1)(L+2)\cdots (L+k)}\,\|\Psi\|
\end{equation}
for every \(k\in\mathbb N\), with the empty product understood as \(1\) when \(k=0\).

Hence
\begin{equation}
\sum_{k=0}^\infty \frac{t^k}{k!}\,\|(T)^k\Psi\|
\le
\|\Psi\|\sum_{k=0}^\infty
\frac{(|t|K)^k}{k!}\sqrt{\frac{(L+k)!}{L!}}.
\end{equation}
The ratio of consecutive terms on the right is
\begin{equation}
\frac{|t|K}{k+1}\sqrt{L+k+1}\longrightarrow 0
\qquad (k\to\infty),
\end{equation}
so the series converges for every \(t\in\R\).

By the preceding estimate, every vector in \(\mathcal F\) is an analytic vector for
\(T|_{\mathcal F}\). Since \(T|_{\mathcal F}\) is symmetric and
\(\mathcal F\) is dense in \(\mathcal H\), Nelson's analytic-vectors theorem \cite[Thm.~2.74(3)]{derezinski2024quantization} implies that
\(T|_{\mathcal F}\) is essentially self-adjoint. Because
\(\mathcal F\subset \mathcal S(\mathbb R^2)\otimes\mathbb C^2\subset\mathcal H\)
and \(T\) on \(\mathcal S(\mathbb R^2)\otimes\mathbb C^2\) is a symmetric
extension of \(T|_{\mathcal F}\), it follows that \(T\) is also
essentially self-adjoint on \(\mathcal S(\mathbb R^2)\otimes\mathbb C^2\).
\end{proof}

\begin{proposition} \label{prop:3.4}
The Dirac operator \(D^{r,s}_{\hbar_0,\vartheta_0,B_0}\) is essentially self-adjoint on
\(\mathcal S(\mathbb R^2)\otimes\mathbb C^2\). Hence its closure is self-adjoint on
\(\mathcal H=L^2(\mathbb R^2)\otimes\mathbb C^2\). We denote this closure by the same symbol.
\end{proposition}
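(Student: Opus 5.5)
The plan is to reduce Proposition~\ref{prop:3.4} directly to Lemma~\ref{lem:nelson-linear-dirac}. By \eqref{eq:noncentral-generators}, each of \(\Pi^{r,s}_{x,(\hbar_0,\vartheta_0,B_0)}\) and \(\Pi^{r,s}_{y,(\hbar_0,\vartheta_0,B_0)}\) is a real linear combination of \(x,y,p_x,p_y\). The normalized operators in \eqref{eq:normalized-noncentral-generators} differ from these only by the real factors \(1/c_0\) and \(\varepsilon_0/c_0\). So \(D^{r,s}_{\hbar_0,\vartheta_0,B_0}\) has the form \(x\otimes N_1+y\otimes N_2+p_x\otimes N_3+p_y\otimes N_4\), with constant Hermitian matrices \(N_i\) built from \(\sigma^x,\sigma^y\). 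For example, \(N_2\) is \(\frac{1}{\sqrt2 c_0}\frac{(1-r)\hbar_0B_0}{\hbar_0-r\vartheta_0B_0}\sigma^x\), which is well defined because \(r\neq \hbar_0/(\vartheta_0B_0)\). The only assumption needed here is the admissibility of \((r,s)\), which guarantees that all coefficients are finite real numbers.

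Next I rewrite \(x,y,p_x,p_y\) in terms of ladder operators. With \(p_x=-i\hbar\,\partial_x\) in the paper's convention (or \(-i\partial_x\); the constant only rescales the matrices), we have
\[
x=\tfrac{1}{\sqrt2}(a_x+a_x^*),\qquad \partial_x=\tfrac{1}{\sqrt2}(a_x-a_x^*),
\]
and likewise for \(y\). Substituting gives \(D^{r,s}_{\hbar_0,\vartheta_0,B_0}=M_1a_x+M_2a_x^*+M_3a_y+M_4a_y^*\), where the \(M_i\in M_2(\mathbb C)\) are constant complex matrices. For instance, \(M_1=\frac{1}{\sqrt2}(N_1-i\hbar N_3)\) and \(M_2=\frac{1}{\sqrt2}(N_1+i\hbar N_3)\). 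Symmetry of \(D^{r,s}_{\hbar_0,\vartheta_0,B_0}\) on \(\mathcal S(\mathbb R^2)\otimes\mathbb C^2\) is already recorded in Definition~\ref{def:3.1}. Therefore all hypotheses of Lemma~\ref{lem:nelson-linear-dirac} hold with \(T=D^{r,s}_{\hbar_0,\vartheta_0,B_0}\).

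The lemma then gives essential self-adjointness on \(\mathcal S(\mathbb R^2)\otimes\mathbb C^2\). The closure of an essentially self-adjoint operator is self-adjoint, and it is the unique self-adjoint extension, which justifies denoting it by the same symbol.

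I expect no serious obstacle. The only point needing care is bookkeeping: I must check that the conventions for \(p_x,p_y\) match the \(\partial_x,\partial_y\) appearing in the lemma, up to constant factors absorbed into the \(M_i\). I must also confirm that no coefficient in \eqref{eq:noncentral-generators} is singular for the admissible \((r,s)\). Both checks are immediate from the stated formulas.
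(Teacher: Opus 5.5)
Your proposal is correct and follows the paper's proof essentially step for step. Like the paper, you write \(D^{r,s}_{\hbar_0,\vartheta_0,B_0}\) as a real linear combination of \(x,y,p_x,p_y\) tensored with constant matrices and convert these into the ladder operators \(a_x,a_x^*,a_y,a_y^*\) to obtain the form \(M_1a_x+M_2a_x^*+M_3a_y+M_4a_y^*\). You then invoke Lemma~\ref{lem:nelson-linear-dirac}, using the symmetry on \(\mathcal S(\mathbb R^2)\otimes\mathbb C^2\) recorded in Definition~\ref{def:3.1}; your explicit checks of the coefficients and of the admissibility of \((r,s)\) are the only additions.
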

\begin{proof}
From Definition~\ref{def:3.1}, the operator \(D^{r,s}_{\hbar_0,\vartheta_0,B_0}\) is symmetric on \(\mathcal S(\mathbb R^2)\otimes\mathbb C^2\). Moreover,
\(\widetilde{\Pi}^{r,s}_{x,(\hbar_0,\vartheta_0,B_0)}\) and
\(\widetilde{\Pi}^{r,s}_{y,(\hbar_0,\vartheta_0,B_0)}\) are real linear combinations of \(x,y,p_x,p_y\).We use the convention
\[
p_x=-i\hbar_0\partial_x,\qquad p_y=-i\hbar_0\partial_y .
\]
Define the standard annihilation and creation operators by
\[ a_x:=\frac{1}{\sqrt2}(x+\partial_x), \qquad a_x^*:=\frac{1}{\sqrt2}(x-\partial_x), \] 
and 
\[ a_y:=\frac{1}{\sqrt2}(y+\partial_y), \qquad a_y^*:=\frac{1}{\sqrt2}(y-\partial_y). \] 
Then \[ x=\frac{1}{\sqrt2}(a_x+a_x^*), \qquad \partial_x=\frac{1}{\sqrt2}(a_x-a_x^*), \] 
and \[ y=\frac{1}{\sqrt2}(a_y+a_y^*), \qquad \partial_y=\frac{1}{\sqrt2}(a_y-a_y^*). \]
Equivalently,
\[
p_x=-\frac{i\hbar_0}{\sqrt2}(a_x-a_x^*),
\qquad
p_y=-\frac{i\hbar_0}{\sqrt2}(a_y-a_y^*).
\]
Hence \(x,y,p_x,p_y\) can be expressed as linear combinations of \(a_x,a_x^*,a_y,a_y^*\). Therefore there exist constant \(2\times2\) matrices \(M_1,M_2,M_3,M_4\) such that, on \(\mathcal S(\mathbb R^2)\otimes\mathbb C^2\),
\[
D^{r,s}_{\hbar_0,\vartheta_0,B_0}
=
M_1a_x+M_2a_x^*+M_3a_y+M_4a_y^*
\]
Therefore Lemma~\ref{lem:nelson-linear-dirac} applies and implies that \(D^{r,s}_{\hbar_0,\vartheta_0,B_0}\) is essentially self-adjoint on
\(\mathcal S(\mathbb R^2)\otimes\mathbb C^2\). Consequently, its closure is self-adjoint on \(\mathcal H=L^2(\mathbb R^2)\otimes\mathbb C^2\).
\end{proof}

We next identify the spectral consequences of the Landau-type momentum pair.
Although \(D^{r,s}_{\hbar_0,\vartheta_0,B_0}\) has oscillator-type spectral
values, each spectral value has infinite multiplicity.

\begin{lemma}\label{lem:spectral-multiplicity}
The spectral values of \(D^{r,s}_{\hbar_0,\vartheta_0,B_0}\) are \(\{0\}\cup\{\pm\sqrt n:n\in\mathbb N\}\), and each of these spectral values has infinite multiplicity. In particular, \(D^{r,s}_{\hbar_0,\vartheta_0,B_0}\) does not have compact resolvent.
\end{lemma}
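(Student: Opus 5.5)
The plan is to reduce \(D^{r,s}_{\hbar_0,\vartheta_0,B_0}\) to a standard Landau/Jaynes--Cummings form using the canonical commutation relation \eqref{eq:commutation}. Write \(P:=\widetilde{\Pi}^{r,s}_{x,(\hbar_0,\vartheta_0,B_0)}\) and \(Q:=\widetilde{\Pi}^{r,s}_{y,(\hbar_0,\vartheta_0,B_0)}\). These are real linear combinations of \(x,y,p_x,p_y\) satisfying \([P,Q]=i\mathbb I\), so the linear symplectic map they induce can be completed to a full symplectic basis of the phase space \(\mathbb R^4\). Concretely, I would choose a second pair \((P',Q')\) of real linear combinations of \(x,y,p_x,p_y\) that commute with \(P,Q\) and satisfy \([P',Q']=i\mathbb I\). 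This is possible because the symplectic complement of \(\operatorname{span}\{P,Q\}\) is a two-dimensional symplectic subspace. By the metaplectic representation, or equivalently by Stone--von Neumann applied to the irreducible Weyl system generated by \(P,Q,P',Q'\), there is a unitary \(W\) on \(L^2(\mathbb R^2)\) with
\[
WPW^*=\tilde x,\qquad WQW^*=\tilde p_x,
\]
where \(\tilde x\) and \(\tilde p_x=-i\partial_x\) act in the first variable only. The operators \(P',Q'\) are sent to the second-variable pair. Conjugating by \(W\otimes\mathbb I_2\) transforms \(D\) into \(\frac{1}{\sqrt2}(\tilde x\otimes\sigma^x+\tilde p_x\otimes\sigma^y)\otimes \mathbb I_{L^2(\mathbb R_y)}\), acting on \(L^2(\mathbb R_x)\otimes\mathbb C^2\otimes L^2(\mathbb R_y)\). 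This unitary equivalence respects closures, since Schwartz space is preserved by metaplectic operators.

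Next I would compute the spectrum of the one-variable operator \(D_1=\frac1{\sqrt2}(\tilde x\sigma^x+\tilde p_x\sigma^y)\). Set \(a=\frac1{\sqrt2}(\tilde x+i\tilde p_x)\). Then \(D_1=\begin{pmatrix}0&a\\ a^*&0\end{pmatrix}\), or its version with \(a\) and \(a^*\) interchanged, depending on the sign convention. Hence
\[
D_1^2=\operatorname{diag}(aa^*,a^*a)=\operatorname{diag}(N+1,N),
\]
with \(N\) the number operator. In the Hermite basis \(h_n\) I would exhibit explicit eigenvectors. The vector \((0,h_0)^{T}\) (or \((h_0,0)^T\)) lies in the kernel, and \(\frac1{\sqrt2}(h_{n-1},\pm h_n)^{T}\) has eigenvalue \(\pm\sqrt n\) for \(n\ge1\). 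These vectors form an orthonormal basis of \(L^2(\mathbb R)\otimes\mathbb C^2\), so \(D_1\) has pure point spectrum \(\{0\}\cup\{\pm\sqrt n\}\) with each eigenvalue simple. Since the eigenvectors lie in Schwartz space, they also lie in the domain of the closure from Proposition~\ref{prop:3.4}, and self-adjointness then ensures there is no further spectrum.

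Tensoring with \(\mathbb I_{L^2(\mathbb R_y)}\), every eigenspace becomes \(\text{(simple eigenspace)}\otimes L^2(\mathbb R_y)\), which is infinite-dimensional, and the spectrum is unchanged. Infinite multiplicity of an isolated eigenvalue \(\lambda\) means that the spectral projection \(E_{\{\lambda\}}\) has infinite rank. Then \((D-\mu)^{-1}E_{\{\lambda\}}=(\lambda-\mu)^{-1}E_{\{\lambda\}}\) is not compact, so \(D\) does not have compact resolvent. Equivalently, the essential spectrum is the whole spectrum.

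The main obstacle is the first step: justifying the completion to a symplectic basis and the resulting unitary implementation rigorously at the level of closed unbounded operators, rather than formally on a core. I would handle this by citing that a linear symplectic map of \(\mathbb R^4\) is implemented by a metaplectic unitary which preserves \(\mathcal S(\mathbb R^2)\) and intertwines the linear operators on \(\mathcal S\). Essential self-adjointness on \(\mathcal S\) (Proposition~\ref{prop:3.4}) then transfers equality of the closures. A minor point to check is that the normalisation by \(\varepsilon_0\) indeed gives \([P,Q]=+i\), which is exactly \eqref{eq:commutation}.
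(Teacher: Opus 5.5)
Your proposal is correct and follows the same route as the paper. Both complete the normalized momentum pair to a Darboux basis, use a metaplectic/Stone--von Neumann unitary to reach $D_{\mathrm{osc}}\otimes I_{L^2(\mathbb R)}$, diagonalize $D_{\mathrm{osc}}$ with ladder operators in the Hermite basis, and read off both the infinite multiplicity and the non-compactness of the resolvent. Your extra care fills in two points the paper leaves implicit: that the eigenvectors form an orthonormal basis, so self-adjointness excludes any further spectrum, and that closures transfer through the Schwartz-preserving unitary. Finally, with the paper's $\sigma^y$ and $a=(\tilde x+i\tilde p_x)/\sqrt2$, your sign ambiguity resolves as $D_1=\begin{pmatrix}0&a^*\\ a&0\end{pmatrix}$, so the kernel is spanned by $(h_0,0)^{T}$.
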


\begin{proof}
From \eqref{eq:commutation}, we know that \([{\widetilde{\Pi}}\;^{r,s}_{x,(\hbar_0,\vartheta_0,B_0)},{\widetilde{\Pi}}\;^{r,s}_{y,(\hbar_0,\vartheta_0,B_0)}] = i\mathbb{I}.\) Moreover, \(\widetilde\Pi^{\,r,s}_{x,(\hbar_0,\vartheta_0,B_0)}\) and \(\widetilde\Pi^{\,r,s}_{y,(\hbar_0,\vartheta_0,B_0)}\) are real linear combinations of the standard Schrödinger operators \(x,y,p_x,p_y\) on \(L^2(\mathbb R^2)\). Hence their coefficient vectors form a symplectic pair in the four-dimensional linear phase space. This pair can be completed to a full linear Darboux basis. Therefore, by the metaplectic implementation of the corresponding linear symplectic transformation, there exists a unitary operator \(U_0\) on \(L^2(\mathbb R^2)\) such that, under the identification
\begin{equation}
L^2(\mathbb R^2)\simeq L^2(\mathbb R)\otimes L^2(\mathbb R),
\end{equation}
one has
\begin{equation}
U_0\widetilde\Pi_xU_0^{-1}
=
Q\otimes I_{L^2(\mathbb R)},
\qquad
U_0\widetilde\Pi_yU_0^{-1}
=
P\otimes I_{L^2(\mathbb R)},
\end{equation}
where \(Q\) and \(P\) are the standard Schrödinger position and momentum operators
on \(L^2(\mathbb R)\), satisfying \([Q,P]=iI\). The second tensor factor is the
degeneracy factor.

Extending \(U_0\) by the identity on the spinor component, we obtain
\begin{equation}
(U_0\otimes I_2)
D^{r,s}_{\hbar_0,\vartheta_0,B_0}
(U_0^{-1}\otimes I_2)
=
\frac1{\sqrt2}
\Big((Q\otimes I)\otimes\sigma_x+(P\otimes I)\otimes\sigma_y\Big).
\end{equation}
After the standard reordering of tensor factors
\begin{equation}
L^2(\mathbb R)\otimes L^2(\mathbb R)\otimes\mathbb C^2
\simeq
\bigl(L^2(\mathbb R)\otimes\mathbb C^2\bigr)\otimes L^2(\mathbb R),
\end{equation}
this operator is unitarily equivalent to
\begin{equation}
D_{\mathrm{osc}}\otimes I_{L^2(\mathbb R)},
\qquad
D_{\mathrm{osc}}
=
\frac1{\sqrt2}\bigl(Q\otimes\sigma_x+P\otimes\sigma_y\bigr),
\end{equation}
acting on \(\bigl(L^2(\mathbb R)\otimes\mathbb C^2\bigr)\otimes L^2(\mathbb R).\)

We now compute the spectrum of \(D_{\mathrm{osc}}\). Let
\begin{equation}
a:=\frac1{\sqrt2}(Q+iP),
\qquad
a^*:=\frac1{\sqrt2}(Q-iP),
\qquad
N:=a^*a.
\end{equation}
Then, in the spinor decomposition,
\begin{equation}
D_{\mathrm{osc}}
=
\begin{pmatrix}
0 & a^*\\
a & 0
\end{pmatrix},
\qquad
D_{\mathrm{osc}}^2
=
\begin{pmatrix}
N & 0\\
0 & N+I
\end{pmatrix}.
\end{equation}
If \(\{h_n\}_{n\geq0}\) denotes the Hermite basis of \(L^2(\mathbb R)\), then
\(Nh_n=nh_n\). Hence
\begin{equation}
D_{\mathrm{osc}}(h_0,0)=0.
\end{equation}
For each \(n\geq1\), the two-dimensional subspace
\begin{equation}
E_n:=\operatorname{span}\{(h_n,0),(0,h_{n-1})\}
\end{equation}
is invariant under \(D_{\mathrm{osc}}\), and the restriction of \(D_{\mathrm{osc}}\) to
\(E_n\) is represented by the matrix
\[\begin{pmatrix}
0 & \sqrt n\\
\sqrt n & 0
\end{pmatrix}.
\]
Therefore the corresponding eigenvalues are \(\pm\sqrt n\). Thus
\begin{equation}
\operatorname{Spec}(D_{\mathrm{osc}})
=
\{0\}\cup\{\pm\sqrt n:n\in\mathbb N\}.
\end{equation}

Since
\(D^{r,s}_{\hbar_0,\vartheta_0,B_0}\) is unitarily equivalent to
\(D_{\mathrm{osc}}\otimes I_{L^2(\mathbb R)}\), the same spectral values occur for
\(D^{r,s}_{\hbar_0,\vartheta_0,B_0}\). However, every eigenspace is tensored with the
infinite-dimensional Hilbert space \(L^2(\mathbb R)\). Consequently, every spectral
value has infinite multiplicity. In particular, the resolvent of \(D^{r,s}_{\hbar_0,\vartheta_0,B_0}\) is not compact. Indeed, for any \(\lambda\notin \operatorname{Spec}(D^{r,s}_{\hbar_0,\vartheta_0,B_0})\), the resolvent \(\bigl(D^{r,s}_{\hbar_0,\vartheta_0,B_0}-\lambda I\bigr)^{-1}\) acts by a nonzero scalar on each infinite-dimensional eigenspace.
\end{proof}

Thus the relevant compactness requirement is not compactness of the
resolvent itself, but the local compactness condition to be verified in the
next subsection. Having established the self-adjointness of
\(D^{r,s}_{\hbar_0,\vartheta_0,B_0}\) and its basic spectral behaviour, we now
combine the Schwartz core \(\mathcal S_{\hbar_0,\vartheta_0,B_0}\) with its representation on \(L^2(\mathbb R^2)\) and formulate the corresponding locally compact non-unital spectral triples.

\subsection{Locally Compact Non-unital Spectral Triple} \label{subsec:spectral_triple}

A locally compact non-unital spectral triple is a triple
\((\mathcal A,\mathcal H,D)\), where \(\mathcal A\) is an involutive algebra
faithfully represented on a Hilbert space \(\mathcal H\) by bounded operators,
and \(D\) is a self-adjoint operator on \(\mathcal H\), such that
\([D,\pi(a)]\) extends to a bounded operator on \(\mathcal H\) for every
\(a\in\mathcal A\), and
\begin{equation}
    \pi(a)(D-\lambda)^{-1}\in\mathcal K(\mathcal H),
\qquad
a\in\mathcal A,\quad \lambda\notin\operatorname{spec}(D).
\end{equation}
\cite{gayral2004moyal}. Here \(\mathcal K(\mathcal H)\) denotes the closed two-sided ideal of compact operators on \(\mathcal H\). 

\begin{proposition}\label{prop:cstar-bundle}
The \(C^*\)-algebra of the group \(G_{\mathrm{NC}}\) is a \(C^*\)-algebra bundle over \(\widetilde{H}^{\,2}(\R^4,\mathbb{T})\) with fibres
\begin{equation}
\left\{
C^*(\R^4,\omega_{\hbar_0,\vartheta_0,B_0})
\;:\;
\omega_{\hbar_0,\vartheta_0,B_0}\in H^2(\R^4,\mathbb{T})
\right\}.
\end{equation}
\end{proposition}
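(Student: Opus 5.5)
To prove Proposition~\ref{prop:cstar-bundle}, I would realize \(G_{\mathrm{NC}}\) as a central extension \(1\to Z\to G_{\mathrm{NC}}\to\R^4\to1\), with \(Z\simeq\R^3\) the centre, and apply the standard decomposition of a group \(C^*\)-algebra over the dual of a central subgroup (the Packer--Raeburn / Green twisted-crossed-product picture).

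\textbf{Step 1: reduction to a twisted crossed product.} Using the Packer--Raeburn stabilization trick, I would identify
\[
C^*(G_{\mathrm{NC}})\cong C^*(Z)\rtimes_{\sigma}\R^4\cong C_0(\widehat Z)\rtimes_{\sigma}\R^4 .
\]
Here \(\widehat Z\simeq\R^3\) carries the characters labelled by \((\hbar,\vartheta,B)\). Since \(Z\) is central, \(\R^4\) acts trivially on \(\widehat Z\), and only the cocycle twists.

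\textbf{Step 2: \(C_0(\widehat Z)\)-algebra structure.} The crossed product is then a \(C_0(\widehat Z)\)-algebra. Its fibre at a character \(\chi=(\hbar_0,\vartheta_0,B_0)\) is obtained by pushing the extension cocycle through \(\chi\). This produces the \(\mathbb T\)-valued \(2\)-cocycle \(\omega_{\hbar_0,\vartheta_0,B_0}(\mathbf q,\mathbf q')=e^{2\pi i\sum_{n<m}q'_n\tau_{nm}q_m}\), which is exactly the multiplier in \eqref{eq:U-product} with matrix \eqref{eq:skewsymmatrix}. The fibre is therefore \(C^*(\R^4,\omega_{\hbar_0,\vartheta_0,B_0})\), as constructed in Proposition~\ref{prop:universal-cstar}. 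Concretely, I would take the quotient map to be the integrated form of \(f\mapsto\int_Z f(z,\cdot)\overline{\chi(z)}\,dz\), which by the computation preceding \eqref{eq:star_product} intertwines convolution on \(G_{\mathrm{NC}}\) with \(\star_{\hbar_0,\vartheta_0,B_0}\). Exactness of the fibre sequence follows because \(\R^4\) is amenable, so full and reduced twisted crossed products agree.

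\textbf{Step 3: identifying the base.} For \(\R^4\), every multiplier is cohomologous to one of the form \(e^{\pi i\,\mathbf q^{T}\tau\mathbf q'}\), and \(H^2(\R^4,\mathbb T)\cong\{\text{real skew }4\times4\}\simeq\R^6\). The map \(\chi\mapsto[\omega_\chi]\), given by \((\hbar,\vartheta,B)\mapsto\tau_{\hbar,\vartheta,B}\), is continuous. Its image, which I take to be the subset denoted \(\widetilde H^{2}(\R^4,\mathbb T)\), is parametrized homeomorphically by \(\widehat Z\): the entries \(1/\hbar\), \(B/\hbar\), \(\vartheta/\hbar^2\) recover \((\hbar,\vartheta,B)\) whenever \(\hbar\neq0\). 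Degenerate characters with \(\hbar=0\) would need separate treatment, or the base would be read as this image closure. Transporting the \(C_0(\widehat Z)\)-structure along this identification gives the bundle over \(\widetilde H^{2}(\R^4,\mathbb T)\).

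\textbf{Main obstacle.} I expect the hardest step to be the topological bookkeeping in Step 3. Justifying that the fibre map into cohomology classes is a homeomorphism onto \(\widetilde H^2\), and handling characters where \(\hbar=0\), requires care. Continuity of the bundle in the sense of a continuous field (upper semicontinuity of norms is automatic for \(C_0(X)\)-algebras; lower semicontinuity needs amenability and exactness) would be settled by citing Rieffel's continuity results for twisted group algebras in the cocycle parameter.
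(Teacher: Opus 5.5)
Your proposal is correct and follows the same route as the paper. The paper likewise applies Theorem~1.2 of \cite{packer1992structure} to the central extension $1\to N\to G_{\mathrm{NC}}\to\R^4\to1$, obtaining a bundle over $\widehat N$ with fibres $C^*(\R^4,d_2(\gamma))$, and then identifies $d_2(\gamma)$, computed with the section $c(\mathbf q)=(0,0,0,\mathbf q)$, as $\omega_{\hbar_0,\vartheta_0,B_0}$.

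Your Step~3 supplies something the paper omits: it never justifies replacing $\widehat N$ by $\widetilde H^{2}(\R^4,\mathbb T)$. Note that the natural linear coordinates on $\widehat N$ are, up to signs and factors of $2\pi$, exactly the entries $1/\hbar$, $\vartheta/\hbar^2$, $B/\hbar$ of $\tau$, so $\gamma\mapsto[d_2(\gamma)]$ is linear and injective, and the characters you call ``$\hbar=0$'' need no separate treatment.
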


\begin{proof}
The proof is given in Appendix~\ref{App:proof_cstar_bundle}.
\end{proof}

By Proposition~\ref{prop:cstar-bundle}, each fixed cocycle \(\omega_{\hbar_0,\vartheta_0,B_0}\) determines a fibre \(C^*(\R^4,\omega_{\hbar_0,\vartheta_0,B_0})\) in the \(C^*\)-algebra bundle associated with \(G_{\mathrm{NC}}\). We now consider the dense \(*\)-subalgebra \(\mathcal{S}_{\hbar_0,\vartheta_0,B_0}\subset C^*(\R^4,\omega_{\hbar_0,\vartheta_0,B_0})\) and construct a two-parameter family of even spectral triples associated with the Dirac operator \(D^{r,s}_{\hbar_0,\vartheta_0,B_0}\).

\begin{definition}\label{def:3.3.1}
For each $r\in \R\setminus \bigl\{\frac{\hbar_0}{\vartheta_0B_0}\bigr\}$ and $s\in \R$, define
\begin{align}
    \A &\coloneqq \mathcal{S}_{\hbar_0,\vartheta_0,B_0}, \\
    \mathcal{H} &\coloneqq L^2(\R^2)\otimes \C^2, \\
    \pi^{r,s}(a) &\coloneqq \rho^{r,s}_{\hbar_0,\vartheta_0,B_0}(a)\otimes \mathbb{I}_{2\times 2},
    \qquad a\in \A, \\
    D^{r,s}_{\hbar_0,\vartheta_0,B_0}
    &\coloneqq \frac{1}{\sqrt{2}}
    \Bigl(
    {\widetilde{\Pi}}^{\,r,s}_{x,(\hbar_0,\vartheta_0,B_0)}\otimes \sigma^x
    +
    {\widetilde{\Pi}}^{\,r,s}_{y,(\hbar_0,\vartheta_0,B_0)}\otimes \sigma^y
    \Bigr), \\
    \chi &\coloneqq -i\sigma^x\sigma^y
    =
    \begin{pmatrix}
        1 & 0 \\
        0 & -1
    \end{pmatrix}.
\end{align}
Here $\rho^{r,s}_{\hbar_0,\vartheta_0,B_0}$ is a non-degenerate $*$-representation of $\A$ on $L^2(\R^2)$, extended diagonally to $\mathcal{H}$ by $\pi^{r,s}$, and $\chi$ is the grading operator on $\mathcal{H}$.
\end{definition}

A direct computation gives
\begin{equation} \label{eq:grading_on_dirac}
    D^{r,s}_{\hbar_0,\vartheta_0,B_0}\chi
    =
    -\chi D^{r,s}_{\hbar_0,\vartheta_0,B_0},
\end{equation}
and
\begin{equation} \label{eq:grading_on_pi}
    \pi^{r,s}(a)\chi
    =
    \chi\pi^{r,s}(a),
    \qquad \forall a\in \A.
\end{equation}
Hence the grading condition for an even spectral triple is satisfied.

We next discuss the dependence on the parameters $(r,s)$. The representation $\rho^{r,s}_{\hbar_0,\vartheta_0,B_0}$ is associated with the two-parameter family of unitarily equivalent projective representations $U^{r,s}_{\hbar_0,\vartheta_0,B_0}$ defined in~\eqref{eq:projective_representation}. Therefore the corresponding representations $\rho^{r,s}_{\hbar_0,\vartheta_0,B_0}$ are also unitarily equivalent. If $(r,s)$ and $(r',s')$ are two admissible pairs, then there exists a unitary operator $U$ on $L^2(\R^2)$ such that
\begin{equation} \label{eq:transformation_of_dirac_using_unitary}
    D^{r',s'}_{\hbar_0,\vartheta_0,B_0}
    =
    (U\otimes \mathbb{I}_{2\times 2})
    D^{r,s}_{\hbar_0,\vartheta_0,B_0}
    (U^{-1}\otimes \mathbb{I}_{2\times 2}).
\end{equation}
Equivalently,
\begin{align}
    {\widetilde{\Pi}}^{\,r',s'}_{x,(\hbar_0,\vartheta_0,B_0)}
        &=
    U{\widetilde{\Pi}}^{\,r,s}_{x,(\hbar_0,\vartheta_0,B_0)}U^{-1}, \\
        {\widetilde{\Pi}}^{\,r',s'}_{y,(\hbar_0,\vartheta_0,B_0)}
        &=
    U{\widetilde{\Pi}}^{\,r,s}_{y,(\hbar_0,\vartheta_0,B_0)}U^{-1},
\end{align}
in agreement with equation~(2.12) of \cite{chowdhurychowdhuryduha2021gauge}. Since the spectrum of the Dirac operator is independent of the kinematical presentation parameters $(r,s)$ by Proposition~\ref{prop:3.4}, the above family is unitarily equivalent, and hence isospectral.

\begin{definition}\label{def:3.3.2}
Two even spectral triples $(\A,\mathcal H,\pi,D,\chi)$ and $(\A,\mathcal H,\pi',D',\chi')$ are said to be unitarily equivalent if there exists a unitary operator $V$ on $\mathcal H$ such that
\begin{equation}
\pi'(a)=V\pi(a)V^{-1},\qquad a\in\A,
\end{equation}
\begin{equation}
D'=VDV^{-1},
\end{equation}
and
\begin{equation}
\chi'=V\chi V^{-1}.
\end{equation}
In this case the spectra of $D$ and $D'$ coincide, so that the triples are isospectral in the present context \cite{martinetti2002discrete}.
\end{definition}

\begin{lemma}\label{lem:rho-schwartz-compact}
For every \(a\in\mathcal S_{\hbar_0,\vartheta_0,B_0}\), one has
\begin{equation}
\rho^{r,s}_{\hbar_0,\vartheta_0,B_0}(a)\in \mathcal K(L^2(\mathbb R^2)).
\end{equation}
Consequently,
\begin{equation}
\pi^{r,s}(a) = \rho^{r,s}_{\hbar_0,\vartheta_0,B_0}(a)\otimes \mathbb{I}_{2\times 2} \in \mathcal K(\mathcal H).
\end{equation}
\end{lemma}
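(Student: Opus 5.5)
The plan is to show that \(\rho^{r,s}_{\hbar_0,\vartheta_0,B_0}(a)\) is an integral operator on \(L^2(\mathbb R^2)\) with a square-integrable kernel. Hilbert--Schmidt operators are compact, so this gives the first claim. The second claim then follows because \(\mathcal K(L^2(\mathbb R^2))\otimes M_2(\mathbb C)=\mathcal K(\mathcal H)\) for the finite-dimensional factor \(\mathbb C^2\).

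\textbf{Step 1: realize \(\rho^{r,s}(a)\) as a Weyl quantization.} Write \(U^{r,s}_{\hbar_0,\vartheta_0,B_0}(\mathbf k)\) from \eqref{eq:projective_representation} as a phase times \(\exp\bigl(i\,\ell(\mathbf k)\cdot(x,y,p_x,p_y)\bigr)\). Here \(\ell:\mathbb R^4\to\mathbb R^4\) is the linear map sending \(\mathbf k\) to the coefficient vector of the generator \(k_1\Pi_x+k_2\Pi_y+k_3X+k_4Y\) (up to the normalizations implicit in \eqref{eq:unitary_actions}). The extra phase is \(e^{iQ(\mathbf k)}\) with \(Q\) a real quadratic form.

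The crucial check is that \(\ell\) is invertible. Its image is spanned by the coefficient vectors of \(X^s,Y^s,\Pi^{r,s}_x,\Pi^{r,s}_y\). Their commutator matrix is given by the relations following \eqref{eq:noncentral-generators}, with entries \(\hbar_0,\vartheta_0,\hbar_0B_0\). Its Pfaffian is proportional to \(\hbar_0(\hbar_0-\vartheta_0B_0)\), which is nonzero by the admissibility assumptions. Hence the four generators are linearly independent, and \(\ell\) is a linear isomorphism.

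\textbf{Step 2: reduce to a Schwartz symbol.} Substituting \(\mathbf w=\ell(\mathbf k)\) gives
\[
\rho^{r,s}(a)=|\det\ell|^{-1}\int_{\mathbb R^4} g(\mathbf w)\,e^{i\mathbf w\cdot(x,y,p_x,p_y)}\,d\mathbf w,
\qquad
g(\mathbf w):=a\bigl(-\ell^{-1}\mathbf w\bigr)\,e^{iQ(\ell^{-1}\mathbf w)}.
\]
Multiplication by the unimodular function \(e^{iQ}\), with \(Q\) real quadratic, preserves \(\mathcal S(\mathbb R^4)\), and so does composition with an invertible linear map. Therefore \(g\in\mathcal S(\mathbb R^4)\), and \(\rho^{r,s}(a)\) is the Weyl operator whose symbol is the symplectic Fourier transform of \(g\), itself a Schwartz function.

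\textbf{Step 3: compute the kernel.} Split \(e^{i\mathbf w\cdot(x,y,p_x,p_y)}\) via Baker--Campbell--Hausdorff into a phase, a multiplication operator, and a translation. Acting on \(\phi\) and integrating out the momentum-translation variables gives
\[
K(\mathbf z,\mathbf z')=c\int_{\mathbb R^2} g\bigl(\boldsymbol\xi,\,(\mathbf z-\mathbf z')/\hbar_0\bigr)\,e^{i\boldsymbol\xi\cdot(\mathbf z+\mathbf z')/2}\,d\boldsymbol\xi .
\]
This is a partial Fourier transform of a Schwartz function composed with the invertible linear change \((\mathbf z,\mathbf z')\mapsto\bigl((\mathbf z+\mathbf z')/2,\mathbf z-\mathbf z'\bigr)\). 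Hence \(K\in\mathcal S(\mathbb R^4)\subset L^2(\mathbb R^4)\), so \(\rho^{r,s}(a)\) is Hilbert--Schmidt and therefore compact.

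Alternatively, one can avoid the explicit phase bookkeeping. By the Stone--von Neumann theorem applied to the Darboux-normalized generators (as in the proof of Lemma~\ref{lem:spectral-multiplicity}), \(\rho^{r,s}\) is unitarily equivalent to the integrated Schr\"odinger representation of the twisted algebra. That representation maps \(L^1\cap L^2\) functions to Hilbert--Schmidt operators, with \(\|\rho(f)\|_{HS}\) a constant times \(\|f\|_{L^2}\). Unitary conjugation preserves compactness.

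The main obstacle is Step 1: verifying carefully that the composite map \(\mathbf k\mapsto(\text{shift},\text{frequency})\) read off from \eqref{eq:projective_representation} is nondegenerate for all admissible \((r,s)\). I would first check this directly via the determinant of the \(4\times4\) coefficient matrix, expecting it to equal a nonzero multiple of \((\hbar_0-\vartheta_0B_0)/\hbar_0\) times the factors \(r\vartheta_0B_0-\hbar_0\). If that is messy, I would fall back on the commutator-matrix (Pfaffian) argument above, which is \((r,s)\)-independent.
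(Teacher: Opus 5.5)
Your proof is correct. Its skeleton is the paper's: nondegeneracy from $\hbar_0-\vartheta_0B_0\neq0$, then a Schwartz symbol, a Schwartz kernel, Hilbert--Schmidt, compact. Your fallback via Stone--von Neumann is essentially the paper's argument.

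Your main route differs in how the standard Weyl form is reached. The paper applies a Darboux normalization and the Stone--von Neumann theorem to produce a unitary $W$ with $W\rho^{r,s}_{\hbar_0,\vartheta_0,B_0}(a)W^{-1}=\operatorname{Op}^W(b)$. You observe that no conjugation is needed. The formula \eqref{eq:projective_representation} is already multiplication followed by translation in the Schr\"odinger variables $x,y,p_x,p_y$ on $L^2(\mathbb R^2)$. So after the substitution $\mathbf w=\ell(\mathbf k)$ the operator is literally an integrated Schr\"odinger--Weyl operator, and nondegeneracy enters only as invertibility of $\ell$.

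Your version is more elementary and self-contained, and it gives the kernel explicitly. It also makes automatic the irreducibility input that the Stone--von Neumann step silently needs: with infinite multiplicity the integrated operators would not be compact. The paper's version is shorter and matches the adapted Darboux machinery reused in Proposition~\ref{prop:IV1}.

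Three small corrections and refinements to your plan:
\begin{itemize}
\item Your predicted determinant is slightly off, because the factors $r\vartheta_0B_0-\hbar_0$ cancel. The map from $\mathbf k$ to (frequencies, shifts) read off from \eqref{eq:projective_representation} splits into a $(k_1,k_4)$ block and a $(k_2,k_3)$ block. Their determinants are $\frac{\hbar_0-\vartheta_0B_0}{\hbar_0(r\vartheta_0B_0-\hbar_0)}$ and $\frac{r\vartheta_0B_0-\hbar_0}{\hbar_0^2}$. The total is $\pm(\hbar_0-\vartheta_0B_0)/\hbar_0^3$, independent of $(r,s)$, exactly as your Pfaffian argument predicts.
\item The quadratic phases in \eqref{eq:projective_representation} are precisely the symmetric Weyl phase. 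Hence $Q\equiv0$, and
$U^{r,s}_{\hbar_0,\vartheta_0,B_0}(\mathbf k)=\exp\bigl(\tfrac{i}{\hbar_0}(k_1\Pi^{r,s}_x+k_2\Pi^{r,s}_y+k_3X^s+k_4Y^s)\bigr)$ exactly.
\item The kernel identity should be derived for $\phi\in\mathcal S(\mathbb R^2)$ via Fubini. It then extends to all of $L^2(\mathbb R^2)$ because both sides are bounded.
\end{itemize}
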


\begin{proof}
The operator \(\rho^{r,s}_{\hbar_0,\vartheta_0,B_0}(a)\) is an integrated Weyl
operator associated with the projective Weyl representation determined by the
nondegenerate skew form of the fixed sector. Since
\[
\hbar_0-\vartheta_0B_0\neq 0,
\]
this skew form is nondegenerate. Hence, by a linear Darboux normalization and
the Stone--von Neumann theorem, the corresponding Weyl system is unitarily
equivalent to the standard Weyl system on \(L^2(\mathbb R^2)\).

Thus there exists a unitary \(W\) on \(L^2(\mathbb R^2)\) such that
\[
W\rho^{r,s}_{\hbar_0,\vartheta_0,B_0}(a)W^{-1}
=
\operatorname{Op}^W(b)
\]
for some Weyl symbol \(b\). More explicitly, \(b\) is obtained from \(a\) by an
invertible real linear change of Weyl parameters, followed by a Fourier transform
and multiplication by a nonzero constant. Since invertible linear changes of
variables and Fourier transforms preserve the Schwartz class, and since
\(a\in\mathcal S(\mathbb R^4)\), we have
\[
b\in\mathcal S(\mathbb R^4).
\]

For \(b\in\mathcal S(\mathbb R^4)\), the Weyl operator
\(\operatorname{Op}^W(b)\) has integral kernel
\[
K_b(u,v)
=
(2\pi)^{-2}
\int_{\mathbb R^2}
e^{i(u-v)\cdot\xi}
b\!\left(\frac{u+v}{2},\xi\right)\,d\xi .
\]
Since \(b\in\mathcal S(\mathbb R^4)\), this kernel belongs to
\[
\mathcal S(\mathbb R^2\times\mathbb R^2)
\subset L^2(\mathbb R^2\times\mathbb R^2).
\]
Therefore \(\operatorname{Op}^W(b)\) is Hilbert--Schmidt, hence compact on
\(L^2(\mathbb R^2)\). Compactness is preserved under unitary conjugation, so
\[
\rho^{r,s}_{\hbar_0,\vartheta_0,B_0}(a)\in \mathcal K(L^2(\mathbb R^2)).
\]
Finally, tensoring a compact operator with the identity on the finite-dimensional
space \(\mathbb C^2\) preserves compactness. Hence
\[
\pi^{r,s}(a) = \rho^{r,s}_{\hbar_0,\vartheta_0,B_0}(a)\otimes \mathbb{I}_{2\times 2} \in \mathcal K(\mathcal H).
\]
\end{proof}

\begin{lemma}\label{lem:product-compact}
Let \(\mathcal H_0\) be a Hilbert space, and let \(\mathcal B(\mathcal H_0)\) denote the space of bounded operators on \(\mathcal H_0\). If \(T\in B(\mathcal H_0)\), and \(K\in\mathcal K(\mathcal H_0)\), then both \(TK\) and \(KT\) belong to
\(\mathcal K(\mathcal H_0)\).\cite[Prop.~4.2(c), p.~41]{conway1990}
\end{lemma}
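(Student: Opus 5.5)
The statement is the standard fact that the compact operators form a two-sided ideal in \(\mathcal B(\mathcal H_0)\). I would prove it directly from the sequential characterization of compactness. An operator \(K\) is compact exactly when it maps every bounded sequence to a sequence that has a norm-convergent subsequence; equivalently, the image of the unit ball under \(K\) is relatively compact.

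\textbf{The product \(TK\).} Let \((\xi_n)\) be a bounded sequence in \(\mathcal H_0\). Since \(K\) is compact, there is a subsequence \((\xi_{n_k})\) such that \(K\xi_{n_k}\to\eta\) in norm. Since \(T\) is bounded, hence continuous,
\[
\|TK\xi_{n_k}-T\eta\|\le \|T\|\,\|K\xi_{n_k}-\eta\|\to0 .
\]
So \(TK\) maps bounded sequences to sequences with convergent subsequences, and \(TK\in\mathcal K(\mathcal H_0)\). Equivalently, \(TK\) sends the unit ball into \(T\bigl(\overline{K(\text{ball})}\bigr)\), which is compact as the continuous image of a compact set.

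\textbf{The product \(KT\).} If \((\xi_n)\) is bounded, then \((T\xi_n)\) is bounded by \(\|T\|\sup_n\|\xi_n\|\). Compactness of \(K\) then yields a subsequence along which \(K(T\xi_{n_k})\) converges. Hence \(KT\) is compact.

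No step here is a genuine obstacle. The one point to watch is which definition of compactness is in force. If compact operators are defined instead as norm limits of finite-rank operators, I would argue as follows. Take finite-rank \(F_j\to K\). Then \(TF_j\) and \(F_jT\) are finite-rank, and
\[
\|TK-TF_j\|\le\|T\|\,\|K-F_j\|\to0,\qquad \|KT-F_jT\|\le\|K-F_j\|\,\|T\|\to0 .
\]
So both products are norm limits of finite-rank operators. This also gives the result on a general Hilbert space, which is the setting intended in the paper.

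In the paper's intended use, the lemma is combined with Lemma~\ref{lem:rho-schwartz-compact}. Taking \(K=\pi^{r,s}(a)\) and \(T=(D^{r,s}_{\hbar_0,\vartheta_0,B_0}-\lambda)^{-1}\), which is bounded because \(\lambda\notin\operatorname{spec}(D)\), gives the local compactness condition.
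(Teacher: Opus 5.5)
Your proof is correct: the sequential characterization of compactness, or alternatively approximation by finite-rank operators, gives both $TK$ and $KT$ compact exactly as you argue. The paper supplies no argument of its own and only cites Conway's Prop.~4.2(c), which rests on the same standard unit-ball and continuity reasoning you give.
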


We can now state the main result of this subsection.

\begin{theorem}\label{thm:3.3.1}
For each $r\in \R\setminus \bigl\{\frac{\hbar_0}{\vartheta_0B_0}\bigr\}$ and $s\in \R$, the triple \((\A,\mathcal{H},\pi^{r,s},D^{r,s}_{\hbar_0,\vartheta_0,B_0})\) is an even locally compact non-unital spectral triple with grading $\chi$. Moreover, the two-parameter family
\begin{equation}
\bigl\{
(\A,\mathcal{H},\pi^{r,s},D^{r,s}_{\hbar_0,\vartheta_0,B_0})
:
r\in \R\setminus \bigl\{\tfrac{\hbar_0}{\vartheta_0B_0}\bigr\},
\ s\in \R
\bigr\}
\end{equation}
is unitarily equivalent, and therefore isospectral, with respect to the kinematical presentation parameters.
\end{theorem}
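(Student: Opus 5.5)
The plan is to verify the axioms of a locally compact non-unital spectral triple in the order they appear in the definition preceding Proposition~\ref{prop:cstar-bundle}, and then handle the grading and the $(r,s)$-equivalence.

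\textbf{Representation and self-adjointness.} First, $\A=\Shb$ is an involutive algebra by Remark~\ref{rem:schwartz-dense}. By Corollary~\ref{cor:rho-faithful-extension}, $\rho^{r,s}_{\hbar_0,\vartheta_0,B_0}$ is a faithful bounded $*$-representation on $\Shb$, so $\pi^{r,s}=\rho^{r,s}\otimes\mathbb I_{2\times 2}$ is a faithful $*$-representation on $\mathcal H$ by bounded operators. Self-adjointness of $D^{r,s}_{\hbar_0,\vartheta_0,B_0}$ (the closure from the core $\mathcal S(\R^2)\otimes\C^2$) is Proposition~\ref{prop:3.4}.

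\textbf{Bounded commutators.} This is the step I expect to need the most care. For $a\in\A$, set $A=\rho^{r,s}(a)$. By Proposition~\ref{prop:3.1.2} and the containment \eqref{eq:containment}, $A\in\mathscr C^\infty$, and $\delta_j(A)=\rho^{r,s}(\partial_j a)$ with $\partial_j a\in\Shb$. Proposition~\ref{prop:3.3} gives $\delta_j(A)=-i[\Pi_j,A]$, where $\Pi_1,\Pi_2$ are the kinematic momenta. Since $\widetilde\Pi_x$ and $\widetilde\Pi_y$ are the real multiples $c_0^{-1}\Pi_x$ and $\varepsilon_0c_0^{-1}\Pi_y$, on the core we get
\begin{equation*}
[D^{r,s}_{\hbar_0,\vartheta_0,B_0},\pi^{r,s}(a)]=\frac{i}{\sqrt2\,c_0}\Bigl(\rho^{r,s}(\partial_1a)\otimes\sigma^x+\varepsilon_0\,\rho^{r,s}(\partial_2a)\otimes\sigma^y\Bigr),
\end{equation*}
which is manifestly bounded. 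To make the domain statement rigorous, I would check that $A$ maps $\mathcal S(\R^2)$ into itself: the integrated operator $\int a(-\mathbf k)U^{r,s}(\mathbf k)\,d\mathbf k$ with $a$ Schwartz has a Schwartz kernel, which follows from the proof of Lemma~\ref{lem:rho-schwartz-compact}. The commutator is then defined on the core and extends by closure to $\Dom(D)$ via the standard argument for closed operators with bounded commutators.

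\textbf{Local compactness and grading.} For $\lambda\notin\operatorname{spec}(D)$, the resolvent $(D-\lambda)^{-1}$ is bounded. Lemma~\ref{lem:rho-schwartz-compact} gives $\pi^{r,s}(a)\in\mathcal K(\mathcal H)$, and Lemma~\ref{lem:product-compact} then gives $\pi^{r,s}(a)(D-\lambda)^{-1}\in\mathcal K(\mathcal H)$. For the grading, $\chi=\sigma^z$ anticommutes with $\sigma^x$ and $\sigma^y$, which yields \eqref{eq:grading_on_dirac}. It commutes with $\rho\otimes\mathbb I_2$, which yields \eqref{eq:grading_on_pi}. Also $\chi^2=1$ and $\chi=\chi^*$.

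\textbf{Equivalence of the family.} For two admissible pairs $(r,s)$ and $(r',s')$, take the unitary $U$ intertwining $U^{r,s}$ and $U^{r',s'}$, as given by \cite{chowdhurychowdhuryduha2021gauge}. Integrating against $a(-\mathbf k)$ gives $\rho^{r',s'}(a)=U\rho^{r,s}(a)U^{-1}$. The same $U$ conjugates the generators, which gives \eqref{eq:transformation_of_dirac_using_unitary}. With $V=U\otimes\mathbb I_2$ we have $V\chi V^{-1}=\chi$, so Definition~\ref{def:3.3.2} is satisfied. Isospectrality then follows; alternatively, Lemma~\ref{lem:spectral-multiplicity} shows directly that the spectrum is independent of $(r,s)$.
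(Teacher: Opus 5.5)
Your proposal is correct and follows essentially the same route as the paper's proof. It obtains self-adjointness from Proposition~\ref{prop:3.4} and faithfulness from Corollary~\ref{cor:rho-faithful-extension}; local compactness from Lemma~\ref{lem:rho-schwartz-compact} combined with Lemma~\ref{lem:product-compact}; bounded commutators via Proposition~\ref{prop:3.3} and the identity $\delta_j(\rho^{r,s}_{\hbar_0,\vartheta_0,B_0}(a))=\rho^{r,s}_{\hbar_0,\vartheta_0,B_0}(\partial_j a)$; evenness from the Pauli anticommutation relations; and unitary equivalence through the intertwiner of \cite{chowdhurychowdhuryduha2021gauge}. Your additional checks (domain invariance via the Schwartz kernel of $\rho^{r,s}_{\hbar_0,\vartheta_0,B_0}(a)$, and verifying all three conditions of Definition~\ref{def:3.3.2} with $V=U\otimes\mathbb I_{2\times2}$) make explicit what the paper leaves implicit, and Lemma~\ref{lem:spectral-multiplicity} is the right reference for the $(r,s)$-independence of the spectrum.
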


\begin{proof}
We verify the defining properties one by one.

\begin{enumerate}
\item \textbf{Self-adjointness and local compactness.}
By Proposition~\ref{prop:3.4}, the operator
\(D^{r,s}_{\hbar_0,\vartheta_0,B_0}\) is self-adjoint on \(\mathcal H\).

Let \(a\in\mathcal A=\mathcal S_{\hbar_0,\vartheta_0,B_0}\). By
Lemma~\ref{lem:rho-schwartz-compact}, \(\pi^{r,s}(a) \in \mathcal K(\mathcal H).\)
For \(\lambda\in\mathbb C\setminus\mathbb R\), self-adjointness of
\(D^{r,s}_{\hbar_0,\vartheta_0,B_0}\) implies
\[
(D^{r,s}_{\hbar_0,\vartheta_0,B_0}-\lambda)^{-1}\in\mathcal B(\mathcal H).
\]
Hence, by Lemma~\ref{lem:product-compact},
\[
\pi^{r,s}(a)(D^{r,s}_{\hbar_0,\vartheta_0,B_0}-\lambda)^{-1}
\in \mathcal K(\mathcal H).
\]
Thus the local compactness condition holds.

    \item \textbf{Faithfulness of the representation.} Proved in Corollary~\ref{cor:rho-faithful-extension}
    
    \item \textbf{Boundedness of the commutator.} 
    Let \(a\in\mathcal A\), and set
    \begin{equation}
        A:=\rho^{r,s}_{\hbar_0,\vartheta_0,B_0}(a)\in \mathscr{S}_{\mathcal{B}}.
    \end{equation}
    Using \eqref{eq:star_derivations_of_A} from Proposition~\ref{prop:3.3}, we have
    \begin{equation}
    \delta_1(A) = -i\bigl[{\Pi}^{\,r,s}_{x,(\hbar_0,\vartheta_0,B_0)},A\bigr],
    \qquad
    \delta_2(A) = -i\bigl[{\Pi}^{\,r,s}_{y,(\hbar_0,\vartheta_0,B_0)},A\bigr].
    \end{equation}
    By the normalization in~\eqref{eq:normalized-noncentral-generators}, this gives
    \begin{equation}\label{eq:scaled_commutator_derivations}
    \bigl[{\widetilde{\Pi}}^{\,r,s}_{x,(\hbar_0,\vartheta_0,B_0)},A\bigr]
    =\frac{i}{c_0}\,\delta_1(A),
    \qquad
    \bigl[{\widetilde{\Pi}}^{\,r,s}_{y,(\hbar_0,\vartheta_0,B_0)},A\bigr]
    =\frac{i\varepsilon_0}{c_0}\,\delta_2(A).
    \end{equation}
    Therefore,
    \begin{align}
        [D^{r,s}_{\hbar_0,\vartheta_0,B_0},\pi(a)]
        &=
        \bigl[{\widetilde{\Pi}}^{\,r,s}_{x,(\hbar_0,\vartheta_0,B_0)},
        \rho^{r,s}_{\hbar_0,\vartheta_0,B_0}(a)\bigr]
        \otimes \frac{\sigma^x}{\sqrt{2}}
        \notag\\
        &\quad +
        \bigl[{\widetilde{\Pi}}^{\,r,s}_{y,(\hbar_0,\vartheta_0,B_0)},
        \rho^{r,s}_{\hbar_0,\vartheta_0,B_0}(a)\bigr]
        \otimes \frac{\sigma^y}{\sqrt{2}}
        \notag\\
        &=
        \delta_1\bigl(\rho^{r,s}_{\hbar_0,\vartheta_0,B_0}(a)\bigr)
        \otimes \frac{i\sigma^x}{\sqrt{2}c_0}
        +
        \delta_2\bigl(\rho^{r,s}_{\hbar_0,\vartheta_0,B_0}(a)\bigr)
        \otimes \frac{i\varepsilon_0\sigma^y}{\sqrt{2}c_0}
        \notag\\
        &=
        \rho^{r,s}_{\hbar_0,\vartheta_0,B_0}(\partial_1 a)
        \otimes \frac{i\sigma^x}{\sqrt{2}c_0}
        +
        \rho^{r,s}_{\hbar_0,\vartheta_0,B_0}(\partial_2 a)
        \otimes \frac{i\varepsilon_0\sigma^y}{\sqrt{2}c_0}.
        \label{eq:bounded_commutator_spectral_triple}
    \end{align}
    Since $\partial_j a \in \A$ for $j=1,2$, the operators \(\rho^{r,s}_{\hbar_0,\vartheta_0,B_0}(\partial_j a)\)
    are bounded on $L^2(\R^2)$. It follows from~\eqref{eq:bounded_commutator_spectral_triple} that \([D^{r,s}_{\hbar_0,\vartheta_0,B_0},\pi(a)]\) is bounded on $\mathcal{H}$.

    \item \textbf{Evenness.}  
    Equations~\eqref{eq:grading_on_dirac} and~\eqref{eq:grading_on_pi} show that
    \begin{equation}
    D^{r,s}_{\hbar_0,\vartheta_0,B_0}\chi
    =
    -\chi D^{r,s}_{\hbar_0,\vartheta_0,B_0},
    \qquad
    \pi(a)\chi = \chi\pi(a)
    \quad \forall a\in \A.
    \end{equation}
    Hence the triple is even, with grading operator $\chi$.

    \item \textbf{Unitary equivalence and isospectrality.}  
    From~\eqref{eq:transformation_of_dirac_using_unitary}, if $(r,s)$ and $(r',s')$ are two admissible parameter pairs, then
    \begin{equation}
    D^{r',s'}_{\hbar_0,\vartheta_0,B_0}
    =
    (U\otimes \mathbb{I}_{2\times 2})
    D^{r,s}_{\hbar_0,\vartheta_0,B_0}
    (U^{-1}\otimes \mathbb{I}_{2\times 2}).
    \end{equation}
    Therefore the corresponding triples are unitarily equivalent. In particular, their Dirac operators have identical spectra, so the family is isospectral with respect to the kinematical presentation parameters.
\end{enumerate}

This proves that, for every admissible pair $(r,s)$, the triple \((\A,\mathcal{H},D^{r,s}_{\hbar_0,\vartheta_0,B_0})\) is an even spectral triple, and that the resulting two-parameter family is unitarily equivalent and isospectral.
\end{proof}

So far, we have constructed a two-parameter isospectral family of even spectral triples over the Schwartz subalgebra \(\mathcal S_{\hbar_0,\vartheta_0,B_0}\) of the fixed twisted group \(C^*\)-algebra \(C^*(\mathbb R^4,\omega_{\hbar_0,\vartheta_0,B_0})\). In the next section, we pass from this twisted-group-algebra picture to a Moyal-side formulation. This will allow us to formulate the locally compact Moyal-side spectral triples and to introduce localized gauge perturbations.

\section{Moyal-plane realization and localized gauge perturbations} \label{sec:moyal-plane-realization}
We now pass from the twisted-group-algebra spectral triples constructed in Section~\ref{sec:dirac-spectral-triple}
to their Moyal-plane realization. The first task is to identify the represented twisted Schwartz core of the fixed nondegenerate \(G_{\mathrm{NC}}\)-sector, after Darboux normalization, with a smooth operator model. This model is Fréchet \(*\)-isomorphic to the reduced Moyal--Schwartz algebra at the fixed effective parameter
\begin{equation}
\vartheta_{\mathrm{eff}}
=
\frac{\vartheta_0}{1-\vartheta_0B_0/\hbar_0}.
\end{equation}
This identification provides the algebraic starting point from which the Moyal-side
base spectral triples and the localized gauge perturbations will be formulated.
\medskip

We begin by proving the corresponding identification result. Fix a fiber \((\hbar_0,\vartheta_0,B_0)\) as in the previous sections, with
\begin{equation}
\hbar_0\ne0,\qquad \vartheta_0\ne0,\qquad B_0\ne0,\qquad
1-\frac{\vartheta_0B_0}{\hbar_0}\ne0.
\end{equation}
On \(L^2(\mathbb{R}^2)\) we use the kinematical operator tuple 
\begin{equation}Z^{r,s}=(X^s_{\hbar_0,\vartheta_0,B_0},Y^s_{\hbar_0,\vartheta_0,B_0},\Pi^{r,s}_{x,(\hbar_0,\vartheta_0,B_0)},\Pi^{r,s}_{y,(\hbar_0,\vartheta_0,B_0)})^{\top} \end{equation} 
satisfying
\begin{equation}
[Z^{r,s}_i,Z^{r,s}_j]=i\,\Sigma_{ij}\mathbb{I},
\end{equation}
where
\begin{equation} \label{eq:sigma-matrix}
\Sigma=\begin{pmatrix}
0 & \vartheta_0 & \hbar_0 & 0\\
-\vartheta_0 & 0 & 0 & \hbar_0\\
-\hbar_0 & 0 & 0 & \hbar_0 B_0\\
0 & -\hbar_0 & -\hbar_0 B_0 & 0
\end{pmatrix}.
\end{equation}
Let \(W_{\Sigma}(\zeta)\) be Weyl operators with multiplier
\begin{equation}
W_{\Sigma}(\zeta)\,W_{\Sigma}(\zeta')=\exp\!\Big(-\frac{i}{2\hbar_0}\,\zeta^{\!\top}\Sigma\,\zeta'\Big)\,W_{\Sigma}(\zeta+\zeta').
\end{equation}
For \(F\in S(\mathbb{R}^4)\), define
\begin{equation}\label{eq:pi4-def}
\pi_4(F):=\int_{\mathbb{R}^4} F(\zeta)\,W_{\Sigma}(\zeta)\,d\zeta.
\end{equation}
We now define the Algebra \(\A_{\vartheta_{\mathrm{eff}},\varrho}
:=
\bigl(\mathcal S(\mathbb R^2),
\star_{\vartheta_{\mathrm{eff}},\varrho},
{}^{*_\varrho}\bigr),\) where the product \(\starv\) is given by
\begin{equation} \label{eq:moyal-like-product}
(f\starv g)(x,y)= f(x,y)\exp\!\Bigl(-i(\varrho-1)\vartheta_{\mathrm{eff}}\,\overleftarrow{\partial_x}\overrightarrow{\partial_y}
-i\varrho\vartheta_{\mathrm{eff}}\,\overleftarrow{\partial_y}\overrightarrow{\partial_x}
\Bigr)g(x,y),
\end{equation}
and the involution \({}^{*_\varrho}\) is given in \eqref{eq:varrho-involution}. Here, \(\varrho\) is the \(\star\)-gauge parameter. One recovers the Moyal-product by setting the parameter $\varrho=\tfrac{1}{2}$ in \eqref{eq:moyal-like-product}
\begin{equation}\label{eq:moyal-product}
(f\star_{\vartheta_{\mathrm{eff}},1/2} g)(x,y)
= f(x,y)\exp\!\Big(\frac{i\,\vartheta_{\mathrm{eff}}}{2}(\overleftarrow{\partial_x}\overrightarrow{\partial_y}-\overleftarrow{\partial_y}\overrightarrow{\partial_x})\Big)g(x,y).
\end{equation}
The algebra \(\A_{\vartheta_{\mathrm{eff}},\varrho}\) is represented on \(L^2(\R^2)\) by
\begin{equation}\label{eq:pi2-def}
\pi_{2,\varrho}(f):=L_f^{(\varrho)},
\qquad f\in \A_{\vartheta_{\mathrm{eff}},\varrho},
\end{equation}
where \(L_f^{(\varrho)}\) denotes the left Moyal multiplication on \(L^2(\R^2)\) with the effective deformation parameter
\(\displaystyle \vartheta_{\mathrm{eff}}:= \frac{\vartheta_{0}}{1-\frac{\vartheta_{0} B_{0}}{\hbar_{0}}}.\)

\begin{proposition}\label{prop:IV1}
Let \(\pi_4(\mathcal S_{\hbar_0,\vartheta_0,B_0})\subset\mathcal B(L^2(\mathbb R^2))\) be the represented twisted Schwartz core defined associated with the fixed nondegenerate \(G_{\mathrm{NC}}\)-sector, where \(\pi_4\) is the representation defined in \eqref{eq:pi4-def}. For each admissible pair \((r,s)\), there exists an adapted Darboux normalization \(S^{\mathrm{ad}}_{r,s}\), a Stone--von Neumann unitary \(U_{r,s}\) on \(L^2(\mathbb R^2)\), and a Fréchet linear automorphism
\begin{equation}
\Psi^{\mathrm{ad}}_{r,s}:\mathcal S(\mathbb R^4)\longrightarrow
\mathcal S(\mathbb R^4)
\end{equation}
such that, for every \(F\in\mathcal S(\mathbb R^4)\),
\begin{equation}
U_{r,s}\pi_4(F)
U_{r,s}^\dagger
=
\operatorname{Op}^{W}\!\left(\Psi^{\mathrm{ad}}_{r,s}(F)\right).
\end{equation}
Consequently,
\begin{equation}
\operatorname{Ad}_{U_{r,s}}
\left(
\pi_4(\mathcal S_{\hbar_0,\vartheta_0,B_0})
\right)
=
\operatorname{Op}^{W}\!\left(\mathcal S(\mathbb R^4)\right).
\end{equation}
Moreover, this Fréchet \(*\)-algebra \(\operatorname{Op}^{W}\!\left(\mathcal S(\mathbb R^4)\right)\) is abstractly Fréchet \(*\)-isomorphic to the Weyl--Moyal Schwartz algebra
\begin{equation}
\mathcal A_{\vartheta_{\mathrm{eff}},1/2}
=
\left(
\mathcal S(\mathbb R^2),
\star_{\vartheta_{\mathrm{eff}},1/2},
{}^{*1/2}
\right).
\end{equation}
\end{proposition}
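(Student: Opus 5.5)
The plan is to prove the statement in three stages: symplectic normalization of $\Sigma$, transport of the integrated Weyl map to the standard Weyl quantization, and identification of the standard Weyl--Schwartz algebra with a Moyal algebra at parameter $\vartheta_{\mathrm{eff}}$.

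\emph{Step 1: Darboux normalization.} First compute the Pfaffian of $\Sigma$ in \eqref{eq:sigma-matrix}:
\[
\operatorname{Pf}(\Sigma)=\vartheta_0\hbar_0B_0-\hbar_0^2=-\hbar_0^2\Bigl(1-\frac{\vartheta_0B_0}{\hbar_0}\Bigr)\neq0,
\]
so $\Sigma$ is nondegenerate under the standing hypotheses. By the linear Darboux theorem there is an invertible $S^{\mathrm{ad}}_{r,s}\in GL(4,\R)$ with $S^{\mathrm{ad}}_{r,s}\Sigma(S^{\mathrm{ad}}_{r,s})^{\top}=\hbar_0 J$, where $J$ is the standard symplectic matrix. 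Concretely, I would choose it adapted to the tuple $Z^{r,s}$: keep $X^s$ up to scaling and shear $Y^s$ by a multiple of the momenta so that the new coordinate pair commutes with the new momentum pair. This yields a coordinate commutator proportional to $\vartheta_0/(1-\vartheta_0B_0/\hbar_0)$, which is where $\vartheta_{\mathrm{eff}}$ enters.

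\emph{Step 2: transport of the Weyl system.} The operators $\widetilde Z=S^{\mathrm{ad}}_{r,s}Z^{r,s}$ are real linear combinations of $x,y,p_x,p_y$ obeying canonical commutation relations. Since the representation is irreducible (it is a realization of an irreducible sector, or directly by Stone--von Neumann applied to the four canonical generators), the Stone--von Neumann theorem gives a unitary $U_{r,s}$, concretely metaplectic, with $U_{r,s}\widetilde Z U_{r,s}^\dagger=(x,y,p_x,p_y)$. It follows that $U_{r,s}W_\Sigma(\zeta)U_{r,s}^\dagger=W_{\mathrm{std}}(M\zeta)$ for a linear map $M$ determined by $S^{\mathrm{ad}}_{r,s}$; the multipliers match because $\zeta^\top\Sigma\zeta'$ is preserved by construction. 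Substituting into \eqref{eq:pi4-def} and changing variables gives
\[
U_{r,s}\pi_4(F)U_{r,s}^\dagger=|\det M|^{-1}\int F(M^{-1}\eta)\,W_{\mathrm{std}}(\eta)\,d\eta .
\]
The standard integrated Weyl operator $\int G(\eta)W_{\mathrm{std}}(\eta)\,d\eta$ equals $\operatorname{Op}^W(b)$ with $b$ a constant multiple of the symplectic Fourier transform of $G$. Accordingly I define $\Psi^{\mathrm{ad}}_{r,s}(F)$ as this constant times the symplectic Fourier transform of $|\det M|^{-1}F\circ M^{-1}$. This is a composition of a linear change of variables and a Fourier transform, hence a Fréchet automorphism of $\mathcal S(\R^4)$, and surjectivity gives $\operatorname{Ad}_{U_{r,s}}\pi_4(\mathcal S_{\hbar_0,\vartheta_0,B_0})=\operatorname{Op}^W(\mathcal S(\R^4))$. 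I would carry the Fréchet topology across via $\Psi^{\mathrm{ad}}_{r,s}$. The product and $*$-compatibility follow because $\pi_4$ intertwines twisted convolution with operator composition, and $\operatorname{Ad}_U$ is a $*$-isomorphism.

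\emph{Step 3: identification with $\mathcal A_{\vartheta_{\mathrm{eff}},1/2}$.} By the matrix-basis theorem of Gracia-Bond\'ia--V\'arilly, the Moyal--Schwartz algebra $(\mathcal S(\R^2),\star_\theta)$ is Fréchet $*$-isomorphic to the algebra of rapidly decreasing matrices, for any $\theta\neq0$. Likewise $\operatorname{Op}^W(\mathcal S(\R^4))$, which consists of the operators with Schwartz kernels on $L^2(\R^2)$, is isomorphic to the same matrix algebra via the Hermite basis of $L^2(\R^2)$: Schwartz kernels correspond exactly to rapidly decreasing double sequences. Composing these isomorphisms, the second one taken at $\theta=\vartheta_{\mathrm{eff}}$, gives the required abstract Fréchet $*$-isomorphism. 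The expected obstacle is exactly this step. Operator composition on $\operatorname{Op}^W(\mathcal S(\R^4))$ corresponds to a four-dimensional Moyal product, so it is not literally a star product on $\R^2$, and the isomorphism must pass through the matrix-basis picture rather than a change of variables. I would therefore invoke the Gracia-Bond\'ia--V\'arilly result directly and carefully verify the Fréchet-topology equivalence for Hermite-coefficient decay in two variables. The appearance of $\vartheta_{\mathrm{eff}}$ is conventional at this level: any nonzero $\theta$ gives an isomorphic algebra, and the choice of $\vartheta_{\mathrm{eff}}$ is justified by the commutator of the Darboux coordinates from Step 1.
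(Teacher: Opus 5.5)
Your Steps 2 and 3 follow the paper's proof closely. The paper also transports $W_\Sigma$ by a metaplectic $U_{r,s}$ with parameter map $S^{\mathrm{par}}_{r,s}=(S^{\mathrm{ad}}_{r,s})^{-T}$, and it defines $\Psi^{\mathrm{ad}}_{r,s}$ as the symplectic Fourier transform of $|\det S^{\mathrm{par}}_{r,s}|^{-1}F\circ(S^{\mathrm{par}}_{r,s})^{-1}$. It then identifies $\operatorname{Op}^W(\mathcal S(\R^4))$ with Schwartz-kernel operators, and from there with $(\mathcal S(\R^2),\star_{\vartheta_{\mathrm{eff}},1/2})$ via Hermite matrix coefficients. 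Your Gracia-Bond\'{\i}a--V\'arilly matrix basis amounts to the paper's $\operatorname{Op}^W_{\vartheta_{\mathrm{eff}}}$ onto $\mathcal K^\infty(L^2(\R))$. The check you deferred is done in the paper with a bijection $\mathbb N_0\to\mathbb N_0^2$ of polynomial growth in both directions. This reindexing turns rapid decay on $\mathbb N_0^2\times\mathbb N_0^2$ into rapid decay on $\mathbb N_0\times\mathbb N_0$ and visibly preserves products and adjoints. You are right that $\vartheta_{\mathrm{eff}}$ is conventional at this level; the paper's proof also just inserts it.

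The real difference is Step 1. The paper builds its Darboux basis around the momenta. With $\Delta=\hbar_0-\vartheta_0B_0$, it uses the guiding-centre operators $R_x=X^s+\Pi_y/B_0$ and $R_y=Y^s-\Pi_x/B_0$, which commute with both $\Pi$'s and satisfy $[R_x,R_y]=-i\Delta/B_0$. It then sets $Q_{1,2}=(\widetilde\Pi_x\pm R_x)/\sqrt2$ and $P_{1,2}=(\widetilde\Pi_y\mp\frac{B_0}{\Delta}R_y)/\sqrt2$, which forces $U_{r,s}\widetilde\Pi_xU_{r,s}^\dagger=X_+$ and $U_{r,s}\widetilde\Pi_yU_{r,s}^\dagger=P_+$.

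Any Darboux map suffices for the displayed identities, so your coordinate-adapted choice does prove the statement as written. However, it does not give this normal form. The next subsection takes that normal form from Proposition~\ref{prop:IV1} to write $D^{\prime\,r,s}=\frac1{\sqrt2}(X_+\otimes\sigma_1+P_+\otimes\sigma_2)$, compute its square, and prove local compactness. So if ``adapted'' is meant in the paper's sense, you need the momentum-built basis.

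Your recipe is also incomplete as stated. Shearing only $Y^s$ leaves $[\Pi_x,\Pi_y]=i\hbar_0B_0$, so the momenta must be sheared too. For example, $(X^s,\Pi_x)$ and $(Y^s-\frac{\vartheta_0}{\hbar_0}\Pi_x,\ \Pi_y+B_0X^s)$ are commuting canonical pairs with commutators $i\hbar_0$ and $i\Delta$. Finally, in a genuine Darboux basis the coordinates commute, so no commutator proportional to $\vartheta_{\mathrm{eff}}$ survives. That remark in Step 1 is only heuristic.
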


\begin{proof}
We use the rescaled momentum operators \(\widetilde{\Pi}^{\,r,s}_{x,(\hbar_0,\vartheta_0,B_0)}\) and \(\widetilde{\Pi}^{\,r,s}_{y,(\hbar_0,\vartheta_0,B_0)}\) defined in \eqref{eq:normalized-noncentral-generators}. For notational convenience, set
\begin{equation}
\Delta:=\hbar_0-\vartheta_0B_0,
\qquad
\alpha:=\frac1{\sqrt{|\hbar_0B_0|}},
\qquad
\varepsilon_0:=\operatorname{sgn}(\hbar_0B_0).
\end{equation}

First observe that the commutation matrix \(\Sigma\) displayed in \eqref{eq:sigma-matrix} is nondegenerate. Indeed, its Pfaffian is
\begin{equation}
\operatorname{Pf}(\Sigma)
=
-\hbar_0(\hbar_0-\vartheta_0B_0)
=
-\hbar_0\Delta.
\end{equation}
Hence
\begin{equation}
\det(\Sigma)
=
\hbar_0^2\Delta^2.
\end{equation}
By the assumptions \(\hbar_0\neq 0\) and \(\Delta\neq 0\), this determinant
is nonzero.

We now construct an adapted Darboux system. Define the guiding-centre operators
\begin{equation}
R_x^{r,s}
:=
X^s_{\hbar_0,\vartheta_0,B_0}
+
\frac{1}{B_0}\Pi^{r,s}_{y,(\hbar_0,\vartheta_0,B_0)},
\qquad
R_y^{r,s}
:=
Y^s_{\hbar_0,\vartheta_0,B_0}
-
\frac{1}{B_0}\Pi^{r,s}_{x,(\hbar_0,\vartheta_0,B_0)}.
\end{equation}
Using the commutation relations of the \(G_{\mathrm{NC}}\)-sector, one obtains
\begin{equation}
[R_x^{r,s},\Pi^{r,s}_{x,(\hbar_0,\vartheta_0,B_0)}]
=
[R_x^{r,s},\Pi^{r,s}_{y,(\hbar_0,\vartheta_0,B_0)}]
=
0,
\end{equation}
\begin{equation}
[R_y^{r,s},\Pi^{r,s}_{x,(\hbar_0,\vartheta_0,B_0)}]
=
[R_y^{r,s},\Pi^{r,s}_{y,(\hbar_0,\vartheta_0,B_0)}]
=
0,
\end{equation}
and
\begin{equation}
[R_x^{r,s},R_y^{r,s}]
=
-i\frac{\Delta}{B_0}I.
\end{equation}
Therefore
\begin{equation}
\left[
R_x^{r,s},
-\frac{B_0}{\Delta}R_y^{r,s}
\right]
=
iI.
\end{equation}
Thus the Landau momentum pair \(\left(\widetilde{\Pi}^{\,r,s}_{x},
\widetilde{\Pi}^{\,r,s}_{y}\right)\) gives one canonical pair, while
\(\left(R_x^{r,s},-\frac{B_0}{\Delta}R_y^{r,s}\right)\) gives a second canonical
pair commuting with the first.

Define four operators
\begin{equation}
\begin{aligned}
Q_1^{r,s}
&:=
\frac{1}{\sqrt2}
\left(
\widetilde{\Pi}^{\,r,s}_{x}
+
R_x^{r,s}
\right),
&
Q_2^{r,s}
&:=
\frac{1}{\sqrt2}
\left(
\widetilde{\Pi}^{\,r,s}_{x}
-
R_x^{r,s}
\right),
\\
P_1^{r,s}
&:=
\frac{1}{\sqrt2}
\left(
\widetilde{\Pi}^{\,r,s}_{y}
-
\frac{B_0}{\Delta}R_y^{r,s}
\right),
&
P_2^{r,s}
&:=
\frac{1}{\sqrt2}
\left(
\widetilde{\Pi}^{\,r,s}_{y}
+
\frac{B_0}{\Delta}R_y^{r,s}
\right).
\end{aligned}
\end{equation}
A direct computation gives
\begin{equation}
[Q_j^{r,s},P_k^{r,s}]
=
i\delta_{jk}I,
\qquad
[Q_j^{r,s},Q_k^{r,s}]
=
0,
\qquad
[P_j^{r,s},P_k^{r,s}]
=
0.
\end{equation}
Moreover,
\begin{equation}
\widetilde{\Pi}^{\,r,s}_{x}
=
\frac{Q_1^{r,s}+Q_2^{r,s}}{\sqrt2},
\qquad
\widetilde{\Pi}^{\,r,s}_{y}
=
\frac{P_1^{r,s}+P_2^{r,s}}{\sqrt2}.
\end{equation}

Equivalently, relative to the ordered tuple
\begin{equation}
\left(
X^s_{\hbar_0,\vartheta_0,B_0},
Y^s_{\hbar_0,\vartheta_0,B_0},
\Pi^{r,s}_{x,(\hbar_0,\vartheta_0,B_0)},
\Pi^{r,s}_{y,(\hbar_0,\vartheta_0,B_0)}
\right),
\end{equation}
the adapted Darboux transformation is represented by the matrix
\begin{equation}
S^{\mathrm{ad}}_{r,s}
=
\begin{pmatrix}
\frac{1}{\sqrt2} & 0 & \frac{\alpha}{\sqrt2} & \frac{1}{\sqrt2B_0}\\
-\frac{1}{\sqrt2} & 0 & \frac{\alpha}{\sqrt2} & -\frac{1}{\sqrt2B_0}\\
0 & -\frac{B_0}{\sqrt2\Delta} & \frac{1}{\sqrt2\Delta} & \frac{\varepsilon_0\alpha}{\sqrt2}\\
0 & \frac{B_0}{\sqrt2\Delta} & -\frac{1}{\sqrt2\Delta} & \frac{\varepsilon_0\alpha}{\sqrt2}
\end{pmatrix}.
\end{equation}
Its determinant is
\begin{equation}
\det(S^{\mathrm{ad}}_{r,s})
=
\frac{\varepsilon_0 B_0\alpha^2}{\Delta}
=
\frac{1}{\hbar_0(\hbar_0-\vartheta_0B_0)}.
\end{equation}
Hence \(S^{\mathrm{ad}}_{r,s}\in GL(4,\mathbb R)\).

By the Stone--von Neumann theorem, the irreducible regular representation of
the canonical commutation relations generated by
\((Q_1^{r,s},Q_2^{r,s},P_1^{r,s},P_2^{r,s})\) is unitarily equivalent to the
standard Schrödinger representation on \(L^2(\mathbb R^2)\). Since the passage
from the adapted canonical variables to the standard canonical variables is a
real linear symplectic transformation, the implementing unitary may be chosen
from the metaplectic representation. In particular, it acts continuously on the
Schwartz space and satisfies
\begin{equation}
U_{r,s}\mathcal S(\mathbb R^2)
=
\mathcal S(\mathbb R^2),
\qquad
U_{r,s}^\dagger\mathcal S(\mathbb R^2)
=
\mathcal S(\mathbb R^2).
\end{equation}
Thus we may choose \(U_{r,s}\) so that
\begin{equation}
U_{r,s}Q_j^{r,s}
U_{r,s}^\dagger
=
X_j,
\qquad
U_{r,s}P_j^{r,s}
U_{r,s}^\dagger
=
P_j,
\qquad
j=1,2.
\end{equation}
Using the identities above, this gives
\begin{equation}
U_{r,s}
\widetilde{\Pi}^{\,r,s}_{x}
U_{r,s}^\dagger
=
\frac{X_1+X_2}{\sqrt2}
=
X_+,
\end{equation}
and
\begin{equation}
U_{r,s}
\widetilde{\Pi}^{\,r,s}_{y}
U_{r,s}^\dagger
=
\frac{P_1+P_2}{\sqrt2}
=
P_+.
\end{equation}

We now prove the Fréchet \(*\)-algebra identification. At the operator level the
adapted Darboux transformation is
\begin{equation}
Y^{r,s}
:=
\left(
Q_1^{r,s},Q_2^{r,s},P_1^{r,s},P_2^{r,s}
\right)^{\top}
=
S^{\mathrm{ad}}_{r,s}Z^{r,s}.
\end{equation}
Thus \(S^{\mathrm{ad}}_{r,s}\) is the operator-side Darboux matrix. The
corresponding transformation of the Weyl parameters is the inverse-transpose
transformation
\begin{equation}
S^{\mathrm{par}}_{r,s}
:=
\left(S^{\mathrm{ad}}_{r,s}\right)^{-T}.
\end{equation}
Indeed, since \(Z^{r,s}=(S^{\mathrm{ad}}_{r,s})^{-1}Y^{r,s}\), the pairing of a
Weyl parameter with the original operator tuple transforms as
\begin{equation}
\zeta^{\top}Z^{r,s}
=
\zeta^{\top}
\left(S^{\mathrm{ad}}_{r,s}\right)^{-1}
Y^{r,s}
=
\left(
\left(S^{\mathrm{ad}}_{r,s}\right)^{-T}\zeta
\right)^{\top}
Y^{r,s}.
\end{equation}
Consequently, after conjugation by \(U_{r,s}\), the Weyl
system \(W_\Sigma\) is transported to the standard Weyl system by
\begin{equation}
U_{r,s}W_\Sigma(\zeta)
U_{r,s}^\dagger
=
W_J\!\left(S^{\mathrm{par}}_{r,s}\zeta\right),
\qquad
\zeta\in\mathbb R^4,
\end{equation}
where \(W_J\) denotes the standard Weyl system in the canonical variables
\((X_1,X_2,P_1,P_2)\).
Therefore, for \(F\in\mathcal S(\mathbb R^4)\),
\begin{equation}
\begin{aligned}
U_{r,s}\pi_4(F)
U_{r,s}^\dagger
&=
\int_{\mathbb R^4}
F(\zeta)
U_{r,s}W_\Sigma(\zeta)
U_{r,s}^\dagger
\,d\zeta
\\
&=
\int_{\mathbb R^4}
F(\zeta)
W_J\!\left(S^{\mathrm{par}}_{r,s}\zeta\right)
\,d\zeta.
\end{aligned}
\end{equation}
Set
\begin{equation}
\eta=S^{\mathrm{par}}_{r,s}\zeta.
\end{equation}
Then
\begin{equation}
d\zeta=
|\det S^{\mathrm{par}}_{r,s}|^{-1}\,d\eta,
\end{equation}
and hence
\begin{equation}
U_{r,s}\pi_4(F)
U_{r,s}^\dagger
=
\int_{\mathbb R^4}
|\det S^{\mathrm{par}}_{r,s}|^{-1}
F\!\left((S^{\mathrm{par}}_{r,s})^{-1}\eta\right)
W_J(\eta)
\,d\eta.
\end{equation}
Define
\begin{equation}
\widehat F_{S^{\mathrm{par}}_{r,s}}(\eta)
:=
|\det S^{\mathrm{par}}_{r,s}|^{-1}
F\!\left((S^{\mathrm{par}}_{r,s})^{-1}\eta\right).
\end{equation}
Because \(S^{\mathrm{par}}_{r,s}\in GL(4,\mathbb R)\), the map
\(F\mapsto \widehat F_{S^{\mathrm{par}}_{r,s}}\) is a Fréchet automorphism of
\(\mathcal S(\mathbb R^4)\).

Let \(\mathcal F_J\) denote the standard symplectic Fourier transform on
\(\mathcal S(\mathbb R^4)\). The standard relation between integrated Weyl
operators and Weyl quantization gives
\begin{equation}
\int_{\mathbb R^4}b(\eta)W_J(\eta)\,d\eta
=
\operatorname{Op}^{W}\!\left(\mathcal F_J b\right),
\qquad
b\in\mathcal S(\mathbb R^4).
\end{equation}
(see \cite[Eq.~1.1]{mantoiu2004magnetic}). Thus
\begin{equation}
U_{r,s}\pi_4(F)
U_{r,s}^\dagger
=
\operatorname{Op}^{W}
\!\left(
\mathcal F_J \widehat F_{S^{\mathrm{par}}_{r,s}}
\right).
\end{equation}
Define
\begin{equation}
\Psi^{\mathrm{ad}}_{r,s}(F)
:=
\mathcal F_J
\left(
|\det S^{\mathrm{par}}_{r,s}|^{-1}
F\circ (S^{\mathrm{par}}_{r,s})^{-1}
\right).
\end{equation}
The linear change of variables and the symplectic Fourier transform are both
Fréchet automorphisms of \(\mathcal S(\mathbb R^4)\). Hence
\begin{equation}
\Psi^{\mathrm{ad}}_{r,s}:
\mathcal S(\mathbb R^4)\longrightarrow \mathcal S(\mathbb R^4)
\end{equation}
is a Fréchet automorphism, and
\begin{equation}
U_{r,s}\pi_4(F)
U_{r,s}^\dagger
=
\operatorname{Op}^{W}
\!\left(
\Psi^{\mathrm{ad}}_{r,s}(F)
\right).
\end{equation}
Since \(\Psi^{\mathrm{ad}}_{r,s}\) is onto, we obtain
\begin{equation}
\operatorname{Ad}_{U_{r,s}}
\left(
\pi_4(\mathcal S_{\hbar_0,\vartheta_0,B_0})
\right)
=
\operatorname{Op}^{W}\!\left(\mathcal S(\mathbb R^4)\right).
\end{equation}

We next identify this Weyl-operator image. For a symbol
\(a\in\mathcal S(\mathbb R^4)\), with variables written as
\((x,\xi)\in\mathbb R^2_x\times\mathbb R^2_\xi\), the Weyl operator has the form
\begin{equation}
\left(\operatorname{Op}^{W}(a)\psi\right)(x)
=
\frac{1}{(2\pi)^2}
\int_{\mathbb R^2}
\int_{\mathbb R^2}
e^{i(x-y)\cdot \xi}
a\left(\frac{x+y}{2},\xi\right)
\psi(y)\,d\xi\,dy.
\end{equation}
Its integral kernel is therefore
\begin{equation}
K_a(x,y)
=
\frac{1}{(2\pi)^2}
\int_{\mathbb R^2}
e^{i(x-y)\cdot \xi}
a\left(\frac{x+y}{2},\xi\right)
\,d\xi.
\end{equation}
The change of variables
\begin{equation}
(u,v)
=
\left(
\frac{x+y}{2},
x-y
\right)
\end{equation}
is a linear automorphism of \(\mathbb R^4\), and the Fourier transform is a
Fréchet automorphism of the Schwartz space. Hence
\begin{equation}
a\in\mathcal S(\mathbb R^4)
\quad\Longleftrightarrow\quad
K_a\in\mathcal S(\mathbb R^2\times\mathbb R^2).
\end{equation}
Thus Weyl quantization gives a Fréchet space isomorphism
\begin{equation}
\operatorname{Op}^{W}:
\mathcal S(\mathbb R^4)
\longrightarrow
\mathcal K^\infty(L^2(\mathbb R^2)),
\end{equation}
where
\begin{equation}
\mathcal K^\infty(L^2(\mathbb R^2))
:=
\left\{
T_K:K\in\mathcal S(\mathbb R^2\times\mathbb R^2)
\right\}
\end{equation}
is the Fréchet \(*\)-algebra of Schwartz-kernel operators on
\(L^2(\mathbb R^2)\). Moreover, Weyl quantization intertwines the Weyl product
with operator composition and the symbol involution with the Hilbert-space
adjoint:
\begin{equation}
\operatorname{Op}^{W}(a\# b)
=
\operatorname{Op}^{W}(a)\operatorname{Op}^{W}(b),
\end{equation}
and
\begin{equation}
\operatorname{Op}^{W}(a^*)
=
\operatorname{Op}^{W}(a)^*.
\end{equation}
Therefore
\begin{equation}
\operatorname{Op}^{W}\!\left(\mathcal S(\mathbb R^4)\right)
\cong
\mathcal K^\infty(L^2(\mathbb R^2))
\end{equation}
as Fréchet \(*\)-algebras.

It remains to relate this smooth compact-operator model to the effective
two-dimensional Weyl--Moyal Schwartz algebra. The usual Weyl quantization on
\(L^2(\mathbb R)\) gives a Fréchet \(*\)-algebra isomorphism
\begin{equation}
\operatorname{Op}^{W}_{\vartheta_{\mathrm{eff}}}:
\left(
\mathcal S(\mathbb R^2),
\star_{\vartheta_{\mathrm{eff}},1/2},
{}^{*1/2}
\right)
\longrightarrow
\mathcal K^\infty(L^2(\mathbb R)).
\end{equation}
Indeed, for \(f\in\mathcal S(\mathbb R^2)\), with variables \((q,p)\), one has
\begin{equation}
\left(\operatorname{Op}^{W}_{\vartheta_{\mathrm{eff}}}(f)\psi\right)(q)
=
\frac{1}{2\pi\vartheta_{\mathrm{eff}}}
\int_{\mathbb R}
\int_{\mathbb R}
e^{\frac{i}{\vartheta_{\mathrm{eff}}}(q-q')p}
f\left(\frac{q+q'}{2},p\right)
\psi(q')\,dp\,dq'.
\end{equation}
The same kernel argument shows that
\begin{equation}
\operatorname{Op}^{W}_{\vartheta_{\mathrm{eff}}}:
\mathcal S(\mathbb R^2)
\longrightarrow
\mathcal K^\infty(L^2(\mathbb R))
\end{equation}
is a Fréchet \(*\)-algebra isomorphism.

Finally,
\(\mathcal K^\infty(L^2(\mathbb R^2))\) and
\(\mathcal K^\infty(L^2(\mathbb R))\) are abstractly Fréchet \(*\)-isomorphic.
Let \(\{h_n:n\in\mathbb N_0\}\) be the Hermite basis of \(L^2(\mathbb R)\). Then
\begin{equation}
h_\alpha(x_1,x_2)
=
h_{\alpha_1}(x_1)h_{\alpha_2}(x_2),
\qquad
\alpha=(\alpha_1,\alpha_2)\in\mathbb N_0^2,
\end{equation}
is the Hermite basis of \(L^2(\mathbb R^2)\). With respect to these bases,
\(\mathcal K^\infty(L^2(\mathbb R))\) is identified with the rapidly decreasing
matrix algebra
\begin{equation}
s(\mathbb N_0\times\mathbb N_0)
=
\left\{
(a_{mn}):
\sup_{m,n\in\mathbb N_0}
(1+m+n)^N |a_{mn}|<\infty
\text{ for all }N\in\mathbb N
\right\},
\end{equation}
whereas \(\mathcal K^\infty(L^2(\mathbb R^2))\) is identified with
\begin{equation}
s(\mathbb N_0^2\times\mathbb N_0^2)
=
\left\{
(b_{\alpha\beta}):
\sup_{\alpha,\beta\in\mathbb N_0^2}
(1+|\alpha|+|\beta|)^N |b_{\alpha\beta}|<\infty
\text{ for all }N\in\mathbb N
\right\}.
\end{equation}
Choose a bijection
\begin{equation}
\beta:\mathbb N_0\longrightarrow\mathbb N_0^2
\end{equation}
with polynomial growth in both directions. Define
\begin{equation}
R:s(\mathbb N_0^2\times\mathbb N_0^2)
\longrightarrow
s(\mathbb N_0\times\mathbb N_0)
\end{equation}
by
\begin{equation}
(Rb)_{mn}
:=
b_{\beta(m),\beta(n)}.
\end{equation}
The polynomial growth of \(\beta\) and \(\beta^{-1}\) implies that \(R\) and
\(R^{-1}\) preserve rapid decay. Hence \(R\) is a Fréchet space isomorphism.
It also preserves multiplication:
\begin{equation}
\begin{aligned}
(R(bc))_{mn}
&=
(bc)_{\beta(m),\beta(n)}
\\
&=
\sum_{\gamma\in\mathbb N_0^2}
b_{\beta(m),\gamma}c_{\gamma,\beta(n)}
\\
&=
\sum_{k\in\mathbb N_0}
b_{\beta(m),\beta(k)}c_{\beta(k),\beta(n)}
\\
&=
\sum_{k\in\mathbb N_0}
(Rb)_{mk}(Rc)_{kn}
\\
&=
(Rb\,Rc)_{mn}.
\end{aligned}
\end{equation}
It also preserves the involution:
\begin{equation}
(R(b^*))_{mn}
=
b^*_{\beta(m),\beta(n)}
=
\overline{b_{\beta(n),\beta(m)}}
=
(Rb)^*_{mn}.
\end{equation}
Thus
\begin{equation}
\mathcal K^\infty(L^2(\mathbb R^2))
\cong
\mathcal K^\infty(L^2(\mathbb R))
\end{equation}
as Fréchet \(*\)-algebras. Combining the preceding identifications gives an
abstract Fréchet \(*\)-algebra isomorphism
\begin{equation}
\operatorname{Op}^{W}\!\left(\mathcal S(\mathbb R^4)\right)
\cong
\left(
\mathcal S(\mathbb R^2),
\star_{\vartheta_{\mathrm{eff}},1/2},
{}^{*1/2}
\right).
\end{equation}
This proves the proposition.
\end{proof}

\begin{remark}
Proposition~\ref{prop:IV1} identifies the effective Moyal-side
algebra in the Weyl--Moyal realization, corresponding to \(\varrho=\frac12\).
For general \(\varrho\), the product \(\star_{\vartheta_{\mathrm{eff}},\varrho}\)
defined in \eqref{eq:moyal-like-product} is obtained from the Weyl--Moyal product
by the operator
\begin{equation} \label{eq:ordering-change-operator}
    T_\varrho :=\exp\left(i\left(\frac12-\varrho\right)\vartheta_{\mathrm{eff}}\partial_x\partial_y\right).
\end{equation}
Equivalently, in Fourier variables \((\xi,\eta)\),
\begin{equation} \label{eq:ordering-change-operator-fourier}
    \widehat{T_\varrho f}(\xi,\eta)
=
\exp\left(
-i\left(\frac12-\varrho\right)
\vartheta_{\mathrm{eff}}\xi\eta
\right)
\widehat f(\xi,\eta).
\end{equation}
Hence \(T_\varrho\) is a continuous automorphism of
\(\mathcal S(\mathbb R^2)\), with continuous inverse \(T_\varrho^{-1}\). Since the Fourier multiplier has modulus one, \(T_\varrho\) also extends to
a unitary operator on \(L^2(\mathbb R^2)\).

The product \(\star_{\vartheta_{\mathrm{eff}},\varrho}\) is equivalently characterized by
\begin{equation} \label{eq:star-product-new-def}
    f\star_{\vartheta_{\mathrm{eff}},\varrho}g
=
T_\varrho^{-1}\bigl((T_\varrho f)\star_{\vartheta_{\mathrm{eff}},1/2}
(T_\varrho g)\bigr),\qquad f,g\in\mathcal S(\mathbb R^2).
\end{equation}
Similarly, the corresponding transported involution is
\begin{equation} \label{eq:varrho-involution}
f^{*_\varrho}:=T_\varrho^{-1}\left(\overline{T_\varrho f}\right).
\end{equation}
Then \(T_\varrho
\bigl(
f\star_{\vartheta_{\mathrm{eff}},\varrho}g
\bigr)
=
(T_\varrho f)
\star_{\vartheta_{\mathrm{eff}},1/2}
(T_\varrho g),\)
and \(T_\varrho(f^{*_\varrho})
=
\overline{T_\varrho f}.\) Therefore
\begin{equation}
T_\varrho:
\bigl(
\mathcal S(\mathbb R^2),
\star_{\vartheta_{\mathrm{eff}},\varrho},
{}^{*_\varrho}
\bigr)
\longrightarrow
\bigl(
\mathcal S(\mathbb R^2),
\star_{\vartheta_{\mathrm{eff}},1/2},
\overline{\phantom f}
\bigr)
\end{equation}
is a Fréchet \(^{*}\)-algebra isomorphism.

Consequently, if
\begin{equation}
\Phi_{1/2}:
\operatorname{Op}^{W,\hbar_{\mathrm{eff}}}
\bigl(\mathcal S(\mathbb R^4)\bigr)
\longrightarrow
\mathcal A_{\vartheta_{\mathrm{eff}},1/2}
\end{equation}
denotes the Fréchet \(^{*}\)-algebra isomorphism obtained in
Proposition~\ref{prop:IV1}, then \(\Phi_\varrho
:=
T_\varrho^{-1}\circ \Phi_{1/2}\) defines a Fréchet \(^{*}\)-algebra isomorphism
\begin{equation}
\Phi_\varrho:
\operatorname{Op}^{W,\hbar_{\mathrm{eff}}}
\bigl(\mathcal S(\mathbb R^4)\bigr)
\longrightarrow
\mathcal A_{\vartheta_{\mathrm{eff}},\varrho},
\end{equation}
where \(\mathcal A_{\vartheta_{\mathrm{eff}},\varrho}
=
\bigl(
\mathcal S(\mathbb R^2),
\star_{\vartheta_{\mathrm{eff}},\varrho}, {}^{*_\varrho}\bigr).\) Thus the fixed nondegenerate \(G_{\mathrm{NC}}\)-sector determines the same
effective Moyal-side smooth algebraic structure for every \(\star\)-gauge parameter
\(\varrho\), with the case \(\varrho=\frac12\) serving as the Weyl--Moyal
representative.

In the left regular representation, the same ordering-change map intertwines
the represented multiplication operators. Indeed, for
\(f,\psi\in\mathcal S(\mathbb R^2)\),
\begin{equation}
T_\varrho L_f^{(\varrho)}T_\varrho^{-1}\psi
=
T_\varrho
\left(
f\star_{\vartheta_{\mathrm{eff}},\varrho}
T_\varrho^{-1}\psi
\right)
=
(T_\varrho f)
\star_{\vartheta_{\mathrm{eff}},1/2}
\psi
=
L_{T_\varrho f}^{(1/2)}\psi.
\end{equation}
Let \(L_f^{(\varrho)}\) denote left multiplication by \(f\) with respect to
\(\starv\), and let \(L_f^{(\vartheta_{\mathrm{eff}},1/2)}\) denote left multiplication by \(h\) with respect to the Weyl--Moyal product \(\star_{\vartheta_{\mathrm{eff}},1/2}\). Hence
\begin{equation} \label{eq:ordering-intertwines-products}
    T_\varrho L_f^{(\varrho)}T_\varrho^{-1}
=
L_{T_\varrho f}^{(\vartheta_{\mathrm{eff}},1/2)}.
\end{equation}
Equivalently, on the spinor Hilbert space \(\mathcal H=L^2(\mathbb R^2)\otimes\mathbb C^2,\) one has
\begin{equation}
(T_\varrho\otimes I_2)
\pi_{2,\varrho}^{M}(f)
(T_\varrho^{-1}\otimes I_2)
=
\pi_{2,1/2}^{M}(T_\varrho f).
\end{equation}
Therefore \(T_\varrho\) implements a unitary equivalence between the left
regular representation in the \(\varrho\)-realization and the Weyl--Moyal
left regular representation.
\end{remark}

\begin{remark}
The Fréchet \(^{*}\)-algebra isomorphism in
Proposition~\ref{prop:IV1} should be understood at the level
of smooth algebraic models. Namely, \(\operatorname{Op}^{W,\hbar_{\mathrm{eff}}}
\bigl(\mathcal S(\mathbb R^4)\bigr)
\cong
\mathcal K^\infty(L^2(\mathbb R^2))\) is a concrete Schwartz-operator realization of the reduced smooth algebraic structure.
Its elements have Schwartz kernels on \(L^2(\mathbb R^2)\), and hence are Hilbert--Schmidt, in particular compact.

On the other hand, in the spectral-triple construction below we use the left regular
Moyal representation of the reduced Moyal algebra \(\mathcal A_{\vartheta_{\mathrm{eff}},\varrho}
=
\bigl(
\mathcal S(\mathbb R^2),
\star_{\vartheta_{\mathrm{eff}},\varrho}, {}^{*_\varrho}\bigr)\) on \(L^2(\mathbb R^2)\), namely, \(\pi_{2,\varrho}(f)=L_f^{(\varrho)},\ L_f^{(\varrho)}\psi
=
f\star_{\vartheta_{\mathrm{eff}},\varrho}\psi.\)
For \(f\in\mathcal S(\mathbb R^2)\), the operator \(L_f^{(\varrho)}\) is bounded on \(L^2(\mathbb R^2)\), but it is not compact in general.

Thus there are two distinct realizations of the same smooth Moyal algebraic structure.
The Weyl-operator realization gives a Schwartz compact-operator model, while the left
regular realization gives bounded Moyal multiplication operators. The abstract Fréchet
\(^{*}\)-algebra identification with \(\mathcal K^\infty(L^2(\mathbb R^2))\) does not imply that the left regular represented operators \(L_f^{(\varrho)}\) are compact.
\end{remark}
\begin{remark}\label
The parameters \((r,s)\), \(\vartheta_{\mathrm{eff}}\), and \(\varrho\) play distinct roles.
The central data \((\hbar_0,\vartheta_0,B_0)\) fix the nondegenerate irreducible
\(G_{\mathrm{NC}}\)-sector, and hence determine the effective Moyal parameter
\[
\vartheta_{\mathrm{eff}}
=
\frac{\vartheta_0}{1-\vartheta_0B_0/\hbar_0}.
\]
For this fixed sector, the pair \((r,s)\) labels different kinematical
presentations of the same irreducible representation; it does not change the
underlying \(G_{\mathrm{NC}}\)-background. By contrast, \(\varrho\) is an independent Moyal-side \(\star\)-gauge parameter entering the product and involution of \(\A_{\vartheta_{\mathrm{eff}},\varrho}\). Thus varying \(\varrho\) does not change the \(G_{\mathrm{NC}}\)-sector; it only replaces one
\(\varrho\)-realization of the fixed effective Moyal-side Fréchet
\(\ast\)-algebraic structure by another Fréchet \(\ast\)-isomorphic realization.
\end{remark}

\begin{remark}
The adapted Darboux normalization above is stronger than merely choosing an
arbitrary Darboux map for the skew form \(\Sigma\). It is chosen so that the
rescaled kinematical momentum pair entering the transported Dirac operator is
sent to the diagonal canonical pair \(X_+,P_+\). The guiding-centre variables
are introduced only to complete the full four-dimensional Darboux system.
This adapted choice is the normalization that will be used in the subsequent
analysis of the Moyal-side Dirac operator.
\end{remark}

\subsection{Dirac operator and locally compact spectral triple structure}
\label{subsec:dirac-axioms}

We now reformulate the spectral data obtained in
Section~\ref{sec:dirac-spectral-triple} in the Moyal-plane realization, using
the adapted Darboux normalization of Proposition~\ref{prop:IV1}. Set
\begin{equation}
\mathcal U_{r,s}
:=
U_{r,s}\otimes \mathbb I_{2\times2}.
\end{equation}
We define the transported Moyal-side Dirac operator by
\begin{equation}
D^{\prime\,r,s}
:=
\mathcal U_{r,s}
D^{r,s}_{\hbar_0,\vartheta_0,B_0}
\mathcal U_{r,s}^{\dagger},
\end{equation}
with domain
\begin{equation}
\operatorname{Dom}(D^{\prime\,r,s})
:=
\mathcal U_{r,s}
\operatorname{Dom}
\left(
D^{r,s}_{\hbar_0,\vartheta_0,B_0}
\right).
\end{equation}
By Proposition~\ref{prop:IV1}, the unitary
\(U_{r,s}\) preserves the Schwartz space and satisfies
\begin{equation}
U_{r,s}
\widetilde{\Pi}^{\,r,s}_{x,(\hbar_0,\vartheta_0,B_0)}
U_{r,s}^{\dagger}
=
X_+,
\qquad
U_{r,s}
\widetilde{\Pi}^{\,r,s}_{y,(\hbar_0,\vartheta_0,B_0)}
U_{r,s}^{\dagger}
=
P_+,
\end{equation}
where
\begin{equation}
X_+
:=
\frac{X_1+X_2}{\sqrt2},
\qquad
P_+
:=
\frac{P_1+P_2}{\sqrt2},
\qquad
P_j:=-i\frac{\partial}{\partial x_j}.
\end{equation}
Therefore, on the common Schwartz core
\(\mathcal S(\mathbb R^2)\otimes\mathbb C^2\), one has
\begin{equation}
\label{eq:Dprime-diagonal-normal-form}
D^{\prime\,r,s}
=
\frac{1}{\sqrt2}
\left(
X_+\otimes\sigma_1
+
P_+\otimes\sigma_2
\right).
\end{equation}
Equivalently,
\begin{equation}
D^{\prime\,r,s}
=
\frac{1}{2}
\left(
(X_1+X_2)\otimes\sigma_1
+
(P_1+P_2)\otimes\sigma_2
\right).
\end{equation}
Although the displayed normal form no longer contains the parameters \((r,s)\)
explicitly, the dependence on \((r,s)\) is encoded in the adapted Darboux map
and in the implementing unitary \(U_{r,s}\).

Since \(D^{\prime\,r,s}\) is obtained from
\(D^{r,s}_{\hbar_0,\vartheta_0,B_0}\) by unitary conjugation, it is self-adjoint
on \(\operatorname{Dom}(D^{\prime\,r,s})\). Moreover, Proposition~\ref{prop:IV1}
implies that \(\mathcal S(\mathbb R^2)\otimes\mathbb C^2\) is a core for \(D^{\prime\,r,s}\).

For fixed \(\varrho\), the Moyal-side base spectral data are
\begin{equation}
\left(
\mathcal A_{\vartheta_{\mathrm{eff}},\varrho},
\mathcal H,
D^{\prime\,r,s}
\right),
\end{equation}
where
\begin{equation}
\mathcal A_{\vartheta_{\mathrm{eff}},\varrho}
=
\left(
\mathcal S(\mathbb R^2),
\star_{\vartheta_{\mathrm{eff}},\varrho},
{}^{*\varrho}
\right),
\qquad
\mathcal H
=
L^2(\mathbb R^2)\otimes\mathbb C^2.
\end{equation}
The representation is the left regular Moyal representation,
\begin{equation}
\pi^M_{2,\varrho}(a)
=
L_a^{(\varrho)}\otimes\mathbb I_{2\times2},
\qquad
L_a^{(\varrho)}\psi
=
a\star_{\vartheta_{\mathrm{eff}},\varrho}\psi,
\qquad
a\in\mathcal S(\mathbb R^2).
\end{equation}

We shall use the same notation \(\pi^M_{2,\varrho}(b)\) for left
\(\star_{\vartheta_{\mathrm{eff}},\varrho}\)-multiplication by certain smooth
functions \(b\notin\mathcal A_{\vartheta_{\mathrm{eff}},\varrho}\), such as
affine functions, whenever the corresponding operator is considered only on
the Schwartz core. In such cases,
\begin{equation}
L_b^{(\varrho)}\psi
:=
b\star_{\vartheta_{\mathrm{eff}},\varrho}\psi,
\qquad
\psi\in\mathcal S(\mathbb R^2),
\end{equation}
and
\begin{equation}
\pi^M_{2,\varrho}(b)
:=
L_b^{(\varrho)}\otimes\mathbb I_{2\times2}
\end{equation}
on \(\mathcal S(\mathbb R^2)\otimes\mathbb C^2\). This is only a notation for
a possibly unbounded left multiplier on the common invariant core; it does not
mean that \(b\) belongs to
\(\mathcal A_{\vartheta_{\mathrm{eff}},\varrho}\).

We now verify the locally compact nonunital spectral-triple properties. Thus, for every
\(a\in\mathcal A_{\vartheta_{\mathrm{eff}},\varrho}\), we need to prove:
\begin{enumerate}
\item \(\pi^M_{2,\varrho}(a)\in\mathcal B(\mathcal H)\);
\item \(\pi^M_{2,\varrho}(a)\operatorname{Dom}(D^{\prime\,r,s})
\subseteq \operatorname{Dom}(D^{\prime\,r,s})\), and
\([D^{\prime\,r,s},\pi^M_{2,\varrho}(a)]\) extends to a bounded operator on
\(\mathcal H\);
\item the local compactness condition holds. Since \(D^{\prime\,r,s}\) is
self-adjoint, its spectrum is real, and hence \(i\) is a resolvent point. Taking
\(\lambda=i\), we obtain
\begin{equation}
\pi^M_{2,\varrho}(a)
(D^{\prime\,r,s}-i\mathbf 1_{\mathcal H})^{-1}
\in
\mathcal K(\mathcal H).
\end{equation}
By the resolvent identity, compactness at this non-real resolvent point implies
the corresponding local compactness condition at every resolvent point.
\end{enumerate}

The first condition follows from the boundedness of left Moyal multiplication
by Schwartz functions on \(L^2(\mathbb R^2)\). For \(\varrho=\frac12\), this is
standard; see, for example, \cite[Lem.~2.15]{gayral2004moyal}. For arbitrary
\(\varrho\), using the intertwining identity
\[
T_\varrho L_a^{(\varrho)}T_\varrho^{-1}
=
L_{T_\varrho a}^{(1/2)},
\]
and the fact that \(T_\varrho\) is unitary on \(L^2(\mathbb R^2)\) with
\(T_\varrho a\in\mathcal S(\mathbb R^2)\), we again obtain
\begin{equation}
L_a^{(\varrho)}
\in
\mathcal B(L^2(\mathbb R^2)).
\end{equation}
Hence
\begin{equation}
\pi^M_{2,\varrho}(a)
=
L_a^{(\varrho)}\otimes\mathbb I_{2\times2}
\in
\mathcal B(\mathcal H).
\end{equation}

\begin{lemma}\label{lem:Dprime-bounded-commutator}
For every \(a\in\mathcal S(\mathbb R^2)\), the operator
\(\pi^M_{2,\varrho}(a)\) maps
\(\operatorname{Dom}(D^{\prime\,r,s})\) into itself, and the commutator
\([D^{\prime\,r,s},\pi^M_{2,\varrho}(a)]\), initially defined on
\(\mathcal S(\mathbb R^2)\otimes\mathbb C^2\), extends to a bounded operator
on \(\mathcal H\).
\end{lemma}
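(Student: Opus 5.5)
The plan is to compute the commutator explicitly on the Schwartz core, show that it is a finite sum of left Moyal multiplications by Schwartz functions, and then pass to the full domain by a closedness argument. Throughout, the Hilbert-space variables $(x_1,x_2)$ are identified with the Moyal-plane variables $(x,y)$ on which $\star_{\vartheta_{\mathrm{eff}},\varrho}$ acts. By \eqref{eq:Dprime-diagonal-normal-form}, on $\mathcal S(\mathbb R^2)\otimes\mathbb C^2$ we have
\[
[D^{\prime\,r,s},\pi^M_{2,\varrho}(a)]=\tfrac12\bigl([X_1+X_2,L_a^{(\varrho)}]\otimes\sigma_1+[P_1+P_2,L_a^{(\varrho)}]\otimes\sigma_2\bigr).
\]
It therefore suffices to treat $[X_j,L_a^{(\varrho)}]$ and $[P_j,L_a^{(\varrho)}]$ for $j=1,2$. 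As a preliminary, I will record that $L_a^{(\varrho)}$ maps $\mathcal S(\mathbb R^2)$ into itself. For $\varrho=\tfrac12$ this is standard, since $\mathcal S$ is an algebra under $\star$. For general $\varrho$ it follows by conjugating with $T_\varrho$, which preserves $\mathcal S(\mathbb R^2)$ by \eqref{eq:ordering-change-operator-fourier}. This guarantees that the commutators make sense on the core.

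\textbf{Momentum part.} The bidifferential exponential in \eqref{eq:moyal-like-product} has constant coefficients. Hence each $\partial_j$ is a derivation of $\star_{\vartheta_{\mathrm{eff}},\varrho}$:
\[
\partial_j(a\star\psi)=(\partial_ja)\star\psi+a\star\partial_j\psi.
\]
I will justify this on the Fourier side, where the product is a twisted convolution with a phase depending only on frequencies. It follows that $[P_j,L_a^{(\varrho)}]=-iL^{(\varrho)}_{\partial_ja}$ on the core, and this is bounded because $\partial_ja\in\mathcal S(\mathbb R^2)$.

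\textbf{Position part.} For an affine function the exponential series terminates. Reading off from \eqref{eq:moyal-like-product}, we get identities of the form
\[
x\star\psi=x\psi-i(\varrho-1)\vartheta_{\mathrm{eff}}\partial_y\psi,\qquad y\star\psi=y\psi-i\varrho\vartheta_{\mathrm{eff}}\partial_x\psi.
\]
Thus $x_j\psi=x_j\star\psi-c_j\partial_{k(j)}\psi$ for explicit constants $c_j$. Substituting this and using associativity of $\star$ on the relevant functions gives
\[
[X_j,L_a^{(\varrho)}]\psi=[x_j,a]_\star\star\psi-c_j(\partial_{k(j)}a)\star\psi .
\]
The star-commutator $[x_j,a]_\star$ is again a constant multiple of a first derivative of $a$. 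Hence $[X_j,L_a^{(\varrho)}]=L^{(\varrho)}_{b_j}$ with $b_j\in\mathcal S(\mathbb R^2)$, which is bounded by the boundedness result proved just before the lemma. Collecting the terms, the commutator equals $\sum_j L^{(\varrho)}_{\beta_j}\otimes M_j$ with $\beta_j\in\mathcal S$ and $M_j$ constant matrices, so it is bounded on the core.

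The step I expect to be the main obstacle is the associativity step involving the affine function $x_j$. Since $x_j\notin\mathcal S$, I would first try to avoid that issue entirely: verify the identity $x_j(a\star\psi)-a\star(x_j\psi)=\beta\star\psi$ directly in Fourier variables, where multiplication by $x_j$ becomes $i\partial_{\xi_j}$ acting on the twisted convolution, and differentiating the phase produces exactly the derivative terms.

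\textbf{Domain invariance.} Let $\psi\in\operatorname{Dom}(D^{\prime\,r,s})$. Since $\mathcal S\otimes\mathbb C^2$ is a core, choose $\psi_n$ in the core with $\psi_n\to\psi$ and $D'\psi_n\to D'\psi$. Writing $C$ for the bounded extension of the commutator, we have
\[
D'\pi(a)\psi_n=\pi(a)D'\psi_n+C\psi_n,
\]
and the right-hand side converges. Because $D'$ is closed, $\pi(a)\psi\in\operatorname{Dom}(D')$ and $[D',\pi(a)]\psi=C\psi$.
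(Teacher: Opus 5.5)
Your argument is correct, but it takes a different route from the paper.

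\textbf{The paper's route.} The paper works in the Weyl pseudodifferential calculus. It writes $L_a^{(\varrho)}=\operatorname{Op}^{W}(\ell_a^{(\varrho)})$ with the non-Schwartz symbol $\ell_a^{(\varrho)}=a_\varrho\circ\nu_\varrho$, whose derivatives are all bounded. It then applies the exact identities $[X_j,\operatorname{Op}^{W}(b)]=i\operatorname{Op}^{W}(\partial_{\xi_j}b)$ and $[P_j,\operatorname{Op}^{W}(b)]=-i\operatorname{Op}^{W}(\partial_{x_j}b)$. Finally, it uses that $T_\varrho$ commutes with derivatives to recognize the resulting operators as $L^{(\varrho)}_{\partial_x a}$ and $L^{(\varrho)}_{\partial_y a}$.

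\textbf{Your route.} You stay inside the star-product algebra. The derivation property of $\partial_j$ for the constant-coefficient product handles the $P_j$. The terminating formulas for $x\star\psi$ and $y\star\psi$, together with associativity (or your direct Fourier check), handle the $X_j$. Carrying this out gives $[X_1,L_a^{(\varrho)}]=i\varrho\vartheta_{\mathrm{eff}}L^{(\varrho)}_{\partial_y a}$, $[X_2,L_a^{(\varrho)}]=-i(1-\varrho)\vartheta_{\mathrm{eff}}L^{(\varrho)}_{\partial_x a}$, and $[P_j,L_a^{(\varrho)}]=-iL^{(\varrho)}_{\partial_j a}$. After the factor $\tfrac12$ from the normal form of $D^{\prime\,r,s}$, this is exactly the paper's final commutator formula. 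Your domain-invariance step is the same closedness argument the paper uses.

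\textbf{Comparison.} Your approach is more elementary and self-contained. It needs neither the Weyl-symbol formula for $L_a^{(\varrho)}$ at general $\varrho$, which the paper only sketches by ``transporting through $T_\varrho$''. Nor does it need the commutator identities for Weyl operators with non-Schwartz symbols in the oscillatory-integral sense. The paper's approach pays off later, however. The same symbol $\ell_a^{(\varrho)}$, localized in the variables $\nu_\varrho$, is exactly what drives the product-type symbol argument for local compactness in the next lemma. Your computation, for its part, is the general-$a$ version of the affine-multiplier identities the paper uses afterwards for $b_x$ and $b_y$.
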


\begin{proof}
For \(a\in\mathcal S(\mathbb R^2)\), left
\(\star_{\vartheta_{\mathrm{eff}},\varrho}\)-multiplication
\[
L_a^{(\varrho)}\psi
:=
a\star_{\vartheta_{\mathrm{eff}},\varrho}\psi
\]
is a pseudodifferential operator. For \(\varrho=\frac12\), this is the
standard Moyal-symbol formula for left Moyal multiplication; see
\cite[Lem.~2.15]{gayral2004moyal}. For general \(\varrho\), the formula is
obtained by transporting the Weyl-ordered formula through the automorphism
\(T_\varrho\). Thus
\begin{equation}\label{eq:La-rho-weyl-symbol}
L_a^{(\varrho)}
=
\operatorname{Op}^{W}
\bigl(
\ell_a^{(\varrho)}
\bigr),
\end{equation}
where
\begin{equation}\label{eq:ell-a-rho}
\ell_a^{(\varrho)}(x,\xi)
=
a_\varrho
\bigl(
\nu_{\varrho,1}(x,\xi),
\nu_{\varrho,2}(x,\xi)
\bigr),
\qquad
a_\varrho:=T_\varrho^{-1}a\in\mathcal S(\mathbb R^2),
\end{equation}
with
\begin{equation}\label{eq:nu-rho-corrected}
\nu_{\varrho,1}
=
x_1-(1-\varrho)\vartheta_{\mathrm{eff}}\xi_2,
\qquad
\nu_{\varrho,2}
=
x_2+\varrho\vartheta_{\mathrm{eff}}\xi_1.
\end{equation}
We write
\[
\nu_\varrho:=(\nu_{\varrho,1},\nu_{\varrho,2}).
\]
Although \(\ell_a^{(\varrho)}\) need not be a Schwartz function on
\(T^*\mathbb R^2\), all its derivatives are bounded, since
\(a_\varrho\in\mathcal S(\mathbb R^2)\) and \(\nu_\varrho\) is linear.

Since \(T_\varrho\) commutes with ordinary partial derivatives, we have
\begin{equation}\label{eq:T-rho-commutes-partials}
\partial_x a_\varrho
=
T_\varrho^{-1}(\partial_x a),
\qquad
\partial_y a_\varrho
=
T_\varrho^{-1}(\partial_y a).
\end{equation}
Consequently,
\begin{equation}\label{eq:partial-symbols-left-multipliers}
\operatorname{Op}^{W}
\left(
(\partial_x a_\varrho)(\nu_\varrho)
\right)
=
L_{\partial_x a}^{(\varrho)},
\qquad
\operatorname{Op}^{W}
\left(
(\partial_y a_\varrho)(\nu_\varrho)
\right)
=
L_{\partial_y a}^{(\varrho)}.
\end{equation}

The Weyl symbols of \(X_j\) and \(P_j\) are \(x_j\) and \(\xi_j\), respectively.
For a Weyl operator \(A=\operatorname{Op}^{W}(b)\), the commutators with the
linear coordinate and momentum operators are exact:
\begin{equation}\label{eq:linear-weyl-commutators}
[X_j,A]
=
i\,\operatorname{Op}^{W}(\partial_{\xi_j}b),
\qquad
[P_j,A]
=
-i\,\operatorname{Op}^{W}(\partial_{x_j}b).
\end{equation}
Applying these identities to \(A=L_a^{(\varrho)}\), we obtain on
\(\mathcal S(\mathbb R^2)\)
\begin{equation}\label{eq:Xplus-La-commutator}
\begin{aligned}
[X_+,L_a^{(\varrho)}]
&=
\frac{i}{\sqrt2}
\operatorname{Op}^{W}
\left(
\partial_{\xi_1}\ell_a^{(\varrho)}
+
\partial_{\xi_2}\ell_a^{(\varrho)}
\right)
\\
&=
\frac{i\vartheta_{\mathrm{eff}}}{\sqrt2}
\left(
\varrho L_{\partial_y a}^{(\varrho)}
-
(1-\varrho)L_{\partial_x a}^{(\varrho)}
\right),
\end{aligned}
\end{equation}
and
\begin{equation}\label{eq:Pplus-La-commutator}
\begin{aligned}
[P_+,L_a^{(\varrho)}]
&=
-\frac{i}{\sqrt2}
\operatorname{Op}^{W}
\left(
\partial_{x_1}\ell_a^{(\varrho)}
+
\partial_{x_2}\ell_a^{(\varrho)}
\right)
\\
&=
-\frac{i}{\sqrt2}
\left(
L_{\partial_x a}^{(\varrho)}
+
L_{\partial_y a}^{(\varrho)}
\right).
\end{aligned}
\end{equation}
Since \(a\in\mathcal S(\mathbb R^2)\), we have
\[
\partial_x a,\partial_y a\in\mathcal S(\mathbb R^2).
\]
Hence
\[
L_{\partial_x a}^{(\varrho)},\,
L_{\partial_y a}^{(\varrho)}
\in
\mathcal B(L^2(\mathbb R^2)).
\]
It follows from \eqref{eq:Xplus-La-commutator} and
\eqref{eq:Pplus-La-commutator} that
\[
[X_+,L_a^{(\varrho)}]\in\mathcal B(L^2(\mathbb R^2)),
\qquad
[P_+,L_a^{(\varrho)}]\in\mathcal B(L^2(\mathbb R^2)).
\]

Using \eqref{eq:Dprime-diagonal-normal-form}, we obtain on
\(\mathcal S(\mathbb R^2)\otimes\mathbb C^2\)
\begin{equation}\label{eq:Dprime-commutator-final}
\begin{aligned}
[D^{\prime\,r,s},\pi^M_{2,\varrho}(a)]
&=
\frac{1}{\sqrt2}
\left(
[X_+,L_a^{(\varrho)}]\otimes\sigma_1
+
[P_+,L_a^{(\varrho)}]\otimes\sigma_2
\right)
\\
&=
\frac{i\vartheta_{\mathrm{eff}}}{2}
\left(
\varrho L_{\partial_y a}^{(\varrho)}
-
(1-\varrho)L_{\partial_x a}^{(\varrho)}
\right)
\otimes\sigma_1
\\
&\quad
-\frac{i}{2}
\left(
L_{\partial_x a}^{(\varrho)}
+
L_{\partial_y a}^{(\varrho)}
\right)
\otimes\sigma_2 .
\end{aligned}
\end{equation}
The right-hand side is bounded on \(\mathcal H\). Denote this bounded operator by
\(B_a\). Hence the commutator, initially computed on
\(\mathcal S(\mathbb R^2)\otimes\mathbb C^2\), extends to
\(B_a\in\mathcal B(\mathcal H)\).

It remains to prove the domain-invariance statement in the lemma. Since \(a\in\mathcal S(\mathbb R^2)\), left \(\star_{\vartheta_{\mathrm{eff}},\varrho}\)-multiplication by \(a\) maps \(\mathcal S(\mathbb R^2)\) into itself. Hence \(\pi^M_{2,\varrho}(a)\) preserves the core \(\mathcal S(\mathbb R^2)\otimes\mathbb C^2\).
Let \(\psi\in\operatorname{Dom}(D^{\prime\,r,s})\). Since this core is a core for \(D^{\prime\,r,s}\), there exists a sequence \(\psi_n\in\mathcal S(\mathbb R^2)\otimes\mathbb C^2\) such that
\begin{equation} 
\psi_n\to\psi, \qquad D^{\prime\,r,s}\psi_n\to D^{\prime\,r,s}\psi 
\end{equation} 
in \(\mathcal H\). On the core, we have 
\begin{equation} 
D^{\prime\,r,s}\pi^M_{2,\varrho}(a)\psi_n = \pi^M_{2,\varrho} (a)D^{\prime\,r,s}\psi_n+B_a\psi_n. 
\end{equation}
The right-hand side converges in \(\mathcal H\) to 
\begin{equation} 
\pi^M_{2,\varrho}(a)D^{\prime\,r,s}\psi+B_a\psi. 
\end{equation}
Also, 
\begin{equation}
\pi^M_{2,\varrho}(a)\psi_n\to \pi^M_{2,\varrho}(a)\psi 
\end{equation} 
because \(\pi^M_{2,\varrho}(a)\) is bounded. Since \(D^{\prime\,r,s}\) is closed, it follows that 
\begin{equation} 
\pi^M_{2,\varrho}(a)\psi\in\operatorname{Dom}(D^{\prime\,r,s}). 
\end{equation} 
Therefore 
\begin{equation} \pi^M_{2,\varrho}(a)\operatorname{Dom}(D^{\prime\,r,s}) \subseteq \operatorname{Dom}(D^{\prime\,r,s}), 
\end{equation} 
and the commutator on \(\operatorname{Dom}(D^{\prime\,r,s})\) is represented by the bounded extension \(B_a\).This completes the proof.
\end{proof}

We now prove the local compactness of the Moyal-side base triple. The
argument uses the harmonic-oscillator operator associated with the controlled
phase-space directions. Set
\begin{equation}
H_+ := X_+^2+P_+^2
\end{equation}
on \(L^2(\mathbb R^2)\).

\begin{lemma}
\label{lem:Dprime-square}
On the Schwartz core \(\mathcal S(\mathbb R^2)\otimes\mathbb C^2\), one has
\begin{equation}
(D^{\prime\,r,s})^2
=
\frac12
\left(
X_+^2+P_+^2
\right)
\otimes\mathbb I_{2\times2}
-
\frac12
\mathbf 1_{L^2(\mathbb R^2)}\otimes\sigma_3.
\end{equation}
Consequently, if
\begin{equation}
p_+
:=
\frac12(\mathbb I_{2\times2}+\sigma_3),
\qquad
p_-
:=
\frac12(\mathbb I_{2\times2}-\sigma_3),
\end{equation}
then
\begin{equation}
1+(D^{\prime\,r,s})^2
=
\frac12(H_+ + 1)\otimes p_+
+
\frac12(H_+ + 3)\otimes p_-.
\end{equation}
\end{lemma}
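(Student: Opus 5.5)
The plan is a direct algebraic computation on the Schwartz core, using the normal form \eqref{eq:Dprime-diagonal-normal-form} and the Pauli algebra; there is no genuine analytic obstacle.

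\textbf{Invariance of the core.} First I will check that \(\mathcal S(\mathbb R^2)\otimes\mathbb C^2\) is invariant under \(X_+\otimes\sigma_1\) and \(P_+\otimes\sigma_2\). This holds because \(X_j\) and \(P_j=-i\partial_{x_j}\) map \(\mathcal S(\mathbb R^2)\) into itself. Hence \((D^{\prime\,r,s})^2\) is well defined on the core as the composite of the two expressions in \eqref{eq:Dprime-diagonal-normal-form}, and every manipulation below is a pointwise identity on Schwartz vectors.

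\textbf{Expanding the square.} On the core I will expand
\[
(D^{\prime\,r,s})^2=\tfrac12\Bigl(X_+^2\otimes\sigma_1^2+P_+^2\otimes\sigma_2^2+X_+P_+\otimes\sigma_1\sigma_2+P_+X_+\otimes\sigma_2\sigma_1\Bigr).
\]
The Pauli relations give \(\sigma_1^2=\sigma_2^2=\mathbb I_{2\times2}\), \(\sigma_1\sigma_2=i\sigma_3\) and \(\sigma_2\sigma_1=-i\sigma_3\). The mixed terms therefore combine into \(\tfrac12[X_+,P_+]\otimes i\sigma_3\).

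\textbf{The commutator \([X_+,P_+]\).} Since \([X_j,P_k]=i\delta_{jk}\) on \(\mathcal S(\mathbb R^2)\), I get
\[
[X_+,P_+]=\tfrac12\bigl([X_1,P_1]+[X_2,P_2]\bigr)=i\,\mathbf 1 .
\]
This is the same canonical relation as \eqref{eq:commutation}, now transported by \(U_{r,s}\). Consequently the mixed term equals \(\tfrac12\, i\cdot i\,\sigma_3=-\tfrac12\,\mathbf 1\otimes\sigma_3\), which yields the first identity
\[
(D^{\prime\,r,s})^2=\tfrac12 H_+\otimes\mathbb I_{2\times2}-\tfrac12\,\mathbf 1\otimes\sigma_3 .
\]

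\textbf{The decomposition of \(1+(D^{\prime\,r,s})^2\).} I will write \(\mathbb I_{2\times2}=p_++p_-\) and \(\sigma_3=p_+-p_-\). Then
\[
1+(D^{\prime\,r,s})^2=\bigl(1+\tfrac12H_+-\tfrac12\bigr)\otimes p_++\bigl(1+\tfrac12H_++\tfrac12\bigr)\otimes p_-,
\]
which is exactly \(\tfrac12(H_++1)\otimes p_++\tfrac12(H_++3)\otimes p_-\).

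The only point needing care is the sign bookkeeping in \(\sigma_1\sigma_2=i\sigma_3\) together with the value \([X_+,P_+]=+i\). Everything is an identity on the invariant core, so no closure or domain argument is required for the statement as formulated.
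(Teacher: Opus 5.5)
Your computation is correct and is essentially the paper's proof: expand the square, use $\sigma_1\sigma_2=i\sigma_3=-\sigma_2\sigma_1$ and $[X_+,P_+]=i$, then split $\mathbb I_{2\times2}=p_++p_-$ and $\sigma_3=p_+-p_-$. The one thing the paper adds, and which you explicitly dismiss, is a line extending the identity from the core to the self-adjoint operators. The right-hand side is essentially self-adjoint on $\mathcal S(\mathbb R^2)\otimes\mathbb C^2$ because $\mathcal S(\mathbb R^2)$ is a core for $H_+$. The self-adjoint operator $(D^{\prime\,r,s})^2$ agrees with it there because $D^{\prime\,r,s}$ preserves the core, so the two self-adjoint operators coincide. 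This operator-level identity is what the functional-calculus step in Proposition~\ref{prop:compact-lambda-prime} actually uses, so you should include it rather than set the domain point aside.
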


\begin{proof}
Since
\begin{equation}
[X_+,P_+]=i\mathbf 1_{L^2(\mathbb R^2)},
\end{equation}
and
\begin{equation}
\sigma_1\sigma_2=i\sigma_3,
\qquad
\sigma_2\sigma_1=-i\sigma_3,
\end{equation}
we compute from \eqref{eq:Dprime-diagonal-normal-form} that
\begin{equation}
\begin{aligned}
(D^{\prime\,r,s})^2
&=
\frac12
\left(
X_+^2+P_+^2
\right)
\otimes\mathbb I_{2\times2}
+
\frac12
\left(
X_+P_+\otimes\sigma_1\sigma_2
+
P_+X_+\otimes\sigma_2\sigma_1
\right)
\\
&=
\frac12
\left(
X_+^2+P_+^2
\right)
\otimes\mathbb I_{2\times2}
+
\frac{i}{2}
[X_+,P_+]\otimes\sigma_3
\\
&=
\frac12
\left(
X_+^2+P_+^2
\right)
\otimes\mathbb I_{2\times2}
-
\frac12
\mathbf 1_{L^2(\mathbb R^2)}\otimes\sigma_3.
\end{aligned}
\end{equation}
Adding the identity and decomposing with respect to the spin projections
\(p_+\) and \(p_-\) gives
\begin{equation}
1+(D^{\prime\,r,s})^2
=
\frac12(H_+ + 1)\otimes p_+
+
\frac12(H_+ + 3)\otimes p_-.
\end{equation}
Since \(\mathcal S(\mathbb R^2)\) is a core for the harmonic-oscillator operator
\(H_+\), the above identity extends from
\(\mathcal S(\mathbb R^2)\otimes\mathbb C^2\) to the corresponding self-adjoint
operators. This proves the lemma.
\end{proof}
We next isolate the phase-space mechanism behind local compactness. In the
standard Weyl phase-space variables
\begin{equation}
(x,\xi)=(x_1,x_2,\xi_1,\xi_2)\in T^*\mathbb R^2,
\end{equation}
left \(\star_{\vartheta_{\mathrm{eff}},\varrho}\)-multiplication by a Schwartz
function localizes in the variables
\begin{equation}
\nu_{\varrho,1}
=
x_1-(1-\varrho)\vartheta_{\mathrm{eff}}\xi_2,
\qquad
\nu_{\varrho,2}
=
x_2+\varrho\vartheta_{\mathrm{eff}}\xi_1.
\end{equation}
The harmonic-oscillator factor \(H_+=X_+^2+P_+^2\) controls the variables
\begin{equation}
\eta_1=\frac{x_1+x_2}{\sqrt2},
\qquad
\eta_2=\frac{\xi_1+\xi_2}{\sqrt2}.
\end{equation}
The linear map
\begin{equation}
(x_1,x_2,\xi_1,\xi_2)\longmapsto
(\nu_{\varrho,1},\nu_{\varrho,2},\eta_1,\eta_2)
\end{equation}
has determinant
\begin{equation}
-\frac{\vartheta_{\mathrm{eff}}}{2}.
\end{equation}
Since \(\vartheta_{\mathrm{eff}}\neq0\), the variables
\((\nu_\varrho,\eta)\) form a global linear coordinate system on
\(T^*\mathbb R^2\). Here
\begin{equation}
\nu_\varrho
:=
(\nu_{\varrho,1},\nu_{\varrho,2}),
\qquad
\eta:=(\eta_1,\eta_2).
\end{equation}

\begin{lemma}
\label{lem:scalar-local-compactness}
With \(H_+=X_+^2+P_+^2\) as above, let
\begin{equation}
\Lambda := (1+H_+)^{1/2}.
\end{equation}
Then, for every \(a\in\mathcal S(\mathbb R^2)\), one has
\begin{equation}
L_a^{(\varrho)}\Lambda^{-1}
\in
\mathcal K(L^2(\mathbb R^2)).
\end{equation}
\end{lemma}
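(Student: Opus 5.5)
The idea is to reduce everything to a phase-space statement. By \eqref{eq:La-rho-weyl-symbol}, \(L_a^{(\varrho)}\) has Weyl symbol \(a_\varrho(\nu_\varrho)\), which is bounded with all derivatives bounded. The operator \(\Lambda^{-1}=(1+H_+)^{-1/2}\) is a Weyl operator whose symbol is essentially \((1+\eta_1^2+\eta_2^2)^{-1/2}\) in the controlled variables \(\eta\), modulo lower-order terms. In the global coordinates \((\nu_\varrho,\eta)\), which are valid because the determinant \(-\vartheta_{\mathrm{eff}}/2\neq0\), the product of these symbols decays in every direction of \(T^*\mathbb R^2\). An operator whose symbol tends to zero at infinity is compact, and that is the conclusion we want.

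\textbf{Step 1 (Hilbert--Schmidt case).} Rather than rely on a symbolic calculus for the non-standard symbol class, I would first show that \(L_a^{(\varrho)}(1+H_+)^{-N}\) is Hilbert--Schmidt for \(N\) large.
\begin{itemize}
\item Choose a linear symplectic change of variables (metaplectic unitary \(V\)) sending \((X_+,P_+)\) to \((Q\otimes 1,P\otimes 1)\) on \(L^2(\mathbb R)\otimes L^2(\mathbb R)\).
\item Then \((1+H_+)^{-N}\) becomes \((1+Q^2+P^2)^{-N}\otimes 1\).
\item Under the same map, \(V L_a^{(\varrho)}V^{-1}\) is the Weyl quantization of \(a_\varrho\circ\ell\), where \(\ell\) is the transformed \(\nu_\varrho\). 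Because \((\nu_\varrho,\eta)\) are independent coordinates, \(\ell\) restricted to the complementary (second-factor) phase plane is an isomorphism onto \(\mathbb R^2\).
\item So \(a_\varrho\circ\ell\) is Schwartz in the second-factor variables, uniformly with derivatives in the first-factor variables, and depends on the first factor only through a linear shift.
\item Expand the first factor in the Hermite basis \(h_n\), using that \((1+Q^2+P^2)^{-N}h_n=(2n+2)^{-N}h_n\). Each column of the operator is then controlled in Hilbert--Schmidt norm, and with \(N\ge2\) the sum over \(n\) converges.
\end{itemize}

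A cleaner alternative for the same conclusion is to write \(a_\varrho=\int \hat a_\varrho(k)e^{ik\cdot z}dk\). Then \(L_a^{(\varrho)}=\int\hat a_\varrho(k)\,W(k)\,dk\), where \(W(k)\) are Weyl unitaries in the \(\nu_\varrho\)-directions. After the reduction, each \(W(k)\) acts as a phase-space translation on the first factor tensored with a unitary on the second factor. The resulting operator can be shown compact by testing against \(h_n\otimes\phi\) and exploiting the Schwartz decay in \(k\).

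\textbf{Step 2 (passage to \(\Lambda^{-1}\)).} Once \(L_a^{(\varrho)}(1+H_+)^{-N}\in\mathcal K\) is known for some \(N\), I would deduce compactness of \(L_a^{(\varrho)}(1+H_+)^{-1/2}\) by a standard device.
\begin{itemize}
\item Show that \(|L_a^{(\varrho)}|^2=L_{\bar a\star a}^{(\varrho)}\) and apply the \(C^*\)-identity.
\item Then \(T=L_a\Lambda^{-1}\) is compact iff \(T^*T=\Lambda^{-1}L_{a^*\star a}\Lambda^{-1}\) is compact.
\item Factorise \(a^*\star a\) as \(b\star c\) with \(b,c\) Schwartz (Dixmier--Malliavin type factorization, or cruder: interpolate).
\end{itemize}
A simpler route uses functional calculus. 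Compactness of \(L_b f(H_+)\) for all \(f\in C_c\) follows from Step 1 by approximating \(f\) by polynomials in \((1+H_+)^{-1}\) (Stone--Weierstrass). Then \((1+t)^{-1/2}\in C_0\) is a norm limit of such \(f\), and the compact operators are norm closed.

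\textbf{Main obstacle.} I expect the difficulty to be Step 1, specifically verifying that the \(\nu_\varrho\)-localization, transported through the metaplectic map, really becomes Schwartz decay in exactly the complementary (degeneracy) phase plane. The paper's determinant computation guarantees independence of the \((\nu_\varrho,\eta)\) coordinates. However, the symplectic complement of the \(\eta\)-plane need not coincide with the \(\nu\)-plane, so \(\ell\) mixes both factors. I would handle this with the \(\hat a_\varrho(k)W(k)\) representation: each \(W(k)\) factors as \(W_1(Ak)\otimes W_2(Bk)\) with \(B\) invertible. The Hilbert--Schmidt norm of \(\int\hat a_\varrho(k)\,[W_1(Ak)(1+Q^2+P^2)^{-N}]\otimes W_2(Bk)\,dk\) can then be computed as \(\int|\hat a_\varrho|^2\)-type quantities, following the same pattern as the Schwartz-kernel argument in Lemma~\ref{lem:rho-schwartz-compact}.
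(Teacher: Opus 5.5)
Your proof is correct, and it takes a genuinely different route from the paper's.

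\textbf{The paper's route.} The paper works entirely at the symbol level. It uses the Weyl--Shubin functional calculus to give $\Lambda^{-1}$ a symbol $q(\eta)$ of order $-1$. It then invokes a product-type (SG) Weyl composition theorem in the non-canonical coordinates $(\nu_\varrho,\eta)$ to place $\ell_a^{(\varrho)}\# q$ in $S^{-\infty,-1}$. Finally it splits off a phase-space cutoff: the compactly supported piece is Hilbert--Schmidt, and the tail is small in norm by Calder\'on--Vaillancourt.

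\textbf{Your route.} You instead exploit the exact linear structure $\ell_a^{(\varrho)}=a_\varrho\circ\nu_\varrho$. After the metaplectic $V$ (the $45^\circ$ rotation of $\mathbb R^2$ already does the job), each $e^{ik\cdot\hat\nu_\varrho}$ factors as $W_1(A^{T}k)\otimes W_2(B^{T}k)$ with $|\det B|=|\vartheta_{\mathrm{eff}}|/2$. This is precisely the paper's determinant condition. Weyl--Plancherel in the second factor then gives
\[
\bigl\|L_a^{(\varrho)}(1+H_+)^{-N}\bigr\|_{\mathrm{HS}}^2
= C_{\vartheta_{\mathrm{eff}}}\,\|a_\varrho\|_{L^2}^2\,\bigl\|(1+Q^2+P^2)^{-N}\bigr\|_{\mathrm{HS}}^2 ,
\]
which is finite already for $N=1$, because $\sum_n(2n+2)^{-2}<\infty$.

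\textbf{What each buys.} Your route is elementary and quantitative. It needs neither the fractional-power symbol calculus nor a composition theorem in coordinates where the symplectic form is not block-canonical; the paper applies that theorem after a linear change of variables without checking its hypotheses. The paper's symbolic route, in turn, does not rely on the symbol being a function of two linear forms. It would therefore cover more general symbols that decay in $\nu_\varrho$.

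\textbf{Two small points on Step 2.}
\begin{itemize}
\item The approximating polynomials in $(1+H_+)^{-1}$ must have no terms of degree below $N$; a constant term, for instance, would contribute the non-compact $L_b$. Such polynomials are still uniformly dense in $\{g\in C[0,1]: g(0)=0\}$, so the argument survives.
\item Once $N=1$ is in hand, you can skip Stone--Weierstrass and the factorization device entirely. For $T=L_a^{(\varrho)}\Lambda^{-1}$ one has $TT^*=L_a^{(\varrho)}(1+H_+)^{-1}\bigl(L_a^{(\varrho)}\bigr)^*$, which is compact, and hence so is $T$.
\end{itemize}
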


\begin{proof}
By the Weyl-symbol formula established in Lemma~\ref{lem:Dprime-bounded-commutator}, the Weyl symbol of \(L_a^{(\varrho)}\) has the form
\begin{equation}
\ell_a^{(\varrho)}(x,\xi)
=
a_\varrho
\left(
\nu_{\varrho,1}(x,\xi),
\nu_{\varrho,2}(x,\xi)
\right),
\qquad
a_\varrho:=T_\varrho^{-1} a\in\mathcal S(\mathbb R^2).
\end{equation}
Thus, in the global linear coordinates \((\nu_\varrho,\eta)\), the symbol
\(\ell_a^{(\varrho)}\) is rapidly decreasing in \(\nu_\varrho\) and has order
zero in \(\eta\). In product-type notation,
\begin{equation}
\ell_a^{(\varrho)}
\in
S^{-\infty,0}.
\end{equation}

The Weyl symbol \(q\) of \(\Lambda^{-1}\) depends only on the controlled
oscillator variables \(\eta=(\eta_1,\eta_2)\). Indeed,
\begin{equation}
\Lambda
=
(1+X_+^2+P_+^2)^{1/2},
\end{equation}
and \((X_+,P_+)\) is a single canonical pair. Thus the functional calculus is
applied only to the one-dimensional harmonic oscillator
\begin{equation}
H_+=X_+^2+P_+^2.
\end{equation}
The Weyl symbol of \(H_++1\) is the positive elliptic Shubin symbol
\begin{equation}
1+\eta_1^2+\eta_2^2
\end{equation}
in the controlled variables \(\eta\). By the Weyl--Shubin functional calculus
for positive elliptic Shubin symbols, \((H_++1)^{-1/2}\) has a Weyl symbol
\(q(\eta)\) of order \(-1\) in the \(\eta\)-variables \cite{shubin2001pseudodifferential}. Hence, for every multiindex \(\beta\),
\begin{equation}
|\partial_\eta^\beta q(\eta)|
\leq
C_\beta
\langle \eta\rangle^{-1-|\beta|},
\qquad
\langle \eta\rangle:=(1+|\eta|^2)^{1/2}.
\end{equation}
Equivalently, we regard \(q\) as a symbol \(q(\nu_\varrho,\eta)\) which is
independent of \(\nu_\varrho\). Therefore, for every pair of multiindices \(\alpha,\beta\),
\begin{equation}
|\partial_{\nu_\varrho}^{\alpha}\partial_\eta^\beta
q(\nu_\varrho,\eta)|
\leq
C_{\alpha\beta}
\langle \eta\rangle^{-1-|\beta|},
\end{equation}
with the left-hand side equal to zero whenever \(|\alpha|>0\). Thus
\begin{equation}
q\in S^{0,-1}.
\end{equation}

The Weyl symbol of the product \(L_a^{(\varrho)}\Lambda^{-1}\) is not the
pointwise product \(\ell_a^{(\varrho)}q\). It is the Weyl product
\begin{equation}
b_a^{(\varrho)}
:=
\sigma^W
\left(
L_a^{(\varrho)}\Lambda^{-1}
\right)
=
\ell_a^{(\varrho)}\# q.
\end{equation}

The product-type estimates used here are stable under the fixed linear change of
variables
\begin{equation}
(x,\xi)\longmapsto(\nu_\varrho,\eta).
\end{equation}
Indeed, this map is invertible, and therefore the corresponding product-type
seminorms in the variables \((\nu_\varrho,\eta)\) are equivalent to the
corresponding seminorms in the original phase-space variables \((x,\xi)\).
Under this fixed invertible linear change, the Weyl product remains a
constant-coefficient bilinear oscillatory product. Consequently, the standard
SG, or product-type, Weyl composition theorem applies with equivalent
seminorms. In particular,
\begin{equation}
S^{m_1,m_2}\# S^{m_1',m_2'}
\subset
S^{m_1+m_1',m_2+m_2'},
\end{equation}
with the usual Weyl asymptotic expansion and controlled remainders; see
\cite[Thm.~1.2.17 and Eqs.~(4.2.1)--(4.2.3)]{nicola2010global}. Therefore
\begin{equation}
S^{-\infty,0}\# S^{0,-1}
\subset
S^{-\infty,-1}.
\end{equation}
It follows that
\begin{equation}
b_a^{(\varrho)}
\in
S^{-\infty,-1}.
\end{equation}
Equivalently, for every \(M\in\mathbb N\) and every pair of multiindices
\(\alpha,\beta\), there exists a constant \(C_{M\alpha\beta}\) such that
\begin{equation}
\left|
\partial_{\nu_\varrho}^{\alpha}
\partial_\eta^\beta
b_a^{(\varrho)}(\nu_\varrho,\eta)
\right|
\leq
C_{M\alpha\beta}
\langle\nu_\varrho\rangle^{-M}
\langle \eta\rangle^{-1-|\beta|}.
\end{equation}
In particular,
\begin{equation}
\partial_{\nu_\varrho}^{\alpha}
\partial_\eta^\beta
b_a^{(\varrho)}(\nu_\varrho,\eta)
\longrightarrow 0
\qquad
\text{as}
\qquad
|(\nu_\varrho,\eta)|\to\infty.
\end{equation}
Since the coordinate change
\((x,\xi)\leftrightarrow(\nu_\varrho,\eta)\) is linear and invertible, the same
vanishing-at-infinity property holds for the finite collection of
\((x,\xi)\)-derivatives required in the Calderón--Vaillancourt theorem:
\begin{equation}
\partial_x^\alpha\partial_\xi^\beta
b_a^{(\varrho)}(x,\xi)
\longrightarrow 0
\qquad
\text{as}
\qquad
|(x,\xi)|\to\infty.
\end{equation}

Choose \(\chi\in C_c^\infty(T^*\mathbb R^2)\) such that
\begin{equation}
\chi(z)=1
\quad
\text{for}
\quad
|z|\leq 1,
\qquad
\chi(z)=0
\quad
\text{for}
\quad
|z|\geq 2,
\end{equation}
where \(z=(x,\xi)\). For \(R>0\), set
\begin{equation}
\chi_R(z)
:=
\chi(z/R).
\end{equation}
Then
\begin{equation}
b_a^{(\varrho)}
=
\chi_R b_a^{(\varrho)}
+
(1-\chi_R)b_a^{(\varrho)}.
\end{equation}
The compactly supported symbol \(\chi_Rb_a^{(\varrho)}\) belongs to
\(L^2(T^*\mathbb R^2)\). The Weyl quantization of an \(L^2\)-symbol is
Hilbert--Schmidt; see \cite[Thm.~1.30]{folland1989harmonic}. Therefore
\begin{equation}
\operatorname{Op}^{W}
\left(
\chi_Rb_a^{(\varrho)}
\right)
\in
\mathcal K(L^2(\mathbb R^2)).
\end{equation}

It remains to show that the tail tends to zero in operator norm. Put
\begin{equation}
c_R
:=
(1-\chi_R)b_a^{(\varrho)}.
\end{equation}
By the Calderón--Vaillancourt theorem in the Weyl form \cite[Thm.~2.73]{folland1989harmonic}, there exist \(N\in\mathbb N\) and \(C>0\), independent of \(R\), such that
\begin{equation}
\left\|
\operatorname{Op}^{W}(c_R)
\right\|_{\mathcal B(L^2)}
\leq
C
\sum_{|\alpha|+|\beta|\leq N}
\sup_{(x,\xi)\in T^*\mathbb R^2}
\left|
\partial_x^\alpha\partial_\xi^\beta c_R(x,\xi)
\right|.
\end{equation}
Using the Leibniz rule, every derivative of \(c_R\) is a finite sum of terms of
the form
\begin{equation}
(1-\chi_R)
\partial_x^\alpha\partial_\xi^\beta b_a^{(\varrho)}
\end{equation}
and terms containing derivatives of \(\chi_R\). The first type is supported in
\(|(x,\xi)|\geq R\), and its supremum tends to zero because the corresponding
derivatives of \(b_a^{(\varrho)}\) vanish at infinity. For the second type,
\(\partial^\gamma\chi_R\) is supported in the annulus
\begin{equation}
R\leq |(x,\xi)|\leq 2R,
\end{equation}
and satisfies
\begin{equation}
\|\partial^\gamma\chi_R\|_\infty
\leq
C_\gamma
\end{equation}
uniformly in \(R\). Hence these terms also tend to zero in supremum norm,
because the corresponding derivatives of \(b_a^{(\varrho)}\) vanish uniformly
at infinity on the annuli \(R\leq |(x,\xi)|\leq 2R\). Therefore
\begin{equation}
\left\|
\operatorname{Op}^{W}
\left(
(1-\chi_R)b_a^{(\varrho)}
\right)
\right\|_{\mathcal B(L^2)}
\longrightarrow 0
\qquad
\text{as}
\qquad
R\to\infty.
\end{equation}
Thus \(L_a^{(\varrho)}\Lambda^{-1}\) is a norm limit of compact
operators. Since \(\mathcal K(L^2(\mathbb R^2))\) is norm closed in
\(\mathcal B(L^2(\mathbb R^2))\), we conclude that
\begin{equation*}
L_a^{(\varrho)}\Lambda^{-1} \in \mathcal K(L^2(\mathbb R^2)).
\end{equation*}
\end{proof}

\begin{proposition}
\label{prop:compact-lambda-prime}
For every \(a\in\mathcal S(\mathbb R^2)\), one has
\begin{equation}
\pi^M_{2,\varrho}(a)
\left(
1+(D^{\prime\,r,s})^2
\right)^{-1/2}
\in
\mathcal K(\mathcal H).
\end{equation}
\end{proposition}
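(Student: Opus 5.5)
The plan is to reduce the statement to Lemma~\ref{lem:scalar-local-compactness} using the diagonal decomposition from Lemma~\ref{lem:Dprime-square}. By that lemma, and since \(p_\pm\) are complementary orthogonal projections commuting with everything acting on the first tensor factor, the functional calculus gives
\begin{equation*}
\left(1+(D^{\prime\,r,s})^2\right)^{-1/2}
=
\sqrt2\,(H_++1)^{-1/2}\otimes p_+
+
\sqrt2\,(H_++3)^{-1/2}\otimes p_- .
\end{equation*}
To justify this, I will note that \(1+(D^{\prime\,r,s})^2\) and the right-hand operator of Lemma~\ref{lem:Dprime-square} are both self-adjoint and agree on the common core \(\mathcal S(\mathbb R^2)\otimes\mathbb C^2\). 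Hence they coincide, and the inverse square root can be computed blockwise on the ranges of \(1\otimes p_\pm\).

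Since \(\pi^M_{2,\varrho}(a)=L_a^{(\varrho)}\otimes\mathbb I_{2\times2}\) also commutes with \(1\otimes p_\pm\), the product splits as
\begin{equation*}
\pi^M_{2,\varrho}(a)\left(1+(D^{\prime\,r,s})^2\right)^{-1/2}
=
\sqrt2\,L_a^{(\varrho)}\Lambda^{-1}\otimes p_+
+
\sqrt2\,L_a^{(\varrho)}(H_++3)^{-1/2}\otimes p_- .
\end{equation*}
The first summand is compact: Lemma~\ref{lem:scalar-local-compactness} gives \(L_a^{(\varrho)}\Lambda^{-1}\in\mathcal K(L^2(\mathbb R^2))\), and tensoring with the finite-rank projection \(p_+\) preserves compactness.

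For the second summand I will write
\begin{equation*}
L_a^{(\varrho)}(H_++3)^{-1/2}
=
L_a^{(\varrho)}\Lambda^{-1}\cdot \Lambda (H_++3)^{-1/2}.
\end{equation*}
By the spectral theorem for \(H_+\ge0\), the factor \(\Lambda(H_++3)^{-1/2}=\bigl((1+H_+)/(3+H_+)\bigr)^{1/2}\) is a bounded function of \(H_+\), of norm at most \(1\). Lemma~\ref{lem:product-compact} then shows that the product is compact, and tensoring with \(p_-\) again preserves compactness. The sum of two compact operators is compact, which proves the claim.

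The only delicate point is the operator identity in the first step. It requires that \(1+(D^{\prime\,r,s})^2\), taken as the square of the self-adjoint closure, equals the closure of the block operator built from \(H_+\). Lemma~\ref{lem:Dprime-square} already asserts this extension from the Schwartz core. If needed, I would back it up by noting that \(\mathcal S(\mathbb R^2)\otimes\mathbb C^2\) is invariant under \(D^{\prime\,r,s}\) and is a core for \((D^{\prime\,r,s})^2\): in the Hermite picture \(D^{\prime\,r,s}\) is unitarily equivalent to \(D_{\mathrm{osc}}\otimes I\), so the Hermite-type vectors are eigenvectors and form a core for every function of it. Two self-adjoint operators that agree on a common core are equal.
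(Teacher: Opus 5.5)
Your proof is correct and follows the paper's argument essentially step for step. It uses the block decomposition of $\left(1+(D^{\prime\,r,s})^2\right)^{-1/2}$ from Lemma~\ref{lem:Dprime-square}, Lemma~\ref{lem:scalar-local-compactness} for the $p_+$ block, and the bounded factor $\bigl((H_++1)/(H_++3)\bigr)^{1/2}$ with the ideal property for the $p_-$ block. Your extra eigenbasis/core argument, showing that the core identity of Lemma~\ref{lem:Dprime-square} holds for the self-adjoint operators themselves, is sound and slightly more careful than the paper's one-line justification there.
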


\begin{proof}
By Lemma~\ref{lem:Dprime-square} and functional calculus,
\begin{equation}
\left(1+(D^{\prime\,r,s})^2\right)^{-1/2}
=
\sqrt2 (H_+ + 1)^{-1/2}\otimes p_+
+
\sqrt2 (H_+ + 3)^{-1/2}\otimes p_- .
\end{equation}
The first scalar factor is
\begin{equation}
(H_+ +1)^{-1/2}=\Lambda^{-1}.
\end{equation}
For the second scalar factor, functional calculus gives
\begin{equation}
(H_+ +3)^{-1/2}
=
(H_+ +1)^{-1/2}
\left(\frac{H_+ +1}{H_+ +3}\right)^{1/2},
\end{equation}
where
\begin{equation}
\left(\frac{H_+ +1}{H_+ +3}\right)^{1/2}
\end{equation}
is bounded on \(L^2(\mathbb R^2)\). By Lemma~\ref{lem:scalar-local-compactness},
\begin{equation}
L_a^{(\varrho)}(H_+ +1)^{-1/2}
=
L_a^{(\varrho)}\Lambda^{-1}
\in \mathcal K(L^2(\mathbb R^2)).
\end{equation}
Hence, since compact operators form a two-sided ideal,
\begin{equation}
L_a^{(\varrho)}(H_+ +3)^{-1/2}
\in \mathcal K(L^2(\mathbb R^2)).
\end{equation}
Now \(\pi^M_{2,\varrho}(a)=L_a^{(\varrho)}\otimes\mathbb I_{2\times2}\), so
\begin{equation}
\begin{aligned}
\pi^M_{2,\varrho}(a)
\left(1+(D^{\prime\,r,s})^2\right)^{-1/2}
&=
\sqrt2\,L_a^{(\varrho)}(H_+ +1)^{-1/2}\otimes p_+
\\
&\quad+
\sqrt2\,L_a^{(\varrho)}(H_+ +3)^{-1/2}\otimes p_- .
\end{aligned}
\end{equation}
Each summand is compact on
\(H=L^2(\mathbb R^2)\otimes\mathbb C^2\), because tensoring a compact
operator on \(L^2(\mathbb R^2)\) with a finite-dimensional matrix gives a compact
operator on \(H\). Therefore
\begin{equation}
\pi^M_{2,\varrho}(a)
\left(1+(D^{\prime\,r,s})^2\right)^{-1/2}
\in \mathcal K(H).
\end{equation}
\end{proof}

\begin{corollary}
\label{thm:Dprime-local-compactness}
For every admissible pair \((r,s)\), every ordering parameter
\(\varrho\in\mathbb R\), and every
\(a\in\mathcal A_{\vartheta_{\mathrm{eff}},\varrho}\), one has
\begin{equation}
\pi^M_{2,\varrho}(a)
(D^{\prime\,r,s}-i\mathbf 1_{\mathcal H})^{-1}
\in
\mathcal K(\mathcal H).
\end{equation}
\end{corollary}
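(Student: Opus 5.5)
The corollary follows from Proposition~\ref{prop:compact-lambda-prime} by factoring the resolvent through $(1+(D^{\prime\,r,s})^2)^{-1/2}$ and a bounded operator. Write $D:=D^{\prime\,r,s}$, which is self-adjoint on $\operatorname{Dom}(D^{\prime\,r,s})$. By the functional calculus for $D$, I would write
\begin{equation*}
(D-i\mathbf 1_{\mathcal H})^{-1}
=
\left(1+D^2\right)^{-1/2}\,\phi(D),
\qquad
\phi(t):=\frac{(1+t^2)^{1/2}}{t-i}.
\end{equation*}
Since $|t-i|=(1+t^2)^{1/2}$ for real $t$, one has $|\phi(t)|=1$. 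Hence $\phi(D)$ is a unitary, in particular bounded, operator on $\mathcal H$. Both factors are bounded Borel functions of the same self-adjoint operator, so they commute and the factorization holds as an identity of bounded operators. No domain issues arise.

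Given this factorization, I would conclude as follows. For $a\in\mathcal A_{\vartheta_{\mathrm{eff}},\varrho}$, whose underlying vector space is $\mathcal S(\mathbb R^2)$, Proposition~\ref{prop:compact-lambda-prime} gives
\[
\pi^M_{2,\varrho}(a)\left(1+D^2\right)^{-1/2}\in\mathcal K(\mathcal H).
\]
Then
\begin{equation*}
\pi^M_{2,\varrho}(a)(D-i\mathbf 1_{\mathcal H})^{-1}
=
\Bigl[\pi^M_{2,\varrho}(a)\left(1+D^2\right)^{-1/2}\Bigr]\phi(D)
\end{equation*}
is a compact operator multiplied on the right by a bounded operator. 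By Lemma~\ref{lem:product-compact} it is therefore compact.

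This argument has no real obstacle. The only point to check is that the operator $(1+(D^{\prime\,r,s})^2)^{-1/2}$ in Proposition~\ref{prop:compact-lambda-prime} is the same operator obtained from the functional calculus of $D^{\prime\,r,s}$. Lemma~\ref{lem:Dprime-square} establishes the identity for $(D^{\prime\,r,s})^2$ on the Schwartz core, and that core is a core for $D^{\prime\,r,s}$, which is what allows the identity to be extended to the self-adjoint operators. I would briefly note this, and add that the same factorization with $\phi_\lambda(t)=(1+t^2)^{1/2}/(t-\lambda)$, which is bounded for $\lambda\notin\operatorname{spec}(D)$, gives local compactness at every resolvent point.
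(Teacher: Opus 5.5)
Your proposal is correct and follows the paper's proof essentially verbatim. The paper likewise writes $(D^{\prime\,r,s}-i)^{-1}=(\Lambda')^{-1}\,\Lambda'(D^{\prime\,r,s}-i)^{-1}$, uses the unimodular function $(1+\lambda^2)^{1/2}/(\lambda-i)$ to show the second factor is bounded, and then combines Proposition~\ref{prop:compact-lambda-prime} with the ideal property of compact operators (Lemma~\ref{lem:product-compact}).
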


\begin{proof}
Set
\begin{equation}
\Lambda'
:=
\left(
1+(D^{\prime\,r,s})^2
\right)^{1/2}.
\end{equation}
By Proposition~\ref{prop:compact-lambda-prime},
\begin{equation}
\pi^M_{2,\varrho}(a)(\Lambda')^{-1}
\in
\mathcal K(\mathcal H).
\end{equation}
By functional calculus for the self-adjoint operator \(D^{\prime\,r,s}\),
\begin{equation}
\Lambda'(D^{\prime\,r,s}-i\mathbf 1_H)^{-1}\in B(H),
\end{equation}
since the corresponding scalar function satisfies
\begin{equation}
\left|
\frac{(1+\lambda^2)^{1/2}}{\lambda-i}
\right|=1,
\qquad \lambda\in\mathbb R.
\end{equation}
Moreover,
\begin{equation}
(\Lambda')^{-1}\Lambda'(D^{\prime\,r,s}-i\mathbf 1_H)^{-1}
=
(D^{\prime\,r,s}-i\mathbf 1_H)^{-1}
\end{equation}
as a functional-calculus identity.\\
Therefore
\begin{equation}
\pi^M_{2,\varrho}(a)
(D^{\prime\,r,s}-i\mathbf 1_{\mathcal H})^{-1}
=
\pi^M_{2,\varrho}(a)(\Lambda')^{-1}
\Lambda'
(D^{\prime\,r,s}-i\mathbf 1_{\mathcal H})^{-1}.
\end{equation}
The first factor on the right-hand side is compact, and the second factor is
bounded. Hence, by Lemma~\ref{lem:product-compact},
\begin{equation}
\pi^M_{2,\varrho}(a)(D^{\prime\,r,s}-i\mathbf 1_H)^{-1}
\in \mathcal K(H).
\end{equation}
\end{proof}

\begin{theorem}
\label{thm:moyal-base-triple}
For every admissible pair \((r,s)\) and every \(\varrho\in\mathbb R\), the triple
\begin{equation*}
\left(
\mathcal A_{\vartheta_{\mathrm{eff}},\varrho},
\mathcal H,
D^{\prime\,r,s}
\right)
\end{equation*}
is a locally compact nonunital spectral triple.
\end{theorem}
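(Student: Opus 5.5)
The theorem follows by assembling the results established in this subsection and checking them against the definition of a locally compact non-unital spectral triple recalled in Subsection~\ref{subsec:spectral_triple}. We need four things: a faithful representation of the involutive algebra \(\mathcal A_{\vartheta_{\mathrm{eff}},\varrho}\) by bounded operators, self-adjointness of \(D^{\prime\,r,s}\), bounded commutators, and local compactness at every resolvent point.

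\textbf{Algebra, representation and Dirac operator.} By the Remark following Proposition~\ref{prop:IV1}, \(\mathcal A_{\vartheta_{\mathrm{eff}},\varrho}\) is a Fréchet \(*\)-algebra, isomorphic via \(T_\varrho\) to the Weyl--Moyal algebra. We check that \(\pi^M_{2,\varrho}\) is a \(*\)-representation by transporting through \(T_\varrho\). The intertwining identity \(T_\varrho L^{(\varrho)}_f T_\varrho^{-1}=L^{(1/2)}_{T_\varrho f}\) reduces multiplicativity and the \(*\)-property to the standard case \(\varrho=\tfrac12\). Boundedness was shown just before Lemma~\ref{lem:Dprime-bounded-commutator}. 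Faithfulness is elementary: if \(L_a^{(\varrho)}=0\), then \(a\star_{\vartheta_{\mathrm{eff}},\varrho}\psi=0\) for all \(\psi\in\mathcal S(\mathbb R^2)\). Taking \(\psi\) to be a Moyal approximate unit (for instance Gaussians \(e^{-\epsilon|x|^2}\) transported by \(T_\varrho\)), and using continuity of the product, forces \(a=0\). The tensor factor \(\mathbb I_{2\times2}\) does not affect any of these properties. Self-adjointness of \(D^{\prime\,r,s}\) is immediate, since it is a unitary conjugate of the self-adjoint operator of Proposition~\ref{prop:3.4}.

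\textbf{Commutators and local compactness.} Lemma~\ref{lem:Dprime-bounded-commutator} gives domain invariance and the bounded extension \(B_a\) of \([D^{\prime\,r,s},\pi^M_{2,\varrho}(a)]\). Corollary~\ref{thm:Dprime-local-compactness} gives compactness of \(\pi^M_{2,\varrho}(a)(D^{\prime\,r,s}-i)^{-1}\). To extend this to an arbitrary \(\lambda\notin\operatorname{spec}(D^{\prime\,r,s})\), we use the resolvent identity
\[
(D^{\prime\,r,s}-\lambda)^{-1}=(D^{\prime\,r,s}-i)^{-1}\bigl(1+(\lambda-i)(D^{\prime\,r,s}-\lambda)^{-1}\bigr).
\]
Multiplying on the left by \(\pi^M_{2,\varrho}(a)\) exhibits \(\pi^M_{2,\varrho}(a)(D^{\prime\,r,s}-\lambda)^{-1}\) as a compact operator times a bounded one, so Lemma~\ref{lem:product-compact} applies.

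The only step needing genuine care is faithfulness of the left regular representation for general \(\varrho\), because it is not stated explicitly earlier. I would handle it by the transport argument above, reducing to the Weyl--Moyal case, where faithfulness is standard \cite{gayral2004moyal}. Everything else is a direct citation of the preceding lemmas.
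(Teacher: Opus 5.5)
Your proposal is correct and follows the paper's route: the paper's proof likewise assembles self-adjointness of \(D^{\prime\,r,s}\) (a unitary conjugate of the operator in Proposition~\ref{prop:3.4}), boundedness of \(\pi^M_{2,\varrho}\), Lemma~\ref{lem:Dprime-bounded-commutator} and Corollary~\ref{thm:Dprime-local-compactness}. Your two additions are sound and slightly more complete than the paper's proof: the resolvent identity carries compactness from \(\lambda=i\) to every resolvent point (the paper only remarks on this before Lemma~\ref{lem:Dprime-bounded-commutator}), and faithfulness, which the paper's definition requires but its proof does not address, follows from your approximate-unit argument after transport by \(T_\varrho\) (or more directly by taking \(\psi=\overline{T_\varrho a}\) in the Weyl--Moyal picture and using \(\int f\star_{\vartheta_{\mathrm{eff}},1/2}g=\int fg\)).
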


\begin{proof}
By Proposition~\ref{prop:IV1}, the operator \(D^{\prime\,r,s}\) is
self-adjoint on \(\operatorname{Dom}(D^{\prime\,r,s})\), and
\(\mathcal S(\mathbb R^2)\otimes\mathbb C^2\) is a core for it. Moreover,
\(\pi^M_{2,\varrho}\) is a bounded representation of
\(\mathcal A_{\vartheta_{\mathrm{eff}},\varrho}\) on \(\mathcal H\).

By Lemma~\ref{lem:Dprime-bounded-commutator},
\(\pi^M_{2,\varrho}(a)\operatorname{Dom}(D^{\prime\,r,s})
\subseteq \operatorname{Dom}(D^{\prime\,r,s})\), and
\([D^{\prime\,r,s},\pi^M_{2,\varrho}(a)]\) extends to a bounded operator on
\(\mathcal H\).

Finally, by Corollary~\ref{thm:Dprime-local-compactness},
\begin{equation}
\pi^M_{2,\varrho}(a)
(D^{\prime\,r,s}-i\mathbf 1_{\mathcal H})^{-1}
\in
\mathcal K(\mathcal H).
\end{equation}
Thus the representation is bounded, the commutators with \(D^{\prime\,r,s}\)
are bounded, and the local compactness condition holds for every
\(a\in\mathcal A_{\vartheta_{\mathrm{eff}},\varrho}\). Therefore
\[
(\mathcal A_{\vartheta_{\mathrm{eff}},\varrho},\mathcal H,D^{\prime\,r,s})
\]
is a locally compact nonunital spectral triple.
\end{proof}

Thus, for each fixed \(\varrho\), the Moyal-side construction gives a
two-parameter family, labelled by the admissible kinematical presentation
parameters \((r,s)\), of locally compact nonunital base spectral triples
\begin{equation}
\left\{
\left(
\mathcal A_{\vartheta_{\mathrm{eff}},\varrho},
\mathcal H,
D^{\prime\,r,s}
\right)
:
(r,s)\in
\left(
\mathbb R\setminus
\left\{
\frac{\hbar_0}{\vartheta_0B_0}
\right\}
\right)
\times\mathbb R
\right\}.
\end{equation}
Allowing \(\varrho\) to vary gives the corresponding three-parameter family
\begin{equation}
\left\{
\left(
\mathcal A_{\vartheta_{\mathrm{eff}},\varrho},
\mathcal H,
D^{\prime\,r,s}
\right)
:
(\varrho,r,s)\in
\mathbb R\times
\left(
\mathbb R\setminus
\left\{
\frac{\hbar_0}{\vartheta_0B_0}
\right\}
\right)
\times\mathbb R
\right\}.
\end{equation}
Here \((r,s)\) label the kinematical presentations inherited from the fixed
\(G_{\mathrm{NC}}\)-sector, while \(\varrho\) labels the Moyal-side
\(\star\)-product realization of the effective algebra.

We shall denote the indexing set of this unperturbed Moyal-plane family by
\begin{equation}\label{eq:Pbase}
\mathcal P_{\mathrm{base}}
:=
\mathbb R\times
\left(
\mathbb R\setminus
\left\{
\frac{\hbar_0}{\vartheta_0B_0}
\right\}
\right)
\times\mathbb R.
\end{equation}
Thus no additional restriction on the ordering parameter \(\varrho\) is needed for the base triples. The further restriction introduced below is imposed only when the affine external \(U(1)_{\star_{\vartheta_{\mathrm{eff}},\varrho}}\)-gauge potentials are used.

\subsection{Localized one-forms and bounded \texorpdfstring{$U(1)_{\starv}$}{U(1) *varrho}-perturbations}
We now pass from the Moyal-plane base spectral triple
\((\A_{\vartheta_{\mathrm{eff}},\varrho},\Hc,D^{\prime\,r,s})\) to localized gauge perturbations of \(D^{\prime\,r,s}\). Recall that, in the strict
represented Connes' calculus, one-forms are finite sums of the form
\begin{equation}
\sum_{k=1}^N
\pi^M_{2,\varrho}(a_k)
\bigl[D^{\prime\,r,s},\pi^M_{2,\varrho}(b_k)\bigr],
\qquad
a_k,b_k\in \A_{\vartheta_{\mathrm{eff}},\varrho}.
\end{equation}
(see \cite[p.~559]{Connes1994}). We denote the space of such represented one-forms by
\begin{equation} \label{eq:one-form-space}
\Omega^1_{D^{\prime\,r,s}}(\A_{\vartheta_{\mathrm{eff}},\varrho})
=
\left\{
\sum_{k=1}^N
\pi^M_{2,\varrho}(a_k)
\bigl[D^{\prime\,r,s},\pi^M_{2,\varrho}(b_k)\bigr]
:\;
N\in\mathbb N,\;
a_k,b_k\in \A_{\vartheta_{\mathrm{eff}},\varrho}
\right\}.
\end{equation}
In the present setting, however, the target gauge potentials for the external
\(U(1)_{\star_{\vartheta_{\mathrm{eff}},\varrho}}\) sector will not, in general,
belong to the original Moyal algebra
\begin{equation}
\A_{\vartheta_{\mathrm{eff}},\varrho}
=
\bigl(
\mathcal S(\R^2),
\starv,
{}^{*_\varrho}
\bigr).
\end{equation}
Consequently, they cannot be inserted directly in the represented one-form formula. Our aim in this subsection is to localize the target gauge potentials by smooth
cutoffs and use the localized coefficients to construct bounded operator-valued
perturbations of \(D^{\prime\,r,s}\). These perturbations should be viewed as
localized represented one-forms. Since we work with a general
\(\star\)-gauge parameter \(\varrho\), the cutoff coefficients will be chosen to
be self-adjoint with respect to the involution
\eqref{eq:varrho-involution}. This yields bounded self-adjoint perturbations of
the Moyal-side Dirac operator, compatible with the locally compact non-unital
spectral-triple framework.

For the gauge perturbation part, fix real parameters \(e\) and \(B_{\mathrm{ext}}\). Define
\begin{equation}\label{eq:gauge-discriminant}
\Delta^{\mathrm{gauge}}_{\varrho}
:=
\hbar_0^2
-4\varrho(\varrho-1)e\hbar_0\vartheta_{\mathrm{eff}}B_{\mathrm{ext}} .
\end{equation}
The affine potentials below are used only for triples
\begin{equation}\label{eq:Pgauge}
\mathcal P_{\mathrm{gauge}}
:=
\left\{
(\varrho,r,s)\in\mathcal P_{\mathrm{base}}
:\;
\Delta^{\mathrm{gauge}}_{\varrho}\ge 0,
\quad
\hbar_0+\sqrt{\Delta^{\mathrm{gauge}}_{\varrho}}\neq 0
\right\}.
\end{equation}
Throughout the remainder of this subsection and in the cutoff-removal subsection, whenever \(A_x\), \(A_y\), \(A^{(R)}_{x,\varrho}\), \(A^{(R)}_{y,\varrho}\), \(B_R^{(\varrho)}\), \(D_R^{\varrho,r,s}\), or \(D_\infty\) is used, we assume \((\varrho,r,s)\in\mathcal P_{\mathrm{gauge}}\). This condition is not part of the unperturbed Moyal-plane spectral triple; it only ensures that the square root in the gauge ansatz is real and that the denominator in the following formulas does not vanish. In particular, the resulting coefficients \(A_x\) and \(A_y\) are real-valued affine functions.
\medskip

The gauge sector of our interest is the \(U(1)_{\starv}\)-sector associated with the effective Moyal-plane:
\begin{align}
&A^{(\mathrm U1)_{\starv}}(\varrho)
\equiv
\bigl(A^{(\mathrm U1)_{\starv}}_x(\varrho),\,A^{(\mathrm U1)_{\starv}}_y(\varrho)\bigr) \nonumber \\
&=
\left(
\frac{-2(1-\varrho)\,\hbar_{0} B_{\mathrm{ext}}}{\,\hbar_{0}
+\sqrt{\hbar_{0}^2-4\varrho(\varrho-1)\,e\,\hbar_{0}\,\vartheta_{\mathrm{eff}}\,B_{\mathrm{ext}}}\,}\;y,
\ \frac{2\varrho\,\hbar_{0} B_{\mathrm{ext}}}{\,\hbar_{0}
+\sqrt{\hbar_{0}^2-4\varrho(\varrho-1)\,e\,\hbar_{0}\,\vartheta_{\mathrm{eff}}\,B_{\mathrm{ext}}}\,}\;x
\right) \label{eq:target-potentials}
\end{align}
Here \(e\) and \(B_{\mathrm{ext}}\) are respectively the coupling parameter and
the external magnetic field in the system. We write
\begin{equation}
A_x:=A^{(\mathrm U1)_{\starv}}_x(\varrho),
\qquad
A_y:=A^{(\mathrm U1)_{\starv}}_y(\varrho).
\end{equation}
The functions \(A_x\) and \(A_y\) are affine functions of the coordinate
variables. In particular, they are smooth, but they are not Schwartz functions
and therefore do not belong to \(\A_{\vartheta_{\mathrm{eff}},\varrho}\).

To obtain admissible coefficients for bounded perturbations of \(D^{\prime\,r,s}\), we first localize the target potentials in the Weyl--Moyal realization and then transport them back to the general \(\varrho\)-realization. Fix once and for all a function
\begin{equation}
u\in C_c^\infty(\R^2),
\qquad
0\le u\le 1,
\qquad
u\equiv 1 \text{ on } B_1(0),
\qquad
\supp(u)\subset B_2(0),
\end{equation}
where \(C_c^\infty(\R^2)\) is the set of all compactly supported smooth functions on
\(\R^2\). For \(R>0\), define
\begin{equation}\label{eq:scaled-cutoff}
u_R(z):=u(z/R).
\end{equation}
Then
\begin{equation}\label{eq:cutoff}
u_R=1 \text{ on } B_R^{(\varrho)}(0),
\qquad
\supp(u_R)\subset B_{2R}(0).
\end{equation}
The Weyl-side localized potentials are \(u_R A_x\) and \(u_R A_y\). We define the localized gauge potentials in the \(\varrho\)-realization by
\begin{equation}\label{eq:rho-localized-potentials}
A_{x,\varrho}^{(R)}
:=
T_\varrho^{-1}(u_R A_x),
\qquad
A_{y,\varrho}^{(R)}
:=
T_\varrho^{-1}(u_R A_y).
\end{equation}
For \(\varrho=\frac12\), this reduces to the usual cutoff
\begin{equation}
A_{x,1/2}^{(R)}=u_R A_x,
\qquad
A_{y,1/2}^{(R)}=u_R A_y.
\end{equation}
For general \(\varrho\), the functions \(A_{x,\varrho}^{(R)}\) and
\(A_{y,\varrho}^{(R)}\) need not be compactly supported in the original
\(\varrho\)-realization. However, they remain Schwartz functions. Moreover, the exact
localization statement holds after applying \(T_\varrho\):
\begin{equation}
T_\varrho A_{x,\varrho}^{(R)}=u_R A_x,
\qquad
T_\varrho A_{y,\varrho}^{(R)}=u_R A_y.
\end{equation}
Thus, after applying \(T_\varrho\), the localized coefficients agree with
\(A_x,A_y\) on \(B_R(0)\) and vanish outside \(B_{2R}(0)\).

\begin{lemma}\label{lem:rho-cutoff-schwartz}
For every \(R>0\), \(A_{x,\varrho}^{(R)},\,A_{y,\varrho}^{(R)}
\in \mathcal S(\R^2)=\A_{\vartheta_{\mathrm{eff}},\varrho}.\) Moreover,
\(\left(A_{x,\varrho}^{(R)}\right)^{*_\varrho}=A_{x,\varrho}^{(R)},\newline
\left(A_{y,\varrho}^{(R)}\right)^{*_\varrho}=A_{y,\varrho}^{(R)}.\)
\end{lemma}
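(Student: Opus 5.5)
Two things must be shown: the localized coefficients lie in \(\mathcal S(\R^2)\), and they are self-adjoint for the involution \({}^{*_\varrho}\) of \eqref{eq:varrho-involution}. Both reduce to properties of the Weyl-side functions \(u_RA_x\), \(u_RA_y\) and of the ordering map \(T_\varrho\) from \eqref{eq:ordering-change-operator}.

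\emph{Schwartz membership.} Since \((\varrho,r,s)\in\mathcal P_{\mathrm{gauge}}\), the potentials \(A_x,A_y\) in \eqref{eq:target-potentials} are real linear functions of \(y\) and \(x\) respectively. The cutoff \(u_R\) lies in \(C_c^\infty(\R^2)\) by \eqref{eq:scaled-cutoff}. Hence \(u_RA_x\) and \(u_RA_y\) are smooth and compactly supported, so they lie in \(C_c^\infty(\R^2)\subset\mathcal S(\R^2)\). By \eqref{eq:ordering-change-operator-fourier}, \(T_\varrho^{-1}\) is the Fourier multiplier \(\exp\bigl(i(\tfrac12-\varrho)\vartheta_{\mathrm{eff}}\xi\eta\bigr)\). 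This multiplier is smooth and unimodular, and each of its derivatives grows at most polynomially. It therefore maps \(\mathcal S(\R^2)\) continuously onto itself, which is the automorphism property already recorded in the Remark after Proposition~\ref{prop:IV1}. So \(A^{(R)}_{x,\varrho},A^{(R)}_{y,\varrho}\in\mathcal S(\R^2)\).

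\emph{Self-adjointness.} By \eqref{eq:varrho-involution}, for \(f=A^{(R)}_{x,\varrho}\) we get
\[
f^{*_\varrho}=T_\varrho^{-1}\bigl(\overline{T_\varrho f}\bigr)=T_\varrho^{-1}\bigl(\overline{u_RA_x}\bigr).
\]
Both \(u_R\) and \(A_x\) are real-valued; this is exactly where \(\Delta^{\mathrm{gauge}}_\varrho\ge0\) and \(\hbar_0+\sqrt{\Delta^{\mathrm{gauge}}_\varrho}\neq0\) are used, since they make the coefficient in \eqref{eq:target-potentials} real and well defined. Hence \(\overline{u_RA_x}=u_RA_x\), and so \(f^{*_\varrho}=T_\varrho^{-1}(u_RA_x)=f\). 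The same argument applies to \(A^{(R)}_{y,\varrho}\).

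\emph{Main obstacle.} The one delicate point is the convention for \({}^{*_\varrho}\). The naive involution \(f\mapsto\bar f\) is not what is meant. Complex conjugation does not commute with \(T_\varrho\): conjugation sends the multiplier \(e^{i c\,\partial_x\partial_y}\) to \(e^{-ic\,\partial_x\partial_y}\). So \(A^{(R)}_{x,\varrho}\) itself need not be real-valued when \(\varrho\neq\tfrac12\). The proof must therefore go through the transported involution \eqref{eq:varrho-involution}, which reduces self-adjointness to reality of the Weyl-side function. If pointwise reality were also wanted, it would fail in general, so I would state the lemma only for \({}^{*_\varrho}\), as written.
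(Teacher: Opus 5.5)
Your proposal is correct and follows the paper's proof essentially line for line: Schwartz membership comes from $u_RA_x,u_RA_y\in C_c^\infty(\R^2)$ together with $T_\varrho^{-1}$ being an automorphism of $\mathcal S(\R^2)$, and ${}^{*_\varrho}$-self-adjointness comes from $T_\varrho A^{(R)}_{x,\varrho}=u_RA_x$ being real-valued. Your extra remark is also accurate and is not made in the paper: since $\overline{T_\varrho f}=T_\varrho^{-1}\bar f$, the coefficient $A^{(R)}_{x,\varrho}$ itself is generally not pointwise real for $\varrho\neq\tfrac12$, which is why the transported involution, rather than plain conjugation, is the one that must be used.
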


\begin{proof}
First, \(u_R A_x\) and \(u_R A_y\) are smooth and compactly supported, because
\(u_R\in C_c^\infty(\R^2)\) and \(A_x,A_y\) are smooth affine functions. Hence
\(u_R A_x,\ u_R A_y\in C_c^\infty(\R^2)\subset \mathcal S(\R^2).\) Since \(T_\varrho^{-1}\) is a continuous automorphism of \(\mathcal S(\R^2)\), it follows that
\begin{equation}
A_{x,\varrho}^{(R)}=T_\varrho^{-1}(u_R A_x)\in \mathcal S(\R^2),
\qquad
A_{y,\varrho}^{(R)}=T_\varrho^{-1}(u_R A_y)\in \mathcal S(\R^2).
\end{equation}
It remains to prove self-adjointness with respect to the involution
\({}^{*_\varrho}\). Using the involution \eqref{eq:varrho-involution}, we obtain
\begin{equation}
\left(A_{x,\varrho}^{(R)}\right)^{*_\varrho}
=T_\varrho^{-1}\left(
\overline{T_\varrho A_{x,\varrho}^{(R)}}
\right) 
=T_\varrho^{-1}\left(
\overline{u_R A_x}
\right).
\end{equation}

Since \(u_R\) and \(A_x\) are real-valued, \(\overline{u_R A_x}=u_R A_x.\) Therefore
\begin{equation}
\left(A_{x,\varrho}^{(R)}\right)^{*_\varrho}
=
T_\varrho^{-1}(u_R A_x)
=
A_{x,\varrho}^{(R)}.
\end{equation}
The proof for \(A_{y,\varrho}^{(R)}\) is identical.
\end{proof}

The strict represented one-form space
\(\Omega^1_{D^{\prime\,r,s}}(\mathcal A_{\vartheta_{\mathrm{eff}},\varrho})\)
defined in \eqref{eq:one-form-space} is too small for the present purpose.
Indeed, the affine coordinate combinations needed below do not belong to
\(\mathcal A_{\vartheta_{\mathrm{eff}},\varrho}\). Therefore they cannot be used
as the second entries in the strict represented one-form formula. To accommodate
such coefficients, we introduce the following auxiliary class.

\begin{definition}\label{def:BD}
For a fixed \(\varrho\), define
\begin{equation}
\B_{D^{\prime\,r,s}}
:=
\left\{
\, b \;:\;
\begin{aligned}
&\pi^M_{2,\varrho}(b)\text{ is defined on }\mathcal{S}(\R^2)\otimes\C^2,\\
&[D^{\prime\,r,s},\pi^M_{2,\varrho}(b)]\big|_{\mathcal{S}(\R^2)\otimes\C^2}
\text{ extends to a bounded operator on }\Hc
\end{aligned}
\right\}.
\end{equation}
We then define the enlarged represented one-form space by
\begin{equation}\label{eq:enlarged-oneforms}
\widetilde{\Omega}^1_{D^{\prime\,r,s}}(\A_{\vartheta_{\mathrm{eff}},\varrho})
:=
\operatorname{span}
\Bigl\{
\pi^M_{2,\varrho}(a)\,[D^{\prime\,r,s},\pi^M_{2,\varrho}(b)]
:\;
a\in\A_{\vartheta_{\mathrm{eff}},\varrho},\ b\in\B_{D^{\prime\,r,s}}
\Bigr\}
\subset \mathcal B(\Hc).
\end{equation}
\end{definition}

\begin{remark}\label{rem:BDprime-enlargement}
The enlargement
\(\widetilde{\Omega}^1_{D^{\prime\,r,s}}
(\mathcal A_{\vartheta_{\mathrm{eff}},\varrho})\)
is introduced only to allow second entries outside
\(\mathcal A_{\vartheta_{\mathrm{eff}},\varrho}\) whose commutators with
\(D^{\prime\,r,s}\) nevertheless extend to bounded operators. In particular, it
will allow us to use the affine coordinate combinations needed to isolate the
two Clifford directions. For the purposes of the present subsection, we do not
require any algebra structure on \(\mathcal B_{D^{\prime\,r,s}}\), nor any
bimodule property of
\(\widetilde{\Omega}^1_{D^{\prime\,r,s}}
(\mathcal A_{\vartheta_{\mathrm{eff}},\varrho})\).
\end{remark}

We now justify that the required affine coordinate combinations belong to
\(\mathcal B_{D^{\prime\,r,s}}\).

\begin{lemma}\label{lem:coord-comm}
Set
\begin{equation}\label{eq:bx-by-def}
b_x:=\frac{2i}{\vartheta_{\mathrm{eff}}}(x-y),
\qquad
b_y:=2i\bigl(\varrho x+(1-\varrho)y\bigr).
\end{equation}
Then, on the core \(\mathcal S(\R^2)\otimes\C^2\), one has
\begin{equation}\label{eq:coord-comm-x}
[D^{\prime\,r,s},\pi^M_{2,\varrho}(b_x)]
=
\mathbf 1_{L^2(\R^2)}\otimes \sigma_1,
\end{equation}
and
\begin{equation}\label{eq:coord-comm-y}
[D^{\prime\,r,s},\pi^M_{2,\varrho}(b_y)]
=
\mathbf 1_{L^2(\R^2)}\otimes \sigma_2.
\end{equation}
In particular, \(b_x,b_y\in \mathcal B_{D^{\prime\,r,s}}.\)
\end{lemma}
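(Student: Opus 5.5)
The plan is to reduce both identities to exact commutators of first-order differential operators on the Schwartz core, using the Weyl-symbol description of left \(\star_{\vartheta_{\mathrm{eff}},\varrho}\)-multiplication from the proof of Lemma~\ref{lem:Dprime-bounded-commutator}. The affine functions \(b_x,b_y\) do not lie in \(\mathcal S(\R^2)\), so that lemma cannot be cited verbatim. Instead I will recompute \(L_b^{(\varrho)}\) directly from \eqref{eq:moyal-like-product}. For affine \(b\), the bidifferential exponential truncates after first order. This gives, on \(\mathcal S(\R^2)\),
\[
L_x^{(\varrho)}\psi = x\psi - i(\varrho-1)\vartheta_{\mathrm{eff}}\partial_y\psi,
\qquad
L_y^{(\varrho)}\psi = y\psi - i\varrho\vartheta_{\mathrm{eff}}\partial_x\psi .
\]
Hence \(L_b^{(\varrho)}\) preserves \(\mathcal S(\R^2)\), so \(\pi^M_{2,\varrho}(b)\) is defined on the core. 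This matches the symbol picture: \(T_\varrho\) involves \(\partial_x\partial_y\), so it fixes affine functions, giving \(b_\varrho=b\). The Weyl symbol of \(L_b^{(\varrho)}\) is then \(b(\nu_{\varrho,1},\nu_{\varrho,2})\), with \(\nu_\varrho\) as in \eqref{eq:nu-rho-corrected}. In particular, \(L_x^{(\varrho)}=\operatorname{Op}^W(\nu_{\varrho,1})\) and \(L_y^{(\varrho)}=\operatorname{Op}^W(\nu_{\varrho,2})\), and both are first-order polynomial operators in \((X_j,P_j)\).

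Next I use the normal form \eqref{eq:Dprime-diagonal-normal-form}. On the core,
\[
[D^{\prime\,r,s},\pi^M_{2,\varrho}(b)]=\tfrac1{\sqrt2}\bigl([X_+,L_b^{(\varrho)}]\otimes\sigma_1+[P_+,L_b^{(\varrho)}]\otimes\sigma_2\bigr).
\]
Because all symbols involved are linear, the commutator rules \eqref{eq:linear-weyl-commutators} are exact and yield scalars. Equivalently, one can use \([X_j,P_k]=i\delta_{jk}\) directly. Thus \([X_+,\operatorname{Op}^W(\ell)]=\tfrac{i}{\sqrt2}(\partial_{\xi_1}+\partial_{\xi_2})\ell\) and \([P_+,\operatorname{Op}^W(\ell)]=-\tfrac{i}{\sqrt2}(\partial_{x_1}+\partial_{x_2})\ell\), both as multiples of the identity.

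The remaining step is the coefficient check, which I expect to be the only point needing care (signs and the factors of \(\sqrt2\)). For \(b_x\), I compute
\[
\ell=\nu_{\varrho,1}-\nu_{\varrho,2}=x_1-x_2-(1-\varrho)\vartheta_{\mathrm{eff}}\xi_2-\varrho\vartheta_{\mathrm{eff}}\xi_1 .
\]
This has \((\partial_{x_1}+\partial_{x_2})\ell=0\) and \((\partial_{\xi_1}+\partial_{\xi_2})\ell=-\vartheta_{\mathrm{eff}}\). So \([X_+,L_{x-y}^{(\varrho)}]=-\tfrac{i\vartheta_{\mathrm{eff}}}{\sqrt2}\) and \([P_+,L_{x-y}^{(\varrho)}]=0\). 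Multiplying by \(2i/\vartheta_{\mathrm{eff}}\) and then by \(1/\sqrt2\) gives \(\mathbf 1\otimes\sigma_1\), which is \eqref{eq:coord-comm-x}.

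For \(b_y\), I compute
\[
\ell=\varrho\nu_{\varrho,1}+(1-\varrho)\nu_{\varrho,2}=\varrho x_1+(1-\varrho)x_2+\varrho(1-\varrho)\vartheta_{\mathrm{eff}}(\xi_1-\xi_2).
\]
Its \(\xi\)-sum vanishes, which is exactly why the weights \(\varrho,1-\varrho\) were chosen, and its \(x\)-sum equals \(1\). So \([P_+,L^{(\varrho)}]=-\tfrac{i}{\sqrt2}\) and \([X_+,L^{(\varrho)}]=0\). Multiplying by \(2i\) and \(1/\sqrt2\) gives \(\mathbf 1\otimes\sigma_2\), which is \eqref{eq:coord-comm-y}.

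Since \(\mathbf 1\otimes\sigma_{1,2}\) are bounded, the commutators extend boundedly to \(\Hc\). Together with the fact that \(\pi^M_{2,\varrho}(b_x)\) and \(\pi^M_{2,\varrho}(b_y)\) are defined on the core, this shows \(b_x,b_y\in\B_{D^{\prime\,r,s}}\) by Definition~\ref{def:BD}.
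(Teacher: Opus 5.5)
Your proof is correct and takes essentially the same route as the paper. You read off $L_x^{(\varrho)}$ and $L_y^{(\varrho)}$ as first-order operators from \eqref{eq:moyal-like-product}, note that they preserve $\mathcal S(\R^2)$, and evaluate the scalar commutators against the normal form \eqref{eq:Dprime-diagonal-normal-form}; the only difference is that the paper first records the general formula \eqref{eq:linear-b-comm} for $b=\alpha x+\beta y$ and then specializes, whereas you check the coefficients separately for $b_x$ and $b_y$ via Weyl symbols and reach the same results.
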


\begin{proof}
Although \(b_x\) and \(b_y\) do not belong to \(\mathcal S(\R^2)\), their
left \(\starv\)-multiplication operators are well-defined on the Schwartz core.
Indeed, using the explicit formula for \(\starv\) defined in
\eqref{eq:moyal-like-product}, one obtains, for every
\(\psi\in\mathcal S(\R^2)\),
\begin{equation}\label{eq:multiply-by-x-y}
L_x\psi
=
x\starv\psi
=
x\psi+i(1-\varrho)\vartheta_{\mathrm{eff}}\partial_y\psi,
\qquad
L_y\psi
=
y\starv\psi
=
y\psi-i\varrho\vartheta_{\mathrm{eff}}\partial_x\psi .
\end{equation}
Hence \(L_x\mathcal S(\R^2)\subseteq\mathcal S(\R^2)\) and
\(L_y\mathcal S(\R^2)\subseteq\mathcal S(\R^2)\). Therefore the left
multiplication operators by any complex linear combination of \(x\) and
\(y\) also preserve \(\mathcal S(\R^2)\). In particular,
\[
\pi^M_{2,\varrho}(b_x):=L_{b_x}\otimes \mathbb I_{2\times 2},
\qquad
\pi^M_{2,\varrho}(b_y):=L_{b_y}\otimes \mathbb I_{2\times 2}
\]
are well-defined on \(\mathcal S(\R^2)\otimes\C^2\).

We first record the commutator with a general linear function. Let
\begin{equation}
b=\alpha x+\beta y,
\qquad \alpha,\beta\in\mathbb C .
\end{equation}
By linearity, using the left multiplier formulas \eqref{eq:multiply-by-x-y}
and the definition of \(D^{\prime\,r,s}\), one obtains on
\(\mathcal S(\R^2)\otimes\C^2\)
\begin{equation}\label{eq:linear-b-comm}
[D^{\prime\,r,s},\pi^M_{2,\varrho}(b)]
=
-\frac{i}{2}
\left[
\vartheta_{\mathrm{eff}}\bigl((1-\varrho)\alpha-\varrho\beta\bigr)
(\mathbf 1_{L^2(\R^2)}\otimes\sigma_1)
+
(\alpha+\beta)
(\mathbf 1_{L^2(\R^2)}\otimes\sigma_2)
\right].
\end{equation}

For \(b_x=\frac{2i}{\vartheta_{\mathrm{eff}}}(x-y),\) we have \(\alpha=\dfrac{2i}{\vartheta_{\mathrm{eff}}},\ \beta=-\dfrac{2i}{\vartheta_{\mathrm{eff}}}.\)
Thus \(\alpha+\beta=0\), and
\begin{equation}
\vartheta_{\mathrm{eff}}\bigl((1-\varrho)\alpha-\varrho\beta\bigr)
=
\vartheta_{\mathrm{eff}}
\left(
(1-\varrho)\frac{2i}{\vartheta_{\mathrm{eff}}}
+\varrho\frac{2i}{\vartheta_{\mathrm{eff}}}
\right)
=
2i.
\end{equation}
Substituting these values into \eqref{eq:linear-b-comm} gives
\begin{equation}
[D^{\prime\,r,s},\pi^M_{2,\varrho}(b_x)]
=
-\frac{i}{2}(2i)
(\mathbf 1_{L^2(\R^2)}\otimes\sigma_1)
=
\mathbf 1_{L^2(\R^2)}\otimes\sigma_1,
\end{equation}
which proves \eqref{eq:coord-comm-x}.

Similarly, for \(b_y=2i\bigl(\varrho x+(1-\varrho)y\bigr),\) we have \(\alpha=2i\varrho,\ \beta=2i(1-\varrho).\) Hence \(\alpha+\beta=2i\), and
\begin{equation}
(1-\varrho)\alpha-\varrho\beta
=
(1-\varrho)2i\varrho
-\varrho\,2i(1-\varrho)
=
0.
\end{equation}
Substituting these values into \eqref{eq:linear-b-comm} gives
\begin{equation}
[D^{\prime\,r,s},\pi^M_{2,\varrho}(b_y)]
=
-\frac{i}{2}(2i)
(\mathbf 1_{L^2(\R^2)}\otimes\sigma_2)
=
\mathbf 1_{L^2(\R^2)}\otimes\sigma_2,
\end{equation}
which proves \eqref{eq:coord-comm-y}.

The right-hand sides of \eqref{eq:coord-comm-x} and
\eqref{eq:coord-comm-y} are bounded operators on \(\Hc\). Therefore the
commutators
\([D^{\prime\,r,s},\pi^M_{2,\varrho}(b_x)]\) and
\([D^{\prime\,r,s},\pi^M_{2,\varrho}(b_y)]\), initially defined on
\(\mathcal S(\R^2)\otimes\C^2\), extend to bounded operators on \(\Hc\).
Hence \(b_x,b_y\in \mathcal B_{D^{\prime\,r,s}}\).
\end{proof}

We may now define the localized represented one-form using the localized
coefficients \(A^{(R)}_{x,\varrho}\) and \(A^{(R)}_{y,\varrho}\) in the
\(\varrho\)-realization and the fixed affine functions \(b_x,b_y\) from
\eqref{eq:bx-by-def}.

\begin{definition}\label{def:A-R}
Let \(b_x,b_y\) be as in \eqref{eq:bx-by-def}. Define
\begin{equation}\label{eq:A-R-def}
\mathcal A^{(R)}
:=
\pi^M_{2,\varrho}(A_{x,\varrho}^{(R)})[D^{\prime\,r,s},\pi^M_{2,\varrho}(b_x)]
+
\pi^M_{2,\varrho}(A_{y,\varrho}^{(R)})[D^{\prime\,r,s},\pi^M_{2,\varrho}(b_y)].
\end{equation}
\end{definition}

\begin{proposition}\label{prop:A-R-bounded}
For every \(R>0\), the operator \(\mathcal A^{(R)}\) belongs to
\(\widetilde{\Omega}^1_{D^{\prime\,r,s}}(\A_{\vartheta_{\mathrm{eff}},\varrho})\)
and is bounded on \(\Hc\). More precisely,
\begin{equation}\label{eq:A-R-explicit}
\mathcal A^{(R)}
=
\pi^M_{2,\varrho}(A_{x,\varrho}^{(R)})(\mathbf 1_{L^2(\R^2)}\otimes \sigma_1)
+
\pi^M_{2,\varrho}(A_{y,\varrho}^{(R)})(\mathbf 1_{L^2(\R^2)}\otimes \sigma_2).
\end{equation}
Equivalently,
\begin{equation}
\mathcal A^{(R)}
=
L_{A_{x,\varrho}^{(R)}}\otimes \sigma_1
+
L_{A_{y,\varrho}^{(R)}}\otimes \sigma_2.
\end{equation}
\end{proposition}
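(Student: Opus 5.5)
The plan is to substitute the commutator identities of Lemma~\ref{lem:coord-comm} directly into Definition~\ref{def:A-R}, and then read off membership and boundedness from results already established.

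\textbf{Explicit form.} By Lemma~\ref{lem:coord-comm}, on the core \(\mathcal S(\R^2)\otimes\C^2\) one has \([D^{\prime\,r,s},\pi^M_{2,\varrho}(b_x)]=\mathbf 1\otimes\sigma_1\) and \([D^{\prime\,r,s},\pi^M_{2,\varrho}(b_y)]=\mathbf 1\otimes\sigma_2\). These bounded operators are, by definition, the unique bounded extensions of the commutators. Substituting them into \eqref{eq:A-R-def} gives \eqref{eq:A-R-explicit}. Since \(\pi^M_{2,\varrho}(f)=L_f^{(\varrho)}\otimes\mathbb I_{2\times2}\), we have \((L_f^{(\varrho)}\otimes\mathbb I)(\mathbf 1\otimes\sigma_j)=L_f^{(\varrho)}\otimes\sigma_j\). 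This yields the tensor form \(L_{A^{(R)}_{x,\varrho}}\otimes\sigma_1+L_{A^{(R)}_{y,\varrho}}\otimes\sigma_2\).

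\textbf{Membership in the enlarged one-form space.} Lemma~\ref{lem:rho-cutoff-schwartz} gives \(A^{(R)}_{x,\varrho},A^{(R)}_{y,\varrho}\in\mathcal S(\R^2)=\A_{\vartheta_{\mathrm{eff}},\varrho}\). Lemma~\ref{lem:coord-comm} gives \(b_x,b_y\in\B_{D^{\prime\,r,s}}\). Hence each summand of \eqref{eq:A-R-def} has the form \(\pi^M_{2,\varrho}(a)[D^{\prime\,r,s},\pi^M_{2,\varrho}(b)]\) with \(a\in\A_{\vartheta_{\mathrm{eff}},\varrho}\) and \(b\in\B_{D^{\prime\,r,s}}\). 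The sum therefore lies in the span \eqref{eq:enlarged-oneforms}.

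\textbf{Boundedness.} For Schwartz \(f\), the operator \(L_f^{(\varrho)}\) is bounded on \(L^2(\R^2)\); this was shown before Lemma~\ref{lem:Dprime-bounded-commutator} via the unitary intertwining \(T_\varrho L_f^{(\varrho)}T_\varrho^{-1}=L^{(1/2)}_{T_\varrho f}\) and \cite[Lem.~2.15]{gayral2004moyal}. The Pauli matrices are bounded on \(\C^2\). Tensor products and sums of bounded operators are bounded, so
\[
\|\mathcal A^{(R)}\|\le \|L_{A^{(R)}_{x,\varrho}}\|+\|L_{A^{(R)}_{y,\varrho}}\|,
\]
using \(\|\sigma_j\|=1\).

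The only point needing care is interpretation rather than computation. The commutators in \eqref{eq:A-R-def} are a priori defined only on the core, because \(b_x,b_y\) are affine and not Schwartz. One should therefore state explicitly that the products in \eqref{eq:A-R-def} are understood as compositions with the bounded extensions guaranteed by Lemma~\ref{lem:coord-comm}. With that convention the identity holds on all of \(\Hc\), which is also consistent with how \(\B_{D^{\prime\,r,s}}\) and \(\widetilde{\Omega}^1_{D^{\prime\,r,s}}\) were defined.
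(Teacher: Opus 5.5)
Your proposal is correct and follows essentially the same route as the paper's proof. Membership in \(\widetilde{\Omega}^1_{D^{\prime\,r,s}}\) comes from Lemmas~\ref{lem:rho-cutoff-schwartz} and~\ref{lem:coord-comm}, the explicit form from substituting the commutator identities, and boundedness from the boundedness of left \(\starv\)-multiplication by Schwartz functions. Your remark that the commutators in \eqref{eq:A-R-def} must be read as their bounded extensions makes explicit a convention the paper leaves implicit.
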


\begin{proof}
By Lemma~\ref{lem:rho-cutoff-schwartz}, one has
\(A_{x,\varrho}^{(R)},A_{y,\varrho}^{(R)}\in
\A_{\vartheta_{\mathrm{eff}},\varrho}.\)
By Lemma~\ref{lem:coord-comm}, one has
\(b_x,b_y\in \mathcal B_{D^{\prime\,r,s}}\).
Hence each summand in \eqref{eq:A-R-def} is of the form
\[
\pi^M_{2,\varrho}(a)[D^{\prime\,r,s},\pi^M_{2,\varrho}(b)]
\]
with \(a\in \A_{\vartheta_{\mathrm{eff}},\varrho}\) and
\(b\in \mathcal B_{D^{\prime\,r,s}}\). Therefore
\[
\mathcal A^{(R)}
\in
\widetilde{\Omega}^1_{D^{\prime\,r,s}}(\A_{\vartheta_{\mathrm{eff}},\varrho}).
\]
Using \eqref{eq:coord-comm-x} and \eqref{eq:coord-comm-y}, we obtain
\eqref{eq:A-R-explicit}. Since left \(\starv\)-multiplication by a Schwartz function is
bounded on \(L^2(\R^2)\), the operators
\(L_{A_{x,\varrho}^{(R)}}\) and \(L_{A_{y,\varrho}^{(R)}}\) are bounded on \(L^2(\R^2)\).
Hence \(\mathcal A^{(R)}\in \mathcal B(\Hc)\).
\end{proof}

We now define the cutoff perturbed Dirac operator by
\begin{equation}\label{eq:DR-def}
D^{\varrho,r,s}_{R}
:=
D^{\prime\,r,s} - e\mathcal A^{(R)}
=
D^{\prime\,r,s}
-
e\Bigl(
L_{A_{x,\varrho}^{(R)}}\otimes \sigma_1
+
L_{A_{y,\varrho}^{(R)}}\otimes \sigma_2
\Bigr),
\end{equation}
where \(e\) is the coupling factor. Setting
\begin{equation}\label{eq:BR-explicit}
B_R^{(\varrho)}
:=
-e\Bigl(
L_{A_{x,\varrho}^{(R)}}\otimes \sigma_1
+
L_{A_{y,\varrho}^{(R)}}\otimes \sigma_2
\Bigr),
\end{equation}
we have \(D^{\varrho,r,s}_{R}=D^{\prime\,r,s}+B_R^{(\varrho)}\).

\begin{lemma}\label{lem:adjoint}
For \(f\in \mathcal S(\R^2)\), one has \(\pi_{2,\varrho}(f)^{\dagger}=\pi_{2,\varrho}(f^{*_\varrho}),\) where \(^{\dagger}\) on the left denotes the Hilbert-space adjoint in the \(\varrho\)-realization and \(f^{*_\varrho}\) is the algebra
involution defined in \eqref{eq:varrho-involution}. In particular, if
\(f=f^{*_\varrho}\), then \(\pi_{2,\varrho}(f)\) and \(\pi^M_{2,\varrho}(f)\) are self-adjoint.
\end{lemma}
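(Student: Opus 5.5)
The plan is to reduce everything to the Weyl--Moyal realization $\varrho=\tfrac12$ via the ordering-change operator $T_\varrho$, and then invoke the standard adjoint property of left Moyal multiplication there.

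\textbf{Step 1: the case $\varrho=\tfrac12$.} For the Weyl--Moyal product with real $\vartheta_{\mathrm{eff}}$ and $f,\phi,\psi\in\mathcal S(\R^2)$, I would use the tracial identity
\[
\int \overline{\phi}\,(f\star_{\vartheta_{\mathrm{eff}},1/2}\psi)
=\int (\overline{f}\star_{\vartheta_{\mathrm{eff}},1/2}\phi)^{-}\,\psi .
\]
It follows from two facts. First, $\overline{g\star h}=\overline h\star\overline g$ for real $\vartheta_{\mathrm{eff}}$. Second, $\int g\star h=\int gh$, together with associativity, so that $\int \overline\phi\star f\star\psi=\int (\overline\phi\star f)\psi$. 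Since the bilinear form is $\int \overline\phi\star(f\star\psi)$, this gives
\[
\langle \phi,L_f\psi\rangle=\langle L_{\overline f}\phi,\psi\rangle .
\]
Hence $(L^{(1/2)}_f)^\dagger\supseteq L^{(1/2)}_{\overline f}$ on the Schwartz space. Both operators are bounded (cf.\ \cite[Lem.~2.15]{gayral2004moyal}), so density gives $(L_f^{(1/2)})^\dagger=L^{(1/2)}_{\overline f}$. This agrees with the involution ${}^{*1/2}$ being complex conjugation, as noted in the remark after Proposition~\ref{prop:IV1}.

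\textbf{Step 2: transport to general $\varrho$.} Use the intertwining identity \eqref{eq:ordering-intertwines-products}, namely $L_f^{(\varrho)}=T_\varrho^{-1}L^{(1/2)}_{T_\varrho f}T_\varrho$. Since $T_\varrho$ is unitary on $L^2(\R^2)$, taking adjoints gives
\[
(L_f^{(\varrho)})^\dagger=T_\varrho^{-1}\bigl(L^{(1/2)}_{T_\varrho f}\bigr)^\dagger T_\varrho
=T_\varrho^{-1}L^{(1/2)}_{\overline{T_\varrho f}}T_\varrho .
\]
Write $\overline{T_\varrho f}=T_\varrho(f^{*_\varrho})$ by \eqref{eq:varrho-involution}. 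Applying the intertwining identity once more yields $L^{(\varrho)}_{f^{*_\varrho}}$. This proves $\pi_{2,\varrho}(f)^\dagger=\pi_{2,\varrho}(f^{*_\varrho})$.

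\textbf{Step 3: consequences.} If $f=f^{*_\varrho}$, the bounded operator $\pi_{2,\varrho}(f)$ equals its own adjoint. Tensoring with $\mathbb I_{2\times2}$ preserves adjoints, so $\pi^M_{2,\varrho}(f)$ is self-adjoint as well.

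\textbf{Main obstacle.} The only delicate point is Step 1, specifically the trace property $\int g\star h=\int gh$ and the conjugation rule for the Weyl--Moyal product. I would verify these in Fourier variables, where the twisting phase is $e^{\frac{i\vartheta_{\mathrm{eff}}}{2}(\xi_1\eta_2-\xi_2\eta_1)}$. The phase vanishes when $\eta=-\xi$, which gives the trace property. Complex conjugation combined with swapping the factors reverses its sign, which gives the conjugation rule. Alternatively, one can cite these standard facts from \cite{gayral2004moyal}.
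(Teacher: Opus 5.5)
Your proposal is correct and follows the paper's proof: you conjugate by the unitary $T_\varrho$ through the intertwining identity \eqref{eq:ordering-intertwines-products}, use $(L^{(1/2)}_g)^\dagger=L^{(1/2)}_{\overline g}$, and recognize $T_\varrho^{-1}(\overline{T_\varrho f})=f^{*_\varrho}$. The only difference is that you justify the Weyl--Moyal case $\varrho=\tfrac12$ through the trace property and the rule $\overline{g\star h}=\overline h\star\overline g$, whereas the paper simply asserts that case as standard.
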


\begin{proof}
Since \(T_\varrho\) is unitary on \(L^2(\R^2)\), it follows from \eqref{eq:ordering-intertwines-products} that
\begin{equation}
(L_f^{(\varrho)})^{\dagger}
=
T_\varrho^{-1}
\left(L_{T_\varrho f}^{(1/2)}\right)^{\dagger}
T_\varrho.
\end{equation}
For the Weyl--Moyal product, the standard algebra involution is complex conjugation, and hence \(\left(L_{T_\varrho f}^{(1/2)}\right)^{\dagger}
=
L_{\overline{T_\varrho f}}^{(1/2)}.\) Therefore,
\((L_f^{(\varrho)})^{\dagger}
=
T_\varrho^{-1}
L_{\overline{T_\varrho f}}^{(1/2)}
T_\varrho.\) Using again the intertwining relation, this is exactly left \(\starv\)-multiplication by \(T_\varrho^{-1}(\overline{T_\varrho f})=f^{*_\varrho}.\) Thus
\((L_f^{(\varrho)})^{\dagger}
=
L_{f^{*_\varrho}}^{(\varrho)}.\) Equivalently,
\(\pi_{2,\varrho}(f)^{\dagger}=\pi_{2,\varrho}(f^{*_\varrho}).\)
If \(f=f^{*_\varrho}\), then \(\pi_{2,\varrho}(f)\) is self-adjoint. Since
\(\pi^M_{2,\varrho}(f)=\pi_{2,\varrho}(f)\otimes \mathbb \mathbb I_{2\times 2}\), it follows that
\(\pi^M_{2,\varrho}(f)\) is also self-adjoint.
\end{proof}

\begin{lemma}\label{lem:BR-sa}
For every \(R>0\), the operator \(B_R^{(\varrho)}\) is bounded and self-adjoint on \(\Hc\).
\end{lemma}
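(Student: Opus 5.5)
The plan is to work directly from the explicit form \eqref{eq:BR-explicit},
\[
B_R^{(\varrho)}=-e\Bigl(L^{(\varrho)}_{A_{x,\varrho}^{(R)}}\otimes\sigma_1+L^{(\varrho)}_{A_{y,\varrho}^{(R)}}\otimes\sigma_2\Bigr),
\]
and to treat boundedness and self-adjointness separately, summand by summand.

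\textbf{Boundedness.} By Lemma~\ref{lem:rho-cutoff-schwartz}, the coefficients $A_{x,\varrho}^{(R)}$ and $A_{y,\varrho}^{(R)}$ lie in $\mathcal S(\mathbb R^2)$. Left $\star_{\vartheta_{\mathrm{eff}},\varrho}$-multiplication by a Schwartz function is bounded on $L^2(\mathbb R^2)$: for $\varrho=\frac12$ this is \cite[Lem.~2.15]{gayral2004moyal}, and for general $\varrho$ it follows by transport through the unitary $T_\varrho$ via \eqref{eq:ordering-intertwines-products}. This was already noted before Lemma~\ref{lem:Dprime-bounded-commutator} and in Proposition~\ref{prop:A-R-bounded}. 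Tensoring with the fixed $2\times2$ matrices $\sigma_1,\sigma_2$ preserves boundedness. This gives the estimate
\[
\|B_R^{(\varrho)}\|\le |e|\Bigl(\|L^{(\varrho)}_{A_{x,\varrho}^{(R)}}\|+\|L^{(\varrho)}_{A_{y,\varrho}^{(R)}}\|\Bigr).
\]

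\textbf{Self-adjointness.} A bounded, everywhere-defined operator is self-adjoint exactly when it equals its Hilbert-space adjoint, so it suffices to compute $(B_R^{(\varrho)})^\dagger$. Lemma~\ref{lem:rho-cutoff-schwartz} gives $(A_{x,\varrho}^{(R)})^{*_\varrho}=A_{x,\varrho}^{(R)}$, and likewise for $A_{y,\varrho}^{(R)}$. Lemma~\ref{lem:adjoint} then shows that $L^{(\varrho)}_{A_{x,\varrho}^{(R)}}$ and $L^{(\varrho)}_{A_{y,\varrho}^{(R)}}$ are self-adjoint on $L^2(\mathbb R^2)$. The Pauli matrices are Hermitian, so
\[
(L\otimes\sigma_j)^\dagger=L^\dagger\otimes\sigma_j^\dagger=L\otimes\sigma_j
\]
for each summand. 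Since $e$ is real and the adjoint of a finite sum of bounded operators is the sum of the adjoints, $(B_R^{(\varrho)})^\dagger=B_R^{(\varrho)}$.

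\textbf{Main obstacle.} The only delicate point is that self-adjointness must hold with respect to the involution ${}^{*_\varrho}$ rather than plain complex conjugation when $\varrho\neq\frac12$. This is already handled by defining the cutoff on the Weyl side and transporting it back by $T_\varrho^{-1}$ in \eqref{eq:rho-localized-potentials}, together with the intertwining proof of Lemma~\ref{lem:adjoint}. The one requirement is that $A_x$ and $A_y$ be real-valued, and this holds because $(\varrho,r,s)\in\mathcal P_{\mathrm{gauge}}$ makes the square root in \eqref{eq:target-potentials} real with nonvanishing denominator.
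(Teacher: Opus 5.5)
Your proposal is correct and is essentially the paper's own proof: boundedness comes from Proposition~\ref{prop:A-R-bounded}, and self-adjointness follows from the ${}^{*_\varrho}$-self-adjointness of the cutoff coefficients (Lemma~\ref{lem:rho-cutoff-schwartz}) together with Lemma~\ref{lem:adjoint}, the Hermiticity of $\sigma_1,\sigma_2$, and $e\in\R$. Your explicit norm bound and the adjoint computation $(L\otimes\sigma_j)^\dagger=L^\dagger\otimes\sigma_j^\dagger$ are slightly more detailed versions of the same steps.
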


\begin{proof}
Boundedness follows from Proposition~\ref{prop:A-R-bounded}. By
Lemma~\ref{lem:rho-cutoff-schwartz}, \(\left(A_{x,\varrho}^{(R)}\right)^{*_\varrho}=A_{x,\varrho}^{(R)},\\ \left(A_{y,\varrho}^{(R)}\right)^{*_\varrho}=A_{y,\varrho}^{(R)}.\)
Hence Lemma~\ref{lem:adjoint} implies that \(L_{A_{x,\varrho}^{(R)}}^{\dagger}=L_{A_{x,\varrho}^{(R)}},
\
L_{A_{y,\varrho}^{(R)}}^{\dagger}=L_{A_{y,\varrho}^{(R)}}.\) The Pauli matrices \(\sigma_1\) and \(\sigma_2\) are Hermitian, and operators acting on
different tensor factors commute. Therefore
\(L_{A_{x,\varrho}^{(R)}}\otimes\sigma_1,\ L_{A_{y,\varrho}^{(R)}}\otimes\sigma_2\) are bounded self-adjoint operators on \(\Hc\). Since \(e\in\R\), their real linear
combination \(B_R^{(\varrho)}\) is bounded and self-adjoint on \(\Hc\).
\end{proof}

\begin{theorem}\label{thm:selfadj}
\(D^{\varrho,r,s}_{R} = D^{\prime\,r,s}+B_R^{(\varrho)}\) is self-adjoint on \(\Dom(D^{\prime\,r,s})\).
\end{theorem}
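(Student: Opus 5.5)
The plan is to apply the Kato--Rellich theorem in its simplest form, namely the bounded-perturbation case. Two earlier results supply the inputs. First, $D^{\prime\,r,s}$ is self-adjoint on $\Dom(D^{\prime\,r,s})$, because it is the unitary conjugate $\mathcal U_{r,s}D^{r,s}_{\hbar_0,\vartheta_0,B_0}\mathcal U_{r,s}^\dagger$ of the self-adjoint operator from Proposition~\ref{prop:3.4}. Second, $B_R^{(\varrho)}$ is bounded and self-adjoint on $\Hc$ by Lemma~\ref{lem:BR-sa}. A bounded operator is $D^{\prime\,r,s}$-bounded with relative bound $0$, since $\|B_R^{(\varrho)}\psi\|\le \|B_R^{(\varrho)}\|\,\|\psi\|$. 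Kato--Rellich therefore gives self-adjointness of $D^{\prime\,r,s}+B_R^{(\varrho)}$ on $\Dom(D^{\prime\,r,s})$. The domain is unchanged because $B_R^{(\varrho)}$ is everywhere defined.

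To keep the argument self-contained, I would also verify the conclusion directly, in two steps.

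\textbf{Symmetry.} For $\phi,\psi\in\Dom(D^{\prime\,r,s})$ we have
\[
\langle (D^{\prime\,r,s}+B_R^{(\varrho)})\phi,\psi\rangle=\langle\phi,(D^{\prime\,r,s}+B_R^{(\varrho)})\psi\rangle,
\]
using self-adjointness of each summand.

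\textbf{Adjoint domain.} The adjoint satisfies $(D^{\prime\,r,s}+B_R^{(\varrho)})^*=(D^{\prime\,r,s})^*+(B_R^{(\varrho)})^*$, which holds because one summand is bounded and everywhere defined. Hence $\Dom\bigl((D^{\prime\,r,s}+B_R^{(\varrho)})^*\bigr)=\Dom\bigl((D^{\prime\,r,s})^*\bigr)=\Dom(D^{\prime\,r,s})$, and the adjoint acts as $D^{\prime\,r,s}+B_R^{(\varrho)}$ on it.

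An alternative route is the basic criterion: show $\operatorname{Ran}(D_R^{\varrho,r,s}\pm i\mu)=\Hc$ for large real $\mu$. This comes from the factorization
\[
D_R^{\varrho,r,s}\pm i\mu=\bigl(1+B_R^{(\varrho)}(D^{\prime\,r,s}\pm i\mu)^{-1}\bigr)(D^{\prime\,r,s}\pm i\mu).
\]
Here $\|B_R^{(\varrho)}(D^{\prime\,r,s}\pm i\mu)^{-1}\|\le\|B_R^{(\varrho)}\|/\mu<1$ once $\mu>\|B_R^{(\varrho)}\|$, so the first factor is invertible by a Neumann series.

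I expect no serious obstacle. The only point needing care is a bookkeeping one: Lemma~\ref{lem:BR-sa} asserts self-adjointness of $B_R^{(\varrho)}$ on the same Hilbert space $\Hc$ on which $D^{\prime\,r,s}$ is self-adjoint. This holds because the localized coefficients $A^{(R)}_{x,\varrho}$ and $A^{(R)}_{y,\varrho}$ are $*_\varrho$-self-adjoint by Lemma~\ref{lem:rho-cutoff-schwartz}, and their left multiplications are Hilbert-adjoint-self-adjoint by Lemma~\ref{lem:adjoint}. I would also remark that $\mathcal S(\R^2)\otimes\C^2$ remains a core for $D_R^{\varrho,r,s}$, since the graph norms of $D^{\prime\,r,s}$ and $D_R^{\varrho,r,s}$ are equivalent; this will be used later for the cutoff-removal argument.
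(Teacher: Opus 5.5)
Your proposal is correct and matches the paper's proof, which likewise combines the self-adjointness of $D^{\prime\,r,s}$ with Lemma~\ref{lem:BR-sa} (bounded self-adjoint $B_R^{(\varrho)}$) and invokes the bounded self-adjoint perturbation theorem. Your direct adjoint computation and the Neumann-series range argument are simply explicit proofs of that standard theorem.
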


\begin{proof}
By Lemma~\ref{lem:BR-sa}, \(B_R^{(\varrho)}\) is bounded and self-adjoint. Now since \(D^{\prime\,r,s}\) is self-adjoint, the bounded self-adjoint perturbation theorem implies that \(D^{\prime\,r,s}+B_R^{(\varrho)}\) is self-adjoint on \(\Dom(D^{\prime\,r,s})\).
\end{proof}

\begin{lemma}\label{lem:comm-left}
For \(f,g\in\mathcal S(\R^2)\), one has
\begin{equation}
[\pi_{2,\varrho}(f),\pi_{2,\varrho}(g)]
=
\pi_{2,\varrho}(f\star_{\vartheta_{\mathrm{eff}},\varrho}g
-
g\star_{\vartheta_{\mathrm{eff}},\varrho}f)
=
\pi_{2,\varrho}([f,g]_{\star_{\vartheta_{\mathrm{eff}},\varrho}}).
\end{equation}
Hence \([\pi_{2,\varrho}(f),\pi_{2,\varrho}(g)]\in\mathcal B(L^2(\R^2))\). Equivalently,
\begin{equation}
[\pi^M_{2,\varrho}(f),\pi^M_{2,\varrho}(g)]
=
\pi_{2,\varrho}([f,g]_{\star_{\vartheta_{\mathrm{eff}},\varrho}})\otimes \mathbb \mathbb I_{2\times 2}
\in\mathcal B(\mathcal H).
\end{equation}
\end{lemma}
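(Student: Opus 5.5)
The plan is to show that \(\pi_{2,\varrho}\) is multiplicative on \(\mathcal S(\R^2)\) and then expand the operator commutator by bilinearity. The only nontrivial input is associativity of \(\star_{\vartheta_{\mathrm{eff}},\varrho}\) on Schwartz functions, together with the fact that \(\mathcal S(\R^2)\) is closed under this product.

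\textbf{Step 1 (reduce to the Weyl--Moyal case).} By \eqref{eq:star-product-new-def}, \(T_\varrho\) transports \(\star_{\vartheta_{\mathrm{eff}},\varrho}\) to \(\star_{\vartheta_{\mathrm{eff}},1/2}\). Since \(T_\varrho\) is a Fréchet automorphism of \(\mathcal S(\R^2)\), associativity of the \(\varrho\)-product and stability of \(\mathcal S(\R^2)\) under it both follow from the corresponding standard facts for the Moyal product \cite{gayral2004moyal}.

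\textbf{Step 2 (multiplicativity on the core).} For \(\psi\in\mathcal S(\R^2)\), associativity gives
\[
\pi_{2,\varrho}(f)\pi_{2,\varrho}(g)\psi
= f\star_{\vartheta_{\mathrm{eff}},\varrho}(g\star_{\vartheta_{\mathrm{eff}},\varrho}\psi)
=(f\star_{\vartheta_{\mathrm{eff}},\varrho}g)\star_{\vartheta_{\mathrm{eff}},\varrho}\psi
=\pi_{2,\varrho}(f\star_{\vartheta_{\mathrm{eff}},\varrho}g)\psi .
\]
Alternatively, one can use \eqref{eq:ordering-intertwines-products} to conjugate everything to \(\varrho=\tfrac12\) and invoke multiplicativity of left Moyal multiplication there.

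\textbf{Step 3 (extend to all of \(L^2\)).} Both sides of the identity in Step 2 are bounded operators: the left side is a product of the bounded operators \(L_f^{(\varrho)}\) and \(L_g^{(\varrho)}\), and the right side is left multiplication by the Schwartz function \(f\star_{\vartheta_{\mathrm{eff}},\varrho}g\), which is bounded as shown earlier. Since they agree on the dense subspace \(\mathcal S(\R^2)\), they agree on \(L^2(\R^2)\).

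\textbf{Step 4 (commutator and spinor version).} Subtracting the same identity with \(f\) and \(g\) interchanged and using linearity of \(\pi_{2,\varrho}\) yields the formula with \([f,g]_{\star_{\vartheta_{\mathrm{eff}},\varrho}}\). This operator is bounded because \([f,g]_{\star_{\vartheta_{\mathrm{eff}},\varrho}}\in\mathcal S(\R^2)\). For the spinor statement, note that \((A\otimes\mathbb I_2)(B\otimes\mathbb I_2)=AB\otimes\mathbb I_2\), so
\[
[\pi^M_{2,\varrho}(f),\pi^M_{2,\varrho}(g)]=[\pi_{2,\varrho}(f),\pi_{2,\varrho}(g)]\otimes\mathbb I_{2\times2}.
\]

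The main obstacle is the rigorous justification in Step 1, namely that \(\star_{1/2}\) is associative and maps \(\mathcal S\times\mathcal S\to\mathcal S\). For this I would cite the standard Moyal-plane results in \cite{gayral2004moyal} rather than reprove them from the oscillatory integral.
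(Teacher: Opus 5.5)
Your proposal is correct and is essentially the paper's own proof. Associativity of $\star_{\vartheta_{\mathrm{eff}},\varrho}$ plus Schwartz-stability give $L_fL_g\psi=L_{f\star g}\psi$ on the core; both sides are bounded and agree on the dense subspace $\mathcal S(\R^2)$, and the spinor version follows by tensoring with $\mathbb I_{2\times2}$. Your Step~1 supplies the justification of associativity and closure that the paper only assumes. One caveat: with $\star_{\vartheta_{\mathrm{eff}},\varrho}$ as written in \eqref{eq:moyal-like-product}, the intertwiner with the Moyal product is $\exp\bigl(i(\varrho-\tfrac12)\vartheta_{\mathrm{eff}}\partial_x\partial_y\bigr)$, not the paper's $T_\varrho$, which actually yields $\star_{\vartheta_{\mathrm{eff}},1-\varrho}$ in \eqref{eq:star-product-new-def} (e.g.\ $T_\varrho^{-1}(x\star_{\vartheta_{\mathrm{eff}},1/2}y)=xy+i\varrho\vartheta_{\mathrm{eff}}$ versus $x\star_{\vartheta_{\mathrm{eff}},\varrho}y=xy+i(1-\varrho)\vartheta_{\mathrm{eff}}$). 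This does not affect your argument, since any constant-coefficient exponential product of this form is associative and preserves $\mathcal S(\R^2)$.
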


\begin{proof}
Let \(\psi\in\mathcal S(\R^2)\). By associativity of the product
\(\star_{\vartheta_{\mathrm{eff}},\varrho}\), we have
\begin{equation}
L_fL_g\psi
=
f\star_{\vartheta_{\mathrm{eff}},\varrho}
(g\star_{\vartheta_{\mathrm{eff}},\varrho}\psi)
=
(f\star_{\vartheta_{\mathrm{eff}},\varrho}g)
\star_{\vartheta_{\mathrm{eff}},\varrho}\psi
=
L_{f\star_{\vartheta_{\mathrm{eff}},\varrho}g}\psi.
\end{equation}
Similarly, \(L_gL_f\psi
=
L_{g\star_{\vartheta_{\mathrm{eff}},\varrho}f}\psi.\) Therefore on \(\mathcal S(\R^2)\),
\begin{equation}
[L_f,L_g]\psi
=
L_{f\star_{\vartheta_{\mathrm{eff}},\varrho}g
-
g\star_{\vartheta_{\mathrm{eff}},\varrho}f}\psi.
\end{equation}
Since \(f,g\in\mathcal S(\R^2)\) and the Schwartz space is closed under
\(\star_{\vartheta_{\mathrm{eff}},\varrho}\), the function
\(f\star_{\vartheta_{\mathrm{eff}},\varrho}g
-
g\star_{\vartheta_{\mathrm{eff}},\varrho}f\) belongs to \(\mathcal S(\R^2)\). Hence the corresponding left multiplication operator is
bounded on \(L^2(\R^2)\). Since both sides are bounded operators and agree on the dense
subspace \(\mathcal S(\R^2)\), they agree on all of \(L^2(\R^2)\). This proves
\([\pi_{2,\varrho}(f),\pi_{2,\varrho}(g)]
=
\pi_{2,\varrho}([f,g]_{\star_{\vartheta_{\mathrm{eff}},\varrho}}).\) Tensoring with \(\mathbb I_{2\times 2}\) gives the corresponding identity for
\(\pi^M_{2,\varrho}(f)=\pi_{2,\varrho}(f)\otimes \mathbb I_{2\times 2}\) on
\(\mathcal H=L^2(\R^2)\otimes\C^2\).
\end{proof}
\begin{theorem}\label{thm:bounded-comm-DR}
For every \(a\in \A_{\vartheta_{\mathrm{eff}},\varrho}\), the commutator \([D^{\varrho,r,s}_{R},\pi^M_{2,\varrho}(a)]\), initially defined on \(\mathcal S(\R^2)\otimes\C^2\), extends to a bounded operator on \(\Hc\).
\end{theorem}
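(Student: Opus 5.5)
The plan is to split the commutator along the decomposition \(D^{\varrho,r,s}_{R}=D^{\prime\,r,s}+B_R^{(\varrho)}\) from \eqref{eq:DR-def}. On the core \(\mathcal S(\R^2)\otimes\C^2\) we write
\begin{equation*}
[D^{\varrho,r,s}_{R},\pi^M_{2,\varrho}(a)]
=
[D^{\prime\,r,s},\pi^M_{2,\varrho}(a)]
+
[B_R^{(\varrho)},\pi^M_{2,\varrho}(a)].
\end{equation*}
This identity is legitimate on the core for two reasons. First, \(\pi^M_{2,\varrho}(a)\) preserves \(\mathcal S(\R^2)\otimes\C^2\), since left \(\starv\)-multiplication by a Schwartz function maps \(\mathcal S(\R^2)\) into itself. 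Second, \(B_R^{(\varrho)}\) is a bounded operator defined everywhere, so no domain issues arise for the second term.

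For the first summand, Lemma~\ref{lem:Dprime-bounded-commutator} applies directly. The commutator \([D^{\prime\,r,s},\pi^M_{2,\varrho}(a)]\) extends to the bounded operator \(B_a\) given explicitly in \eqref{eq:Dprime-commutator-final}.

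For the second summand, we substitute the explicit form \eqref{eq:BR-explicit} and use that \(\pi^M_{2,\varrho}(a)=L_a^{(\varrho)}\otimes\mathbb I_{2\times2}\) commutes with \(\mathbf 1\otimes\sigma_j\). This gives
\begin{equation*}
[B_R^{(\varrho)},\pi^M_{2,\varrho}(a)]
=
-e\Bigl([L_{A_{x,\varrho}^{(R)}},L_a^{(\varrho)}]\otimes\sigma_1
+[L_{A_{y,\varrho}^{(R)}},L_a^{(\varrho)}]\otimes\sigma_2\Bigr).
\end{equation*}
By Lemma~\ref{lem:rho-cutoff-schwartz}, the coefficients \(A_{x,\varrho}^{(R)}\) and \(A_{y,\varrho}^{(R)}\) lie in \(\mathcal S(\R^2)\). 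Lemma~\ref{lem:comm-left} then identifies each commutator with \(\pi_{2,\varrho}([A_{j,\varrho}^{(R)},a]_{\starv})\), which is bounded. Even without that lemma, each term is a difference of products of bounded operators and is therefore bounded. Tensoring with the Pauli matrices preserves boundedness.

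Summing the two pieces, the commutator on the core equals \(B_a+[B_R^{(\varrho)},\pi^M_{2,\varrho}(a)]\). This is a bounded operator defined on all of \(\Hc\), so its restriction to the dense core has this unique bounded extension.

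As an optional addendum, we can also show that \(\pi^M_{2,\varrho}(a)\) preserves \(\Dom(D^{\varrho,r,s}_R)=\Dom(D^{\prime\,r,s})\) (Theorem~\ref{thm:selfadj}). This follows immediately from Lemma~\ref{lem:Dprime-bounded-commutator}, because the domain is unchanged by the bounded perturbation.

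The main obstacle is only bookkeeping of domains. Specifically, one has to make sure that the formula for \([D^{\prime\,r,s},\pi^M_{2,\varrho}(a)]\) is being used on the same core on which the full commutator is defined, and this is guaranteed by core invariance under \(L_a^{(\varrho)}\). No new analytic estimate is needed beyond the earlier lemmas.
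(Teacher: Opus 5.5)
Your proposal is correct and follows the paper's proof essentially verbatim. You split $[D^{\varrho,r,s}_{R},\pi^M_{2,\varrho}(a)]$ on the Schwartz core into a $D^{\prime\,r,s}$-part, handled by Lemma~\ref{lem:Dprime-bounded-commutator}, and a $B_R^{(\varrho)}$-part, computed from \eqref{eq:BR-explicit} and Lemma~\ref{lem:comm-left}. Your side remark is also right: the second part is bounded simply because it is a commutator of two bounded operators, so Lemma~\ref{lem:comm-left} is not actually needed there.
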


\begin{proof}
On the core \(\mathcal S(\R^2)\otimes\C^2\), we have
\begin{equation}
[D^{\varrho,r,s}_{R},\pi^M_{2,\varrho}(a)]
=
[D^{\prime\,r,s},\pi^M_{2,\varrho}(a)]
+
[B_R^{(\varrho)},\pi^M_{2,\varrho}(a)].
\end{equation}
By Lemma~\ref{lem:Dprime-bounded-commutator}, \([D^{\prime\,r,s},\pi^M_{2,\varrho}(a)]\) extends to a bounded operator on \(\Hc\). It therefore suffices to show that \([B_R^{(\varrho)},\pi^M_{2,\varrho}(a)]\) is bounded.

Using \eqref{eq:BR-explicit} and the fact that \(\mathbf 1_{L^2(\R^2)}\otimes \sigma_1\) and \(\mathbf 1_{L^2(\R^2)}\otimes \sigma_2\) commute with \(\pi^M_{2,\varrho}(a)=L_a\otimes \mathbb{I}_{2\times2}\), we obtain
\begin{align}
&[B_R^{(\varrho)},\pi^M_{2,\varrho}(a)]\\
&=
-e\left(
[\pi^M_{2,\varrho}(A^{(R)}_{x,\varrho}),\pi^M_{2,\varrho}(a)]
(\mathbf 1_{L^2(\R^2)}\otimes\sigma_1)
+
[\pi^M_{2,\varrho}(A^{(R)}_{y,\varrho}),\pi^M_{2,\varrho}(a)]
(\mathbf 1_{L^2(\R^2)}\otimes\sigma_2)
\right).\\
&=
-e\left(
[L_{A^{(R)}_{x,\varrho}},L_a]\otimes\sigma_1
+
[L_{A^{(R)}_{y,\varrho}},L_a]\otimes\sigma_2
\right).
\end{align}
Since \(A^{(R)}_{x,\varrho},A^{(R)}_{y,\varrho},a\in\mathcal S(\R^2)\), Lemma~\ref{lem:comm-left} implies that \([L_{A^{(R)}_{x,\varrho}},L_a],\ [L_{A^{(R)}_{y,\varrho}},L_a]\) are bounded operators on \(L^2(\R^2)\). Hence
\begin{equation}
[B_R^{(\varrho)},\pi^M_{2,\varrho}(a)]
=
-e\left(
[L_{A^{(R)}_{x,\varrho}},L_a]\otimes\sigma_1
+
[L_{A^{(R)}_{y,\varrho}},L_a]\otimes\sigma_2
\right)
\end{equation}
is bounded on \(\Hc\). Therefore
\begin{equation}
[D^{\varrho,r,s}_{R},\pi^M_{2,\varrho}(a)]
=
[D^{\prime\,r,s},\pi^M_{2,\varrho}(a)]
+
[B_R^{(\varrho)},\pi^M_{2,\varrho}(a)]
\end{equation}
extends to a bounded operator on \(\Hc\).
\end{proof}

We now verify that the bounded perturbation \(D^{\varrho,r,s}_{R}=D^{\prime\,r,s}+B_R^{(\varrho)}\) preserves the local compactness property required in the locally compact spectral triple framework. 

\begin{lemma}\label{lem:factor}
Define
\begin{equation}\label{eq:Sdef}
S:=\mathbf 1_{\Hc}+B_R^{(\varrho)}(D^{\prime\,r,s}-i\mathbf 1_{\Hc})^{-1}\in \mathcal B(\Hc).
\end{equation}
Then on \(\Dom(D^{\prime\,r,s})\),
\begin{equation}\label{eq:factor}
D^{\varrho,r,s}_{R}-i\mathbf 1_{\Hc}=S(D^{\prime\,r,s}-i\mathbf 1_{\Hc}).
\end{equation}
\end{lemma}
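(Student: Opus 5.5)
The identity is a direct algebraic factorization of the resolvent, so I will verify it by expanding the right-hand side on vectors of \(\Dom(D^{\prime\,r,s})\). First I check that \(S\) is well defined and bounded. Since \(D^{\prime\,r,s}\) is self-adjoint (it is unitarily equivalent to \(D^{r,s}_{\hbar_0,\vartheta_0,B_0}\), which is self-adjoint by Proposition~\ref{prop:3.4}), the point \(i\) lies in its resolvent set. Hence \((D^{\prime\,r,s}-i\mathbf 1_{\Hc})^{-1}\) is a bounded operator from \(\Hc\) onto \(\Dom(D^{\prime\,r,s})\), with norm at most \(1\). Lemma~\ref{lem:BR-sa} gives \(B_R^{(\varrho)}\in\mathcal B(\Hc)\). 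It follows that \(S\in\mathcal B(\Hc)\), and that
\[
\|S\|\le 1+\|B_R^{(\varrho)}\|.
\]

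Next, fix \(\psi\in\Dom(D^{\prime\,r,s})\) and put \(\phi:=(D^{\prime\,r,s}-i\mathbf 1_{\Hc})\psi\in\Hc\). Because the resolvent inverts \(D^{\prime\,r,s}-i\mathbf 1_{\Hc}\) on its domain, we have \((D^{\prime\,r,s}-i\mathbf 1_{\Hc})^{-1}\phi=\psi\). Therefore
\[
S(D^{\prime\,r,s}-i\mathbf 1_{\Hc})\psi=\phi+B_R^{(\varrho)}\psi
=D^{\prime\,r,s}\psi+B_R^{(\varrho)}\psi-i\psi .
\]
By the definition \eqref{eq:DR-def} together with Theorem~\ref{thm:selfadj}, \(D^{\varrho,r,s}_R\) has domain \(\Dom(D^{\prime\,r,s})\) and acts as \(D^{\prime\,r,s}+B_R^{(\varrho)}\) there. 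The last expression is therefore exactly \((D^{\varrho,r,s}_R-i\mathbf 1_{\Hc})\psi\), which proves \eqref{eq:factor}.

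There is no genuine obstacle here. The only point needing care is the domain bookkeeping: both sides must be defined on the same domain \(\Dom(D^{\prime\,r,s})\), and the inversion \((D^{\prime\,r,s}-i)^{-1}(D^{\prime\,r,s}-i)=\mathbf 1\) must be used only on that domain. Both facts come from self-adjointness and from the bounded-perturbation statement of Theorem~\ref{thm:selfadj}.
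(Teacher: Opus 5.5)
Your proposal is correct and matches the paper's proof: both expand \(S(D^{\prime\,r,s}-i\mathbf 1_{\Hc})\Psi\) for \(\Psi\in\Dom(D^{\prime\,r,s})\), cancel the resolvent against \(D^{\prime\,r,s}-i\mathbf 1_{\Hc}\), and recognize \((D^{\prime\,r,s}-i\mathbf 1_{\Hc})\Psi+B_R^{(\varrho)}\Psi=(D^{\varrho,r,s}_R-i\mathbf 1_{\Hc})\Psi\). Your explicit check that \(S\in\mathcal B(\Hc)\) and your domain bookkeeping are details the paper leaves implicit.
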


\begin{proof}
For \(\Psi\in\Dom(D^{\prime\,r,s})\),
\begin{align}
S(D^{\prime\,r,s}-i\mathbf 1_{\Hc})\Psi
&=
\bigl(\mathbf 1_{\Hc}+B_R^{(\varrho)}(D^{\prime\,r,s}-i\mathbf 1_{\Hc})^{-1}\bigr)(D^{\prime\,r,s}-i\mathbf 1_{\Hc})\Psi\\
&=
(D^{\prime\,r,s}-i\mathbf 1_{\Hc})\Psi+B_R^{(\varrho)}\Psi
=(D^{\varrho,r,s}_{R}-i\mathbf 1_{\Hc})\Psi.
\end{align}
\end{proof}

\begin{lemma}\label{lem:Sinv}
The operator \(S=\mathbf 1_{\Hc}+B_R^{(\varrho)}(D^{\prime\,r,s}-i\mathbf 1_{\Hc})^{-1}\) is invertible in \(\mathcal B(\Hc)\), and
\begin{equation}\label{eq:Sinv}
S^{-1}=\mathbf 1_{\Hc}-B_R^{(\varrho)}(D^{\varrho,r,s}_{R}-i\mathbf 1_{\Hc})^{-1}.
\end{equation}
\end{lemma}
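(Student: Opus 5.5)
The plan is to verify directly that the candidate inverse
\[
T:=\mathbf 1_{\Hc}-B_R^{(\varrho)}(D^{\varrho,r,s}_{R}-i\mathbf 1_{\Hc})^{-1}
\]
is a bounded two-sided inverse of \(S\).

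First, \(T\in\mathcal B(\Hc)\). By Theorem~\ref{thm:selfadj}, \(D^{\varrho,r,s}_{R}\) is self-adjoint on \(\Dom(D^{\prime\,r,s})\). Hence \(i\) lies in its resolvent set and \((D^{\varrho,r,s}_{R}-i)^{-1}\) is a bounded operator whose range is \(\Dom(D^{\prime\,r,s})\). The perturbation \(B_R^{(\varrho)}\) is bounded by Lemma~\ref{lem:BR-sa}, so \(T\) is bounded. The same argument shows \(S\in\mathcal B(\Hc)\).

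The key algebraic tool is the second resolvent identity. Write \(R_0:=(D^{\prime\,r,s}-i)^{-1}\) and \(R_1:=(D^{\varrho,r,s}_{R}-i)^{-1}\). Both operators map \(\Hc\) onto the common domain \(\Dom(D^{\prime\,r,s})\), and on that domain
\[
(D^{\varrho,r,s}_{R}-i)-(D^{\prime\,r,s}-i)=B_R^{(\varrho)}.
\]
Sandwiching this difference between the two resolvents gives
\[
R_0-R_1=R_0B_R^{(\varrho)}R_1=R_1B_R^{(\varrho)}R_0 .
\]
To justify the two equalities, apply \(R_0\) on the left and \(R_1\) on the right, and vice versa. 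Each step only uses the domain equality.

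Next, compute \(ST\):
\[
ST=\mathbf 1+B_R^{(\varrho)}R_0-B_R^{(\varrho)}R_1-B_R^{(\varrho)}R_0B_R^{(\varrho)}R_1=\mathbf 1+B_R^{(\varrho)}\bigl(R_0-R_1-R_0B_R^{(\varrho)}R_1\bigr)=\mathbf 1 .
\]
Similarly,
\[
TS=\mathbf 1+B_R^{(\varrho)}\bigl(R_0-R_1-R_1B_R^{(\varrho)}R_0\bigr)=\mathbf 1 .
\]

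An alternative check for one side uses Lemma~\ref{lem:factor}. From \(D_R-i=S(D'-i)\) we get \(S=(D_R-i)R_0\) as a map \(\Hc\to\Hc\). Since \(R_0\) and \((D_R-i)\) are bijections \(\Hc\to\Dom(D')\) and \(\Dom(D')\to\Hc\) respectively, \(S\) is a bijection. Its inverse is \((D'-i)R_1\), which expands to
\[
(D_R-i-B_R^{(\varrho)})R_1=\mathbf 1-B_R^{(\varrho)}R_1 .
\]
This second route is perhaps cleanest, and I would present it as the main argument, with the resolvent identity as a cross-check.

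The only delicate point is domain bookkeeping. One must confirm that \(\Dom(D^{\varrho,r,s}_{R})=\Dom(D^{\prime\,r,s})\), which is exactly the content of the bounded perturbation theorem in Theorem~\ref{thm:selfadj}. One must also confirm that every composition above is taken on \(\Dom(D^{\prime\,r,s})\), where the operator identities hold. Once the domain equality is recorded, the rest is routine algebra.
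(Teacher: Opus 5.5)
Your proposal is correct, and the route you designate as the main argument is exactly the paper's proof: use Lemma~\ref{lem:factor} to write \(S=(D^{\varrho,r,s}_{R}-i\mathbf 1_{\Hc})(D^{\prime\,r,s}-i\mathbf 1_{\Hc})^{-1}\), invert it as \((D^{\prime\,r,s}-i\mathbf 1_{\Hc})(D^{\varrho,r,s}_{R}-i\mathbf 1_{\Hc})^{-1}\), and substitute \(D^{\prime\,r,s}=D^{\varrho,r,s}_{R}-B_R^{(\varrho)}\). The domain equality \(\Dom(D^{\varrho,r,s}_{R})=\Dom(D^{\prime\,r,s})\) that you record, and your resolvent-identity check that both \(ST\) and \(TS\) equal \(\mathbf 1_{\Hc}\), fill in bookkeeping the paper leaves implicit without changing the approach.
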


\begin{proof}
By Lemma~\ref{lem:factor},
\begin{equation}
S=(D^{\varrho,r,s}_{R}-i\mathbf 1_{\Hc})(D^{\prime\,r,s}-i\mathbf 1_{\Hc})^{-1}.
\end{equation}
Since \(D^{\prime\,r,s}\) and \(D^{\varrho,r,s}_{R}\) are self-adjoint, the resolvents \((D^{\prime\,r,s}-i\mathbf 1_{\Hc})^{-1}\) and \((D^{\varrho,r,s}_{R}-i\mathbf 1_{\Hc})^{-1}\) are bounded \cite[Cor.~2.9, p.~311]{conway1990}. Hence
\begin{equation}
S^{-1}=(D^{\prime\,r,s}-i\mathbf 1_{\Hc})(D^{\varrho,r,s}_{R}-i\mathbf 1_{\Hc})^{-1}.
\end{equation}
Using \(D^{\prime\,r,s}=D^{\varrho,r,s}_{R}-B_R^{(\varrho)}\), we get
\begin{align}
(D^{\prime\,r,s}-i\mathbf 1_{\Hc})(D^{\varrho,r,s}_{R}-i\mathbf 1_{\Hc})^{-1}
&= \bigl((D^{\varrho,r,s}_{R}-i\mathbf 1_{\Hc})-B_R^{(\varrho)}\bigr)(D^{\varrho,r,s}_{R}-i\mathbf 1_{\Hc})^{-1}\\
&=\mathbf 1_{\Hc}-B_R^{(\varrho)}(D^{\varrho,r,s}_{R}-i\mathbf 1_{\Hc})^{-1},
\end{align}
which proves \eqref{eq:Sinv}.
\end{proof}

\begin{theorem}\label{thm:localcompact}
For every \(R>0\), the cutoff perturbed operator \(D^{\varrho,r,s}_{R}=D^{\prime\,r,s}+B_R^{(\varrho)}\) is also locally compact, i.e.,
\begin{equation}\label{eq:localcompact-DR}
\pi^M_{2,\varrho}(a)\,(D^{\varrho,r,s}_{R}-i\mathbf 1_{\Hc})^{-1}\in \mathcal K(\Hc)
\qquad
\forall\,a\in \A_{\vartheta_{\mathrm{eff}},\varrho}.
\end{equation}
\end{theorem}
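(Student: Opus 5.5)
The plan is to use the resolvent factorization of Lemma~\ref{lem:factor} and Lemma~\ref{lem:Sinv} to reduce local compactness of the perturbed operator to the already established local compactness of the unperturbed one (Corollary~\ref{thm:Dprime-local-compactness}).

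First, from \eqref{eq:factor} and the invertibility of \(S\) in \(\mathcal B(\Hc)\), I will invert the identity on \(\Dom(D^{\prime\,r,s})\). This gives
\begin{equation*}
(D^{\varrho,r,s}_{R}-i\mathbf 1_{\Hc})^{-1}=(D^{\prime\,r,s}-i\mathbf 1_{\Hc})^{-1}S^{-1}.
\end{equation*}
Both sides are bounded operators on \(\Hc\). To justify the identity, apply \(D^{\varrho,r,s}_{R}-i\) to the right-hand side. The range of \((D^{\prime\,r,s}-i)^{-1}\) is \(\Dom(D^{\prime\,r,s})\), which equals \(\Dom(D^{\varrho,r,s}_{R})\) by Theorem~\ref{thm:selfadj}. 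On that domain \eqref{eq:factor} gives \((D^{\varrho,r,s}_{R}-i)(D^{\prime\,r,s}-i)^{-1}S^{-1}=SS^{-1}=\mathbf 1_{\Hc}\). Since \(D^{\varrho,r,s}_{R}-i\) is injective, the right-hand side is indeed its inverse.

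Next, multiply on the left by \(\pi^M_{2,\varrho}(a)\) with \(a\in\A_{\vartheta_{\mathrm{eff}},\varrho}\):
\begin{equation*}
\pi^M_{2,\varrho}(a)(D^{\varrho,r,s}_{R}-i\mathbf 1_{\Hc})^{-1}=\bigl[\pi^M_{2,\varrho}(a)(D^{\prime\,r,s}-i\mathbf 1_{\Hc})^{-1}\bigr]S^{-1}.
\end{equation*}
The bracket is compact by Corollary~\ref{thm:Dprime-local-compactness}, and \(S^{-1}\) is bounded by Lemma~\ref{lem:Sinv}. Lemma~\ref{lem:product-compact} then gives compactness of the product, which is \eqref{eq:localcompact-DR}.

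The only step needing care is the domain bookkeeping in the first step, i.e. that the factorization identity holds as an identity of bounded operators on all of \(\Hc\) rather than only on a core. This follows from \(\Dom(D^{\varrho,r,s}_{R})=\Dom(D^{\prime\,r,s})\), since \(B_R^{(\varrho)}\) is bounded by Lemma~\ref{lem:BR-sa}. No use of the explicit form of \(B_R^{(\varrho)}\) beyond boundedness is needed.

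Finally, compactness at other non-real points \(\lambda\) follows from the resolvent identity \((D-\lambda)^{-1}=(D-i)^{-1}\bigl(\mathbf 1+(\lambda-i)(D-\lambda)^{-1}\bigr)\), applied with \(D=D^{\varrho,r,s}_{R}\), together with Lemma~\ref{lem:product-compact}.
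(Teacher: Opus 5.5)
Your proposal is correct and is essentially the paper's proof: it uses the same factorization \((D^{\varrho,r,s}_{R}-i\mathbf 1_{\Hc})^{-1}=(D^{\prime\,r,s}-i\mathbf 1_{\Hc})^{-1}S^{-1}\) from Lemmas~\ref{lem:factor} and~\ref{lem:Sinv}, followed by Corollary~\ref{thm:Dprime-local-compactness} and the ideal property of \(\mathcal K(\Hc)\). Your explicit domain check via \(\Dom(D^{\varrho,r,s}_{R})=\Dom(D^{\prime\,r,s})\) and your resolvent-identity extension to other non-real \(\lambda\) are correct details that the paper leaves implicit.
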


\begin{proof}
Fix \(a\in \A_{\vartheta_{\mathrm{eff}},\varrho}\). By Lemmas~\ref{lem:factor} and~\ref{lem:Sinv}, we have
\begin{equation}\label{eq:resolvent-factorization}
(D^{\varrho,r,s}_{R}-i\mathbf 1_{\Hc})^{-1}=(D^{\prime\,r,s}-i\mathbf 1_{\Hc})^{-1}S^{-1},
\end{equation}
where \(S^{-1}\in \mathcal B(\Hc)\). Therefore
\begin{equation}
\pi^M_{2,\varrho}(a)(D^{\varrho,r,s}_{R}-i\mathbf 1_{\Hc})^{-1}
=
\pi^M_{2,\varrho}(a)(D^{\prime\,r,s}-i\mathbf 1_{\Hc})^{-1}S^{-1}.
\end{equation}
By the Theorem~\ref{thm:Dprime-local-compactness}, the operator \(\pi^M_{2,\varrho}(a)(D^{\prime\,r,s}-i\mathbf 1_{\Hc})^{-1}\) is compact. Since \(S^{-1}\) is bounded, Lemma~\ref{lem:product-compact} implies that \(\pi^M_{2,\varrho}(a)(D^{\varrho,r,s}_{R}-i\mathbf 1_{\Hc})^{-1}\) is compact. This proves \eqref{eq:localcompact-DR}.
\end{proof}

Theorem~\ref{thm:localcompact} shows that the localized bounded self-adjoint perturbation preserves the locally compact non-unital spectral-triple structure for each fixed choice of the parameters. Consequently, for every cutoff radius \(R>0\), the localized bounded perturbation produces a three-parameter family of perturbed locally compact non-unital spectral triples
\begin{equation}\label{eq:gauge-perturbed-family}
\left\{
\big(\mathcal A_{\vartheta_{\mathrm{eff}},\varrho},\mathcal H,D^{\varrho,r,s}_{R}\big)
:
(\varrho,r,s)\in\mathcal P_{\mathrm{gauge}}
\right\},
\end{equation}
where
\begin{equation}
D^{\varrho,r,s}_{R}
=
D^{\prime\,r,s}+B_R^{(\varrho)}.
\end{equation}
Thus the cutoff perturbations do not use the full base parameter set \(\mathcal P_{\mathrm{base}}\); they convert the base Moyal-plane family into a finite-cutoff perturbed family over the gauge-admissible subset \(\mathcal P_{\mathrm{gauge}}\).

In the next subsection, we remove the cutoff at the operator-theoretic level.
More precisely, we identify the self-adjoint limiting minimally-coupled operator
and prove strong resolvent convergence of \(D^{\varrho,r,s}_R\) as \(R\to\infty\).

\subsection{Removal of the cutoff and the minimally-coupled Dirac operator}

We now pass from the finite-cutoff spectral triples of Subsection~IV.2 to the
uncutoff minimally-coupled operator obtained in the limit \(R\to\infty\). Starting from the localized perturbed operators
\begin{equation}
D^{\varrho,r,s}_R
=
D^{\prime\,r,s}+B^{(\varrho)}_R,
\qquad R>0,
\end{equation}
we study the limit \(R\to\infty\) at the level of self-adjoint operators. In
contrast with the finite-cutoff case, the limiting expression contains the
uncutoff affine gauge potentials \(A_x\) and \(A_y\). We therefore first define
the formal minimally-coupled operator on the common invariant core
\(\mathcal S(\R^2)\otimes\C^2 .\) The goal of this subsection is to identify its self-adjoint closure, denoted by
\(D_\infty\), and to prove the strong resolvent convergence
\begin{equation}
D^{\varrho,r,s}_R \longrightarrow D_\infty
\qquad
\text{as } R\to\infty .
\end{equation}

\begin{definition}\label{def:Dinf-formal}
Let \(b_x,b_y\) be as in \eqref{eq:bx-by-def}. Define the \emph{formal} core-level operator
\begin{equation}\label{eq:Ainf-formal}
\mathcal A
:=
\pi^M_{2,\varrho}(A_x)[D^{\prime\,r,s},\pi^M_{2,\varrho}(b_x)]
+
\pi^M_{2,\varrho}(A_y)[D^{\prime\,r,s},\pi^M_{2,\varrho}(b_y)]
\end{equation}
on \(\mathcal S(\mathbb R^2)\otimes\mathbb C^2\). The corresponding formal
minimally coupled Dirac operator is
\begin{equation}\label{eq:Dinf-formal}
D_\infty^\circ
:=
D^{\prime\,r,s}
- e\,\mathcal A,
\qquad
\operatorname{Dom}(D_\infty^\circ)
=
\mathcal S(\mathbb R^2)\otimes\mathbb C^2.
\end{equation}
\end{definition}

We first identify this formal operator explicitly. By
Lemma~\ref{lem:coord-comm}, the functions \(b_x\) and \(b_y\) satisfy
\begin{equation}
[D^{\prime\,r,s},\pi^M_{2,\varrho}(b_x)]
=
\mathbf 1_{L^2(\mathbb R^2)}\otimes \sigma_1,
\qquad
[D^{\prime\,r,s},\pi^M_{2,\varrho}(b_y)]
=
\mathbf 1_{L^2(\mathbb R^2)}\otimes \sigma_2.
\end{equation}
Substituting these identities into \eqref{eq:Ainf-formal}, and using
\[
\pi^M_{2,\varrho}(A_k)=L_{A_k}\otimes \mathbb I_{2\times2},
\qquad k=x,y,
\]
on the Schwartz core, we obtain
\begin{equation}\label{eq:Ainf-explicit}
\mathcal A
=
\Bigl(
L_{A_x}\otimes \mathbb I_{2\times2}
\Bigr)
\Bigl(
\mathbf 1_{L^2(\mathbb R^2)}\otimes \sigma_1
\Bigr)
+
\Bigl(
L_{A_y}\otimes \mathbb I_{2\times2}
\Bigr)
\Bigl(
\mathbf 1_{L^2(\mathbb R^2)}\otimes \sigma_2
\Bigr).
\end{equation}
Since operators acting on different tensor factors commute, this may be rewritten as
\begin{equation}\label{eq:Ainf-explicit-2}
\mathcal A
=
L_{A_x}\otimes \sigma_1
+
L_{A_y}\otimes \sigma_2
\end{equation}
on \(\mathcal S(\mathbb R^2)\otimes\mathbb C^2\). Consequently,
\begin{equation}\label{eq:Dinf-explicit}
D_\infty^\circ
=
D^{\prime\,r,s}
-
e\Bigl(
L_{A_x}\otimes \sigma_1
+
L_{A_y}\otimes \sigma_2
\Bigr)
\end{equation}
on \(\mathcal S(\mathbb R^2)\otimes\mathbb C^2\).

\begin{lemma}\label{lem:affine-left-symmetric}
The operators \(L_{A_x}\) and \(L_{A_y}\) are symmetric on \(\mathcal S(\R^2)\).
\end{lemma}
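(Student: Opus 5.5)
The plan is to compute \(L_{A_x}\) and \(L_{A_y}\) explicitly on \(\mathcal S(\R^2)\) and recognize them as real linear combinations of position and momentum operators. By \eqref{eq:target-potentials}, we can write \(A_x=c_x\,y\) and \(A_y=c_y\,x\) with real constants
\[
c_x=\frac{-2(1-\varrho)\hbar_0B_{\mathrm{ext}}}{\hbar_0+\sqrt{\Delta^{\mathrm{gauge}}_\varrho}},\qquad
c_y=\frac{2\varrho\hbar_0B_{\mathrm{ext}}}{\hbar_0+\sqrt{\Delta^{\mathrm{gauge}}_\varrho}}.
\]
These constants are real because \((\varrho,r,s)\in\mathcal P_{\mathrm{gauge}}\) guarantees \(\Delta^{\mathrm{gauge}}_\varrho\ge0\) and a nonvanishing denominator. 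Using the left-multiplier formulas \eqref{eq:multiply-by-x-y} from the proof of Lemma~\ref{lem:coord-comm}, we get on \(\mathcal S(\R^2)\)
\[
L_{A_x}\psi=c_x\bigl(y\psi-i\varrho\vartheta_{\mathrm{eff}}\partial_x\psi\bigr),\qquad
L_{A_y}\psi=c_y\bigl(x\psi+i(1-\varrho)\vartheta_{\mathrm{eff}}\partial_y\psi\bigr).
\]
Both operators leave \(\mathcal S(\R^2)\) invariant, so they are well-defined there.

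Next, rewrite these with \(P_x=-i\partial_x\) and \(P_y=-i\partial_y\):
\[
L_{A_x}=c_x\bigl(y+\varrho\vartheta_{\mathrm{eff}}P_x\bigr),\qquad L_{A_y}=c_y\bigl(x-(1-\varrho)\vartheta_{\mathrm{eff}}P_y\bigr).
\]
Multiplication by \(x\) or \(y\) and the operators \(P_x,P_y\) are each symmetric on \(\mathcal S(\R^2)\). For the momentum operators this follows from integration by parts, since boundary terms vanish for Schwartz functions. Because \(c_x,c_y,\varrho,\vartheta_{\mathrm{eff}}\) are all real, each \(L_{A_k}\) is a real linear combination of symmetric operators on a common invariant domain, and is therefore symmetric. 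Concretely, \(\langle L_{A_k}\phi,\psi\rangle=\langle\phi,L_{A_k}\psi\rangle\) for all \(\phi,\psi\in\mathcal S(\R^2)\).

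As an alternative route consistent with Lemma~\ref{lem:adjoint}, one could transport through \(T_\varrho\). The computation \(T_\varrho(A_k)=A_k\) holds because \(\partial_x\partial_y\) annihilates functions of a single variable, and \(T_\varrho\) preserves \(\mathcal S(\R^2)\) unitarily. These facts reduce the claim to the Weyl--Moyal case, where left multiplication by a real affine function is symmetric. The direct computation is simpler, however, and I will use it.

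The only delicate point is confirming that the constants are real, which is exactly what the restriction to \(\mathcal P_{\mathrm{gauge}}\) provides. I do not expect any real obstacle beyond that.
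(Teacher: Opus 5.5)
Your proof is correct and matches the paper's argument. The paper likewise writes \(A_x=\alpha_x y\) and \(A_y=\alpha_y x\) with real constants, uses the explicit left-multiplier formulas \eqref{eq:multiply-by-x-y}, and concludes symmetry because multiplication by \(x,y\) and the operators \(-i\partial_x\), \(i\partial_y\) are symmetric on \(\mathcal S(\R^2)\) and every coefficient is real. Your point that reality of the constants comes from the restriction to \(\mathcal P_{\mathrm{gauge}}\) is the same justification the paper relies on.
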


\begin{proof}
By \eqref{eq:target-potentials}, there exist real constants \(\alpha_x,\alpha_y\) such that
\begin{equation}
    A_x(x,y)=\alpha_x y,
    \qquad
    A_y(x,y)=\alpha_y x .
\end{equation}
Hence, using \eqref{eq:multiply-by-x-y}, the left multiplication operators associated with \(A_x\) and \(A_y\) can be written explicitly as follows:
\begin{align}
L_{A_x}
&=
\alpha_xL_y
=
\alpha_x\bigl(y-i\varrho\vartheta_{\mathrm{eff}}\,\partial_x\bigr), \label{eq:LAx-explicit-proof}\\
L_{A_y}
&=
\alpha_yL_x
=
\alpha_y\bigl(x+i(1-\varrho)\vartheta_{\mathrm{eff}}\,\partial_y\bigr). \label{eq:LAy-explicit-proof}
\end{align}
Multiplication by \(x\) and \(y\) is symmetric on \(\mathcal S(\R^2)\). Moreover, by
integration by parts,
\begin{equation}
    \langle \partial_x\phi,\psi\rangle
    =
    -\langle \phi,\partial_x\psi\rangle,
    \qquad
    \langle \partial_y\phi,\psi\rangle
    =
    -\langle \phi,\partial_y\psi\rangle
\end{equation}
for all \(\phi,\psi\in\mathcal S(\R^2)\). Thus \(\partial_x\) and \(\partial_y\) are
skew-symmetric on \(\mathcal S(\R^2)\), so \(-i\partial_x\) and
\(i\partial_y\) are symmetric. Since \(\alpha_x,\alpha_y,\varrho,\vartheta_{\mathrm{eff}}\) are real constants, it follows that \(\alpha_x\bigl(y-i\varrho\vartheta_{\mathrm{eff}}\partial_x\bigr)\) and \(\alpha_y\bigl(x+i(1-\varrho)\vartheta_{\mathrm{eff}}\partial_y\bigr)\) are symmetric on \(\mathcal S(\R^2)\). Therefore \(L_{A_x}\) and \(L_{A_y}\) are symmetric
on \(\mathcal S(\R^2)\).
\end{proof}

\begin{proposition}\label{prop:Dinf-symmetric}
The operator \(D_\infty^\circ\) is symmetric on \(\mathcal S(\R^2)\otimes\C^2\).
\end{proposition}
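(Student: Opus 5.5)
The plan is to use the explicit form \eqref{eq:Dinf-explicit}, which writes \(D_\infty^\circ\) as a real linear combination of three operators on the common core \(\mathcal S(\R^2)\otimes\C^2\):
\[
D_\infty^\circ = D^{\prime\,r,s} - e\,\bigl(L_{A_x}\otimes\sigma_1\bigr) - e\,\bigl(L_{A_y}\otimes\sigma_2\bigr).
\]
I will then show that each term is symmetric on this core. Since \(e\in\R\), symmetry of \(D_\infty^\circ\) will follow by linearity of the inner product.

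First, \(\mathcal S(\R^2)\otimes\C^2\) is dense in \(\mathcal H\) and is invariant under every term.

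\begin{itemize}
\item For \(D^{\prime\,r,s}\), invariance follows from the normal form \eqref{eq:Dprime-diagonal-normal-form}: \(X_\pm\) and \(P_\pm\) preserve Schwartz space.
\item For \(L_{A_x}\) and \(L_{A_y}\), invariance follows from \eqref{eq:multiply-by-x-y}.
\end{itemize}

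In particular, \(D_\infty^\circ\) is well defined and densely defined.

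Symmetry of \(D^{\prime\,r,s}\) on the core holds because it is a restriction of the self-adjoint operator \(D^{\prime\,r,s}\) of Theorem~\ref{thm:moyal-base-triple}. Alternatively, one can check it directly from the normal form, since \(X_+\) and \(P_+\) are symmetric on \(\mathcal S(\R^2)\) and \(\sigma_1,\sigma_2\) are Hermitian.

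For the gauge terms, I will use Lemma~\ref{lem:affine-left-symmetric}, which gives that \(L_{A_x}\) and \(L_{A_y}\) are symmetric on \(\mathcal S(\R^2)\). For \(\Phi=\sum_j\phi_j\otimes v_j\) and \(\Psi=\sum_k\psi_k\otimes w_k\) in the algebraic tensor product, I will compute
\[
\langle (L_{A_x}\otimes\sigma_1)\Phi,\Psi\rangle=\sum_{j,k}\langle L_{A_x}\phi_j,\psi_k\rangle\langle\sigma_1v_j,w_k\rangle
=\sum_{j,k}\langle \phi_j,L_{A_x}\psi_k\rangle\langle v_j,\sigma_1w_k\rangle
=\langle\Phi,(L_{A_x}\otimes\sigma_1)\Psi\rangle.
\]
The same computation applies to \(L_{A_y}\otimes\sigma_2\). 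This step needs the tensor-product inner product identity together with Hermiticity of the Pauli matrices.

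The only real subtlety is interpretive: \(\mathcal S(\R^2)\otimes\C^2\) should be read as \(\mathcal S(\R^2)^2\), i.e. as two-component Schwartz functions. Because \(\C^2\) is finite dimensional, every such vector is a finite sum of elementary tensors, so the computation above covers the whole core. No unbounded-operator domain issues arise, since everything takes place on the invariant core. Combining the three symmetric terms then yields the proposition.
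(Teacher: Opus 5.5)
Your proposal is correct and follows essentially the same approach as the paper: write \(D_\infty^\circ\) via its explicit form on the core, use symmetry of \(D^{\prime\,r,s}\), Lemma~\ref{lem:affine-left-symmetric} for \(L_{A_x}, L_{A_y}\), Hermiticity of \(\sigma_1,\sigma_2\), and reality of \(e\). Your explicit tensor-product inner-product computation and your direct check of \(D^{\prime\,r,s}\) from its normal form only spell out steps the paper states in one line. The paper cites Proposition~\ref{prop:3.4} for the symmetry of \(D^{\prime\,r,s}\), but that result strictly concerns the unconjugated operator, so your normal-form justification is the cleaner one.
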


\begin{proof}
By Proposition~\ref{prop:3.4}, \(D^{\prime\,r,s}\) is symmetric on
\(\mathcal S(\R^2)\otimes\C^2\). By Lemma~\ref{lem:affine-left-symmetric},
\(L_{A_x}\) and \(L_{A_y}\) are symmetric on \(\mathcal S(\R^2)\). Since
\(\sigma_1\) and \(\sigma_2\) are Hermitian matrices, it follows that \(L_{A_x}\otimes\sigma_1,\
    L_{A_y}\otimes\sigma_2\) are symmetric on \(\mathcal S(\R^2)\otimes\C^2\). Since \(e\in\R\), the real linear
combination
\begin{equation}
    D_\infty^\circ
    =
    D^{\prime\,r,s}
    -
    e\bigl(L_{A_x}\otimes\sigma_1+L_{A_y}\otimes\sigma_2\bigr)
\end{equation}
is symmetric on \(\mathcal S(\R^2)\otimes\C^2\).
\end{proof}

Now, we will prove that \(D_\infty^\circ\) is essentially self adjoint on \(\mathcal S(\R^2)\otimes\C^2\).

\begin{theorem}\label{thm:Dinf-esa}
The operator \(D_\infty^\circ\) is essentially self-adjoint on
\(\mathcal S(\R^2)\otimes\C^2\).
\end{theorem}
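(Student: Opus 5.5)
The plan is to reduce essential self-adjointness of \(D_\infty^\circ\) to Lemma~\ref{lem:nelson-linear-dirac}, in the same way Proposition~\ref{prop:3.4} reduced \(D^{r,s}_{\hbar_0,\vartheta_0,B_0}\). The key observation is that every ingredient of \(D_\infty^\circ\) on the core \(\mathcal S(\R^2)\otimes\C^2\) is a first-order operator with \emph{linear} coefficients, with no zeroth-order terms. Indeed, by \eqref{eq:Dprime-diagonal-normal-form},
\[
D^{\prime\,r,s}=\tfrac12\bigl((X_1+X_2)\otimes\sigma_1+(P_1+P_2)\otimes\sigma_2\bigr),
\]
and by \eqref{eq:LAx-explicit-proof}--\eqref{eq:LAy-explicit-proof},
\[
L_{A_x}=\alpha_x\bigl(y-i\varrho\vartheta_{\mathrm{eff}}\partial_x\bigr),\qquad
L_{A_y}=\alpha_y\bigl(x+i(1-\varrho)\vartheta_{\mathrm{eff}}\partial_y\bigr),
\]
with real constants \(\alpha_x,\alpha_y\). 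I will identify the coordinates \((x,y)\) with \((x_1,x_2)\), which is the same Schr\"odinger realization on \(L^2(\R^2)\) used for \(D^{\prime\,r,s}\). With this identification, \(D_\infty^\circ\) is a \(M_2(\C)\)-coefficient linear combination of the four operators \(x,y,\partial_x,\partial_y\).

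The first step is to substitute
\[
x=\tfrac{1}{\sqrt2}(a_x+a_x^*),\qquad \partial_x=\tfrac1{\sqrt2}(a_x-a_x^*),
\]
together with the analogous formulas for \(y\) and \(\partial_y\). This exhibits
\[
D_\infty^\circ=M_1a_x+M_2a_x^*+M_3a_y+M_4a_y^*
\]
on \(\mathcal S(\R^2)\otimes\C^2\), with constant matrices \(M_j\) depending on \(e,\alpha_x,\alpha_y,\varrho,\vartheta_{\mathrm{eff}}\). Writing these matrices out explicitly is unnecessary; one only needs that each term has this linear ladder form. The second step invokes Proposition~\ref{prop:Dinf-symmetric}, which gives symmetry of \(D_\infty^\circ\) on the core. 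Lemma~\ref{lem:nelson-linear-dirac} then applies directly: the Hermite span \(\mathcal F\) is invariant, and each \(\mathcal F_\ell\) is mapped into \(\mathcal F_{\ell+1}\) with the bound \(\|T\Phi\|\le K\sqrt{\ell+1}\|\Phi\|\). Hence every vector of \(\mathcal F\) is analytic, and Nelson's theorem yields essential self-adjointness on \(\mathcal F\), and therefore on the larger symmetric domain \(\mathcal S(\R^2)\otimes\C^2\).

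The main obstacle is bookkeeping rather than analysis. One must check that \(D^{\prime\,r,s}\) on the Schwartz core is literally the linear expression \eqref{eq:Dprime-diagonal-normal-form} in the same variables in which \(L_{A_x}\) and \(L_{A_y}\) are written as differential operators. This requires that \(U_{r,s}\) preserves \(\mathcal S(\R^2)\), as asserted in Proposition~\ref{prop:IV1}, so that both descriptions live on a common core; the two pieces can then simply be added. An alternative route, which I would mention as a check, is Kato--Rellich-free commutator reasoning (for example Nelson's commutator theorem with \(N=1+H_{\mathrm{osc}}\) on \(L^2(\R^2)\)). This would also work, because linear operators are \(N^{1/2}\)-bounded and \([D_\infty^\circ,N]\) is again linear, but the analytic-vector lemma already covers this case with no extra estimates. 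A final point to note is that the closure \(D_\infty:=\overline{D_\infty^\circ}\) is the self-adjoint operator used in the cutoff-removal argument.
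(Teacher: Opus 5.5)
Your proposal is correct and is essentially the paper's own argument: on $\mathcal S(\R^2)\otimes\C^2$ you write $D_\infty^\circ$ as a constant-matrix combination of $x,y,\partial_x,\partial_y$, hence of $a_x,a_x^*,a_y,a_y^*$. You then take symmetry from Proposition~\ref{prop:Dinf-symmetric} and conclude via the analytic-vector Lemma~\ref{lem:nelson-linear-dirac}. Your explicit remarks that $(x,y)$ is identified with $(x_1,x_2)$, and that $U_{r,s}$ preserves the Schwartz space, make precise a bookkeeping step the paper leaves implicit.
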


\begin{proof}
By the explicit form of \(D_\infty^\circ\), there exist constant
\(2\times2\) matrices \(B_x,B_y,C_x,C_y\) such that, on
\(\mathcal S(\R^2)\otimes\C^2\),
\[
D_\infty^\circ
=
B_x\partial_x+B_y\partial_y+C_xx+C_yy .
\]
Thus \(D_\infty^\circ\) is a first-order differential operator involving only
the operators \(x,y,\partial_x,\partial_y\), with constant \(2\times2\) matrices acting on the spinor component. Hence, exactly as in the proof of Proposition~\ref{prop:3.4}, we rewrite \(x,y,\partial_x,\partial_y\) in terms of the creation and annihilation operators.

Indeed, using
\[
x=\frac{1}{\sqrt2}(a_x+a_x^*),
\qquad
\partial_x=\frac{1}{\sqrt2}(a_x-a_x^*),
\]
and
\[
y=\frac{1}{\sqrt2}(a_y+a_y^*),
\qquad
\partial_y=\frac{1}{\sqrt2}(a_y-a_y^*),
\]
we obtain constant \(2\times2\) matrices \(M_1,M_2,M_3,M_4\) such that
\[
D_\infty^\circ
=
M_1a_x+M_2a_x^*+M_3a_y+M_4a_y^*
\]
on \(\mathcal S(\R^2)\otimes\C^2\). Moreover, by
Proposition~\ref{prop:Dinf-symmetric}, \(D_\infty^\circ\) is symmetric on
\(\mathcal S(\R^2)\otimes\C^2\). Therefore all hypotheses of
Lemma~\ref{lem:nelson-linear-dirac} are satisfied, and \(D_\infty^\circ\)
is essentially self-adjoint on \(\mathcal S(\R^2)\otimes\C^2\).
\end{proof}

\begin{corollary}\label{cor:Dinf-closure}
The operator \(D_\infty^\circ\) is closable, and its closure \(D_\infty:=\overline{D_\infty^\circ}\) is self-adjoint on \(\Hc\).
\end{corollary}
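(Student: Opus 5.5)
The corollary follows from two standard facts about symmetric and essentially self-adjoint operators, applied to what has already been established. The plan is to derive closability from symmetry, then self-adjointness of the closure from Theorem~\ref{thm:Dinf-esa}.

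\emph{Closability.} By Proposition~\ref{prop:Dinf-symmetric}, $D_\infty^\circ$ is symmetric on the domain $\mathcal S(\mathbb R^2)\otimes\mathbb C^2$. This domain is dense in $\mathcal H$, since it contains the Hermite span $\mathcal F$ used in Lemma~\ref{lem:nelson-linear-dirac}. Symmetry means $D_\infty^\circ\subseteq (D_\infty^\circ)^*$. The adjoint of a densely defined operator is always closed, so $D_\infty^\circ$ has a closed extension. Hence $D_\infty^\circ$ is closable, and its closure is
\[
D_\infty:=\overline{D_\infty^\circ}=(D_\infty^\circ)^{**}.
\]

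\emph{Self-adjointness.} Theorem~\ref{thm:Dinf-esa} states that $D_\infty^\circ$ is essentially self-adjoint on $\mathcal S(\mathbb R^2)\otimes\mathbb C^2$. By definition this means its closure is self-adjoint, which gives $D_\infty^*=D_\infty$. By construction, $\mathcal S(\mathbb R^2)\otimes\mathbb C^2$ is then a core for $D_\infty$; this is the fact needed later for the strong resolvent convergence argument.

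There is no real obstacle here. The only point needing care is that Theorem~\ref{thm:Dinf-esa} rests on Lemma~\ref{lem:nelson-linear-dirac}, whose hypotheses are symmetry of the operator and its linear form in $a_x,a_x^*,a_y,a_y^*$. Symmetry has been checked in Proposition~\ref{prop:Dinf-symmetric}. The linear form holds because the left multipliers $L_{A_x}=\alpha_xL_y$ and $L_{A_y}=\alpha_yL_x$ in \eqref{eq:LAx-explicit-proof}--\eqref{eq:LAy-explicit-proof} are first order with constant coefficients. The analytic-vector estimate in that lemma therefore applies verbatim.
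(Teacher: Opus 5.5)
Your proposal is correct and follows the paper's route: the paper simply notes that an essentially self-adjoint operator is closable with self-adjoint closure, and invokes Theorem~\ref{thm:Dinf-esa}. Your only addition is to spell out closability from symmetry and density, namely $D_\infty^\circ\subseteq (D_\infty^\circ)^*$ with the adjoint closed, which is the standard justification implicit in the paper's one-line argument.
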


\begin{proof}
Every essentially self-adjoint operator is closable, and its closure is self-adjoint. Apply this to \(D_\infty^\circ\) and use Theorem~\ref{thm:Dinf-esa}.
\end{proof}

We now show that the cutoff operators \(D^{\varrho,r,s}_{R}\) constructed in Subsection~IV.2 converge to \(D_\infty\) in the strong resolvent sense. Since the cutoff family in Subsection~IV.2 was defined using the same right coefficients \(b_x\) and \(b_y\), the difference between \(D^{\varrho,r,s}_{R}\) and \(D_\infty^\circ\) comes only from the coefficient functions \(A_{x,\varrho}^{(R)}\), \(A_{y,\varrho}^{(R)}\) and the uncutoff affine coefficients \(A_x,A_y\).

\begin{lemma}\label{lem:weyl-cutoff-tail}
Let \(p\) be an affine function on \(\mathbb R^2\), and let
\(u_R(z)=u(z/R)\), where \(u\in C_c^\infty(\mathbb R^2)\),
\(u\equiv 1\) on \(B_1(0)\), and \(\operatorname{supp}(u)\subset B_2(0)\).
Then, for every \(\varphi\in \mathcal S(\mathbb R^2)\),
\begin{equation}
\bigl((u_R-1)p\bigr)\star_{\vartheta_{\mathrm{eff}},1/2}\varphi
\longrightarrow 0
\qquad\text{in }L^2(\mathbb R^2).
\end{equation}
\end{lemma}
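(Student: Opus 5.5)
The plan is to work on the Fourier side, where Weyl--Moyal left multiplication of a plane wave acts as a translation of the multiplier. That turns the statement into a dominated-convergence argument in \(z\), with the domination coming from integration by parts in the frequency variable.

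\textbf{Step 1: plane-wave representation.} Write \(g_R:=(u_R-1)p\) and \(\vartheta:=\vartheta_{\mathrm{eff}}\). From the exponential formula \eqref{eq:moyal-product}, for a smooth function \(f\) of polynomial growth with polynomially bounded derivatives one has
\[
f\star_{\vartheta,1/2}e^{i\xi\cdot z}=f\bigl(z-\tfrac{\vartheta}{2}J\xi\bigr)e^{i\xi\cdot z},
\]
where \(J\) is the standard symplectic matrix. Writing \(\varphi=(2\pi)^{-2}\int\hat\varphi(\xi)e^{i\xi\cdot z}\,d\xi\), this gives
\[
\bigl(g_R\star_{\vartheta,1/2}\varphi\bigr)(z)=(2\pi)^{-2}\int_{\mathbb R^2}g_R\bigl(z-\tfrac{\vartheta}{2}J\xi\bigr)\,\hat\varphi(\xi)\,e^{i\xi\cdot z}\,d\xi .
\]
To justify it, I would check the formula for \(f=p\) affine (where \eqref{eq:moyal-product} terminates after first order) and for \(f\in\mathcal S\), and then pass to \(g_R\) using the oscillatory-integral definition of \(\star\) on the Moyal multiplier class. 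Both sides are continuous for the symbol topology and \(g_R\) belongs to that class.

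\textbf{Step 2: uniform symbol bounds.} For every multiindex \(\gamma\) there is \(C_\gamma\), independent of \(R\ge1\), with
\[
|\partial^\gamma g_R(w)|\le C_\gamma(1+|w|).
\]
For \(\gamma=0\) this holds because \(|u_R-1|\le1\) and \(p\) is affine. For \(|\gamma|\ge1\), the Leibniz rule produces terms \(\partial^{\gamma'}u_R\,\partial^{\gamma''}p\). Here \(|\partial^{\gamma'}u_R|\le CR^{-|\gamma'|}\), supported in \(R\le|w|\le2R\) where \(|p|\le C(1+R)\), and \(\partial^{\gamma''}p\) is constant once \(|\gamma''|=1\).

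\textbf{Step 3: integration by parts in \(\xi\).} Use
\[
(1+|z|^2)^{-2}(1-\Delta_\xi)^2e^{i\xi\cdot z}=e^{i\xi\cdot z}
\]
and integrate by parts four times in \(\xi\), moving the derivatives onto \(g_R(z-\tfrac{\vartheta}{2}J\xi)\hat\varphi(\xi)\). By Step 2 and \(\hat\varphi\in\mathcal S\) one gets
\[
\bigl|(g_R\star\varphi)(z)\bigr|\le C(1+|z|^2)^{-2}\int(1+|z|+|\xi|)\langle\xi\rangle^{-N}\,d\xi\le C'(1+|z|)^{-3},
\]
uniformly in \(R\ge1\). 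The right-hand side lies in \(L^2(\mathbb R^2)\).

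\textbf{Step 4: pointwise convergence and conclusion.} Fix \(z\). For each \(\xi\), \(g_R(z-\tfrac{\vartheta}{2}J\xi)=0\) as soon as \(R>|z-\tfrac{\vartheta}{2}J\xi|\). The integrand in Step 1 is dominated by \(C(1+|z|+|\xi|)|\hat\varphi(\xi)|\), which is integrable in \(\xi\). Dominated convergence therefore gives \((g_R\star\varphi)(z)\to0\) for every \(z\). Combining this with the \(R\)-uniform \(L^2\) majorant of Step 3, dominated convergence in \(z\) yields \(\|g_R\star_{\vartheta,1/2}\varphi\|_{L^2}\to0\).

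The main obstacle is Step 1, i.e.\ making the plane-wave formula rigorous for the non-Schwartz multiplier \(g_R\). If the oscillatory-integral route is awkward, the fallback is the exact identity
\[
h\star p=hp+\tfrac{i\vartheta}{2}\bigl(\partial_xh\,\partial_yp-\partial_yh\,\partial_xp\bigr)
\]
for affine \(p\). With \(h=u_R-1\) this gives
\[
g_R\star\varphi=(u_R-1)\star(p\star\varphi)-\tfrac{i\vartheta}{2}\,w_R\star\varphi ,
\]
where \(w_R:=\partial_xu_R\,\partial_yp-\partial_yu_R\,\partial_xp\). Here \(p\star\varphi\in\mathcal S\), and \(u_R-1\) and \(w_R\) are bounded with uniformly bounded derivatives. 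The argument of Steps 1--4 then only has to be run for bounded multipliers, where it is standard.
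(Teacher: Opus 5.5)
Your proposal is correct and follows essentially the same route as the paper's proof. Both use the plane-wave representation $(m_R\star\varphi)(z)=(2\pi)^{-2}\int e^{iz\cdot\xi}m_R(z-\tfrac{\vartheta}{2}J\xi)\widehat{\varphi}(\xi)\,d\xi$, the $R$-uniform bounds $|\partial^\alpha m_R(z)|\le C_\alpha(1+|z|)$ for $R\ge1$, integration by parts with $(1-\Delta_\xi)^N$ ($N\ge2$) to obtain an $R$-independent $L^2$ majorant, and dominated convergence first in $\xi$ (pointwise limit) and then in $z$ — your fallback via the exact identity $h\star p=hp+\tfrac{i\vartheta}{2}(\partial_xh\,\partial_yp-\partial_yh\,\partial_xp)$ is a valid extra safeguard for Step 1 but is not needed.
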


\begin{proof}
Write \(z=(x,y)\in\mathbb R^2\), and set
\(\theta:=\vartheta_{\mathrm{eff}},\
J :=
\begin{pmatrix}
0 & 1\\
-1 & 0
\end{pmatrix}.\)
For \(R>0\), define
\begin{equation}
m_R(z):=(u_R(z)-1)p(z).
\end{equation}
We need to prove that
\(m_R\star_{\theta,1/2}\varphi \longrightarrow 0
\ \text{in }L^2(\mathbb R^2).\)

We regard left Moyal multiplication by \(m_R\) as a Weyl
pseudodifferential operator. More precisely, for
\(\varphi\in\mathcal S(\mathbb R^2)\), one has
\begin{equation}\label{eq:weyl-left-moyal-form}
(m_R\star_{\theta,1/2}\varphi)(z)
=
\frac{1}{(2\pi)^2}
\int_{\mathbb R^2}
e^{iz\cdot \xi}
m_R\!\left(z-\frac{\theta}{2}J\xi\right)
\widehat{\varphi}(\xi)\,d\xi ,
\end{equation}
where \begin{equation}\displaystyle \widehat{\varphi}(\xi)
=
\int_{\mathbb R^2}e^{-iw\cdot \xi}\varphi(w)\,dw .\end{equation} 
The identity \eqref{eq:weyl-left-moyal-form} is understood in the
standard oscillatory-integral sense. This point is important because
\(m_R=(u_R-1)p\) is generally not a Schwartz function. However, this causes
no difficulty: \(m_R\) is smooth and has at most polynomial growth, uniformly
in \(R\), together with all derivatives. Hence the usual Weyl
pseudodifferential formula applies, and the integrations by parts used below
are justified by the uniform symbol estimates that we now prove.

It is enough to consider \(R\ge 1\), since the limit is taken as \(R\to\infty\). We first record uniform symbol estimates for \(m_R\). Since \(p\) is affine, there is a constant
\(C>0\) such that 
\begin{equation}|p(z)|\le C(1+|z|).\end{equation}
Moreover, for every multi-index \(\alpha\), there exists \(C_\alpha>0\), independent of \(R\ge 1\),
such that
\begin{equation}\label{eq:mR-symbol-bound}
|\partial^\alpha m_R(z)|
\le C_\alpha(1+|z|),
\qquad z\in\mathbb R^2.
\end{equation}
Indeed, the case \(\alpha=0\) follows immediately from \(|u_R-1|\le 1\) and the affine growth
of \(p\). If \(|\alpha|\ge 1\), Leibniz' rule gives terms involving derivatives of \(u_R\) and
derivatives of \(p\). Since
\begin{equation}
\partial^\beta u_R(z)=R^{-|\beta|}(\partial^\beta u)(z/R),
\end{equation}
and \(\partial^\beta u_R\) is supported in the annulus \(R\le |z|\le 2R\) whenever \(|\beta|\ge 1\), the factors \(R^{-|\beta|}\) are compensated by the at most linear
growth of \(p\). Since \(p\) is affine, all derivatives of \(p\) of order at least two vanish, and
the remaining terms are uniformly bounded by \(C_\alpha(1+|z|)\). This proves
\eqref{eq:mR-symbol-bound}.

For fixed \(z\in\mathbb R^2\) and \(\xi\in\mathbb R^2\), we have
\begin{equation}m_R\!\left(z-\dfrac{\theta}{2}J\xi\right)\longrightarrow 0
\quad (R\to\infty),\end{equation}
because, for \(R\) sufficiently large, the point \(z-\dfrac{\theta}{2}J\xi\) lies in \(B_R^{(\varrho)}(0)\),
where \(u_R\equiv 1\). Hence the integrand in \eqref{eq:weyl-left-moyal-form} converges
pointwise to zero.

It remains to justify convergence in \(L^2\). We obtain an \(R\)-independent integrable
dominating function. Let \(N\in\mathbb N\). Since
\begin{equation}\displaystyle (1-\Delta_\xi)^N e^{iz\cdot \xi}
=
(1+|z|^2)^N e^{iz\cdot \xi},\end{equation}
integration by parts in \eqref{eq:weyl-left-moyal-form} gives
\begin{equation}
(m_R\star_{\theta,1/2}\varphi)(z)
=
\frac{1}{(2\pi)^2}(1+|z|^2)^{-N}
\int_{\mathbb R^2}
e^{iz\cdot \xi}
(1-\Delta_\xi)^N
\left[
m_R\!\left(z-\frac{\theta}{2}J\xi\right)
\widehat{\varphi}(\xi)
\right]d\xi .
\end{equation}
Using Leibniz' rule, the derivatives falling on \(m_R(z-\frac{\theta}{2}J\xi)\) are controlled
by \eqref{eq:mR-symbol-bound}, while all derivatives of \(\widehat{\varphi}\) are rapidly
decreasing. Therefore, for each fixed \(N\), there exists a constant \(C_N>0\), independent of
\(R\ge 1\), such that
\begin{equation}
\int_{\mathbb R^2}
\left|
(1-\Delta_\xi)^N
\left[
m_R\!\left(z-\frac{\theta}{2}J\xi\right)
\widehat{\varphi}(\xi)
\right]
\right|d\xi
\le C_N(1+|z|).
\end{equation}
Indeed, the factor \(1+\left|z-\frac{\theta}{2}J\xi\right|\) is bounded by
\(C(1+|z|+|\xi|)\), and the \(\xi\)-dependence is absorbed by the rapid decay of
\(\widehat{\varphi}\) and its derivatives. Hence
\begin{equation}\label{eq:uniform-L2-domination}
|m_R\star_{\theta,1/2}\varphi(z)|
\le
C_N(1+|z|)(1+|z|^2)^{-N}.
\end{equation}
Choosing \(N\ge 2\), the right-hand side of \eqref{eq:uniform-L2-domination} belongs to
\(L^2(\mathbb R^2)\).

For each fixed \(z\), dominated convergence in the \(\xi\)-integral in
\eqref{eq:weyl-left-moyal-form} gives
\begin{equation}(m_R\star_{\theta,1/2}\varphi)(z)\longrightarrow 0.\end{equation}
Together with the \(L^2\)-dominating bound \eqref{eq:uniform-L2-domination}, another
application of dominated convergence gives
\begin{equation}\|m_R\star_{\theta,1/2}\varphi\|_{L^2(\mathbb R^2)}
\longrightarrow 0.\end{equation}
Since \(m_R=(u_R-1)p\), this proves
\begin{equation}
\bigl((u_R-1)p\bigr)\star_{\theta,1/2}\varphi
\longrightarrow 0
\qquad\text{in }L^2(\mathbb R^2).
\end{equation}
The proof is complete.
\end{proof}

\begin{lemma}\label{lem:cutoff-multiplier-convergence}
Let \(A^{(R)}_{x,\varrho}\), \(A^{(R)}_{y,\varrho}\) be as in \eqref{eq:rho-localized-potentials}. Then, for every \(\psi\in\mathcal S(\mathbb R^2)\),
\begin{equation}
L_{A^{(R)}_{x,\varrho}}\psi \longrightarrow L_{A_x}\psi,
\qquad
L_{A^{(R)}_{y,\varrho}}\psi \longrightarrow L_{A_y}\psi
\end{equation}
in \(L^2(\mathbb R^2)\) as \(R\to\infty\).
\end{lemma}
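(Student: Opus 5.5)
The plan is to reduce the statement to the Weyl--Moyal case $\varrho=\tfrac12$ via the ordering-change unitary $T_\varrho$, and then apply Lemma~\ref{lem:weyl-cutoff-tail}. We treat $A_x$; the argument for $A_y$ is identical.

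By the intertwining identity \eqref{eq:ordering-intertwines-products}, applied to $f=A^{(R)}_{x,\varrho}\in\mathcal S(\mathbb R^2)$ and using $T_\varrho A^{(R)}_{x,\varrho}=u_RA_x$, we get
\begin{equation*}
L^{(\varrho)}_{A^{(R)}_{x,\varrho}}
=T_\varrho^{-1}\,L^{(1/2)}_{u_RA_x}\,T_\varrho .
\end{equation*}
The uncutoff coefficient $A_x$ is affine, and $T_\varrho=\exp\bigl(i(\tfrac12-\varrho)\vartheta_{\mathrm{eff}}\partial_x\partial_y\bigr)$ annihilates all second-order terms. Hence $T_\varrho A_x=A_x$ on the level of functions.

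We still need the core-level intertwining
\begin{equation*}
T_\varrho L^{(\varrho)}_{A_x}T_\varrho^{-1}\phi=L^{(1/2)}_{A_x}\phi ,\qquad \phi\in\mathcal S(\mathbb R^2).
\end{equation*}
This can be checked directly from \eqref{eq:multiply-by-x-y}. For $\varrho$ one has $L^{(\varrho)}_y=y-i\varrho\vartheta_{\mathrm{eff}}\partial_x$, and for $\tfrac12$ one has $L^{(1/2)}_y=y-\tfrac i2\vartheta_{\mathrm{eff}}\partial_x$. Conjugating multiplication by $y$ with $T_\varrho$ produces $y+(\tfrac12-\varrho)\vartheta_{\mathrm{eff}}\,i\,\partial_x\cdot(-1)$, by the commutator $[\partial_y,y]=1$ and the Fourier-multiplier form \eqref{eq:ordering-change-operator-fourier}. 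This gives exactly the required shift, since $T_\varrho$ commutes with derivatives. Checking the sign convention in this one-line commutator computation is the only point needing care.

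Combining the two identities, for $\psi\in\mathcal S(\mathbb R^2)$ set $\phi:=T_\varrho\psi\in\mathcal S(\mathbb R^2)$. Then
\begin{equation*}
L^{(\varrho)}_{A^{(R)}_{x,\varrho}}\psi-L^{(\varrho)}_{A_x}\psi
=T_\varrho^{-1}\Bigl(\bigl((u_R-1)A_x\bigr)\star_{\vartheta_{\mathrm{eff}},1/2}\phi\Bigr).
\end{equation*}
Here we have used bilinearity of the Weyl--Moyal product on the core; this is legitimate because both $L^{(1/2)}_{u_RA_x}\phi$ and $L^{(1/2)}_{A_x}\phi$ are given by the same oscillatory-integral formula \eqref{eq:weyl-left-moyal-form}.

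Since $T_\varrho^{-1}$ is unitary on $L^2(\mathbb R^2)$, the norm of the right-hand side equals $\|((u_R-1)A_x)\star_{\vartheta_{\mathrm{eff}},1/2}\phi\|_{L^2}$. By Lemma~\ref{lem:weyl-cutoff-tail} with the affine function $p=A_x$, this tends to $0$ as $R\to\infty$. Repeating the argument with $p=A_y$ and $L_x$ in place of $L_y$ completes the proof.

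The main obstacle is justifying that the formula for $L^{(\varrho)}_{A_x}$ as a core-level operator (from \eqref{eq:multiply-by-x-y}) agrees with the $T_\varrho$-transport of the Weyl oscillatory-integral left multiplier used in Lemma~\ref{lem:weyl-cutoff-tail}. I would settle this by the explicit commutator computation above rather than by general symbol calculus.
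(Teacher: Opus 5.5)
You follow the paper's route: transport by $T_\varrho$ to $\varrho=\tfrac12$, use $T_\varrho A_x=A_x$, and apply Lemma~\ref{lem:weyl-cutoff-tail}. The paper simply asserts that the intertwining ``applies on the Schwartz core also for the affine multipliers.'' You rightly single this out as the step to check, but the check fails.

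Since $[\partial_x\partial_y,y]=\partial_x$ and $\partial_x$ commutes with $\partial_x\partial_y$, the operator $T_\varrho=\exp\bigl(i(\tfrac12-\varrho)\vartheta_{\mathrm{eff}}\partial_x\partial_y\bigr)$ gives
\[
T_\varrho\,y\,T_\varrho^{-1}=y+i\bigl(\tfrac12-\varrho\bigr)\vartheta_{\mathrm{eff}}\,\partial_x,
\qquad
T_\varrho L^{(\varrho)}_y T_\varrho^{-1}=y+i\bigl(\tfrac12-2\varrho\bigr)\vartheta_{\mathrm{eff}}\,\partial_x .
\]
This equals $L^{(1/2)}_y=y-\tfrac i2\vartheta_{\mathrm{eff}}\partial_x$ only when $\varrho=\tfrac12$. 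The sign you wrote is the one for $T_\varrho^{-1}yT_\varrho$.

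This is not a local slip. For the explicit product \eqref{eq:moyal-like-product}, which yields \eqref{eq:multiply-by-x-y} and hence the operator $L_{A_x}$ in $D_\infty^\circ$, the map intertwining $\star_{\vartheta_{\mathrm{eff}},\varrho}$ with $\star_{\vartheta_{\mathrm{eff}},1/2}$ is $T_\varrho^{-1}$, not $T_\varrho$. For example, $x\star_{\vartheta_{\mathrm{eff}},\varrho}y=xy+i(1-\varrho)\vartheta_{\mathrm{eff}}$, whereas $T_\varrho^{-1}\bigl((T_\varrho x)\star_{\vartheta_{\mathrm{eff}},1/2}(T_\varrho y)\bigr)=xy+i\varrho\vartheta_{\mathrm{eff}}$. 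Both sides are constant-coefficient bidifferential expressions, so the mismatch persists on Schwartz functions.

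Hence \eqref{eq:star-product-new-def}--\eqref{eq:ordering-intertwines-products} carry a sign error. The identity $L^{(\varrho)}_{A^{(R)}_{x,\varrho}}=T_\varrho^{-1}L^{(1/2)}_{u_RA_x}T_\varrho$ that you import is false as well, and the paper's proof has the same defect. Reinterpreting does not help either. The transport argument is valid exactly when both multipliers are defined through \eqref{eq:star-product-new-def}. But then, with $A_x=\alpha_x y$, one gets $L_{A_x}=\alpha_x\bigl(y-i(1-\varrho)\vartheta_{\mathrm{eff}}\partial_x\bigr)$. For $\varrho\neq\tfrac12$ this is not the operator \eqref{eq:multiply-by-x-y} that enters $D_\infty^\circ$ and the statement.

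The statement is still true for the explicit product, but the transport must use $T_\varrho^{-1}$, and then the coefficient no longer reduces to $(u_R-1)A_x$. For $\psi\in\mathcal S(\R^2)$,
\[
T_\varrho^{-1}\bigl(L_{A^{(R)}_{x,\varrho}}\psi-L_{A_x}\psi\bigr)
=\bigl((u_R-1)A_x+e_R\bigr)\star_{\vartheta_{\mathrm{eff}},1/2}\bigl(T_\varrho^{-1}\psi\bigr),
\qquad
e_R:=\bigl(T_\varrho^{-2}-1\bigr)(u_RA_x).
\]
The missing ingredient is control of $e_R$.

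Since $A_x$ is linear, $u_RA_x=R\,h(\cdot/R)$ with $h:=uA_x\in C_c^\infty(\R^2)$, and $\partial_x\partial_y$ scales like $R^{-2}$. Thus $e_R=R\,\bigl((e^{-2i(\frac12-\varrho)\vartheta_{\mathrm{eff}}R^{-2}\partial_x\partial_y}-1)h\bigr)(\cdot/R)$. The bound $|e^{i\epsilon\xi_1\xi_2}-1|\le|\epsilon\,\xi_1\xi_2|$ on the Fourier side then gives $\|\partial^\alpha e_R\|_\infty=O(R^{-1-|\alpha|})$.

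So $m_R:=(u_R-1)A_x+e_R$ still satisfies \eqref{eq:mR-symbol-bound} uniformly in $R\ge1$, and $m_R\to0$ pointwise. That is all the proof (not the statement) of Lemma~\ref{lem:weyl-cutoff-tail} uses. Running that proof for $m_R$ and using unitarity of $T_\varrho$ finishes the $x$-component; $A_y$ is identical.

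Alternatively, skip the transport. From \eqref{eq:moyal-like-product} directly,
$(f\star_{\vartheta_{\mathrm{eff}},\varrho}\psi)(z)=(2\pi)^{-2}\int e^{iz\cdot\xi}f\bigl(x-(1-\varrho)\vartheta_{\mathrm{eff}}\xi_2,\,y+\varrho\vartheta_{\mathrm{eff}}\xi_1\bigr)\widehat\psi(\xi)\,d\xi$.
The same dominated-convergence argument then applies to $A^{(R)}_{x,\varrho}-A_x=(u_R-1)A_x+(T_\varrho^{-1}-1)(u_RA_x)$. Its second piece is $O(R^{-1})$ in every $C^k_b$ norm by the same scaling.
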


\begin{proof} Since \(\partial_x\partial_y A_x=0\) and \(\partial_x\partial_y A_y=0,\) it follows that \(T_\varrho A_x=A_x\) and \(T_\varrho A_y=A_y\), where \(T_\varrho\) is defined in \eqref{eq:ordering-change-operator}.
Let \(\psi\in\mathcal S(\mathbb R^2)\), and set \(\widetilde\psi:=T_\varrho\psi.\)
Since \(T_\varrho\) is a continuous automorphism of
\(\mathcal S(\mathbb R^2)\), one has \(\widetilde\psi\in\mathcal S(\mathbb R^2).\)
Using the identity
\(T_\varrho(f\star_{\vartheta_{\mathrm{eff}},\varrho}g)
=
(T_\varrho f)\star_{\vartheta_{\mathrm{eff}},1/2}(T_\varrho g),\)
which applies on the Schwartz core also for the affine multipliers
\(A_x,A_y\), we obtain
\begin{equation}
T_\varrho
\left(
A^{(R)}_{x,\varrho}\star_{\vartheta_{\mathrm{eff}},\varrho}\psi
\right)
=
(u_R A_x)\star_{\vartheta_{\mathrm{eff}},1/2}\widetilde\psi,
\end{equation}
and
\begin{equation}
T_\varrho
\left(
A_x\star_{\vartheta_{\mathrm{eff}},\varrho}\psi
\right)
=
A_x\star_{\vartheta_{\mathrm{eff}},1/2}\widetilde\psi.
\end{equation}
Hence
\begin{equation}
T_\varrho
\left(
A^{(R)}_{x,\varrho}\star_{\vartheta_{\mathrm{eff}},\varrho}\psi
-
A_x\star_{\vartheta_{\mathrm{eff}},\varrho}\psi
\right)
=
((u_R-1)A_x)\star_{\vartheta_{\mathrm{eff}},1/2}\widetilde\psi.
\end{equation}
Similarly,
\begin{equation}
T_\varrho
\left(
A^{(R)}_{y,\varrho}\star_{\vartheta_{\mathrm{eff}},\varrho}\psi
-
A_y\star_{\vartheta_{\mathrm{eff}},\varrho}\psi
\right)
=
((u_R-1)A_y)\star_{\vartheta_{\mathrm{eff}},1/2}\widetilde\psi.
\end{equation}

By Lemma~\ref{lem:weyl-cutoff-tail}, applied to the affine functions
\(A_x\) and \(A_y\), the right-hand sides converge to zero in
\(L^2(\mathbb R^2)\). Since \(T_\varrho\) is unitary on \(L^2(\mathbb R^2)\),
it follows that
\begin{equation}
A^{(R)}_{x,\varrho}\star_{\vartheta_{\mathrm{eff}},\varrho}\psi
\to
A_x\star_{\vartheta_{\mathrm{eff}},\varrho}\psi \ \text{ and } \ 
A^{(R)}_{y,\varrho}\star_{\vartheta_{\mathrm{eff}},\varrho}\psi
\to
A_y\star_{\vartheta_{\mathrm{eff}},\varrho}\psi
\end{equation}
in \(L^2(\mathbb R^2)\). Equivalently,
\(L_{A^{(R)}_{x,\varrho}}\psi\to L_{A_x}\psi\) and \(L_{A^{(R)}_{y,\varrho}}\psi\to L_{A_y}\psi\) in \(L^2(\mathbb R^2)\).
\end{proof}

\begin{proposition}\label{prop:core-conv}
For every \(\Psi\in \mathcal S(\R^2)\otimes\C^2\),
\begin{equation}\label{eq:core-convergence}
D^{\varrho,r,s}_{R}\Psi \longrightarrow D_\infty^\circ \Psi\
\text{ in } \Hc \text{ as \(R\to\infty\).}
\end{equation}
\end{proposition}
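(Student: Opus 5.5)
The plan is to compare the two operators directly on the core, where they share the unperturbed part. Fix \(\Psi\in\mathcal S(\R^2)\otimes\C^2\). By \eqref{eq:DR-def} and \eqref{eq:Dinf-explicit}, both \(D^{\varrho,r,s}_R\Psi\) and \(D_\infty^\circ\Psi\) contain the same term \(D^{\prime\,r,s}\Psi\), so this term cancels in the difference and we get
\[
D^{\varrho,r,s}_R\Psi-D_\infty^\circ\Psi
=
-e\Bigl(\bigl(L_{A^{(R)}_{x,\varrho}}-L_{A_x}\bigr)\otimes\sigma_1
+\bigl(L_{A^{(R)}_{y,\varrho}}-L_{A_y}\bigr)\otimes\sigma_2\Bigr)\Psi .
\]
This identity is legitimate on the core: \(L_{A^{(R)}_{k,\varrho}}\) is bounded, and \(L_{A_k}\) is defined on \(\mathcal S(\R^2)\) by \eqref{eq:multiply-by-x-y}.

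Next, decompose \(\Psi=\psi_1\otimes e_1+\psi_2\otimes e_2\) with \(\psi_j\in\mathcal S(\R^2)\). Expanding the right-hand side gives a finite sum of terms of the form
\[
\bigl(L_{A^{(R)}_{k,\varrho}}\psi_j-L_{A_k}\psi_j\bigr)\otimes\sigma_\ell e_j,
\qquad k\in\{x,y\},\ \ell\in\{1,2\}.
\]
The Pauli matrices are fixed unitaries on \(\C^2\), so the \(\mathcal H\)-norm of each term is bounded by \(\|L_{A^{(R)}_{k,\varrho}}\psi_j-L_{A_k}\psi_j\|_{L^2}\).

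Lemma~\ref{lem:cutoff-multiplier-convergence} shows that each of these scalar differences tends to zero in \(L^2(\R^2)\) as \(R\to\infty\). By the triangle inequality, the whole difference then tends to zero in \(\mathcal H\), which is \eqref{eq:core-convergence}.

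The analytic work has already been done in Lemmas~\ref{lem:weyl-cutoff-tail} and \ref{lem:cutoff-multiplier-convergence}, so no step here is hard. The one point needing care is that \(D_\infty^\circ\) really has the form \eqref{eq:Dinf-explicit} on the core. This rests on Lemma~\ref{lem:coord-comm} holding on \(\mathcal S(\R^2)\otimes\C^2\) and on the same \(b_x,b_y\) being used in both Definition~\ref{def:A-R} and Definition~\ref{def:Dinf-formal}. Because the same \(b_x,b_y\) appear in both, the Clifford factors \(\sigma_1,\sigma_2\) match term by term, and no commutator terms survive.
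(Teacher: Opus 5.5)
Your proposal is correct and follows essentially the same route as the paper: cancel the common $D^{\prime\,r,s}\Psi$ term using \eqref{eq:BR-explicit} and \eqref{eq:Dinf-explicit}, reduce to the scalar convergence of Lemma~\ref{lem:cutoff-multiplier-convergence} on each spinor component, and conclude by linearity and the triangle inequality. The only cosmetic difference is that you decompose $\Psi=\psi_1\otimes e_1+\psi_2\otimes e_2$ in the standard basis, whereas the paper treats a general simple tensor $\psi\otimes\xi$ and then extends by linearity.
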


\begin{proof}
By linearity, it is enough to consider a simple tensor \(\Psi=\psi\otimes \xi\) with
\(\psi\in \mathcal S(\R^2)\) and \(\xi\in \C^2\). Using \eqref{eq:Dinf-explicit} and
\eqref{eq:BR-explicit}, we obtain
\begin{equation}
(D^{\varrho,r,s}_{R}-D_\infty^\circ)(\psi\otimes \xi)
=
-e\Bigl(
(L_{A_{x,\varrho}^{(R)}}-L_{A_x})\psi\otimes \sigma_1\xi
+
(L_{A_{y,\varrho}^{(R)}}-L_{A_y})\psi\otimes \sigma_2\xi
\Bigr).
\end{equation}
By Lemma~\ref{lem:cutoff-multiplier-convergence},
\begin{equation}
(L_{A_{x,\varrho}^{(R)}}-L_{A_x})\psi \to 0,
\qquad
(L_{A_{y,\varrho}^{(R)}}-L_{A_y})\psi \to 0
\end{equation}
in \(L^2(\R^2)\). Therefore
\begin{equation}
(D^{\varrho,r,s}_{R}-D_\infty^\circ)(\psi\otimes \xi)\to 0
\qquad
\text{in }\Hc.
\end{equation}
By linearity, the same holds for every
\(\Psi\in \mathcal S(\R^2)\otimes\C^2\), proving \eqref{eq:core-convergence}.
\end{proof}

\begin{theorem}\label{thm:strong-resolvent}
For every \((\varrho,r,s)\in\mathcal P_{\mathrm{gauge}}\) and every \(z\in \C\setminus \R\),
\begin{equation}\label{eq:strong-resolvent}
\mathrm{s}\!-\!\lim_{R\to\infty}(D^{\varrho,r,s}_{R}-z\mathbf 1_{\Hc})^{-1}
=
(D_\infty-z\mathbf 1_{\Hc})^{-1}.
\end{equation}
That is, \(D^{\varrho,r,s}_{R}\to D_\infty\) in the strong resolvent sense as \(R\to\infty\).
\end{theorem}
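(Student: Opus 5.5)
The plan is to invoke the standard core criterion for strong resolvent convergence (Reed--Simon, Thm.~VIII.25(a)). If \(T_n\) and \(T\) are self-adjoint, \(\mathcal D\) is a core for \(T\), and \(T_n\Psi\to T\Psi\) for every \(\Psi\in\mathcal D\), then \(T_n\to T\) in the strong resolvent sense. This presupposes that \(\mathcal D\subset\Dom(T_n)\) for all \(n\), or at least for all large \(n\).

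We take \(T_R=D^{\varrho,r,s}_R\), \(T=D_\infty\) and \(\mathcal D=\mathcal S(\R^2)\otimes\C^2\), and check the hypotheses in turn.
\begin{itemize}
\item By Theorem~\ref{thm:selfadj}, each \(D^{\varrho,r,s}_R\) is self-adjoint on \(\Dom(D^{\prime\,r,s})\).
\item The core \(\mathcal S(\R^2)\otimes\C^2\) lies in \(\Dom(D^{\prime\,r,s})\), because it is a core for \(D^{\prime\,r,s}\) (Proposition~\ref{prop:IV1}). Hence it lies in \(\Dom(D^{\varrho,r,s}_R)\) for every \(R\).
\item By Corollary~\ref{cor:Dinf-closure}, \(D_\infty\) is self-adjoint.
\item By Theorem~\ref{thm:Dinf-esa}, \(\mathcal S(\R^2)\otimes\C^2\) is a core for \(D_\infty\), and \(D_\infty\Psi=D^\circ_\infty\Psi\) there.
\item Proposition~\ref{prop:core-conv} gives the pointwise convergence \(D^{\varrho,r,s}_R\Psi\to D_\infty\Psi\) on the core.
\end{itemize}

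For completeness the criterion can be proved directly, in the form needed here. Fix \(z\in\C\setminus\R\) and \(\Phi\in\Hc\). Since the core is a core for the self-adjoint operator \(D_\infty\), the set \((D_\infty-z)(\mathcal S(\R^2)\otimes\C^2)\) is dense in \(\Hc\). Take first \(\Phi=(D_\infty-z)\Psi\) with \(\Psi\) in the core. Then
\[
(D_R-z)^{-1}\Phi-(D_\infty-z)^{-1}\Phi=(D_R-z)^{-1}(D_\infty-D_R)\Psi,
\]
and its norm is at most \(|\operatorname{Im}z|^{-1}\|(D_R-D^\circ_\infty)\Psi\|\to0\). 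For general \(\Phi\), approximate by such vectors and use the uniform bound \(\|(D_R-z)^{-1}\|\le|\operatorname{Im}z|^{-1}\) to run an \(\varepsilon/3\) argument.

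The only delicate point is the density of \((D_\infty-z)\mathcal D\). It follows because the closure of \((D^\circ_\infty-z)\) restricted to \(\mathcal D\) is \(D_\infty-z\), which is surjective since \(D_\infty\) is self-adjoint. The range of a closable operator's closure lies in the closure of its range, so \((D_\infty-z)\mathcal D\) is dense. All the analytic work, namely the uniform symbol bounds for the cutoff tails, is already contained in Lemma~\ref{lem:weyl-cutoff-tail}, so I do not expect a real obstacle here.
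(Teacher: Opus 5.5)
Your proposal is correct and follows the paper's proof. Both apply Reed--Simon Thm.~VIII.25(a) on the core \(\mathcal S(\R^2)\otimes\C^2\), using Theorem~\ref{thm:selfadj}, Corollary~\ref{cor:Dinf-closure} and Proposition~\ref{prop:core-conv}; your remarks that the inclusion \(\mathcal S(\R^2)\otimes\C^2\subset\Dom(D^{\prime\,r,s})\) suffices (the paper instead checks the full common-core property via equivalent graph norms), and that your direct \(\varepsilon/3\) argument handles \(R\to\infty\) without passing through sequences, are valid minor refinements.
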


\begin{proof}
Let \((R_n)_{n\ge 1}\) be any sequence with \(R_n\to\infty\). By
Theorem~\ref{thm:selfadj}, each \(D_{R_n}\) is self-adjoint on
\(\Dom(D^{\prime\,r,s})\). By Corollary~\ref{cor:Dinf-closure}, \(D_\infty\) is
self-adjoint.

Moreover, \(\mathcal S(\R^2)\otimes\C^2\) is a common core for all \(D_{R_n}\) and a
core for \(D_\infty\). Indeed, \(\mathcal S(\R^2)\otimes\C^2\) is a core for
\(D^{\prime\,r,s}\), and each \(D_{R_n}=D^{\prime\,r,s}+B^{(\varrho)}_{R_n}\) differs from
\(D^{\prime\,r,s}\) by a bounded operator. Hence the graph norms of
\(D^{\prime\,r,s}\) and \(D_{R_n}\) are equivalent on \(\Dom(D^{\prime\,r,s})\), so
\(\mathcal S(\R^2)\otimes\C^2\) is also a core for \(D_{R_n}\).\\
By Proposition~\ref{prop:core-conv}, for every
\(\Psi\in \mathcal S(\R^2)\otimes\C^2\),
\begin{equation}
D_{R_n}\Psi \longrightarrow D_\infty^\circ\Psi
\qquad
(n\to\infty).
\end{equation}
Since \(D_\infty\) is the closure of \(D_\infty^\circ\), it follows that
\begin{equation}
D_\infty\Psi=D_\infty^\circ\Psi,
\qquad
\forall\,\Psi\in \mathcal S(\R^2)\otimes\C^2.
\end{equation}
Hence
\begin{equation}
D_{R_n}\Psi \longrightarrow D_\infty\Psi,
\qquad
\forall\,\Psi\in \mathcal S(\R^2)\otimes\C^2.
\end{equation}

We now apply Theorem~VIII.25(a) of Reed--Simon \cite[p.~292]{reed1980functional}: if
\(\{A_n\}\) and \(A\) are self-adjoint operators and \(\mathcal D\) is a common core for all
\(A_n\) and for \(A\), such that \(A_n\phi\to A\phi\) for every \(\phi\in\mathcal D\),
then \(A_n\to A\) in the strong resolvent sense. Taking
\begin{equation}
A_n=D_{R_n},
\qquad
A=D_\infty,
\qquad
\mathcal D=\mathcal S(\R^2)\otimes\C^2,
\end{equation}
we conclude that
\begin{equation}
\mathrm{s}\!-\!\lim_{n\to\infty}(D_{R_n}-z\mathbf 1_{\Hc})^{-1}
=
(D_\infty-z\mathbf 1_{\Hc})^{-1},
\qquad
\forall\,z\in \C\setminus \R.
\end{equation}
Since every sequence \(R_n\to\infty\) yields the same strong resolvent limit \(D_\infty\), it
follows that the family \((D^{\varrho,r,s}_{R})_{R>0}\) converges to \(D_\infty\) in the strong resolvent
sense as \(R\to\infty\). This proves \eqref{eq:strong-resolvent}.
\end{proof}
Thus, the family of cutoff perturbed Dirac operators
\((D^{\varrho,r,s}_R)_{R>0}\), constructed using self-adjoint cutoff
coefficients in the \(\varrho\)-realization, converges to the minimally
coupled Dirac operator \(D_\infty\) in the strong resolvent sense as \(R\to\infty\). In this precise operator-theoretic sense, the cutoff can be removed at the level of the Dirac operator.

\section{Conclusion and future directions}\label{sec:conclusion}

In this work we constructed a locally compact non-unital spectral-triple
framework for a noncommutative planar system determined by a fixed
nondegenerate irreducible unitary sector of \(G_{\mathrm{NC}}\). The sector is
labelled by central parameters
\begin{equation}
(\hbar_0,\vartheta_0,B_0),
\qquad
\hbar_0\neq 0,
\quad
\vartheta_0\neq 0,
\quad
B_0\neq 0,
\quad
\hbar_0-\vartheta_0B_0\neq 0.
\end{equation}
This sector is not an ordinary Landau sector, since the coordinate operators
also satisfy a nontrivial commutation relation. At the same time, its kinematic
momenta obey the Landau-type relation
\begin{equation}
[\Pi_x,\Pi_y]=i\hbar_0B_0I.
\end{equation}
Thus the fixed sector combines intrinsic coordinate noncommutativity with a
momentum commutator governed by the central parameter \(B_0\). This
representation-theoretic input is the starting point for the spectral geometry
developed in the paper.

Starting from the two-parameter family \((r,s)\) of unitarily equivalent
realizations of this fixed sector, we constructed a corresponding two-parameter
family of even locally compact non-unital spectral triples
\begin{equation}
(\mathcal S_{\hbar_0,\vartheta_0,B_0},\mathcal H,
D^{r,s}_{\hbar_0,\vartheta_0,B_0}).
\end{equation}
The different \((r,s)\)-realizations do not change the underlying sector; they
give unitarily equivalent representations of the same irreducible representation.
The associated unperturbed Dirac operators are intertwined by unitary
equivalences and are therefore isospectral. Their Landau-type form gives rise
to Landau-type spectral levels, each occurring with infinite multiplicity. This
places the construction naturally in the non-unital locally compact setting.
Accordingly, for \(a\in \mathcal S_{\hbar_0,\vartheta_0,B_0}\) the relevant
compactness condition is
\begin{equation}
\pi(a)(D^{r,s}_{\hbar_0,\vartheta_0,B_0}-\lambda)^{-1}
\in \mathcal K(\mathcal H),
\qquad
\lambda\notin \operatorname{spec}(D^{r,s}_{\hbar_0,\vartheta_0,B_0}).
\end{equation}
The proof uses the Weyl-calculus description of the represented Schwartz
algebra and the boundedness of the resolvent. This gives the locally compact
spectral triples associated with the fixed \(G_{\mathrm{NC}}\)-sector.

We then passed to the corresponding Moyal-plane description. The effective
Moyal deformation scale is fixed by the chosen \(G_{\mathrm{NC}}\)-sector:
\begin{equation}
\vartheta_{\mathrm{eff}}
=
\frac{\vartheta_0}{1-\vartheta_0B_0/\hbar_0}.
\end{equation}
The additional parameter \(\varrho\) labels the star-product realization of the
resulting Moyal algebra
\begin{equation}
\mathcal A_{\vartheta_{\mathrm{eff}},\varrho}
=
\bigl(S(\mathbb R^2),\star_{\vartheta_{\mathrm{eff}},\varrho},{}^{*_{\varrho}}\bigr).
\end{equation}
Thus the construction separates the realization parameters \((r,s)\), the
fixed effective Moyal scale \(\vartheta_{\mathrm{eff}}\), and the star-product
realization parameter \(\varrho\). The sector data determine the noncommutative
background, \((r,s)\) changes only its realization, and \(\varrho\) belongs to
the Moyal product convention.

We also introduced external \(U(1)_\star\)-gauge potentials through localized
bounded perturbations. The relevant gauge potentials are affine functions of
the coordinate variables and therefore lie outside the Schwartz algebra. We
introduced smooth cutoffs and obtained bounded self-adjoint perturbations
\(B_R^{(\varrho)}\). For every cutoff radius \(R>0\), for every gauge-admissible
choice \((\varrho,r,s)\in\mathcal P_{\mathrm{gauge}}\), the localized perturbed operator
\begin{equation}
D_R^{\varrho,r,s}
=
D^{\prime\,r,s}+B_R^{(\varrho)}
\end{equation}
defines a locally compact non-unital spectral triple
\begin{equation}
(\mathcal A_{\vartheta_{\mathrm{eff}},\varrho},\mathcal H,D_R^{\varrho,r,s}).
\end{equation}
The cutoff construction therefore gives a rigorous family of spectral triples
approximating the minimally-coupled operator associated with the external
\(U(1)_\star\)-gauge field.

The cutoff was then removed at the level of self-adjoint operators. As
\(R\to\infty\), for each \((\varrho,r,s)\in\mathcal P_{\mathrm{gauge}}\), the family \(D_R^{\varrho,r,s}\) converges in the strong resolvent sense to a self-adjoint limiting operator \(D_\infty\), identified as the closure of the formal minimally-coupled operator \(D^\circ_\infty\) defined on the Schwartz core:
\begin{equation}
D_\infty=\overline{D^\circ_\infty}.
\end{equation}
Thus the finite-cutoff operators provide the perturbed locally compact spectral
triples established in this paper, while the limiting operator is obtained as a
strong resolvent limit. The full spectral-triple analysis of the limiting
operator remains a separate analytic problem.

A natural next step is to determine whether \(D_\infty\) itself defines a
locally compact non-unital spectral triple over
\(\mathcal A_{\vartheta_{\mathrm{eff}},\varrho}\). This requires proving, for
every \(a\in\mathcal A_{\vartheta_{\mathrm{eff}},\varrho}\), boundedness of
\begin{equation}
[D_\infty,\pi^M_{2,\varrho}(a)]
\end{equation}
and local compactness,
\begin{equation}
\pi^M_{2,\varrho}(a)(D_\infty-i\mathbf 1_{\mathcal H})^{-1}
\in\mathcal K(\mathcal H).
\end{equation}
Such a result would require a direct analysis of the domain of \(D_\infty\), the
behavior of the affine gauge potentials without cutoffs, and the local
compactness problem for the Moyal algebra.

A second direction is to extend the present \(U(1)_\star\)-construction by
adding an operator-valued internal \(SU(2)\) connection on the same Hilbert
space. In this extension, the external \(U(1)_\star\)-gauge field would continue
to be implemented through localized perturbations of the base
Dirac operator over the Moyal algebra, while the internal \(SU(2)\) connection would act nontrivially
on the spinor factor with coefficients built from the represented coordinate
operators \(X^s_{\hbar_0,\vartheta_0,B_0}\) and
\(Y^s_{\hbar_0,\vartheta_0,B_0}\). The square of the resulting Dirac-type
operator is expected to decompose into terms involving the background
\(G_{\mathrm{NC}}\) contribution, the Abelian curvature of the
\(U(1)_\star\)-field, and the internal \(SU(2)\) curvature. This would provide a
natural setting for studying noncommutative spin-orbit-type couplings in the
same representation-theoretic framework.

A further continuation is the noncommutative Aharonov--Casher construction. In
that setting one introduces a scalar potential \(A_0\), obtains the
noncommutative electric field from
\begin{equation}
E_j=-F^{(U(1)_\star)}_{0j},
\end{equation}
and builds an internal \(SU(2)\)-type connection from this electric field in the
\(i\sigma_3\)-component. This would place the Aharonov--Casher mechanism within
the same fixed \(G_{\mathrm{NC}}\)-sector and Moyal-plane spectral-geometric
framework.

Overall, the present paper establishes a concrete locally compact spectral
geometry attached to a fixed nondegenerate \(G_{\mathrm{NC}}\)-sector. Its
novelty lies not in the abstract notion of a non-unital spectral triple, but in
deriving such a structure from NCQM representation theory, separating the
sector parameters from the realization and star-product parameters, and proving
stability under localized \(U(1)_\star\)-gauge perturbations. The finite-cutoff
spectral triples and their strong resolvent convergence to the limiting
minimally-coupled operator provide a rigorous analytic connection between NCQM
kinematics, Moyal gauge theory, and noncompact spectral geometry.

\subsection*{Acknowledgement}
T. K. S would like to thank his family and friends for their continuous support.
M. R. J would like to thank Onirban Islam for insightful discussions and gratefully acknowledges his help with references.

\section{Appendix}
We collect here several computational and technical details used throughout the paper. These include derivations of the Weyl relations, the twisted product formula, covariance 
identities for the represented algebra, and auxiliary analytic arguments supporting the spectral-triple constructions. The material is included to keep the main exposition focused 
while making the paper self-contained.

\subsection{Proof of the Weyl commutation relations}\label{App:weyl}

We use \eqref{eq:unitary_actions} throughout.

First, for $U(q_1)$ and $U(q_3)$, we compute
\begin{equation*}
\begin{aligned}
(U(q_1)U(q_3)f)(x,y)
&= e^{-\frac{iB_0(1-r)}{r\vartheta_0 B_0-\hbar_0}q_1y}
   \,(U(q_3)f)\!\left(
      x+\frac{\vartheta_0 B_0(r+s-rs)-\hbar_0}{r\vartheta_0 B_0-\hbar_0}q_1,
      y
   \right) \\
&= e^{-\frac{iB_0(1-r)}{r\vartheta_0 B_0-\hbar_0}q_1y}
   e^{\frac{i}{\hbar_0}q_3\left(x+\frac{\vartheta_0 B_0(r+s-rs)-\hbar_0}{r\vartheta_0 B_0-\hbar_0}q_1\right)} \\
&\qquad \times
   f\!\left(
      x+\frac{\vartheta_0 B_0(r+s-rs)-\hbar_0}{r\vartheta_0 B_0-\hbar_0}q_1,
      y-s\frac{\vartheta_0}{\hbar_0}q_3
   \right),
\end{aligned}
\end{equation*}
whereas
\begin{equation*}
\begin{aligned}
(U(q_3)U(q_1)f)(x,y)
&= e^{\frac{i}{\hbar_0}q_3x}\,(U(q_1)f)\!\left(x,y-s\frac{\vartheta_0}{\hbar_0}q_3\right) \\
&= e^{\frac{i}{\hbar_0}q_3x}
   e^{-\frac{iB_0(1-r)}{r\vartheta_0 B_0-\hbar_0}q_1\left(y-s\frac{\vartheta_0}{\hbar_0}q_3\right)} \\
&\qquad \times
   f\!\left(
      x+\frac{\vartheta_0 B_0(r+s-rs)-\hbar_0}{r\vartheta_0 B_0-\hbar_0}q_1,
      y-s\frac{\vartheta_0}{\hbar_0}q_3
   \right).
\end{aligned}
\end{equation*}
Comparing the two expressions yields
\begin{equation}\label{eq:appendix-weyl-13}
\begin{aligned}
U(q_1)U(q_3)
&=
e^{\frac{i}{\hbar_0}q_3\left(\frac{\vartheta_0 B_0(r+s-rs)-\hbar_0}{r\vartheta_0 B_0-\hbar_0}q_1\right)
-\frac{iB_0(1-r)}{r\vartheta_0 B_0-\hbar_0}q_1\left(s\frac{\vartheta_0}{\hbar_0}q_3\right)}
\,U(q_3)U(q_1) \\
&= e^{\frac{i}{\hbar_0}q_1q_3}\,U(q_3)U(q_1).
\end{aligned}
\end{equation}

Next, for $U(q_2)$ and $U(q_4)$,
\begin{equation*}
\begin{aligned}
(U(q_2)U(q_4)f)(x,y)
&= e^{-\frac{irB_0}{\hbar_0}q_2x}
   \,(U(q_4)f)\!\left(
      x,
      y-\frac{r\vartheta_0 B_0(1-s)-\hbar_0}{\hbar_0}q_2
   \right) \\
&= e^{-\frac{irB_0}{\hbar_0}q_2x}
   e^{\frac{i}{\hbar_0}q_4\left(y-\frac{r\vartheta_0 B_0(1-s)-\hbar_0}{\hbar_0}q_2\right)} \\
&\qquad \times
   f\!\left(
      x+(1-s)\frac{\vartheta_0}{\hbar_0}q_4,
      y-\frac{r\vartheta_0 B_0(1-s)-\hbar_0}{\hbar_0}q_2
   \right),
\end{aligned}
\end{equation*}
while
\begin{equation*}
\begin{aligned}
(U(q_4)U(q_2)f)(x,y)
&= e^{\frac{i}{\hbar_0}q_4y}
   \,(U(q_2)f)\!\left(x+(1-s)\frac{\vartheta_0}{\hbar_0}q_4,y\right) \\
&= e^{\frac{i}{\hbar_0}q_4y}
   e^{-\frac{irB_0}{\hbar_0}q_2\left(x+(1-s)\frac{\vartheta_0}{\hbar_0}q_4\right)} \\
&\qquad \times
   f\!\left(
      x+(1-s)\frac{\vartheta_0}{\hbar_0}q_4,
      y-\frac{r\vartheta_0 B_0(1-s)-\hbar_0}{\hbar_0}q_2
   \right).
\end{aligned}
\end{equation*}
Hence
\begin{equation}\label{eq:appendix-weyl-24}
\begin{aligned}
U(q_2)U(q_4)
&=
e^{\frac{irB_0}{\hbar_0}q_2\frac{\vartheta_0(1-s)}{\hbar_0}q_4
-\frac{i}{\hbar_0}q_4\frac{r\vartheta_0 B_0(1-s)-\hbar_0}{\hbar_0}q_2}
\,U(q_4)U(q_2) \\
&= e^{\frac{i}{\hbar_0}q_2q_4}\,U(q_4)U(q_2).
\end{aligned}
\end{equation}

Now consider $U(q_1)$ and $U(q_2)$. We have
\begin{equation*}
\begin{aligned}
(U(q_1)U(q_2)f)(x,y)
&= e^{-\frac{iB_0(1-r)}{r\vartheta_0 B_0-\hbar_0}q_1y}
   \,(U(q_2)f)\!\left(
      x+\frac{\vartheta_0 B_0(r+s-rs)-\hbar_0}{r\vartheta_0 B_0-\hbar_0}q_1,
      y
   \right) \\
&= e^{-\frac{iB_0(1-r)}{r\vartheta_0 B_0-\hbar_0}q_1y}
   e^{-\frac{irB_0}{\hbar_0}q_2\left(x+\frac{\vartheta_0 B_0(r+s-rs)-\hbar_0}{r\vartheta_0 B_0-\hbar_0}q_1\right)} \\
&\qquad \times
   f\!\left(
      x+\frac{\vartheta_0 B_0(r+s-rs)-\hbar_0}{r\vartheta_0 B_0-\hbar_0}q_1,
      y-\frac{r\vartheta_0 B_0(1-s)-\hbar_0}{\hbar_0}q_2
   \right),
\end{aligned}
\end{equation*}
whereas
\begin{equation*}
\begin{aligned}
(U(q_2)U(q_1)f)(x,y)
&= e^{-\frac{irB_0}{\hbar_0}q_2x}
   \,(U(q_1)f)\!\left(
      x,
      y-\frac{r\vartheta_0 B_0(1-s)-\hbar_0}{\hbar_0}q_2
   \right) \\
&= e^{-\frac{irB_0}{\hbar_0}q_2x}
   e^{-\frac{iB_0(1-r)}{r\vartheta_0 B_0-\hbar_0}q_1
   \left(y-\frac{r\vartheta_0 B_0(1-s)-\hbar_0}{\hbar_0}q_2\right)} \\
&\qquad \times
   f\!\left(
      x+\frac{\vartheta_0 B_0(r+s-rs)-\hbar_0}{r\vartheta_0 B_0-\hbar_0}q_1,
      y-\frac{r\vartheta_0 B_0(1-s)-\hbar_0}{\hbar_0}q_2
   \right).
\end{aligned}
\end{equation*}
Therefore,
\begin{equation}\label{eq:appendix-weyl-12}
\begin{aligned}
U(q_1)U(q_2)
&=
e^{-\frac{irB_0}{\hbar_0}q_2
\frac{\vartheta_0 B_0(r+s-rs)-\hbar_0}{r\vartheta_0 B_0-\hbar_0}q_1
-\frac{iB_0(1-r)}{r\vartheta_0 B_0-\hbar_0}q_1
\frac{r\vartheta_0 B_0(1-s)-\hbar_0}{\hbar_0}q_2}
\,U(q_2)U(q_1) \\
&= e^{-\frac{iB_0}{\hbar_0}q_1q_2}\,U(q_2)U(q_1).
\end{aligned}
\end{equation}

Finally, for $U(q_3)$ and $U(q_4)$,
\begin{equation*}
\begin{aligned}
(U(q_3)U(q_4)f)(x,y)
&= e^{\frac{i}{\hbar_0}q_3x}
   \,(U(q_4)f)\!\left(x,y-s\frac{\vartheta_0}{\hbar_0}q_3\right) \\
&= e^{\frac{i}{\hbar_0}q_3x}
   e^{\frac{i}{\hbar_0}q_4\left(y-s\frac{\vartheta_0}{\hbar_0}q_3\right)}
   f\!\left(
      x+(1-s)\frac{\vartheta_0}{\hbar_0}q_4,
      y-s\frac{\vartheta_0}{\hbar_0}q_3
   \right),
\end{aligned}
\end{equation*}
while
\begin{equation*}
\begin{aligned}
(U(q_4)U(q_3)f)(x,y)
&= e^{\frac{i}{\hbar_0}q_4y}
   \,(U(q_3)f)\!\left(x+(1-s)\frac{\vartheta_0}{\hbar_0}q_4,y\right) \\
&= e^{\frac{i}{\hbar_0}q_4y}
   e^{\frac{i}{\hbar_0}q_3\left(x+(1-s)\frac{\vartheta_0}{\hbar_0}q_4\right)}
   f\!\left(
      x+(1-s)\frac{\vartheta_0}{\hbar_0}q_4,
      y-s\frac{\vartheta_0}{\hbar_0}q_3
   \right).
\end{aligned}
\end{equation*}
Thus
\begin{equation}\label{eq:appendix-weyl-34}
\begin{aligned}
U(q_3)U(q_4)
&=
e^{-\frac{i}{\hbar_0}q_4\frac{s\vartheta_0}{\hbar_0}q_3
-\frac{i}{\hbar_0}q_3\frac{(1-s)\vartheta_0}{\hbar_0}q_4}
\,U(q_4)U(q_3) \\
&= e^{-\frac{i\vartheta_0}{\hbar_0^2}q_3q_4}\,U(q_4)U(q_3).
\end{aligned}
\end{equation}

It remains to verify the two trivial commutation relations. Using \eqref{eq:unitary_actions}, we have
\begin{equation*}
\begin{aligned}
(U(q_1)U(q_4)f)(x,y)
&= e^{-\frac{iB_0(1-r)}{r\vartheta_0B_0-\hbar_0}q_1y}
   (U(q_4)f)\!\left(
x+\frac{\vartheta_0B_0(r+s-rs)-\hbar_0}{r\vartheta_0B_0-\hbar_0}q_1,y\right) \\
&= e^{-\frac{iB_0(1-r)}{r\vartheta_0B_0-\hbar_0}q_1y}
   e^{\frac{i}{\hbar_0}q_4y}
   f\!\left(
x+\frac{\vartheta_0B_0(r+s-rs)-\hbar_0}{r\vartheta_0B_0-\hbar_0}q_1
+(1-s)\frac{\vartheta_0}{\hbar_0}q_4,y\right),
\end{aligned}
\end{equation*}
while
\begin{equation*}
\begin{aligned}
(U(q_4)U(q_1)f)(x,y)
&= e^{\frac{i}{\hbar_0}q_4y}
   (U(q_1)f)\!\left(x+(1-s)\frac{\vartheta_0}{\hbar_0}q_4,y\right) \\
&= e^{\frac{i}{\hbar_0}q_4y}
   e^{-\frac{iB_0(1-r)}{r\vartheta_0B_0-\hbar_0}q_1y}
   f\!\left(
x+(1-s)\frac{\vartheta_0}{\hbar_0}q_4
+\frac{\vartheta_0B_0(r+s-rs)-\hbar_0}{r\vartheta_0B_0-\hbar_0}q_1,y\right).
\end{aligned}
\end{equation*}
Hence \begin{equation} \label{eq:appendix-weyl-14}
U(q_1)U(q_4)=U(q_4)U(q_1).\end{equation}

Similarly,
\begin{equation*}
\begin{aligned}
(U(q_2)U(q_3)f)(x,y)
&= e^{-\frac{irB_0}{\hbar_0}q_2x}
   (U(q_3)f)\!\left(x,y-\frac{r\vartheta_0B_0(1-s)-\hbar_0}{\hbar_0}q_2\right) \\
&= e^{-\frac{irB_0}{\hbar_0}q_2x}
   e^{\frac{i}{\hbar_0}q_3x}
   f\!\left(
x,
y-\frac{r\vartheta_0B_0(1-s)-\hbar_0}{\hbar_0}q_2
-s\frac{\vartheta_0}{\hbar_0}q_3\right),
\end{aligned}
\end{equation*}
whereas
\begin{equation*}
\begin{aligned}
(U(q_3)U(q_2)f)(x,y)
&= e^{\frac{i}{\hbar_0}q_3x}
   (U(q_2)f)\!\left(x,y-s\frac{\vartheta_0}{\hbar_0}q_3\right) \\
&= e^{\frac{i}{\hbar_0}q_3x}
   e^{-\frac{irB_0}{\hbar_0}q_2x}
   f\!\left(
x,
y-s\frac{\vartheta_0}{\hbar_0}q_3
-\frac{r\vartheta_0B_0(1-s)-\hbar_0}{\hbar_0}q_2\right).
\end{aligned}
\end{equation*}
Therefore \begin{equation} \label{eq:appendix-weyl-23}
  U(q_2)U(q_3)=U(q_3)U(q_2).  
\end{equation}

Equations \eqref{eq:appendix-weyl-13}--\eqref{eq:appendix-weyl-23} are precisely the required Weyl commutation relations.

\subsection{Proof of the product formula for \texorpdfstring{$U(\mathbf{q})$}{U(q)}}\label{app:weyl_product}
Recall that
\[
U(\mathbf{q}) = U_1(q_1)U_2(q_2)U_3(q_3)U_4(q_4),
\qquad
U(\mathbf{q}') = U_1(q_1')U_2(q_2')U_3(q_3')U_4(q_4').
\]
Using the commutation relations \eqref{eq:all_weyl_commutation_relation}, we move the primed factors to the left until like terms are adjacent. For the first steps,
\begin{equation*}
\begin{aligned}
U(\mathbf{q})U(\mathbf{q}')
&= U_1(q_1)U_2(q_2)U_3(q_3)U_4(q_4)
   U_1(q_1')U_2(q_2')U_3(q_3')U_4(q_4') \\
&= U_1(q_1)U_2(q_2)U_3(q_3)U_1(q_1')
   U_4(q_4)U_2(q_2')U_3(q_3')U_4(q_4') \\
&= e^{2\pi i q_1'\tau_{13}q_3}
   U_1(q_1)U_2(q_2)U_1(q_1')
   U_3(q_3)U_4(q_4)U_2(q_2')U_3(q_3')U_4(q_4') \\
&= e^{2\pi i(q_1'\tau_{13}q_3+q_1'\tau_{12}q_2)}
   U_1(q_1)U_1(q_1')U_2(q_2)U_3(q_3)U_4(q_4)
   U_2(q_2')U_3(q_3')U_4(q_4') \\
&= e^{2\pi i(q_1'\tau_{13}q_3+q_1'\tau_{12}q_2)}
   U_1(q_1+q_1')U_2(q_2)U_3(q_3)U_4(q_4)
   U_2(q_2')U_3(q_3')U_4(q_4').
\end{aligned}
\end{equation*}
Continuing in the same way for the remaining factors, one accumulates the phase
\[
2\pi i\sum_{n<m}q_n'\tau_{nm}q_m,
\]
and the product of the unitaries collapses to
\[
U_1(q_1+q_1')U_2(q_2+q_2')U_3(q_3+q_3')U_4(q_4+q_4')
= U(\mathbf{q}+\mathbf{q}').
\]
Hence
\begin{equation}\label{eq:appendix-product-formula}
U(\mathbf{q})U(\mathbf{q}')
=
e^{2\pi i\sum_{n<m}q_n'\tau_{nm}q_m}\,
U(\mathbf{q}+\mathbf{q}').
\end{equation}

\subsection{Relation between the cocycle matrix and the local exponents} \label{App:matrix}

Using the skew-symmetric matrix $\tau_{nm}$ from \eqref{eq:skewsymmatrix}, we expand
\begin{equation*}
\begin{aligned}
\sum_{n,m}p_n\tau_{nm}q_m
&= \frac{B_0}{2\pi\hbar_0}p_1q_2
   -\frac{1}{2\pi\hbar_0}p_1q_3
   -\frac{B_0}{2\pi\hbar_0}p_2q_1
   -\frac{1}{2\pi\hbar_0}p_2q_4
   +\frac{1}{2\pi\hbar_0}p_3q_1 \\
&\quad
   +\frac{\vartheta_0}{2\pi\hbar_0^2}p_3q_4
   +\frac{1}{2\pi\hbar_0}p_4q_2
   -\frac{\vartheta_0}{2\pi\hbar_0^2}p_4q_3.
\end{aligned}
\end{equation*}
Rearranging terms gives
\begin{equation}\label{eq:appendix-tau-expanded}
\begin{aligned}
\sum_{n,m}p_n\tau_{nm}q_m
&= \frac{1}{2\pi\hbar_0}(q_1p_3+q_2p_4-q_3p_1-q_4p_2) \\
&\quad
   -\frac{B_0}{2\pi\hbar_0}(q_1p_2-p_1q_2)
   -\frac{\vartheta_0}{2\pi\hbar_0^2}(q_3p_4-p_3q_4).
\end{aligned}
\end{equation}
Now Theorem IV.1 of \cite{chowdhuryali2013symmetry} identifies the three antisymmetric bilinear expressions above with the local exponents $\xi$, $\xi'$ and $\xi''$. Therefore,
\begin{equation}\label{eq:appendix-local-exponents}
\sum_{n,m}p_n\tau_{nm}q_m
=
\frac{1}{\pi\hbar_0}\xi(\mathbf{q},\mathbf{p})
-\frac{\vartheta_0}{\pi\hbar_0^2}\xi'(\mathbf{q},\mathbf{p})
-\frac{B_0}{\pi\hbar_0}\xi''(\mathbf{q},\mathbf{p}),
\end{equation}
where $\mathbf{q},\mathbf{p}\in\R^4$.

\subsection{Proof of Proposition~\ref{prop:twisted_banach_algebra_representation} and Corollary~\ref{cor:rho-faithful-extension}}
\label{App:Proof_of_proposition_II.1}

\paragraph{Proof of Proposition~\ref{prop:twisted_banach_algebra_representation}.}
For the constants
\begin{equation}\label{eq:app-propII1-coeffs-a}
\begin{aligned}
a &:= \frac{i}{\hbar_0},
\qquad
b := \frac{iB_0(1-r)}{r\vartheta_0 B_0-\hbar_0},
\qquad
c := \frac{irB_0}{\hbar_0},\\
d &:= i\left[\frac{1}{2\hbar_0}+\frac{s\vartheta_0B_0(1-r)}{\hbar_0(r\vartheta_0B_0-\hbar_0)}\right],\\
h &:= i\left[\frac{1}{2\hbar_0}-\frac{r\vartheta_0B_0(1-s)}{\hbar_0^2}\right],
\qquad
m := i\left(s-\frac12\right)\frac{\vartheta_0}{\hbar_0^2},\\
n &:= i\left[-\frac{B_0}{2\hbar_0}+\frac{B_0(1-r)(r\vartheta_0B_0-rs\vartheta_0B_0-\hbar_0)}{\hbar_0(r\vartheta_0B_0-\hbar_0)}\right],\\
\delta &:= (1-s)\frac{\vartheta_0}{\hbar_0},
\qquad
\lambda := \frac{\vartheta_0B_0(r+s-rs)-\hbar_0}{r\vartheta_0B_0-\hbar_0},\\
\mu &:= s\frac{\vartheta_0}{\hbar_0},
\qquad
\eta := \frac{r\vartheta_0B_0(1-s)-\hbar_0}{\hbar_0}.
\end{aligned}
\end{equation}

Then the following identities hold:
\begin{equation} \label{eq:propII1-identities}
\begin{alignedat}{4}
&-a\delta + a\mu - 2m = 0, \quad &
-a\lambda - b\mu + 2d = 0, \quad &
a\eta + c\delta + 2h = 0, \quad &
-b\eta + c\lambda + 2n = 0, \\[2pt]
&a\delta + m = \frac{i\vartheta_0}{2\hbar_0^2}, \quad &
a\lambda - d = \frac{i}{2\hbar_0}, \quad &
m - a\mu = -\frac{i\vartheta_0}{2\hbar_0^2}, \quad &
-a\eta - h = \frac{i}{2\hbar_0}, \\[2pt]
&b\mu - d = -\frac{i}{2\hbar_0}, \quad &
b\eta - n = \frac{iB_0}{2\hbar_0}, \quad &
-c\delta - h = -\frac{i}{2\hbar_0}, \quad &
-c\lambda - n = -\frac{iB_0}{2\hbar_0}.
\end{alignedat}
\end{equation}
These identities are verified by direct substitution from
\eqref{eq:app-propII1-coeffs-a}.

Now, recall that for \(f\in L^1(\R^4,\omega_{\hbar_0,\vartheta_0,B_0})\),
\begin{equation}\label{eq:app-propII1-rho-def}
\bigl(\rho^{r,s}_{\hbar_0,\vartheta_0,B_0}(f)\phi\bigr)(x,y)
=
\int_{\R^4}
f(-\mathbf{k})\,
\bigl(U^{r,s}_{\hbar_0,\vartheta_0,B_0}(\mathbf{k})\phi\bigr)(x,y)\,
d\mathbf{k}.
\end{equation}
We shall prove that \(\rho^{r,s}_{\hbar_0,\vartheta_0,B_0}\) is a non-degenerate
\(*\)-representation of \(L^1(\R^4,\omega_{\hbar_0,\vartheta_0,B_0})\) on
\(L^2(\R^2)\).

For notational convenience, we suppress the fixed parameters
\((r,s,\hbar_0,\vartheta_0,B_0)\) and write
\[
U(\mathbf{k})\equiv U^{r,s}_{\hbar_0,\vartheta_0,B_0}(\mathbf{k}),
\qquad
\rho(f)\equiv \rho^{r,s}_{\hbar_0,\vartheta_0,B_0}(f).
\]
Using the explicit form of the projective representation, we may write
\begin{equation}\label{eq:app-propII1-Uk}
\begin{aligned}
\bigl(U(\mathbf{k})\phi\bigr)(x,y)
&=
e^{ak_3x+ak_4y-bk_1y-ck_2x+dk_1k_3+hk_4k_2-mk_3k_4+nk_1k_2} \\
&\qquad \times
\phi(x+\delta k_4+\lambda k_1,\;y-\mu k_3-\eta k_2),
\end{aligned}
\end{equation}
where the constants are given in \eqref{eq:app-propII1-coeffs-a}.

We first prove multiplicativity. Starting from \eqref{eq:app-propII1-rho-def},
\begin{align}
\bigl(\rho(f)\rho(g)\phi\bigr)(x,y)
&=
\int_{\R^4}
f(-\mathbf{k})\,
\bigl(U(\mathbf{k})\rho(g)\phi\bigr)(x,y)\,
d\mathbf{k}
\nonumber\\
&=
\iint_{\R^4\times\R^4}
f(-\mathbf{k})g(-\mathbf{w})\,
\bigl(U(\mathbf{k})U(\mathbf{w})\phi\bigr)(x,y)\,
d\mathbf{k}\,d\mathbf{w}.
\label{eq:app-propII1-start}
\end{align}
Applying \eqref{eq:app-propII1-Uk} twice gives
\begin{equation*}
\begin{aligned}
\bigl(U(\mathbf{k})U(\mathbf{w})\phi\bigr)(x,y)
&=
e^{ak_3x+ak_4y-bk_1y-ck_2x+dk_1k_3+hk_4k_2-mk_3k_4+nk_1k_2}
\\
&\quad \times
e^{aw_3(x+\delta k_4+\lambda k_1)+aw_4(y-\mu k_3-\eta k_2)-bw_1(y-\mu k_3-\eta k_2)}
\\
&\quad \times
e^{-cw_2(x+\delta k_4+\lambda k_1)+dw_1w_3+hw_4w_2-mw_3w_4+nw_1w_2}
\\
&\quad \times
\phi\bigl(x+\delta(k_4+w_4)+\lambda(k_1+w_1),\,
y-\mu(k_3+w_3)-\eta(k_2+w_2)\bigr).
\end{aligned}
\end{equation*}
Set
\[
\mathbf{p}:=\mathbf{k}+\mathbf{w},
\qquad\text{so that}\qquad
\mathbf{w}=\mathbf{p}-\mathbf{k}.
\]
After regrouping the terms depending on \(\mathbf{p}\) and \(\mathbf{k}\), the exponential factor becomes
\begin{equation*}
\begin{aligned}
& e^{ap_3x+ap_4y-bp_1y-cp_2x+dp_1p_3+hp_4p_2-mp_3p_4+np_1p_2}
\\
&\quad \times
e^{(a\delta+m)p_3k_4+(a\lambda-d)p_3k_1+(m-a\mu)p_4k_3+(-a\eta-h)p_4k_2}
\\
&\quad \times
e^{(b\mu-d)p_1k_3+(b\eta-n)p_1k_2+(-c\delta-h)p_2k_4+(-c\lambda-n)p_2k_1}
\\
&\quad \times
e^{(-a\delta+a\mu-2m)k_3k_4+(-a\lambda-b\mu+2d)k_3k_1+(a\eta+c\delta+2h)k_2k_4+(-b\eta+c\lambda+2n)k_1k_2}.
\end{aligned}
\end{equation*}
Using the identities from \eqref{eq:propII1-identities}, this simplifies to
\begin{equation}\label{eq:app-propII1-mixed-phase}
\begin{aligned}
\bigl(U(\mathbf{k})U(\mathbf{p}-\mathbf{k})\phi\bigr)(x,y)
&=
e^{ap_3x+ap_4y-bp_1y-cp_2x+dp_1p_3+hp_4p_2-mp_3p_4+np_1p_2}
\\
&\quad \times
e^{\frac{i\vartheta_0}{2\hbar_0^2}p_3k_4+\frac{i}{2\hbar_0}p_3k_1-\frac{i\vartheta_0}{2\hbar_0^2}p_4k_3+\frac{i}{2\hbar_0}p_4k_2-\frac{i}{2\hbar_0}p_1k_3+\frac{iB_0}{2\hbar_0}p_1k_2-\frac{i}{2\hbar_0}p_2k_4-\frac{iB_0}{2\hbar_0}p_2k_1}
\\
&\quad \times
\phi(x+\delta p_4+\lambda p_1,\;y-\mu p_3-\eta p_2).
\end{aligned}
\end{equation}
By the definition of the skew-symmetric matrix
\(\tau_{\hbar_0,\vartheta_0,B_0}\) in \eqref{eq:skewsymmatrix}, the second exponential in
\eqref{eq:app-propII1-mixed-phase} is precisely
\[
e^{\pi i\,\mathbf{p}^{T}\tau_{\hbar_0,\vartheta_0,B_0}\mathbf{k}}.
\]
Hence
\begin{equation}\label{eq:app-propII1-product-U}
\bigl(U(\mathbf{k})U(\mathbf{p}-\mathbf{k})\phi\bigr)(x,y)
=
e^{\pi i\,\mathbf{p}^{T}\tau_{\hbar_0,\vartheta_0,B_0}\mathbf{k}}
\bigl(U(\mathbf{p})\phi\bigr)(x,y).
\end{equation}
Substituting \eqref{eq:app-propII1-product-U} into \eqref{eq:app-propII1-start}, we find
\begin{align}
\bigl(\rho(f)\rho(g)\phi\bigr)(x,y)
&=
\iint_{\R^4\times\R^4}
f(-\mathbf{k})g(-\mathbf{p}+\mathbf{k})\,
e^{\pi i\,\mathbf{p}^{T}\tau_{\hbar_0,\vartheta_0,B_0}\mathbf{k}}\,
\bigl(U(\mathbf{p})\phi\bigr)(x,y)\,
d\mathbf{k}\,d\mathbf{p}
\nonumber\\
&=
\int_{\R^4}
\left(
\int_{\R^4}
f(-\mathbf{k})g(-\mathbf{p}+\mathbf{k})\,
e^{\pi i\,\mathbf{p}^{T}\tau_{\hbar_0,\vartheta_0,B_0}\mathbf{k}}\,
d\mathbf{k}
\right)
\bigl(U(\mathbf{p})\phi\bigr)(x,y)\,d\mathbf{p}.
\label{eq:app-propII1-convolution-step}
\end{align}
By the definition of the twisted convolution product,
\[
(f\star_{\hbar_0,\vartheta_0,B_0}g)(-\mathbf{p})
=
\int_{\R^4}
f(-\mathbf{k})g(-\mathbf{p}+\mathbf{k})\,
e^{\pi i\,\mathbf{p}^{T}\tau_{\hbar_0,\vartheta_0,B_0}\mathbf{k}}\,
d\mathbf{k}.
\]
Therefore \eqref{eq:app-propII1-convolution-step} becomes
\[
\bigl(\rho(f)\rho(g)\phi\bigr)(x,y)
=
\int_{\R^4}
(f\star_{\hbar_0,\vartheta_0,B_0}g)(-\mathbf{p})\,
\bigl(U(\mathbf{p})\phi\bigr)(x,y)\,d\mathbf{p},
\]
that is,
\begin{equation}\label{eq:app-propII1-multiplicative}
\rho(f)\,
\rho(g)
=
\rho(f\star_{\hbar_0,\vartheta_0,B_0}g).
\end{equation}

It remains to prove compatibility with the involution. Recall that the involution is
\[
f^*(q)=\overline{f(-q)}.
\]
Then
\begin{align}
\bigl(\rho(f^*)\phi\bigr)(x,y)
&=
\int_{\R^4}
f^*(-\mathbf{k})\,\bigl(U(\mathbf{k})\phi\bigr)(x,y)\,d\mathbf{k}
\nonumber\\
&=
\int_{\R^4}
\overline{f(\mathbf{k})}\,\bigl(U(\mathbf{k})\phi\bigr)(x,y)\,d\mathbf{k}
\nonumber\\
&=
\int_{\R^4}
\overline{f(-\mathbf{k})}\,\bigl(U(-\mathbf{k})\phi\bigr)(x,y)\,d\mathbf{k}
\nonumber\\
&=
\int_{\R^4}
\overline{f(-\mathbf{k})}\,\bigl(U(\mathbf{k})^*\phi\bigr)(x,y)\,d\mathbf{k}.
\label{eq:app-propII1-star-step1}
\end{align}
Here we used the unitarity of \(U(\mathbf{k})\), namely \(U(-\mathbf{k})=U(\mathbf{k})^*\).

Now let \(\psi\in L^2(\R^2)\). With the convention that the Hilbert-space inner product is
linear in the first argument, we have
\begin{align*}
\left\langle \rho(f)\psi,\phi\right\rangle
&=
\left\langle
\int_{\R^4}f(-\mathbf{k})U(\mathbf{k})\psi\,d\mathbf{k},
\phi
\right\rangle
\\
&=
\int_{\R^4}
f(-\mathbf{k})\,
\langle U(\mathbf{k})\psi,\phi\rangle\,d\mathbf{k}
\\
&=
\int_{\R^4}
f(-\mathbf{k})\,
\langle \psi,U(\mathbf{k})^*\phi\rangle\,d\mathbf{k}
\\
&=
\left\langle
\psi,
\int_{\R^4}\overline{f(-\mathbf{k})}\,U(\mathbf{k})^*\phi\,d\mathbf{k}
\right\rangle.
\end{align*}
In view of \eqref{eq:app-propII1-star-step1}, this shows that \(\rho(f)^*\phi=\rho(f^*)\phi.\) Hence
\begin{equation}\label{eq:app-propII1-star}
\rho(f^*)
=
\rho(f)^*.
\end{equation}
Combining \eqref{eq:app-propII1-multiplicative} and \eqref{eq:app-propII1-star}, we conclude that
\(\rho^{r,s}_{\hbar_0,\vartheta_0,B_0}\) is a \(*\)-representation of
\(L^1(\R^4,\omega_{\hbar_0,\vartheta_0,B_0})\) on \(L^2(\R^2)\).

It remains to verify non-degeneracy. Let
\((\eta_\epsilon)_{\epsilon>0}\subset C_c^\infty(\R^4)\) be a nonnegative
\(L^1\)-approximate identity such that
\[
\int_{\R^4}\eta_\epsilon(k)\,dk=1,
\qquad
\operatorname{supp}\eta_\epsilon\to\{0\}
\quad\text{as }\epsilon\to0.
\]
Since \(\eta_\epsilon\geq 0\), we also have \(\int_{\R^4}|\eta_\epsilon(k)|\,dk=1\)
and with \(\operatorname{supp}\eta_\epsilon\to\{0\}\) as \(\epsilon\to0\).
Since \(k\mapsto U(k)\) is strongly continuous and \(U(0)=\mathbf 1\), we have, for every
\(\varphi\in L^2(\R^2)\),
\[
\rho(\eta_\epsilon)\varphi
=
\int_{\R^4}\eta_\epsilon(-k)U(k)\varphi\,dk
\longrightarrow \varphi
\quad \text{in } L^2(\R^2).
\]
Indeed,
\[
\left\|
\rho(\eta_\epsilon)\varphi-\varphi
\right\|
=
\left\|
\int_{\R^4}\eta_\epsilon(-k)
\left(U(k)\varphi-\varphi\right)\,dk
\right\|
\leq
\int_{\R^4}|\eta_\epsilon(-k)|
\left\|
U(k)\varphi-\varphi
\right\|\,dk.
\]
The right-hand side tends to zero by the strong continuity of \(U\), together with the
fact that \((\eta_\epsilon(-\,\cdot\,))_{\epsilon>0}\) is again an approximate identity
concentrated at the origin. Hence every \(\varphi\in L^2(\R^2)\) lies in the closure of \(\rho\bigl(L^1(\R^4,\omega_{\hbar_0,\vartheta_0,B_0})\bigr)L^2(\R^2).\) Therefore
\[
\overline{
\rho\bigl(L^1(\R^4,\omega_{\hbar_0,\vartheta_0,B_0})\bigr)L^2(\R^2)
}
=
L^2(\R^2).
\]
Thus \(\rho\), equivalently
\(\rho^{r,s}_{\hbar_0,\vartheta_0,B_0}\), is non-degenerate.

\paragraph{Proof of Corollary~\ref{cor:rho-faithful-extension}.} The cocycle
\(\omega_{\hbar_0,\vartheta_0,B_0}\) is determined by the skew-symmetric
matrix \(\tau_{\hbar_0,\vartheta_0,B_0}\) defined in \eqref{eq:skewsymmatrix}. A direct computation gives
\[
\det(\tau_{\hbar_0,\vartheta_0,B_0})
=
\frac{(\hbar_0-\vartheta_0B_0)^2}
{16\pi^4\hbar_0^6}.
\]
Since the triple is admissible, \(\hbar_0\neq 0\) and
\(\hbar_0-\vartheta_0B_0\neq 0\). Hence \(\det(\tau_{\hbar_0,\vartheta_0,B_0})\neq 0.\) Thus the skew form, and hence the cocycle, is nondegenerate. Equivalently, the alternating
bicharacter associated with the cocycle has trivial radical. Indeed, suppose
that \(q\in\mathbb R^4\) satisfies
\[
\exp\bigl(2\pi i\,q^T\tau_{\hbar_0,\vartheta_0,B_0}q'\bigr)=1
\qquad
\text{for all }q'\in\mathbb R^4.
\]
Fix \(q'\in\mathbb R^4\). Since the above identity also holds with \(tq'\)
in place of \(q'\), for every \(t\in\mathbb R\), we get
\[
\exp\bigl(2\pi i\,t\,q^T\tau_{\hbar_0,\vartheta_0,B_0}q'\bigr)=1
\qquad
\text{for all }t\in\mathbb R.
\]
This is possible only if
\[
q^T\tau_{\hbar_0,\vartheta_0,B_0}q'=0.
\]
Since \(q'\) was arbitrary, \(q^T\tau_{\hbar_0,\vartheta_0,B_0}=0\). As
\(\tau_{\hbar_0,\vartheta_0,B_0}\) is invertible, this implies \(q=0\).
Hence the radical is trivial.

By the Stone--von Neumann theorem for nondegenerate finite-dimensional Weyl
systems, or equivalently by the standard CCR \(C^*\)-algebra identification
for a nondegenerate multiplier on \(\mathbb R^{2n}\), the twisted group
\(C^*\)-algebra associated with this multiplier is isomorphic to the compact
operators on the Schrödinger representation space. In the present case
\(2n=4\), hence
\[
C^*(\mathbb R^4,\omega_{\hbar_0,\vartheta_0,B_0})
\cong
\mathcal K(L^2(\mathbb R^2)).
\]
Under this identification, the extended representation
\(\rho^{r,s}_{\hbar_0,\vartheta_0,B_0}\) is a nonzero representation of
\(\mathcal K(L^2(\mathbb R^2))\) on \(L^2(\mathbb R^2)\).

Now \(\mathcal K(L^2(\mathbb R^2))\) is simple. Hence the kernel of
\[
\rho^{r,s}_{\hbar_0,\vartheta_0,B_0}
:
C^*(\mathbb R^4,\omega_{\hbar_0,\vartheta_0,B_0})
\longrightarrow
\mathcal B(L^2(\mathbb R^2))
\]
is a closed two-sided ideal and is therefore either \(0\) or the whole
algebra. Since the representation is non-degenerate, it is nonzero. Hence
its kernel cannot be the whole algebra. Therefore
\[
\ker \rho^{r,s}_{\hbar_0,\vartheta_0,B_0}=0.
\]
Thus \(\rho^{r,s}_{\hbar_0,\vartheta_0,B_0}\) is faithful on
\(C^*(\mathbb R^4,\omega_{\hbar_0,\vartheta_0,B_0})\), and its restriction to
the dense \(\ast\)-subalgebra
\(\mathcal S_{\hbar_0,\vartheta_0,B_0}\) is faithful.

\subsection{Proof of covariance of the representation}\label{app:covariance}

Let \(A:=-ik_1{\Pi}_x,\ B:=-ik_2{\Pi}_y.\) Then \([A,B] = (-ik_1)(-ik_2)[{\Pi}_x,{\Pi}_y] = -i\hbar_0 B_0\,k_1k_2\,\mathbf{1}.\)
Since $[A,B]$ is a scalar multiple of the identity, it commutes with both $A$ and $B$. Hence the Baker--Campbell--Hausdorff formula truncates, and we obtain
\begin{equation}\label{eq:app-cov-bch}
e^{A+B}=e^Ae^Be^{-\frac12[A,B]}.
\end{equation}
Therefore,
\begin{equation}\label{eq:app-cov-Uk-bch}
U_{\mathbf{k}}
=
e^{-i(k_1{\Pi}_x+k_2{\Pi}_y)}
=
e^{-ik_1{\Pi}_x}e^{-ik_2{\Pi}_y}
e^{\frac{i}{2}\hbar_0 B_0 k_1k_2}.
\end{equation}
Since the scalar phase commutes with all operators, this may also be written as
\begin{equation}\label{eq:app-cov-Uk-bch-2}
U_{\mathbf{k}}
=
e^{\frac{i}{2}\hbar_0 B_0 k_1k_2}
e^{-ik_1{\Pi}_x}e^{-ik_2{\Pi}_y}.
\end{equation}

Next, we identify this operator inside the Weyl system. Define \(\mathbf{k}' := (-\hbar_0 k_1,-\hbar_0 k_2,0,0)\in\R^4.\) By the identification of the one-parameter groups with the first two Weyl coordinates,
\[
e^{-ik_1{\Pi}_x}=U_1(-\hbar_0 k_1),
\qquad
e^{-ik_2{\Pi}_y}=U_2(-\hbar_0 k_2),
\]
and therefore
\begin{equation}\label{eq:app-cov-Weyl-identification}
e^{-ik_1{\Pi}_x}e^{-ik_2{\Pi}_y}
=
U(\mathbf{k}').
\end{equation}
Substituting \eqref{eq:app-cov-Weyl-identification} into \eqref{eq:app-cov-Uk-bch-2}, we get
\begin{equation}\label{eq:app-cov-Uk-Weyl}
U_{\mathbf{k}}
=
e^{\frac{i}{2}\hbar_0 B_0 k_1k_2}\,U(\mathbf{k}').
\end{equation}

We now compute the adjoint of $U(\mathbf{k}')$. From the Weyl product formula established in Appendix~\ref{app:weyl_product}, with $\mathbf{q}'=-\mathbf{q}$, we have \(U(\mathbf{q})U(-\mathbf{q}) = \exp\!\left(-2\pi i\sum_{n<m}q_n\tau_{nm}q_m
\right)\mathbf{1}.\)
Hence \(U(\mathbf{q})^{-1} = \exp\!\left(2\pi i\sum_{n<m}q_n\tau_{nm}q_m\right)U(-\mathbf{q}).\) Since $U(\mathbf{q})$ is unitary, it follows that
\begin{equation}\label{eq:app-cov-adjoint-general}
U(\mathbf{q})^{*}
=
\exp\!\left(
2\pi i\sum_{n<m}q_n\tau_{nm}q_m
\right)U(-\mathbf{q}).
\end{equation}

Applying \eqref{eq:app-cov-adjoint-general} to $\mathbf{q}=\mathbf{k}'$, and observing that $k_3'=k_4'=0$, only the pair $(1,2)$ contributes:
\[
\sum_{n<m}k_n'\tau_{nm}k_m'
=
k_1'\tau_{12}k_2'
=
(-\hbar_0 k_1)\frac{B_0}{2\pi\hbar_0}(-\hbar_0 k_2)
=
\frac{\hbar_0 B_0}{2\pi}k_1k_2.
\]
Therefore,
\begin{equation}\label{eq:app-cov-adjoint-kprime}
U(\mathbf{k}')^{*}
=
e^{\,i\hbar_0 B_0 k_1k_2}\,U(-\mathbf{k}').
\end{equation}
Using \eqref{eq:app-cov-Uk-Weyl}, we conclude that
\begin{align}
U_{\mathbf{k}}^{*}
&=
\left(
e^{\frac{i}{2}\hbar_0 B_0 k_1k_2}U(\mathbf{k}')
\right)^{*}
\nonumber\\
&=
U(\mathbf{k}')^{*}e^{-\frac{i}{2}\hbar_0 B_0 k_1k_2}
\nonumber\\
&=
\left(
e^{\,i\hbar_0 B_0 k_1k_2}U(-\mathbf{k}')
\right)e^{-\frac{i}{2}\hbar_0 B_0 k_1k_2}
\nonumber\\
&=
e^{\frac{i}{2}\hbar_0 B_0 k_1k_2}U(-\mathbf{k}').
\label{eq:app-cov-Uk-adjoint}
\end{align}

We now compute the conjugation of a Weyl element $U(\mathbf{q})$. Using \eqref{eq:app-cov-Uk-Weyl} and \eqref{eq:app-cov-Uk-adjoint},
\begin{align}
U_{\mathbf{k}}U(\mathbf{q})U_{\mathbf{k}}^{*}
&=
e^{\frac{i}{2}\hbar_0 B_0 k_1k_2}
U(\mathbf{k}')\,U(\mathbf{q})\,
e^{\frac{i}{2}\hbar_0 B_0 k_1k_2}U(-\mathbf{k}')
\nonumber\\
&=
e^{i\hbar_0 B_0 k_1k_2}\,
U(\mathbf{k}')U(\mathbf{q})U(-\mathbf{k}').
\label{eq:app-cov-conj-start}
\end{align}

Again from the Weyl product formula,
\[
U(\mathbf{k}')U(\mathbf{q})
=
\exp\!\left(
2\pi i\sum_{n<m}q_n\tau_{nm}k_m'
\right)U(\mathbf{k}'+\mathbf{q}).
\]
Since only $k_1'$ and $k_2'$ are nonzero, only the pair $(1,2)$ contributes, and we obtain
\[
\sum_{n<m}q_n\tau_{nm}k_m'
=
q_1\tau_{12}k_2'
=
q_1\frac{B_0}{2\pi\hbar_0}(-\hbar_0 k_2)
=
-\frac{B_0}{2\pi}q_1k_2.
\]
Hence
\begin{equation}\label{eq:app-cov-first-product}
U(\mathbf{k}')U(\mathbf{q})
=
e^{-iB_0q_1k_2}\,U(\mathbf{k}'+\mathbf{q}).
\end{equation}

Similarly,
\[
U(\mathbf{k}'+\mathbf{q})U(-\mathbf{k}')
=
\exp\!\left(
2\pi i\sum_{n<m}(-k_n')\tau_{nm}(k_m'+q_m)
\right)U(\mathbf{q}).
\]
Since $-k_n'$ is nonzero only for $n=1,2$, the potentially contributing pairs are
\[
(1,2),\quad (1,3),\quad (1,4),\quad (2,3),\quad (2,4).
\]
Using $\tau_{14}=0$ and $\tau_{23}=0$, only $(1,2)$, $(1,3)$, and $(2,4)$ remain. Therefore,
\begin{align*}
\sum_{n<m}(-k_n')\tau_{nm}(k_m'+q_m)
&=
(-k_1')\tau_{12}(k_2'+q_2)
+
(-k_1')\tau_{13}q_3
+
(-k_2')\tau_{24}q_4
\\
&=
(\hbar_0 k_1)\frac{B_0}{2\pi\hbar_0}(-\hbar_0 k_2+q_2)
+
(\hbar_0 k_1)\left(-\frac{1}{2\pi\hbar_0}\right)q_3\\
&\quad + (\hbar_0 k_2)\left(-\frac{1}{2\pi\hbar_0}\right)q_4
\\
&=
\frac{1}{2\pi}
\bigl(
-\hbar_0 B_0 k_1k_2 + B_0k_1q_2 - k_1q_3 - k_2q_4
\bigr).
\end{align*}
Thus
\begin{equation}\label{eq:app-cov-second-product}
U(\mathbf{k}'+\mathbf{q})U(-\mathbf{k}')
=
\exp\!\left(
i\bigl(
-\hbar_0 B_0 k_1k_2 + B_0k_1q_2 - k_1q_3 - k_2q_4
\bigr)
\right)U(\mathbf{q}).
\end{equation}

Substituting \eqref{eq:app-cov-first-product} and \eqref{eq:app-cov-second-product} into \eqref{eq:app-cov-conj-start}, we find
\begin{align}
U_{\mathbf{k}}U(\mathbf{q})U_{\mathbf{k}}^{*}
&=
e^{i\hbar_0 B_0 k_1k_2}
\left(e^{-iB_0q_1k_2}\right)
\exp\!\left(
i\bigl(
-\hbar_0 B_0 k_1k_2 + B_0k_1q_2 - k_1q_3 - k_2q_4
\bigr)
\right)
U(\mathbf{q})
\nonumber\\
&=
\exp\!\left(
i\bigl(
B_0k_1q_2 - B_0k_2q_1 - k_1q_3 - k_2q_4
\bigr)
\right)U(\mathbf{q})
\nonumber\\
&=
\exp\!\left(
i\bigl(
k_1(B_0q_2-q_3)+k_2(-B_0q_1-q_4)
\bigr)
\right)U(\mathbf{q}).
\label{eq:app-cov-Weyl-cov-expanded}
\end{align}

Define the linear map
\[
\sharp:\R^4\to\R^2,
\qquad
\mathbf{q}\mapsto \mathbf{q}^{\sharp}:=
\bigl(B_0q_2-q_3,\,-B_0q_1-q_4\bigr).
\]
Then \eqref{eq:app-cov-Weyl-cov-expanded} takes the compact form
\begin{equation}\label{eq:app-cov-Weyl-cov}
U_{\mathbf{k}}U(\mathbf{q})U_{\mathbf{k}}^{*}
=
e^{\,i\mathbf{k}\cdot\mathbf{q}^{\sharp}}\,U(\mathbf{q}).
\end{equation}

We now pass to the integrated representation. Recall that
\begin{equation}\label{eq:app-cov-rho-def}
\bigl(\rho^{r,s}_{\hbar_0,\vartheta_0,B_0}(f)\phi\bigr)(x,y)
=
\int_{\R^4}
f(-\mathbf{q})\,
\bigl(U^{r,s}_{\hbar_0,\vartheta_0,B_0}(\mathbf{q})\phi\bigr)(x,y)\,
d\mathbf{q}.
\end{equation}
For notational convenience, we suppress the fixed parameters $(r,s,\hbar_0,\vartheta_0,B_0)$ and write
\[
U(\mathbf{q})\equiv U^{r,s}_{\hbar_0,\vartheta_0,B_0}(\mathbf{q}),
\qquad
\rho(f)\equiv \rho^{r,s}_{\hbar_0,\vartheta_0,B_0}(f).
\]
Using \eqref{eq:app-cov-Weyl-cov}, we compute
\begin{align}
U_{\mathbf{k}}\rho(f)U_{\mathbf{k}}^{*}
&=
\int_{\R^4}
f(-\mathbf{q})\,U_{\mathbf{k}}U(\mathbf{q})U_{\mathbf{k}}^{*}\,d\mathbf{q}
\nonumber\\
&=
\int_{\R^4}
f(-\mathbf{q})\,e^{\,i\mathbf{k}\cdot\mathbf{q}^{\sharp}}\,U(\mathbf{q})\,d\mathbf{q}
\nonumber\\
&=
\int_{\R^4}
e^{\,i\mathbf{k}\cdot\mathbf{q}^{\sharp}}f(-\mathbf{q})\,U(\mathbf{q})\,d\mathbf{q}.
\label{eq:app-cov-rho-computation}
\end{align}
By the definition of the action ${\alpha}_{\mathbf{k}}$,\ \({\alpha}_{\mathbf{k}}(f)(\mathbf{q}) =
e^{-\,i\mathbf{k}\cdot\mathbf{q}^{\sharp}}f(\mathbf{q}),\)
and therefore 
\[{\alpha}_{\mathbf{k}}(f)(-\mathbf{q}) =
e^{\,i\mathbf{k}\cdot\mathbf{q}^{\sharp}}f(-\mathbf{q}).\] 
Hence \eqref{eq:app-cov-rho-computation} becomes
\[
U_{\mathbf{k}}\rho(f)U_{\mathbf{k}}^{*}
=
\rho\bigl({\alpha}_{\mathbf{k}}(f)\bigr).
\]
Equivalently,
\begin{equation}\label{eq:app-cov-final}
\rho^{r,s}_{\hbar_0,\vartheta_0,B_0}\bigl({\alpha}_{\mathbf{k}}(f)\bigr)
=
U_{\mathbf{k}}\rho^{r,s}_{\hbar_0,\vartheta_0,B_0}(f)U_{\mathbf{k}}^{*}.
\end{equation}
This is precisely the covariance relation.

\subsection{Proof of Proposition \texorpdfstring{~\ref{prop:cstar-bundle}}{III.5}}
\label{App:proof_cstar_bundle}

By Theorem~1.2 of \cite{packer1992structure}, applied to the central extension
\[
1 \longrightarrow N \longrightarrow G_{\mathrm{NC}} \longrightarrow \R^4 \longrightarrow 1,
\]
the group \(C^*\)-algebra \(C^*(G_{\mathrm{NC}})\) is the section algebra of a \(C^*\)-bundle over \(\widehat N\), whose fibre over \(\gamma\in \widehat N\) is \(C^*(\R^4,d_2(\gamma))\). Thus it remains only to compute the transgression cocycle \(d_2(\gamma)\) for the above extension.\\
Using the section \(c(q_1,q_2,q_3,q_4)=(0,0,0,q_1,q_2,q_3,q_4),\) one finds that \(d_2(\gamma)\) is precisely the multiplier \(\omega_{\hbar_0,\vartheta_0,B_0}\). Hence the fibres are \(C^*(\R^4,\omega_{\hbar_0,\vartheta_0,B_0})\), as claimed.

\end{document}